\newcommand*\patchAmsMathEnvironmentForLineno[1]{%
  \expandafter\let\csname old#1\expandafter\endcsname\csname #1\endcsname
  \expandafter\let\csname oldend#1\expandafter\endcsname\csname end#1\endcsname
  \renewenvironment{#1}%
     {\linenomath\csname old#1\endcsname}%
     {\csname oldend#1\endcsname\endlinenomath}}%
\newcommand*\patchBothAmsMathEnvironmentsForLineno[1]{%
  \patchAmsMathEnvironmentForLineno{#1}%
  \patchAmsMathEnvironmentForLineno{#1*}}%
\newtheorem{theorem}{Theorem}[section]
\newtheorem{definition}[theorem]{Definition}
\newtheorem{lemma}[theorem]{\bfseries{Lemma}}
\newtheorem{observation}[theorem]{\bfseries{Observation}}
\renewcommand{\mod}{\mathrm{\ mod\ }}
\newcommand{\eps}{\varepsilon}
\renewcommand{\leq}{\leqslant}
\renewcommand{\geq}{\geqslant}
\newbox\ProofSym
\begin{document}

\title{
    Long Arithmetic Progressions in Sumsets and Subset Sums: Constructive Proofs and Efficient Witnesses
}
\author{Lin Chen$^{a}$, Yuchen Mao$^{a}$, Guochuan Zhang$^{a}$ \\
        \small $^{a}$College of Computer Science, Zhejiang University, Hangzhou, China\\ 
        \tt{\{chenlin198662,maoyc,zgc\}@zju.edu.cn}  \\
       \\
}

\date{}

\maketitle

\begin{abstract}
    Existence of long arithmetic progression in sumsets and subset sums has been studied extensively in the field of additive combinatorics. These additive combinatorics results play a central role in the recent progress of fundamental problems in theoretical computer science including Knapsack and Subset Sum. The non-constructiveness of relevant additive combinatorics results affects their application in algorithms. In particular, additive combinatorics-based algorithms for Subset Sum, including an $\tilde{O}(n)$-time algorithm for dense subset sum [Bringmann and Wellnitz '21] and an $\tilde{O}(n + \sqrt{a_{\max}t})$-time algorithm [Chen, Lian, Mao, and Zhang '24], work only for the decision version of the problem, but not for the search version. To find a solution, one has to spend a lot more time.

    We provide constructive proofs for finite addition theorems [S{\'a}rk{\"o}zy'89 '94], which are fundamental results in additive combinatorics concerning the existence of long arithmetic progression in sumsets and subset sums. Our constructive proofs yield a near-linear time algorithm that returns an arithmetic progression explicitly, and moreover, for each term in the arithmetic progression, it also returns its representation as the sum of elements in the base set.

    As an application, we can obtain an $\tilde{O}(n)$-time algorithm for the search version of dense subset sum now. Another application of our result is Unbounded Subset Sum, where each input integer can be used an infinite number of times.  A classic result on the Frobenius problem [Erd\H{o}s and Graham '72] implies that for all $t \geq 2a^2_{\max}/n$, 
    the decision version can be solved trivially in linear time. It remains unknown whether the search version can be solved in the same time. Our result implies that for all $t \geq ca^2_{\max}/n$ for some constant $c$, a solution for Unbounded Subset Sum can be obtained in $O(n \log a_{\max})$ time.
 
     The major technical challenge is that the original proofs for the above-mentioned additive combinatorics results heavily rely on two fundamental theorems, Mann's theorem and Kneser's theorem. These two theorems constitute the main non-constructive part. To bypass these two obstacles, we introduce two techniques.
    \begin{itemize}
        \item A new set operation called greedy sumset. Greedy sumset computes a moderately large subset of the traditional sumset, but enjoys the advantage that searching for a representation for elements in the greedy sumset can be done efficiently.

        \item A framework that can be used to iteratively augment an arithmetic progression. It plays the role of Kneser's theorem in the proof but enjoys the advantage that the representation of elements in the arithmetic progression can be efficiently traced.
    \end{itemize}
\end{abstract}

\section{Introduction}
Let $A$ and $B$ be two sets of integers. Their sumset is
\[
    A + B = \{\textrm{$a + b$ : $a \in A$ and $b \in B$}\}.
\]
The $k$-fold sumset of $A$ is defined to be 
\[
    kA = \{\textrm{$a_1 + a_2 + \cdots + a_k$ : $a_i \in A$ for $1 \leq i\leq k$}\}.
\]
The set of subset sums of $A$ is 
\[
    \mathcal{S}(A) = \left\{\sum_{a \in A'}a : A' \subseteq A\right\}
\]
Let $P = \{s, s + d, s + 2d, \ldots, s + \ell d\}$. We say that $P$ is an arithmetic progression with common difference $d$ and length $\ell$. For the sake of clarity, we sometimes write $P$ as $\{s\} + \{0, d, 2d, \ldots, \ell d\}$.

The existence of long arithmetic progressions in $k$-fold sumsets $kA$ and subset sums $\mathcal{S}(A)$ has been studied extensively in additive combinatorics. There has been a long line of research in the literature~\cite{Fre93,Sar89,Sar94,Lev03,Lev97,SV05,SV06a,tao2008john,CFP21}. Interestingly, it also plays a central role in the recent progress of several fundamental problems in theoretical computer science.
\begin{itemize}
    \item Subset Sum. Given a (multi-)set $A$ of $n$ integers $\{a_1, \ldots, a_n\}$ and a target $t$, Subset Sum asks whether some subset of $A$ sums to $t$. Bringmann and Wellnitz~\cite{BW21} gave an $\tilde{O}(n)$-time\footnote{$\widetilde{O}(f)$ hides polylogarithmic factors in $f$.} algorithm for the case where the maximum input integer $\max_ia_i$ (denoted as $a_{\max}$), $\sum_ia_i$, $t$ and $n$ satisfy certain condition, which they call Dense Subset Sum. Chen, Lian, Mao and Zhang~\cite{chen2024improved} presented an $\widetilde{O}(n+\sqrt{a_{\max}t})$-time algorithm for the general case. They~\cite{chen2024approximating} also presented a weak approximation scheme with $\tilde{O}(n + \frac{1}{\eps})$ running time. Roughly speaking, the core argument underlying these results states that, if there is a long arithmetic progression in $\mathcal{S}(A)$ and $t$ happens to fall into this arithmetic progression, then the answer to Subset Sum is "yes".

    \item Knapsack. Given a knapsack of capacity $t$ and $n$ items $\{1, \ldots, n\}$, where item $i$ has a weight $w_i$ and a profit $p_i$, one should select items so as to maximize the total profit subject to the capacity constraint. Bringmann~\cite{Bri23a} and Jin~\cite{Jin23} independently obtained an $\tilde{O}(n + w_{\max}^2)$-time exact algorithm. 
    These results use the existence of the long arithmetic progressions in subset sums to obtain sharper proximity results, and utilize this proximity to design faster algorithms for Knapsack. 
\end{itemize}

So far, all the proofs of the existence of long arithmetic progressions in sumsets and subset sums are not constructive in the sense that they do not show how each term of the arithmetic progression is represented as a sum of integers from $A$. Therefore, to obtain such a representation, one has to compute $kA$ or $\mathcal{S}(A)$ explicitly, which incurs an unacceptable running time.

In certain applications like Knapsack, the existence of a long arithmetic progression suffices, because it is used to derive structural properties of the optimal solutions and there is no need to know how each term in the arithmetic progression is represented. In other applications like Subset Sum, however, without knowing how the terms of the arithmetic progression are represented, one cannot find a solution (that is, a subset that sums to $t$). As a consequence, both the $\tilde{O}(n)$-time algorithm for Dense Subset Sum and the $\tilde{O}(n + \sqrt{a_{\max}t})$-time algorithm for Subset Sum only solve the decision version of the corresponding problems. Indeed, both papers~\cite{BW21,chen2024approximating} raised the open question asking whether finding a solution can be done within the same running time.

As we have seen, unlike Knapsack, the current best-known algorithms for the decision version and search version of Subset Sum have a gap in time complexity. Such a phenomenon also exists in Unbounded Subset Sum. Let $a_1, \ldots, a_n$ be $n$ positive integers. Without loss of generality, assume that $a_1 < a_2 < \cdots < a_n$ and that their greatest common divisor is $1$. Given an integer $t$, the unbounded subset sum problem asks whether $t$ can be represented as $t = a_1x_1 + \cdots + a_nx_n$ for some non-negative integers $x_1, \ldots, x_n$. It is known that there exists some integer $F(a_1,\ldots,a_n)$, called the Frobenius number, such that the answer to Unbounded Subset Sum is always "yes" for any $t>F(a_1,\cdots,a_n)$.\footnote{More precisely, Frobenius number is defined as the largest integer $t$ for which the answer to Unbounded Subset Sum is "no".~\cite{brauer1942problem}} There is a rich literature dedicated to the Frobenius number, see, e.g.~\cite{heap1964graph,heap1965linear,hujter1987exact,kannan1992lattice,alfonsin2005diophantine}. Erd\H{o}s and Graham~\cite{erdos1972linear} proved that
\begin{equation}\label{eq:Frobenius}
    F(a_1,\cdots,a_n)\le 2a_{n-1}\lfloor\frac{a_n}{n}\rfloor-a_n=O(a_n^2/n)
\end{equation}
Their bound is currently the best known and is only a constant factor away from the actual value~\cite{aggarwal2024polynomial}. Therefore, for any $t \geq \frac{2a_na_{n-1}}{n}$, the answer to Unbounded Subset Sum is always "yes". 

Note that Unbounded Subset Sum is NP-hard in general, so Erd\H{o}s and Graham's bound characterizes a non-trivial region where Unbounded Subset Sum becomes easy. However, since their proof is non-constructive, it remains non-trivial to search for a solution within this "easy region".  Recently, Aggarwal, Joux, Santha and W{\k{e}}grzycki~\cite{aggarwal2024polynomial} showed that, one can find a solution for any $t\ge \eps a_n^2$ in time $\text{poly}(n,\log t)(\log (1/\eps))^{O(1/\eps)}$. Note that their algorithm only works for $t$ that is $\Omega(n)$ times larger than the Erd\H{o}s-Graham bound, and that the running time is also large (we remark that the main goal of~\cite{aggarwal2024polynomial} is not Unbounded Subset Sum itself, but rather its generalization for high-dimensional lattices). 

In general, knowing the existence of a solution does not necessarily make the
search problem easy to solve. There is a broad class of problems where a solution is guaranteed to exist, but searching for it becomes nontrivial~\cite{megiddo1991total}. The major question is: do Subset Sum and Unbounded Subset Sum belong to such problems? The issue that the decision version and the search version for Subset Sum and Unbounded Subset Sum have a gap in complexity seems to be mainly due to the application of additive combinatorics results. In particular, it brings breakthroughs for the decision problem, but not for the search problem due to the non-constructiveness.

Can we establish an "efficiently constructive" version of relevant additive combinatorics results so that not only do they apply to the decision problem, but also apply to the search problem? More precisely, we want to find a long arithmetic progression in sumsets and subset sums claimed in prior additive combinatorics results, and for each term in the arithmetic progression, we also want to know its representation as the sum of integers in $A$. Moreover, such information shall be computed in  time near-linear in the input size, because when these additive combinatorics results are applied to the decision problems, testing whether its condition is satisfied can be done trivially in linear time. Consequently, when the constructive version of these additive combinatorics results is applied to the search problem, we also want a similar linear time. 

This paper is dedicated to the target above. Before we proceed to describe our results, we first introduce some notions.




\subsection{Solutions and Witnesses}
Let $A$ and $B$ be two sets of integers. For an integer $z$, a solution for $z \in A + B$ is a pair $(a,b) \in A \times B$ such that $z = a+ b$.

For an integer $z$, a solution for $z \in kA$ consists of $k$ integers $(a_1, \ldots, a_k)$ such that
\[
    a_1 + \cdots + a_k = z \quad \text{and} \quad a_i \in A \,\,\text{for} \,\, 1 \leq i \leq k.
\]
For a set $S$ of integers, a witness for $S \subseteq kA$ is a data structure that can answer the following query: for any $s \in S$, return a solution for $s \in kA$. 

Similarly, for an integer $z$, a solution for $z \in \mathcal{S}(A)$ is a subset $A' \subseteq A$ such that $\sum_{a \in A'}a = z$. For a set $S$ of integers, a witness for $S \subseteq \mathcal{S}(A)$ is a data structure that can answer the following query: for any $s \in S$, return a solution for $s \in \mathcal{S}(A)$.

The query time of a witness is defined to be the time cost for answering a query.

\subsection{Our Contribution}
\subsubsection{Finite Addition Theorem I - Long Arithmetic Progressions in Sumsets}
Let $A \subseteq \mathbb{Z}[0, m]$ be a set of $n$ integers, where $\mathbb{Z}[0, m] = \{0, 1, \ldots, m\}$. In his seminal paper, S{\'a}rk{\"o}zy~\cite{Sar89} proved the finite addition theorem, which states that if $|A| \geq c\frac{m}{k}$ for some constant $c$, then the $k$-fold sumset $kA$ contains an arithmetic progression of length $m$. Eight years later, Lev~\cite{Lev97} improved the constant $c$. Below we present Lev's version.
\begin{theorem}[Finite Addition Theorem I, {\cite[Theorem 1]{Lev97}}]\label{thm:ka-old}
    Let $A \subseteq \mathbb{Z}[0, m]$ be a set of $n$ integers. Assume that $0 \in A$ and $\mathrm{gcd}(A) = 1$. Let $r$ be a positive integer.  If 
    \[
        n \geq \frac{m-1}{k} + 2,
    \]
    then there exists an arithmetic progression
    \begin{equation}\label{eq:ka-old}
        \{s\} + \{0, 1, \ldots, m\} \subseteq (2k - 1)A.
    \end{equation}
\end{theorem}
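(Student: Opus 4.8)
The statement asks for a block of $m+1$ consecutive integers (an arithmetic progression of common difference $1$ and length $m$) inside $(2k-1)A$, and the natural strategy splits the $2k-1$ available copies of $A$ as $k + (k-1)$ and proceeds in two stages. First I would produce, inside $kA$, a long arithmetic progression $P = \{s\} + \{0, d, 2d, \dots, Ld\}$ whose common difference $d$ is forced by the gcd structure of $A$; then I would spend the remaining $k-1$ copies of $A$ to drive the common difference down to $1$ without losing too much length.

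For Stage 1, write $A = \{0 = a_0 < a_1 < \dots < a_{n-1}\}$ and consider the descending chain of divisors $a_1 = \gcd(a_1) \geq \gcd(a_1,a_2) \geq \dots \geq \gcd(a_1,\dots,a_{n-1}) = \gcd(A) = 1$. The idea is to choose a cut-off index $j$ so that the initial segment $\{a_0, \dots, a_j\}$, after rescaling by $d := \gcd(a_1,\dots,a_j)$, is dense enough in a short interval that iterating it $k$ times fills a long run of multiples of $d$; this is exactly where Mann's ($\alpha+\beta$) theorem enters — it upgrades a lower bound on the Schnirelmann-type density of the rescaled initial segment into the statement that $k$ copies of it cover an interval, yielding $P \subseteq kA$ with $Ld \geq m$. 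The hypothesis $n \geq (m-1)/k + 2$ is what makes this length bound go through: $A$ can carry at most $m - n + 1 \leq (m-1)(k-1)/k$ units of ``gap'', so after $k$-fold addition the gaps are swamped.

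For Stage 2, I would repeatedly add single copies of $A$ to the current progression. Because $\gcd(A) = 1$, the elements $a_{j+1}, \dots, a_{n-1}$ discarded in Stage 1 are not all divisible by $d$, so adding them must spread the progression out modulo $d$; the role of Kneser's theorem is to make this quantitative and, crucially, efficient — the inequality $|X + A| \geq |X| + |A| - |H|$ with $H$ the stabiliser of $X + A$ forces the stabiliser (hence the effective common difference) to collapse after only a bounded number of additions, so that $k-1$ further copies suffice to reach common difference $1$, while the length shrinks by at most an additive $O(m)$, leaving $\geq m+1$ consecutive integers in $(2k-1)A$.

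The main obstacle is the interface between the two stages: one must keep the Stage-1 progression long enough (length $\gtrsim km/d$) \emph{and} bound both the number of copies used and the length loss in Stage 2 tightly enough that the total copy budget is exactly $2k-1$ (not $Ck$ for some larger constant) and at least $m+1$ consecutive integers survive — all from the single clean hypothesis $n \geq (m-1)/k + 2$. Squeezing out this exact constant is precisely the content of Lev's refinement of S\'ark\"ozy's theorem, and I expect it is here that the sharp $-|H|$ term of Kneser's theorem is indispensable, a cruder sumset growth bound (Cauchy--Davenport, Pl\"unnecke) not being enough.
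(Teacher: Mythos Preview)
The paper does not give its own proof of Theorem~\ref{thm:ka-old}; the result is quoted from Lev~\cite{Lev97}, and the paper only \emph{summarises} the S\'ark\"ozy/Lev strategy in Subsection~\ref{subsec:tec-review} before building a constructive substitute (Theorem~\ref{thm:ka}) with a worse constant. At the level of that summary your two-stage outline is exactly right: Mann's theorem for a moderately long progression of some common difference $g$ in Part~1, then Kneser's theorem in Part~2 to subdivide the gaps and drive the difference to $1$.

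Where your proposal drifts from the actual argument is in the mechanics of each stage. In Stage~1 the descending-gcd-chain with a cut-off index $j$ is not what S\'ark\"ozy or Lev do, and it is not clear it works as stated: the rescaled initial segment $\{a_0,\dots,a_j\}/d$ has no a~priori reason to carry large Schnirelmann density. The actual route (compare the paper's Lemma~\ref{lem:dense-half-exist} and the construction in Lemma~\ref{lem:cons-ap-by-cardinality-short}) is to take $g$ to be the \emph{minimum gap} in $A$, so that $g\le m/(n-1)$, and then pass to $A$ modulo $g$ and run a density argument there via Mann's theorem.

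In Stage~2 you invoke the Kneser inequality $|X+A|\ge |X|+|A|-|H|$ in an iterative ``add one copy, watch the stabiliser shrink'' fashion. The paper's account (Theorem~\ref{thm:kneser} and Lemma~\ref{lemma:sar-lemma}) is different: Kneser's theorem is applied \emph{once} to the $\gamma$ residue classes of $A$ modulo $g$, and in a single stroke produces a divisor $g'\le\lfloor 2g/\gamma\rfloor$ of $g$ and an $h\le 2\lfloor 2g/\gamma\rfloor$ such that $hA$ already hits every multiple of $g'$ modulo $g$. Combining this $hA$ with the Stage-1 progression subdivides all gaps simultaneously. The sharp constant $2k-1$ in Lev's version comes from refined bookkeeping of this one-shot application, not from an iterative stabiliser collapse.
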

The assumption that $0 \in A$ and $\gcd(A) = 1$ is without loss of generality. One can always make $0 \in A$ by shifting the elements of $A$, and then make $\gcd(A) = 1$ by dividing the elements by their greatest common divisor. The original proof of Theorem~\ref{thm:ka-old} is non-constructive due to the non-constructiveness of its two major technical ingredients -- Mann's theorem~\cite{man42} and Kneser's theorem~\cite{kneser1953abschatzung} (see Subsection~\ref{subsec:tec-review}). Being two cornerstones in additive combinatorics, Mann's theorem and Kneser's theorem have led to many fundamental results~\cite{freiman1999structure,Sar89,Sar94,Lev03,Lev97,erdos1972linear,lev2024large,olson1984sum}. Unfortunately, as we shall discuss later in Subsection~\ref{subsec:tec-review}, the contradiction-based arguments in the proofs of these two theorems make it impossible to directly modify S{\'a}rk{\"o}zy or Lev's proof of Theorem~\ref{thm:ka-old}. We establish new techniques that can bypass these two theorems and yield a similar result. Our new techniques are fully constructive, and can return an arithmetic progression claimed in Theorem~\ref{thm:ka-old} together with a witness in near-linear time. Our technique does not lead to a fully constructive proof for Mann's theorem and Kneser's theorem, but it implies that when the two theorems are applied in a "dense" case, e.g., in $k$-fold sumset where $k$ is large, or subset sums where the cardinality of a subset is large, then there is an alternative way to obtain efficient constructiveness.


Below we present our constructive version of Theorem~\ref{thm:ka-old}. Recall that an arithmetic progression $\{s\} + \{0, d, \ldots, \ell d\}$ can be compactly represented by the tuple $(s, \ell, d)$. 

\begin{restatable}{theorem}{thmka}
\label{thm:ka}
    Let $A \subseteq \mathbb{Z}[0,m]$ be a set of $n$ integers. Assume that $0 \in A$ and that $\gcd(A) = 1$.  Let $k$ be a positive integer. Assume that
    \[
            n \geq \frac{m + 1}{k}.
    \]  
    In $O(n\log m)$ time, we can compute an arithmetic progression 
    \[
        \{s\} + \{0, 1, 2, \ldots, m\} \subseteq 332kA,
    \]
    and obtain a randomized witness with $O(\min\{\frac{m}{n},n\} \log n + \log m)$ expected query time.\footnote{The witness is a Las Vegas algorithm. It always answers the query correctly, while the query time is randomized. }
\end{restatable}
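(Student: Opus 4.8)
The plan is to follow the two-phase skeleton behind Theorem~\ref{thm:ka-old} --- first construct a long arithmetic progression whose common difference may be large, then repeatedly shrink that common difference down to $1$ --- while replacing the two non-constructive ingredients of S\'ark\"ozy's and Lev's arguments by the tools advertised in the introduction: the greedy sumset operation in place of Mann's theorem, and the iterative augmentation framework in place of Kneser's theorem. Throughout we use that $0\in A$, so that $jA\subseteq (j+1)A$ for every $j$; hence it suffices to exhibit the target progression inside $hA$ for some $h\leq 332k$, and we may freely consume sumset copies as long as the running total stays $O(k)$. We also record that the hypothesis $n\geq (m+1)/k$ gives $m/n\leq k$.

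\textbf{Phase 1 (a long progression in a bounded sumset).} Sort $A=\{0=a_0<a_1<\cdots<a_{n-1}\}$ in $O(n\log m)$ time. Using the greedy sumset operation $\mathrm{gr}(\cdot,\cdot)$ --- which, given sets $X,Y$, returns in near-linear time a canonically ordered subset of the true sumset $X+Y$ of size $\Omega(|X|+|Y|)$, together with a data structure that resolves the map sending an element to its chosen decomposition $(x,y)$ by a single Las Vegas search in $O(\log(|X|+|Y|))$ time --- I would build, in $O(k)$ sumset rounds, an arithmetic progression $Q_0=\{s_0\}+\{0,d_0,2d_0,\ldots,\ell_0 d_0\}\subseteq h_0 A$ with $h_0=O(k)$ and with at least $m+1$ terms; this is precisely where the classical proof invokes Mann's theorem to assert that the relevant sumset is dense. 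The crucial consequence of having such a long progression inside a bounded sumset is that its common difference is automatically small: since $\ell_0 d_0\leq h_0 m=O(km)$ and $\ell_0\geq m$, we get $d_0=O(k)$.

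\textbf{Phase 2 (collapsing the common difference).} Now I iteratively augment. Given a current progression $Q_i=\{s_i\}+\{0,d_i,\ldots,\ell_i d_i\}\subseteq h_i A$ with $d_i>1$, use $\gcd(A)=1$ to locate an element $a\in A$ (or a greedy-sumset element with a known short representation) with $g:=\gcd(a,d_i)<d_i$, and adjoin the translates $Q_i+ja$ for $j=0,1,\ldots,d_i/g-1$. Provided $Q_i$ is long relative to $a$ --- which holds since $\ell_i\geq\ell_0\geq m\geq a/g$ --- these translates interlock into a single arithmetic progression $Q_{i+1}$ of common difference $g$ and length about $(\ell_i+1)(d_i/g)$, at the cost of at most $d_i/g-1$ extra copies of $A$, so $h_{i+1}=h_i+d_i/g-1$. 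The step is fully explicit: an element $z\in Q_{i+1}$ is routed to exactly one translate $Q_i+ja$ by a single division --- this is where the framework replaces the purely existential use of Kneser's theorem --- reducing the search for a representation of $z$ to a search in $Q_i$ plus $O(1)$ named elements of $A$. Since each round decreases the common difference by a factor $\geq 2$ and $\prod_i(d_i/g_i)=d_0$, after $O(\log m)$ rounds the common difference equals $1$; the copies consumed satisfy $\sum_i(d_i/g_i-1)\leq d_0=O(k)$ (as $\prod r_i\geq 1+\sum(r_i-1)$ whenever all $r_i\geq 2$), the length only grows and hence stays $\geq m+1$, and taking the first $m+1$ terms gives $\{s\}+\{0,1,\ldots,m\}\subseteq hA$ with $h=h_0+O(k)$, which is $\leq 332k$ after the hidden constants are tuned.

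\textbf{Witness, running time, and the main obstacle.} The witness stores the chain $Q_0,Q_1,\ldots$ together with the greedy-sumset structures from Phase 1. On query $z$ it performs $O(\log m)$ routing divisions to descend the chain to $Q_0$, then unwinds the $O(\min\{m/n,n\})$-fold decomposition of $z$ over $Q_0$ produced in Phase 1, each piece resolved by a single greedy-sumset search costing $O(\log n)$; the total expected query time is $O(\min\{m/n,n\}\log n+\log m)$, and the witness is Las Vegas since only the greedy-sumset searches are randomized. Every phase runs in $O(n\log m)$ time. I expect the main obstacle to be Phase 1: one must force the greedy-sumset construction to output a progression that is simultaneously long enough (at least $m+1$ terms), cheap (only $O(k)$ sumset rounds), and equipped with an efficiently traceable witness, using nothing beyond $n\geq(m+1)/k$ --- it is exactly this combination of largeness and additive structure that Mann's and Kneser's theorems supply non-constructively in the original argument. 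A secondary difficulty is verifying that the interlocking of translates in Phase 2 genuinely produces a strictly smaller common difference while keeping the per-level witness decomposition of uniform, $O(1)$ size.
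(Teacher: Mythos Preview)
Your two-phase skeleton matches the paper's, and you have correctly identified greedy sumsets and iterative augmentation as the constructive replacements for Mann's and Kneser's theorems. However, both phases contain genuine gaps.

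The decisive error is in Phase~2. You claim that the translates $Q_i+ja$ for $j=0,\ldots,d_i/g-1$ interlock into a progression ``of common difference $g$ and length about $(\ell_i+1)(d_i/g)$''. But translate $j$ occupies $[s_i+ja,\,s_i+ja+\ell_i d_i]$, and the interlocked progression lives only in the \emph{intersection} of these ranges, which has width $\ell_i d_i-(d_i/g-1)a$. Hence $\ell_{i+1}\approx(\ell_i-a/g)\cdot(d_i/g)$, not $(\ell_i+1)(d_i/g)$; there is a loss of roughly $a/g$ before the multiplicative gain (this is exactly the content of Lemma~\ref{lem:aug-method-1}). Since you choose $a\in A$ arbitrarily subject only to $d_i\nmid a$, the loss $a/g$ can be as large as $m$, and with your Phase-1 guarantee $\ell_0\geq m$ a single augmentation step can collapse the length to near zero. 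Your inductive claim ``the length only grows'' is therefore false. The paper confronts precisely this difficulty: it proves (Lemma~\ref{lem:small-and-large-gap}) that $A$ always contains either a pair of elements with gap at most $4m/n$ not divisible by $d_i$ --- so the loss is $O(m/n)$ --- or a pair with a large gap that \emph{is} divisible by $d_i$, which can first be used (Lemma~\ref{lem:aug-method-2}) to pump up the length and compensate for the subsequent loss. This two-pronged step (Lemma~\ref{lem:augment-method-ult}) is the heart of the augmentation, and your proposal omits it.

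Phase~1 is underspecified in a way that hides real work. Greedy sumsets by themselves produce dense \emph{sets}, not arithmetic progressions, and you give no mechanism for extracting a progression of length $\geq m+1$ with some common difference. The paper's Phase~1 is considerably more involved: greedy sumsets plus a random-sampling trick yield $\{0,1,\ldots,m\}\subseteq O(k)A$ only under the strong density hypothesis $\rho_m(A)\geq 2/k$ (Lemma~\ref{lem:ap-by-density-cons}); this is reduced to the cardinality hypothesis via a ``dense half'' argument (Lemmas~\ref{lem:dense-half-exist}--\ref{lem:cons-ap-by-cardinality-restricted}), which still requires $\{0,1\}\subseteq A$; finally, general $A$ is handled by locating the closest pair $(a^*,a^*+g)$ and working modulo $g$ (Lemma~\ref{lem:cons-ap-by-cardinality-short}). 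The output of this phase satisfies only $d\leq 2m/n$ and $\ell\cdot\min\{d,n\}\geq 5m$, which may be far short of $\ell\geq m$ --- so Phase~2 must simultaneously grow the length and shrink the common difference, again forcing the loss-compensation mechanism you are missing.
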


\noindent\textbf{Remark.} Here the data structure of the witness can be computed deterministically in $O(n\log m)$. Randomization is only necessary for returning the witness per query. Note that while a witness for some $t\in 332kA$ can consist of $\Theta(k)=\Theta(\frac{m}{n})$ elements, in the case of $n\le \frac{m}{n}$ the witness can be compactly encoded (i.e., by specifying the number of each distinct element used in the witness), so $O(\min\{\frac{m}{n},n\} \log n + \log m)$ time is sufficient.

 The query time of the witness is the best possible up to a logarithmic factor since a solution for $z \in kA$ may consist of $\min\{\frac{m}{n}, n\}$ distinct integers. Such a query time is highly non-trivial: in general, it remains open whether a solution for $z \in A + A + A$ (that is, find a solution for the 3SUM problem) can be found in truly sub-quadratic time. The main difference between our problem and 3SUM is that we are considering a "dense" case where $k = \Omega(\frac{m}{n})$. In a dense case, one can expect that there are many solutions for $a \in kA$ and that some of them can be found efficiently.

\subsubsection{Finite Addition Theorem II -- Long Arithmetic Progression in Subset Sums}
Building upon Theorem~\ref{thm:ka-old}, S{\'a}rk{\"o}zy~\cite{Sar94} further extended the finite addition theorem for Subset Sum. This extension has a similar flavor as Theorem~\ref{thm:ka-old}, which states that when $|A|\geq c\sqrt{m\log m}$, for some positive constant $c$, then the set of its subset sums, $\mathcal{S}(A)$, contains an arithmetic progression of length $m$. Almost 10 years later, Lev~\cite{Lev03} improved the constant. Below we present Lev's version. 

\begin{theorem}[Finite Addition Theorem II, {\cite[Theorem 3]{Lev03}}]\label{thm:ss-old}
    Let $A \subseteq \mathbb{Z}[1, m]$ be a set of $n$ integers. For any integer $\ell$ with  
    \[
       4m \leq  \ell \leq \frac{n^2}{12 \log (4m/n)},
    \]
    there exists an arithmetic progression
    \[
        \{s\} + \{0, d, 2d, \ldots, \ell d\} \subseteq \mathcal{S}(A)
    \]
    with $d \leq \frac{4m}{n}$. Moreover, each term of the arithmetic progression is a sum of at most $\frac{6\ell}{n}$ distinct elements of $A$.
\end{theorem}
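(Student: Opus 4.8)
The plan is to derive this subset-sum statement from the sumset statement of Theorem~\ref{thm:ka-old} by a \emph{blocking-and-concatenation} scheme: partition $A$ into many blocks that each occupy a short interval, invoke Theorem~\ref{thm:ka-old} inside each block to obtain a medium-length arithmetic progression in its subset sums, and then add these progressions. The point is that arithmetic progressions sharing a common difference $d$ add to a single longer progression with the same difference $d$; so if $q$ blocks each contribute a progression of length about $\ell/q$ with common difference $d$, their sumset contains a progression of length $\ell$ with common difference $d$ inside $\S(A)$.

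Concretely, I would sort $A=\{a_1<a_2<\cdots<a_n\}$ and cut it into consecutive blocks of a suitably chosen size $b$. Because the blocks are consecutive in sorted order and $a_n-a_1\le m$, the spans of the blocks sum to at most $m$, so most blocks lie in short intervals; a dyadic pigeonhole over the block spans (and, within a block, over the scales of its elements) then isolates $\Theta(q/\log(4m/n))$ blocks whose elements sit at comparable scales, which is the source of the $\log(4m/n)$ loss. For such a block $B_i$, after subtracting $\min B_i$ and dividing by $g_i=\gcd(B_i-\min B_i)$, Theorem~\ref{thm:ka-old} supplies an arithmetic progression of length $\approx \mathrm{span}(B_i)/g_i$ in a bounded-fold sumset of the normalized block, every term a sum of boundedly many of its elements. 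One checks that for a block of span $\le 2m/q$ the modulus $g_i$ is automatically at most $4m/n$; choosing the blocking to make these moduli coincide with a single value $d$ and the fold parameter so that each per-block progression has length at least $\lceil \ell/q\rceil$, one sums the selected blocks to obtain the claimed progression of length $\ell$ and common difference $d$, and a direct count of summands (per-block usage times number of blocks, then optimizing constants) yields the $\tfrac{6\ell}{n}$ bound.

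The main obstacle is that Theorem~\ref{thm:ka-old} controls the $k$-fold sumset $kB_i$, in which an element of $B_i$ may be reused up to $k$ times, whereas a subset sum forbids repetition. I would absorb this by working only with sums of a fixed number of elements and by further sub-partitioning each block into that many equal pieces, taking at most one element per piece, so that each transversal sum equals $k$ times a fixed scaled value plus a controlled error and the $k$-fold statement passes to a genuine subset-sum statement at the cost of constants---this, combined with the dyadic pigeonhole, is also what forces the generous constants ($4$, $6$, $12$) and the logarithmic factor in the statement. A secondary issue is arranging that the per-block common differences are literally equal rather than merely comparable; I expect this to be manageable by an additional dyadic selection over the $g_i$ (they all lie in $[1,4m/n]$) together with a mild adjustment of the blocking, but it is the place where the argument is most delicate.
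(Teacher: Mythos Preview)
The approach described in the paper (and in S{\'a}rk{\"o}zy/Lev's original proofs) is quite different from your blocking scheme, and your proposal has two genuine gaps.

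The standard route, as the paper summarizes in Section~\ref{sec:ss} and in the technique overview, is a \emph{multiplicity trick}: one passes to $B\subseteq A+A$ and keeps only those $b\in A+A$ that have at least $u$ pairwise-disjoint representations $b=a+a'$. Applying Theorem~\ref{thm:ka-old} \emph{once} to this single set $B$ yields a progression in $uB$ with a single common difference $d$; and because each $b\in B$ has $u$ disjoint representations, any element of $uB$ unpacks into a sum of $2u$ distinct elements of $A$. The point is that the repetition problem and the common-difference alignment problem are both dissolved by working with one set $B$ rather than many blocks.

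Your plan hits exactly these two obstacles. First, within a block $B_i$, Theorem~\ref{thm:ka-old} gives a progression in $kB_i$, not in $\mathcal{S}(B_i)$. Your proposed fix---sub-partition $B_i$ into $k$ pieces and take one element per piece---replaces ``progression in $kB_i$'' by ``progression in $B_{i,1}+\cdots+B_{i,k}$ for distinct sets $B_{i,j}$'', and Theorem~\ref{thm:ka-old} says nothing about such transversal sumsets. That is precisely the (harder) Szemer{\'e}di--Vu setting the paper singles out in Section~\ref{sec:conclude} as not covered by the finite addition theorems. Second, and more fatally, the per-block common differences $g_i=\gcd(B_i-\min B_i)$ need to be \emph{equal}, not merely in the same dyadic range, for the block progressions to concatenate. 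A dyadic pigeonhole over $g_i\in[1,4m/n]$ only places many blocks in an interval $[2^j,2^{j+1})$ containing up to $2^j$ values; a further exact pigeonhole would cost a factor of $2^j$ (up to $4m/n$), not $\log(4m/n)$, wiping out the parameter budget. You flag this as ``the place where the argument is most delicate,'' but it is not a detail---it is the core difficulty, and the multiplicity trick is exactly what circumvents it.
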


It is worth mentioning that Szemerédi and Vu~\cite{SV06a} later obtained a sharper version of Theorem~\ref{thm:ss-old} that only requires $n=\Omega(\sqrt{m})$ instead of $n=\Omega(\sqrt{m\log m})$. Their proof is completely different and heavily relies on an "inversion argument" which is non-constructive. It is far from clear how to make Szemerédi and Vu's proof constructive. Nevertheless, we can obtain a constructive version of Theorem~\ref{thm:ss-old}, as presented below. 

\begin{restatable}{theorem}{thmss}
\label{thm:ss}
    Let $A \subseteq \mathbb{Z}[1,m]$ be a set of $n$ integers. For any integer $\ell$ with 
    \[
        m \leq \ell \leq \frac{n^2}{5 \times 10^8\log (2n)},
    \]
    in $O(n \log n)$ time, we can compute a subset $A' \subseteq A$ with $|A'| \leq \frac{3 \times 10^5 \ell\log n}{n}$, an arithmetic progression
    \[
        \{s\} + \{0, d, 2d, \ldots, \ell d\} \subseteq \mathcal{S}(A')
    \]
    with $d \leq \frac{7m}{n}$, and obtain a randomized witness with $O(\frac{\ell \log n}{n})$ expected query time.
\end{restatable}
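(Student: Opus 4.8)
The plan is to retrace the classical reduction of Finite Addition Theorem~II to Finite Addition Theorem~I (S{\'a}rk{\"o}zy~\cite{Sar94}, Lev~\cite{Lev03}), carrying out every step constructively: Theorem~\ref{thm:ka} replaces Theorem~\ref{thm:ka-old}, and the arithmetic-progression augmentation framework replaces the non-constructive Kneser step. The combinatorial skeleton is: group $A$ by dyadic magnitude, so there are only $O(\log n)$ nonempty classes (since $m\leq\ell\leq n^{2}$ forces $m\leq n^{2}$); from each sufficiently populated class $B_{j}\subseteq[w_{j},2w_{j})$ extract an arithmetic progression inside $\mathcal S(B_{j})$ of some common difference $d_{j}$ and length $L_{j}$; and amalgamate the per-class progressions into one progression of length $\sum_{j}L_{j}\geq\ell$ inside $\mathcal S(A)$. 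A weighted choice among the classes, together with $\ell\geq m$, is what holds the final common difference below $7m/n$.

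To extract a progression from a class $B_{j}$ with $n_{j}=|B_{j}|$, I would split $B_{j}$ into sub-blocks and use subset sums of the sub-blocks:
\[
\mathcal S(B_{j})\ \supseteq\ \mathcal S(B_{j,1})+\mathcal S(B_{j,2})+\cdots+\mathcal S(B_{j,k_{j}}),
\]
the contributions being automatically over \emph{disjoint} element sets. This device --- manufacturing a $k_{j}$-fold-sumset-like object out of \emph{distinct} elements --- is exactly what keeps us inside subset sums rather than the with-repetition sumset that Theorem~\ref{thm:ka} produces, and it is the point where the two theorems diverge. After translating each sub-block to start at $0$ and pigeonholing out the $\Omega(k_{j})$ sub-blocks that are dense in a short interval (this quarantines the ``spread'' sub-blocks, where the common difference could blow up), the task reduces to a ``many distinct sets'' version of Theorem~\ref{thm:ka}: the sumset of sets $\widehat B_{j,i}$, each containing $0$, each inside $[0,O(w_{j}/k_{j})]$, with total size $n_{j}$, contains a progression of common difference $d_{j}=\gcd$ of all their differences, whose length $L_{j}$ can be tuned through $k_{j}$. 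I would prove this by seeding a short progression with the greedy-sumset operation and Theorem~\ref{thm:ka}, applied to the shifted-and-rescaled union of a few sub-blocks, and then feeding in fresh sub-blocks via the augmentation framework, each round provably multiplying the length by a constant while preserving the difference, until $L_{j}$ is reached in $O(\log n)$ rounds; the number of sub-blocks consumed --- hence the number of elements of $A$ a class-$j$ term uses --- is tuned, through $k_{j}$ and a per-round budget, so that summed over $j$ it is $O(\ell\log n/n)$ while $\sum_{j}L_{j}\geq\ell$.

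To amalgamate, the per-class progressions carry (possibly different) common differences $d_{j}$, all dividing a global $d\leq 7m/n$; summing them over $j$ gives --- again by a Kneser-type fill-in that I would realize constructively through the augmentation framework --- one progression $\{s\}+\{0,d,\dots,\ell d\}$ in $\mathcal S(A')$ with $A'=\bigsqcup_{j}(\text{used part of }B_{j})$ and $|A'|=O(\ell\log n/n)$. For the witness I would bookkeep the construction: each augmentation and amalgamation round is algorithmic and records, for every newly reachable residue, which fresh elements produced it; composing these records, adding the fixed per-class offsets (translations and minima of unused sub-blocks), and inverting the rescalings answers a query for $s+jd$ by a sum of $O(\ell\log n/n)$ distinct elements of $A$, which also bounds the query time. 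Building all the data structures --- dyadic buckets, the Theorem~\ref{thm:ka} structures, the per-round reachability tables --- costs $O(n\log n)$; randomness is used only, exactly as in Theorem~\ref{thm:ka}, when a query must reassemble a representation.

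The step I expect to be the main obstacle is the one that replaces Kneser's theorem, both inside a class (the many-distinct-sets augmentation) and across classes (the amalgamation). One must control, at once: (i) that each augmentation round makes genuine multiplicative progress in length without the common difference growing, which forces the ``spread'' sub-blocks to be quarantined beforehand by the dyadic grouping and the $O(w_{j}/k_{j})$ pigeonholing; (ii) that every round runs under a strict budget of \emph{previously-unused} elements, so the output is an honest subset of $A$ rather than a multiset --- this budgeted, non-revisiting flavor is the chief way the framework departs from a Kneser-style fill-in; and (iii) that the accumulated constants leave the amalgamated common difference below $7m/n$, which is precisely what pins down the weighting in the dyadic grouping and the choices of the $k_{j}$.
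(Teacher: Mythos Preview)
Your plan diverges from the paper's at the most delicate step, and the divergence is a genuine gap rather than an alternative route.

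The load-bearing ingredient in your outline is the ``many distinct sets'' version of Theorem~\ref{thm:ka}: given disjoint sub-blocks $\widehat B_{j,1},\ldots,\widehat B_{j,k_j}$, each containing $0$ and each inside $[0,O(w_j/k_j)]$, you need a long arithmetic progression inside the \emph{distinct-set} sumset $\widehat B_{j,1}+\cdots+\widehat B_{j,k_j}$, together with a witness. Theorem~\ref{thm:ka} does not give you this: applied to the union $U=\bigcup_i \widehat B_{j,i}$ it produces a progression in $kU$, i.e.\ with repetition, and nothing in your seeding step explains how to realize each term as a choice of one element per sub-block. The greedy-sumset/density machinery of Section~\ref{sec:density} likewise needs $1\in A$ to yield nonzero density, which shifted sub-blocks have no reason to satisfy. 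What you are implicitly invoking is a constructive form of the Szemer\'edi--Vu theorem for $A_1+\cdots+A_k$, and the paper's concluding section singles out exactly this as the open problem it could \emph{not} resolve with the present techniques. Your augmentation step inherits the same issue: ``feeding in fresh sub-blocks'' to lengthen a progression by a constant factor per round is precisely a per-round instance of the distinct-set sumset question.

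A second, independent gap is the amalgamation across dyadic classes. You assert that the per-class differences $d_j$ all divide a global $d\leq 7m/n$, but each $d_j$ is a gcd of shifted elements \emph{within} class $j$, and nothing ties these gcds together across $j$; in general $P_1+P_2$ for progressions with incommensurable differences is not a progression at all.

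For comparison, the paper bypasses both issues by working with $A-A$ rather than with sub-blocks of $A$. It extracts a conflict-free set $T$ of pairs $(a,a')\in A\times A$ whose gaps $g=a'-a$ are all at most $m/n$ (consecutive elements suffice), makes the gap multiset $u$-uniform, and observes (Lemma~\ref{lem:pair-to-subset-sum}) that $u(G_T\cup\{0\})$ embeds into $\mathcal S(A_T)$ by a fixed shift, because each gap $g$ has $u$ disjoint pair-representations. Now Theorem~\ref{thm:ka} applies \emph{legitimately} to the single small set $G_T\cup\{0\}$, yielding a short progression of length $\Theta(\ell/n)$ in $\mathcal S(A)$ with $d\leq m/n$. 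The long progression is then obtained by an augmentation that never needs a distinct-set sumset: each round consumes a fresh conflict-free batch of pairs from $A$ whose gaps are either small and not divisible by $d$ (to shrink $d$ via Lemma~\ref{lem:aug-idea-1}) or divisible by $d$ and of controlled size (to stretch the length via Lemma~\ref{lem:aug-method-2}); either case multiplies the length by $3/2$ using $O(d+\ell/n)$ fresh elements, so $O(\log n)$ rounds suffice. The single common difference is maintained throughout, so no cross-class amalgamation arises.
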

We remark that once we have Theorem~\ref{thm:ka}, S{\'a}rk{\"o}zy or Lev's proof for Theorem~\ref{thm:ss-old} can be made constructive directly, except that the time for returning an arithmetic progression can be as large as $\Omega(n^2)$ (see Subsection~\ref{subsec:tech-ss} for a detailed discussion). The merit of Theorem~\ref{thm:ss} is that it runs in near-linear time. 

We compare Theorem~\ref{thm:ss} with Theorem~\ref{thm:ss-old}. In Theorem~\ref{thm:ss-old}, the long arithmetic progression in $\mathcal{S}(A)$ may involve every integer in $A$. In contrast, our theorem shows that we can extract a small subset $A'$ of cardinality $O(\frac{\ell \log n}{n})$, whose subset sums already guarantee a long arithmetic progression.  For example, if $n = cm$ for some constant $c$, then a subset $A'$ of cardinality $O(\log n)$ is sufficient to guarantee that $\mathcal{S}(A')$ contains an arithmetic progression of length $m$. In some sense, $A'$ is a "coreset" of $A$ which is sparse but preserves the nice additive structure that yields a long arithmetic progression. The "denser" $A$ is, the "sparser" its coreset can be. To our best knowledge, such a coreset property has not been known before, nor can it be derived from the previous proofs. It may be of independent interest. We also remark that the cardinality of the coreset is the best possible (up to a constant factor) since in the case of $n = cm$, for $\mathcal{S}(A')$ to have $m$ distinct elements, $A'$ must have at least $\log m$ integers. 

It is worth mentioning that Galil and Margalit~\cite{GM91} also present a constructive version of Theorem~\ref{thm:ss-old}, and obtained a witness in $\widetilde{O}(m)$ time. The major advantage of our Theorem~\ref{thmss} is that it only requires $\widetilde{O}(n)$-time, while $m$ can be as large as $\widetilde{O}(n^2)$. In terms of techniques, the main advantage of our method is that it avoids the use of FFT (Fast Fourier Transform), which will be discussed later in Subsection~\ref{subsec:tech}.

\subsubsection{Applications in Unbounded Subset Sum and Dense Subset Sum}
With our constructive versions of finite addition theorems, we are able to close the gap between the time complexity of the decision version and the search version of several fundamental problems.

\paragraph{Unbounded Subset Sum.} Theorem~\ref{thm:ka} implies the following result for Unbounded Subset Sum.

\begin{restatable}{theorem}{thmunbounded}\label{thm:unbounded-subset-sum}
    Given $n$ positive integers $a_1 < a_2 < \cdots < a_n$ with $\gcd(a_1,a_2,\cdots,a_n)=1$ and an integer $t$, if 
    \[
        t \geq 333\lceil\frac{a_n}{n-1}\rceil a_{n-1},
    \]
    then in $O(n \log a_n)$ expected time, we can obtain $n$ non-negative integers $x_1, \ldots, x_n$ such that 
    \[
        t = a_1x_1 + \cdots + a_n x_n.
    \]
\end{restatable}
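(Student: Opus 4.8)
The plan is to derive Theorem~\ref{thm:unbounded-subset-sum} from the constructive finite addition theorem (Theorem~\ref{thm:ka}), using the largest integer $a_n$ only to correct a residue class and the remaining integers $a_1,\dots,a_{n-1}$ to supply a long interval of representable values. First I would clear the trivial cases: if $n=1$ then $a_1=1$ and $t=a_1t$; if $a_1=1$ then $x_1=t$ and $x_i=0$ for $i>1$. So assume $n\geq 2$ and $a_1\geq 2$, and set $d=\gcd(a_1,\dots,a_{n-1})$ and $b_i=a_i/d$ for $i\leq n-1$. Then $B:=\{0,b_1,\dots,b_{n-1}\}$ is a set of $n$ integers containing $0$ with $\gcd(B)=1$, and $\gcd(d,a_n)=1$ since $\gcd(a_1,\dots,a_n)=1$. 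Taking $k=\lceil (b_{n-1}+1)/n\rceil$ (so $|B|=n\geq (b_{n-1}+1)/k$) and applying Theorem~\ref{thm:ka} to $B$ with this $k$ gives, in $O(n\log b_{n-1})$ time, a value $s_B$ and a witness for
\[
  \{s_B\}+\{0,1,\dots,b_{n-1}\}\subseteq 332kB .
\]

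Next I would upgrade this progression to the statement ``every integer $N\geq s_B$ is a nonnegative integer combination of $b_1,\dots,b_{n-1}$, with the combination computable from the witness.'' Since the progression has common difference $1$ and $b_1\leq b_{n-1}$, one writes $N-s_B=qb_1+j'$ with $0\leq j'<b_1$; then $s_B+j'$ lies in the interval $\{s_B,\dots,s_B+b_{n-1}\}$, so the witness yields a representation of $s_B+j'$ as a sum of at most $332k$ of the $b_i$, and $N=(s_B+j')+qb_1$ adds $q$ copies of $b_1$. Now, since $\gcd(d,a_n)=1$, there is a unique $c\in\{0,\dots,d-1\}$ with $ca_n\equiv t\pmod d$ (computed from $a_n^{-1}\bmod d$), and $w:=(t-ca_n)/d$ is an integer. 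Provided $w\geq s_B$, write $w=\sum_{i\leq n-1}b_iy_i$ as above; then $t=d\sum_{i\leq n-1}b_iy_i+ca_n=\sum_{i\leq n-1}a_iy_i+a_nc$, so $x_i:=y_i$ for $i\leq n-1$ and $x_n:=c$ are the desired nonnegative integers.

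The crux is the magnitude bound $w\geq s_B$, equivalently $t-ca_n\geq ds_B$, and this is where the hypothesis $t\geq 333\lceil \tfrac{a_n}{n-1}\rceil a_{n-1}$ enters. The two clean facts that drive it are: (i) $d\leq a_{n-1}/(n-1)$, since $a_1,\dots,a_{n-1}$ are $n-1$ distinct positive multiples of $d$, all at most $a_{n-1}$; and (ii) $k\leq \lceil \tfrac{a_n}{n-1}\rceil=:K$, via the chain $(b_{n-1}+1)/n\leq (a_{n-1}+1)/n\leq a_n/n\leq a_n/(n-1)$ together with monotonicity of $\lceil\cdot\rceil$. Since every element of $332kB$ is at most $332k\,b_{n-1}$, we get $s_B<332k\,b_{n-1}$, hence $ds_B<332k\,d\,b_{n-1}=332k\,a_{n-1}\leq 332K a_{n-1}$; on the other hand $(d-1)a_n\leq \tfrac{a_{n-1}}{n-1}a_n\leq K a_{n-1}$ by (i) and (ii), so $t-ca_n\geq 333K a_{n-1}-(d-1)a_n\geq 332K a_{n-1}>ds_B$. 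I expect this constant bookkeeping, and more importantly the design decision behind it, to be the main obstacle: applying Theorem~\ref{thm:ka} naively to $\{0,a_1,\dots,a_n\}$ would force the progression into $332k'A$ with $m=a_n$, whose elements reach $\Theta(a_n^2/n)$ and overshoot the threshold $\Theta(a_na_{n-1}/n)$ whenever $a_{n-1}\ll a_n$; routing $a_n$ through the residue correction keeps all relevant magnitudes at $\Theta(a_{n-1}^2/n)$.

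Finally I would check the running time. Computing $d$ and the $b_i$ takes $O(n\log a_n)$; the preprocessing of Theorem~\ref{thm:ka} on $B$ takes $O(n\log b_{n-1})=O(n\log a_n)$; computing $c$ from $a_n^{-1}\bmod d$ takes $O(\log a_n)$; a single witness query for $s_B+j'$ takes $O(\min\{b_{n-1}/n,n\}\log n+\log b_{n-1})=O(n\log a_n)$ expected time (using $n\leq a_n$), after which $x_1,\dots,x_n$ are assembled in $O(n)$ time (adding $q$ to the multiplicity of $a_1$). The total is $O(n\log a_n)$ expected time, the randomness coming only from the one witness query. Two boundary points to verify in passing: $b_{n-1}=a_{n-1}/d\geq 1$ always, so Theorem~\ref{thm:ka} applies; and for $n=2$ the argument degenerates gracefully ($d=a_1$, $b_1=1$, $B=\{0,1\}$ with $s_B=0$), recovering the classical two-integer computation.
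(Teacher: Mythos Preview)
Your proof is correct and follows the same route as the paper: use $a_n$ to fix the residue modulo $d=\gcd(a_1,\dots,a_{n-1})$, apply Theorem~\ref{thm:ka} to $\{0,a_1/d,\dots,a_{n-1}/d\}$, and extend the resulting interval to cover all sufficiently large integers. The only cosmetic difference is that the paper takes $m=a_n-1$ in Theorem~\ref{thm:ka} and extends the progression with copies of $a_n$ (absorbed into $x_n$), whereas you take $m=b_{n-1}$ and extend with copies of $b_1$; the constant bookkeeping leading to the threshold $333\lceil a_n/(n-1)\rceil a_{n-1}$ is identical.
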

Note that the threshold of $333\lceil\frac{a_n}{n-1}\rceil a_{n-1}$ in our theorem is within an $O(1)$ factor of the Erd\H{o}s-Graham bound (See~\eqref{eq:Frobenius}). Moreover, the running time is almost linear. 

\paragraph{Dense Subset Sum} Using Theorem~\ref{thm:ss}, we can adapt Bringmann and Wellnitz's $\tilde{O}(n)$-time algorithm~\cite{BW21} for Dense Subset Sum so that it not only solves the decision version, but also returns a subset that sums to $t$ when the answer to the decision problem is "yes". Since the theorem is quite involved, we only present an informal version here. One may refer to Theorem~\ref{thm:dense} for a formal version of the theorem.

\begin{theorem}\label{thm:dense-ss-set-informal}
    Given a set of $A$ of $n$ positive integers and a target $t \leq \frac{1}{2} \sum_{a \in A}a$, if \footnote{Here $f \gg g$ means $f$ is at least a polylogarithmic factor larger than $g$. }
    \[
        t \gg \max(A)\cdot \sum_{a \in A}a \cdot \frac{1}{n^2}, 
    \]
    then in $\tilde{O}(n)$ expected time, we can decide whether there is some subset $A' \subseteq A$ that sums to $t$, and returns such a set when it exists. 
\end{theorem}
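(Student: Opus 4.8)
The plan is to run the $\tilde{O}(n)$-time decision algorithm of Bringmann and Wellnitz~\cite{BW21} for Dense Subset Sum essentially unchanged, and to show that each step which produces a ``yes'' certificate can be made to also emit the witnessing subset of $A$. The only genuinely non-constructive ingredient in~\cite{BW21} is the finite addition theorem for subset sums (Theorem~\ref{thm:ss-old}): wherever it is invoked --- on a random part of $A$, or on a magnitude bucket --- we instead call our constructive Theorem~\ref{thm:ss}, which returns the progression \emph{together with} a witness; and wherever a long progression inside a $k$-fold sumset is needed we use Theorem~\ref{thm:ka} in the same way. All the other machinery of~\cite{BW21} --- an initial random split of $A$ into a constant number of roughly equal parts, a bucketing of $A$ by magnitude, a greedy choice of a ``bulk offset'' subset, and a residue-balancing step modulo the common difference $d$ --- is elementary and can be made to record which elements it consumes. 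The substance of the proof is this bookkeeping.

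\medskip\noindent
Concretely, I would reserve pairwise disjoint parts of $A$: a part $A_{\mathrm{AP}}$ on which Theorem~\ref{thm:ss}, applied for a suitable $\ell$ in its admissible range, produces a progression $\{s\}+\{0,d,2d,\dots,\ell d\}\subseteq\mathcal{S}(A_{\mathrm{AP}})$ with $d\leq 7\mu/n'$ (writing $\mu:=\max(A_{\mathrm{AP}})$, $n':=|A_{\mathrm{AP}}|=\Theta(n)$); a part $A_{\mathrm{bulk}}$ from which a greedily chosen subset, included wholesale, shifts the target into the window reachable as ``(progression term) $+$ (small correction)''; and a part $A_{\mathrm{fix}}$ supplying the small correction that fixes the residue modulo $d$. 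The hypotheses $t\leq\frac12\sum_{a\in A}a$ and $t\gg\max(A)\cdot\big(\sum_{a\in A}a\big)/n^2$ are exactly what make these three parts simultaneously available: they place $t$ in the ``central'' range of $\mathcal{S}(A)$ analysed in~\cite{BW21}; they force $n^2/\max(A)$ to be at least polylogarithmic, which is the density hypothesis of~\cite{BW21}; and, with the polylogarithmic slack hidden in ``$\gg$'' chosen large enough, they guarantee $\max(A)\leq (n')^2/(5\times 10^8\log 2n')$, so Theorem~\ref{thm:ss} genuinely applies to $A_{\mathrm{AP}}$ and we may take $\ell$ as large as $\max(A)$, giving a progression whose length $\ell d$ exceeds every element of $A$, together with a witness of expected query time $O(\ell\log n'/n')=\tilde{O}(n)$.

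\medskip\noindent
The search then runs as follows. Greedily add elements of $A_{\mathrm{bulk}}$ until their sum $\sigma_B$ first lands in the length-$\ell d$ interval $[\,t-s-\ell d,\; t-s\,]$ (possible since $\ell d$ exceeds the step size $\max(A)$, and $A_{\mathrm{bulk}}$ is chosen with total sum past $t-s-\ell d$); let $B$ be the chosen subset. Using $A_{\mathrm{fix}}$, find an explicit small subset $S$ with $\sum_{a\in S}a\equiv t-\sigma_B\pmod d$ and $s\leq\sigma_B+\sum_{a\in S}a\leq t$ --- such an $S$ exists by the residue-balancing argument of~\cite{BW21} and can be produced either by tracing that argument or by a small bounded-cardinality subset-sum computation modulo $d$, both in $\tilde{O}(n)$ time. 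Finally set $j:=\big(t-\sigma_B-\sum_{a\in S}a\big)/d\in[0,\ell]$ and query the witness of Theorem~\ref{thm:ss} once, for the term $s+jd$, to obtain $T\subseteq A_{\mathrm{AP}}$ with $\sum_{a\in T}a=s+jd$. Since $A_{\mathrm{AP}},A_{\mathrm{bulk}},A_{\mathrm{fix}}$ are pairwise disjoint, $B\cup S\cup T$ is a subset of $A$ summing to $t$, which we output; under the stated hypotheses~\cite{BW21} guarantees this path is always taken, and in the precluded event that its decision routine reports ``no'' we report ``no''. Every phase --- the split, the $\tilde{O}(1)$ applications of Theorems~\ref{thm:ss} and~\ref{thm:ka} over the buckets, the choice of $B$, the correction search, and the $O(1)$ witness queries --- runs in $\tilde{O}(n)$ expected time, for $\tilde{O}(n)$ total.

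\medskip\noindent
The hard part is the bookkeeping, not the headline substitution. One must verify that \emph{every} auxiliary step of~\cite{BW21} --- the random split, the magnitude bucketing, the residue-balancing/color-coding step, and any ``shrink-the-instance'' recursion --- preserves the provenance of the elements it uses, and that the parts reserved for the progression(s), the bulk offset, and the correction remain pairwise disjoint, so that the final union is a legitimate subset of $A$ summing to exactly $t$. A secondary point is calibration: Theorem~\ref{thm:ss} gives $d\leq 7\mu/n'$, only a constant factor off the $4m/n$ of Theorem~\ref{thm:ss-old} against which~\cite{BW21} is phrased, and it works for \emph{every} $\ell$ in its range, so the windows used in~\cite{BW21} can be matched after absorbing the changed constants into the polylogarithmic factor hidden in ``$\gg$''. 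The fully quantitative version, with all constants, is Theorem~\ref{thm:dense}.
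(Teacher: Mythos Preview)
Your proposal is correct and is essentially the paper's approach: run Bringmann--Wellnitz's algorithm unchanged, replace the single non-constructive ingredient (the finite addition theorem for subset sums) by Theorem~\ref{thm:ss}, and trace provenance through every other step. The paper executes exactly this plan in Appendix~\ref{apsec:dense}, and your three-part assembly $(B,S,T)$ matches the paper's $(G',R',P')$.

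One clarification worth flagging: your description of the internals of~\cite{BW21} (``random split'', ``bucketing by magnitude'') does not match what actually happens there. The real structure is a two-level reduction. First one computes an integer $\gamma$ so that $A':=A(\gamma)/\gamma$ has no $\alpha$-almost divisor, and solves a modular subset sum over $A\setminus A(\gamma)$ to find $Y$ with $\Sigma_Y\equiv t\pmod\gamma$; only then does one partition $A'$ into a remainder set $R$, a progression set $P$, and a bulk set $G$. The $\gamma$-reduction is not decorative: the residue-fixing step (your $S$, the paper's $R'$) requires $R$ to be $d$-complete for a $d$ that is not known when $R$ is built, and the construction that guarantees this relies on $A'$ having no almost divisor. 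Your sketch skips straight to the three-part split and appeals to~\cite{BW21} for the residue step; that appeal is valid, but only once the $\gamma$-reduction has been carried out. Since you do say you run the full~\cite{BW21} pipeline, this is a presentational gap rather than a logical one.
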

We remark that the above result can be extended to multi-sets as in~\cite{BW21}, but we leave it to Appendix~\ref{apsec:dense}.

\subsection{Technique Overview}\label{subsec:tech}
\subsubsection{Overview of S{\'a}rk{\"o}zy's Proof for Finite Addition Theorem I (Theorem~\ref{thm:ka-old})}\label{subsec:tec-review}

S{\'a}rk{\"o}zy's proof can be decomposed into two parts. The first part guarantees the existence of an arithmetic progression whose length is moderately large (but can be potentially much smaller than $m$). This part builds upon the famous Mann's theorem. The second part prolongs the arithmetic progression found in the first step. 
Let $g$ be the common difference of the arithmetic progression found in the first step. The second part targets showing the existence of some $g'$ which divides $g$, such that the "gaps" of distance $g$ in the arithmetic progression can now be subdivided into "gaps" of distance $g'$, thus blowing up the length of the arithmetic progression by $g/g'$ times. The existence of such a $g'$ is guaranteed by Kneser's theorem. We remark that Lev's proof~\cite{Lev97} follows a similar approach, despite that it provides a much fine-grained analysis in several places. 

Towards a constructive theorem, we thus need to revisit Mann's theorem and Kneser's theorem. 

\paragraph{Mann's theorem} In his seminal paper \cite{schnirelmann1933additive}, Schnirelmann initiated the study of {\it Schnirelmann density}, which builds the foundation of many subsequent studies in additive combinatorics\footnote{The definition of Schnirelmann density is on infinite sets, but relevant results can be adapted to finite sets. As this paper focuses on finite sets, we shall present relevant concepts and equations in a form that only involves finite sets. }. 

Let $A \subseteq \mathbb{Z}$ be a set of integers and $0\in A$.  The density of $A$ is defined as follows. 
\[
    \rho_m(A) = \min_{z' \in \mathbb{Z}[0,m]}\frac{A \cap [1, z']}{z'}.
\]

Schnirelmann showed that
\begin{eqnarray}\label{eq:schnirelmann}
    \rho_m(A + B) \ge  \rho_m(A) + \rho_m(B) - \rho_m(A)\rho_m(B).
\end{eqnarray}

Schnirelmann and Landau \cite{khinchin1952three} conjectured that Eq~\eqref{eq:schnirelmann} can be sharpened to the following.
\begin{eqnarray}\label{eq:mann}
 \rho_z(A+B)\ge \min\{\rho_z(A)+\rho_z(B),1\}.
 \end{eqnarray}

Eq~\eqref{eq:mann} was proved later by Mann~\cite{man42} and is now known as Mann's theorem.
 
 Let us compare Eq~\eqref{eq:schnirelmann} and Eq~\eqref{eq:mann}. Eq~\eqref{eq:mann} is obviously sharper. However, since its proof is based on a minimal counterexample, it is non-constructive. In particular, suppose we know that $t\in A+B$ and want to search for $(a,b)\in A\times B$ with $a+b=t$, then Mann's proof does not help. 

Interestingly, Schnirelmann's proof is constructive, albeit that it yields a weaker bound. In particular, Schnirelmann proved Eq~\eqref{eq:mann} by considering only the sums that can be obtained in a specific greedy way from $A$ and $B$. In other words, knowing that $t\in A+B$ from Schnirelmann's proof allows us to efficiently search for $(a,b)\in A\times B$ with $a+b=t$. It points out a way to replace the usage of Mann's theorem with Schnirelmann's theorem. Details will be elaborated in Subsection~\ref{subsec:techka}.

\paragraph{Kneser's theorem} In the 1950s, Kneser~\cite{kneser1953abschatzung} established a famous theorem in additive combinatorics. It is commonly stated in terms of abelian groups, but as our work focuses on integer sets, we shall follow its statement on integer sequences (\cite{HR83}, page 52, Theorem 16), and the following is an essentially equivalent statement (see also Lemma 4 of~\cite{Sar89}). 

Let $\mathbb{N}=\{0,1,2,\cdots\}$ denote the set of nonnegative integers. 

\begin{theorem}[Kneser's theorem]~\label{thm:kneser}
	Assume that $h\in\mathbb{N}$, $g\in\mathbb{N}$, $A\subset \mathbb{N}$, and $A$ is the union of $\gamma$ distinct residue class modulo $g$: $A=\bigcup_{i=1}^{\gamma}\{a_i,a_i+g,a_i+2g,\cdots\}$, where $a_i\not\equiv a_j (\mod g)$ for $i\neq j$. Then there is a divisor $g'$ of $g$ such that $hA$ is the union of $\gamma'$ distinct residue classes modulo $g'$: $hA=\bigcup_{i=1}^{\gamma'}\{e_i,e_i+g,e_i+2g,\cdots\}$ where
\begin{eqnarray}\label{eq:kneser}
 \frac{\gamma'}{g'}\geq h\frac{\gamma}{g}-\frac{h-1}{g'}.
\end{eqnarray}	

\end{theorem}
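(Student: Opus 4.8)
Since Kneser's theorem is quoted here as a classical result, the plan is to reduce the integer-sequence statement to the standard abelian-group form and then recall the classical proof of the latter. First I would pass to the finite cyclic group $\mathbb{Z}/g\mathbb{Z}$: writing $\overline{A}=\{a_1,\dots,a_\gamma\}\pmod g$, the hypothesis that $A$ is a union of $\gamma$ full residue rays of difference $g$ means that $hA$ is closed under adding $g$ (replace one summand $a_i$ by $a_i+g\in A$) and that $\overline{hA}=h\overline{A}$, the sumset taken inside $\mathbb{Z}/g\mathbb{Z}$; hence $hA$ coincides, up to finitely many small elements, with the preimage of $h\overline{A}$, and in particular it is eventually periodic with period dividing $g$. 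The subgroups of $\mathbb{Z}/g\mathbb{Z}$ are exactly the $(g/g')\mathbb{Z}/g\mathbb{Z}$ for divisors $g'\mid g$; letting $H$ be the stabiliser $\{x:x+h\overline{A}=h\overline{A}\}$, of order $g/g'$ for the corresponding $g'$, the set $h\overline{A}$ is a disjoint union of $\gamma'$ cosets of $H$, and each such coset lifts to a single residue class modulo $g'$. This yields the claimed decomposition of $hA$ into $\gamma'$ residue classes, with each $e_i$ taken to be the least element of $hA$ in its class. Counting, $\gamma'/g'=|h\overline{A}|/g$ and $\gamma/g=|\overline{A}|/g$, so the asserted inequality $\gamma'/g'\geq h\gamma/g-(h-1)/g'$ is \emph{equivalent} to
\[
    |h\overline{A}|\ \geq\ h\,|\overline{A}|-(h-1)\,|H|,
\]
which is precisely Kneser's inequality in $\mathbb{Z}/g\mathbb{Z}$ with $H$ the stabiliser of $h\overline{A}$.

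It then remains to prove this group inequality. I would first establish the two-summand version: for a finite abelian group $G$ and nonempty $S,T\subseteq G$, if $H$ denotes the stabiliser of $S+T$ then $|S+T|\geq |S+H|+|T+H|-|H|$, and hence $|S+T|\geq |S|+|T|-|H|$. The classical route uses the Dyson--Kneser $e$-transform: for $e\in G$ put $S_e=S\cup(T+e)$ and $T_e=T\cap(S-e)$, so that $S_e+T_e\subseteq S+T$, $|S_e|+|T_e|=|S|+|T|$, $S\subseteq S_e$ and $T_e\subseteq T$. Passing to a counterexample that minimises, say, $|T|$ and then $|S+T|$, repeated $e$-transforms force $T$ to lie in a single coset of the stabiliser $H$; one then quotients by $H$, where the bound degenerates to the trivial $|S+T|\geq |S|+|T|-|H|$, and pulls back (I would simply cite this argument, e.g.\ Halberstam--Roth or Nathanson). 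The $h$-fold statement then follows by induction on $h$: applying the two-set version to $S=(h-1)\overline{A}$ and $T=\overline{A}$, using the elementary fact that the stabilisers of the iterated sumsets are nested (if $K$ fixes $(h-1)\overline{A}$ then $K$ fixes $h\overline{A}$), and feeding in the induction hypothesis while keeping track of the relevant $H$-periodicities, one arrives at $|h\overline{A}|\geq h|\overline{A}|-(h-1)|H|$.

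The main obstacle is that the group version of Kneser's theorem has no genuinely short proof: the $e$-transform together with the minimal-counterexample argument is the real content, and the induction on $h$ requires careful bookkeeping of the stabilisers. More importantly for the present paper, this argument is inherently \emph{non-constructive} --- the minimal counterexample certifies that $hA$ contains the claimed union of residue classes but provides no way to exhibit, for a prescribed element of $hA$, a representation as a sum of $h$ elements of $A$. This is exactly the obstruction that forces us to \emph{replace} the use of Kneser's theorem by the iterative arithmetic-progression-augmentation framework described above, rather than attempt to make the proof sketched here efficiently constructive.
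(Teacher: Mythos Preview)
The paper does not give its own proof of this theorem; it is quoted as classical background with a pointer to Halberstam--Roth (p.~52, Theorem~16) and to S\'ark\"ozy's Lemma~4. Your sketch is exactly the standard argument those sources give: pass to $\mathbb{Z}/g\mathbb{Z}$, take $H=\mathrm{Stab}(h\overline{A})$ with $|H|=g/g'$, observe that inequality~\eqref{eq:kneser} is equivalent to $|h\overline{A}|\ge h|\overline{A}|-(h-1)|H|$, establish the two-set Kneser bound via the Dyson $e$-transform and a minimal-counterexample reduction, and induct on $h$ using $\mathrm{Stab}((h-1)\overline{A})\subseteq\mathrm{Stab}(h\overline{A})$. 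This is correct and matches the cited route.

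One small caveat on the lift back to $\mathbb{N}$: you write that ``each such coset lifts to a single residue class modulo $g'$'' with $e_i$ the least element of $hA$ in its class. In general $hA$ is only \emph{eventually} a union of full mod-$g'$ rays (e.g.\ $g=6$, $A=\{1,7,13,\dots\}\cup\{1000,1006,\dots\}$, $h=2$: the stabiliser gives $g'=3$, yet $5\notin 2A$), so the displayed equality in the paper's statement is a little loose (and indeed has a visible typo, step $g$ where $g'$ is meant). Your own hedge ``up to finitely many small elements'' is the right fix, and in any case this does not affect~\eqref{eq:kneser}, which is the only part used downstream.

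Your closing paragraph is precisely the paper's point: the $e$-transform/minimal-counterexample proof is intrinsically non-constructive, and that is why Subsection~\ref{subsec:aug} replaces the appeal to Kneser's theorem by the explicit iterative AP-augmentation framework rather than attempting to make the argument you sketched algorithmic.
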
 
 Note that since $a_i$'s represent different residues modulo $g$, we may assume without loss of generality that $a_i\in\mathbb{Z}[0,g-1]$.

 
 The above theorem provides a structural characterization of modular $h$-fold sumsets for arbitrary $h$. 
 In particular, S{\'a}rk{\"o}zy~\cite{Sar89} utilized Theorem~\ref{thm:kneser} to show the following Lemma.
 
 \begin{lemma} [Kneser's theorem in the dense case]\label{lemma:sar-lemma} 
 	Assume that $g\in\mathbb{N}$, $A\subset \mathbb{N}$, $d\neq d' (\mod g)$ for $d\in A$, $d'\in A$, and let $|A|=\gamma$. Then there is a divisor $g'$ of $g$ and a number $h\in\mathbb{N}$ such that $g'\le \lfloor 2g/\gamma\rfloor$, $h\le 2\lfloor2g/\gamma\rfloor$, and for each of $i=1,2,\cdots,g/g'$, there is a $z_i\in hA$ with $z_i\equiv ig' (\mod g)$.
 \end{lemma}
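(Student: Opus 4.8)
The plan is to deduce the lemma almost directly from Kneser's theorem, the only real work being a careful choice of the number of summands $h$ and a short dichotomy on whether the resulting modular sumset has already exhausted the whole group.

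First I would pass to the cyclic group $G=\mathbb{Z}/g\mathbb{Z}$ and let $\bar A\subseteq G$ be the image of $A$, so $|\bar A|=\gamma$ by the hypothesis that the elements of $A$ occupy distinct residue classes modulo $g$. For a divisor $g'$ of $g$, the subgroup $\langle g'\rangle\leq G$ of order $g/g'$ is precisely the set of multiples of $g'$ in $\{0,1,\ldots,g-1\}$, so the conclusion ``for each $i=1,\ldots,g/g'$ there is $z_i\in hA$ with $z_i\equiv ig'\pmod{g}$'' is equivalent to $\langle g'\rangle\subseteq h\bar A$, where $h\bar A=\{\bar a_1+\cdots+\bar a_h:\bar a_1,\ldots,\bar a_h\in\bar A\}$ is the $h$-fold sumset taken in $G$. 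If $\gamma=g$ then $\bar A=G$ and $h=1$, $g'=1$ suffice; so assume $\gamma<g$ and put $h:=\lceil 2g/\gamma\rceil\geq 3$, which gives $h\gamma\geq 2g$, and, as $2g/\gamma\geq 2$ and $\lceil x\rceil\leq 2\lfloor x\rfloor$ for every $x\geq 2$, also $h\leq 2\lfloor 2g/\gamma\rfloor$.

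Next I would invoke Kneser's theorem (Theorem~\ref{thm:kneser}), applied to the set $\{a\bmod g:a\in A\}+\{0,g,2g,\ldots\}\subseteq\mathbb{N}$ (a union of $\gamma$ residue classes modulo $g$) and then reduced modulo $g$: it yields a divisor $g'$ of $g$ for which $h\bar A$ is a union of $\gamma'$ cosets of $\langle g'\rangle$ in $G$, with
\[
    |h\bar A|\;=\;\gamma'\cdot\frac{g}{g'}\;\geq\;h\gamma-(h-1)\frac{g}{g'} .
\]
Now the dichotomy. If $h\bar A=G$ then already $\langle g'\rangle\subseteq h\bar A$ for $g'=1$ (and $1\leq\lfloor 2g/\gamma\rfloor$), and we are done. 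Otherwise $h\bar A$ omits at least one coset of $\langle g'\rangle$, so $\gamma'\leq g'-1$ and $|h\bar A|\leq g-g/g'$; feeding this into the displayed inequality gives $h\gamma-g\leq(h-2)\,g/g'$, and since $h-2\geq 1$ and $h\gamma-g\geq g>0$ this rearranges to
\[
    g'\;\leq\;\frac{(h-2)g}{h\gamma-g}\;\leq\;h-2\;=\;\lceil 2g/\gamma\rceil-2\;\leq\;\lfloor 2g/\gamma\rfloor .
\]
Either way $g'\mid g$, $g'\leq\lfloor 2g/\gamma\rfloor$, and $h\leq 2\lfloor 2g/\gamma\rfloor$.

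It remains to secure $\langle g'\rangle\subseteq h\bar A$ in the second case. Since $h\bar A$ is a union of cosets of its stabilizer $\langle g'\rangle$, it is enough that $0\in h\bar A$, which holds because $0\in A$ (the normalization we carry throughout the application, cf.\ Theorem~\ref{thm:ka-old}), whence $0\in h\bar A$ for every $h\geq 1$. Then for each $i=1,\ldots,g/g'$ the residue $ig'\bmod g$ lies in $\langle g'\rangle\subseteq h\bar A$, so there is $z_i\in hA$ with $z_i\equiv ig'\pmod{g}$, as required. I expect the choice of $h$ to be the crux: it must be large enough ($h\geq 2g/\gamma$) that the Kneser inequality combined with the coset count forces $g'\leq\lfloor 2g/\gamma\rfloor$, yet small enough to stay within $2\lfloor 2g/\gamma\rfloor$, and the bookkeeping with floors and ceilings needed to hit the stated constants exactly is the delicate point; making sure the covered coset of $\langle g'\rangle$ is the trivial one is a secondary matter, handled by the normalization $0\in A$.
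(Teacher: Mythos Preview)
Your proposal is correct and follows exactly the route the paper attributes to S\'ark\"ozy: derive the lemma from Kneser's theorem (Theorem~\ref{thm:kneser}) by choosing $h\approx 2g/\gamma$, reading off the stabilizer subgroup $\langle g'\rangle$, and bounding $g'$ via the coset count when $h\bar A\neq G$. The paper does not give its own proof of this lemma---it merely cites \cite{Sar89}---so there is nothing further to compare; your bookkeeping with $h=\lceil 2g/\gamma\rceil$ and the chain $g'\leq (h-2)g/(h\gamma-g)\leq h-2\leq\lfloor 2g/\gamma\rfloor$ is clean and matches the intended argument.

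One remark: the hypothesis $0\in A$ that you invoke in the final step is not part of the lemma's stated hypotheses, and without it the conclusion as written (that the \emph{trivial} coset $\langle g'\rangle$, rather than some coset, lies in $h\bar A$) need not follow from your fixed choice of $h$. You are right that in every place the paper (and S\'ark\"ozy) actually uses the lemma, $0\in A$ holds by normalization, so this is not a genuine gap for the application---but strictly speaking it is an extra assumption you are importing into the lemma, and it would be worth flagging explicitly that the lemma as stated is being read with that convention.
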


 Lemma~\ref{lemma:sar-lemma} can be viewed as a consequence of Theorem~\ref{thm:kneser} in a "dense" case where $h$ is moderately large (and $g'$ can thus be moderately small via Eq~\eqref{eq:kneser}). We remark that Erd\H{o}s and Graham also obtained a similar result through Theorem~\ref{thm:kneser} for the Frobenius number~\cite{erdos1972linear}. 
 

 As we mentioned before, the second part of S{\'a}rk{\"o}zy's proof uses some $g'$ to blow up the arithmetic progression obtained in the first part, and it is exactly the $g'$ stated in Lemma~\ref{lemma:sar-lemma}. Unfortunately, Kneser's theorem as well as Lemma~\ref{lemma:sar-lemma} does not give the value of $g'$. Given that no known algorithms can solve integer factorization in polynomial time, even trying all divisors of $g$ by bruteforce requires a running time super-polynomial in $\log g$.

  It is far from clear whether there exists a polynomial time algorithm for computing the $g'$ stated in Theorem~\ref{thm:kneser}. Nevertheless, we shall present a method that yields a near-linear time for computing the $g'$ in the dense case, 
  and returning a witness.


\subsubsection{Overview of Our Technical Contribution in Theorem~\ref{thm:ka}}\label{subsec:techka}
\paragraph{Bypassing Mann's theorem - Schnirelmann's proof strikes back.} 

As we mentioned,  Eq~\eqref{eq:schnirelmann} was showed by Schnirelmann by considering a specific subset $C\subseteq A+B$. In particular, Schnirelmann's proof implies that every $c\in C$ can be represented as $a+b$ where $a$ is the largest integer in $A$ that is smaller than or equal to $c$. Inspired by Schnirelmann's proof, we introduce a new operation $\oplus$ between two sets $A$ and $B$, called greedy sumset. The greedy sumset $A\oplus B$ returns $C$, which corresponds exactly to the proof of Schnirelmann. Consequently, knowing that $c\in A\oplus B$, searching for $a\in A$ and $ b\in B$ with $a+b=c$ can be easily done in logarithmic time provided that $A$ is presorted.


Our plan is to replace the sumset computation in S{\'a}rk{\"o}zy's proof with greedy sumset, with the hope that Finite Addition Theorems I and II still hold (maybe at the cost of blowing up certain parameters by $O(1)$ times). Towards this goal, let us compare $A+B$ with $A\oplus B$. It is worth noticing that the size of  $A\oplus B$ can be significantly smaller than $A+B$. A crucial observation is that if one only uses the lower bound of Eq~\eqref{eq:mann} to estimate the density of $A+B$ (as is the case in many prior papers in additive combinatorics), then replacing $A+B$ with $A\oplus B$ does not incur a significant loss. The loss becomes particularly marginal in a dense scenario, namely the $k$-fold sum with large $k$. Indeed, when $k=\Omega(m/|A|)$, simple calculations via Eq~\eqref{eq:mann} and Eq~\eqref{eq:schnirelmann} reveal that the density of $kA$ and $k\otimes A:=A\oplus A\oplus\cdots\oplus A$ only differ by an arbitrary small factor.


The notion of greedy sumset, together with our analysis for replacing sumset with greedy sumset, may be of separate interest for obtaining a constructive version of other additive combinatorics results.


\paragraph{Bypassing Kneser's theorem - an iterative AP-augmentation framework.} Recall that S{\'a}rk{\"o}zy utilized Mann's theorem to obtain a moderately long arithmetic progression of some common difference $g$, and then used Kneser's theorem (more precisely, Lemma~\ref{lemma:sar-lemma}) to find a sufficiently small divisor $g'$ of $g$ to blow up the arithmetic progression by $g/g'$ times. Roughly speaking, we may interpret the blowing-up procedure as follows: if we find a long arithmetic progression of common difference $g$ in $k_1A$, and a short arithmetic progression of common difference $g'$ in $k_2A \,\, (\mod g)$, then we can find a long arithmetic progression of common difference $g'$ in $(k_1+k_2)A$. 


Unfortunately, the $g'$ stated in Lemma~\ref{lemma:sar-lemma} cannot be computed directly. Instead, we establish an augmentation framework that iteratively finds some divisor $g''$ of $g$ (which can be large) and blows up the current arithmetic progression by $g/g''$ times. After $O(\log g)$ iterations, we obtain the desired long arithmetic progression. This idea can be easily implemented to obtain a weaker result, namely $(k\log g) A$ contains an arithmetic progression of length $m$. To make sure that $k A$ is sufficient, the cost incurred by each iteration has to be carefully balanced. In particular, each iteration needs to be "patched" with another augmentation procedure so that we can reduce the average "cost" of increasing the length of the arithmetic progression.

\subsubsection{Overview of Our Technical Contribution in Theorem~\ref{thm:ss}}\label{subsec:tech-ss}

Moving from $k$-fold sumset to Subset Sum, the major challenge is that in Subset Sum each integer of $A$ can only be used once. To overcome this difficulty,  S{\'a}rk{\"o}zy~\cite{Sar94}   considered $A+A$ and showed that there exists some $B\subseteq A+A$ such that every $b \in  B$ can be represented as a sum of two integers of $A$ in many disjoint ways. In other words, each integer $b \in B$ can be used multiple times. Consequently, finding arithmetic progressions in subset sums of $A$ can be transformed to finding arithmetic progressions in $k$-fold sumset in $B$. It is worth mentioning that this general idea of creating multiplicity via $A+A$ has been adopted in many prior works in additive combinatorics, e.g.,~\cite{Lev03,SV06a,SV05}.

We follow the general idea of transforming Subset Sum to $k$-fold sumset, which requires creating some suitable set from $A$ that is sufficiently dense. However, we cannot follow the particular approach above because we are targeting a linear-time algorithm, but the above-mentioned $B\subseteq A+A$ may have a cardinality of $\Omega(n^2)$.

We consider $ A - A$ instead. The advantage of working on $A-A$ is that we can find a suitable subset $G\subseteq A-A$ that contains many integers upper bounded by  $O(\frac{m}{n})$, in other words, $G$ is sufficiently dense. The disadvantage is that because its integers are small, using $G$ we can only obtain a short arithmetic progression $P$ of length roughly $\frac{m}{n}$. The challenging part is to augment this arithmetic progression. To this end, we use our iterative augmentation framework. The augmentation in Subset Sum is more complicated. Roughly speaking, we iteratively make the current arithmetic progression longer. In each iteration, we blow up the current arithmetic progression by: (i) either finding a divisor of the current common difference (and argue in a similar way as we do for Theorem~\ref{thm:ka}); (ii) or finding sufficiently many integers in $A-A$ that are multiples of the current common difference, and are not too small nor large. We can show that these integers can be used to significantly increase the length of the current arithmetic progression (without changing the common difference). The existence of the integers in $A-A$ satisfying the above-claimed property is non-trivial, and leads to the existence of a coreset $A'$ claimed in Theorem~\ref{thm:ss}.   

\smallskip
\noindent\textbf{Comparison with the work of Galil and Margalit~\cite{GM91}.} We remark that Galil and Margalit~\cite{GM91} also gave a constructive proof that leads to a $\widetilde{O}(m)$-time algorithm for constructing the witness. Their overall framework resembles us in the sense that it also constructs a short arithmetic progression and then iteratively augment it. However, their augmentation crucially relies on computing $A-A$ first via FFT. FFT not only returns $A-A$, but also counts the number of pairs $(a,b)\in A\times A$ such that $a-b=c$ for every $c\in A-A$. The method of Galil and Margalit relies on the count numbers to strategically pick integers in $A$ to augment an arithmetic progression. Unfortunately, performing FFT requires $\widetilde{O}(m)$-time. To obtain a near-linear time algorithm, the augmentation procedure cannot be guided by count numbers, which is the major challenge.




\subsection{Other Related Work}

Both Subset Sum and Unbounded Subset Sum are fundamental NP-hard problems in theoretical computer science~\cite{Kar72}. There is a long history of study on pseudopolynomial time algorithms for Subset Sum~\cite{Bel57,Pis99,pisinger2003dynamic,KX17,KX18,Bri17,JW19,PRW21,chaimovich1999new,chaimovich1989solving,GM91,BW21,chen2024improved}, and on pseudopolynomial time algorithms for Unbounded Subset Sum~\cite{Bri17,jansen2023integer,hansen1996testing,klein2022fine}. A phenomenon that is unique to Unbounded Subset Sum is the Frobenius number.  
Computing the value of the Frobenius number is NP-hard~\cite{klein2022fine}. 
For $n=2$ the Frobenius number is given by $F(a_1,a_2)=a_1a_2-a_1-a_2$~\cite{sylvester1882subvariants}; for $n=3$ relatively sharp estimates are given in~\cite{beck2004refined,ustinov2009solution}. For general $n$ the best-known is given by Erd\H{o}s and Graham~\cite{erdos1972linear}. 



It is remarkable that the successful application of many additive combinatorial results in algorithmic design in recent years has also motivated the effort in the search for a constructive version of these results, including, e.g., the algorithm for Balog-Szemeredi-Gowers (BSG)
theorem by  Chan and Lewenstein~\cite{chan2015clustered}, a constructive version of Ruzsa’s covering
lemma by Abboud, Bringmann, and Fischer~\cite{abboud2023stronger}, a constructive version of Freimann's theorem by Randolph and W{\k{e}}grzycki~\cite{randolph2024parameterized}.

\subsection{Paper Outline}
In Section~\ref{sec:pre}, we introduce necessary terminologies and preliminaries. Section~\ref{sec:density} defines the density of a set $A$ and shows that if the density is $\Omega(1/k)$, then $kA$ contains a long arithmetic progression. Based on this result, Section~\ref{sec:ka} proves that if $n \geq \Omega(m/k)$, then $kA$ contains a long arithmetic progression. Finally, Section~\ref{sec:ss} proves if $n \geq \Omega(\sqrt{m\log m})$, then $\mathcal{S}(A)$ contains a long arithmetic progression.  We give two applications of our results in Section~\ref{sec:application} and conclude the paper in Section \ref{sec:conclude}. 

\section{Notations}\label{sec:pre}
Throughout the paper, all logarithms are based 2. Given two integers $a, b$, the gap between them is defined to be $|a - b|$.

Let $A$ be a set of integers.  Let $x$ and $y$ be two real numbers. We use $A[x,y]$ to denote the set of integers in $A$ that are between $x$ and $y$. That is,
\[
    A[x, y] = \{a \in A : x \leq a \leq y\}.
\]

Given two integers $a$ and $b$, we use $\gcd(a, b)$ to denote their greatest common divisor. $\gcd(a , b)$ can be computed in $O(\log \frac{\min\{a,b\}}{\gcd(a,b)})$ time by the Euclidean algorithm~\cite{shallit1994origins}.
Given a set $A$ of integers, we use $\gcd(A)$ to denote the greatest common divisor of all the integers in $A$.

Let $d$ be a positive integer.  We say $d'$ is a proper divisor of $d'$ if $1\leq d' < d$. Note that the integer $1$ has no proper divisor.

\section{Long Arithmetic Progressions by Density}\label{sec:density}
We shall show how to obtain an arithmetic progression by considering the density of a set of integers. 

Let $A$ be a set of $n$ integers. Let $z$ be a positive integer. The density of $A$ over the interval $[1, z]$ is defined to be
\[
    \rho_z(A) = \min_{1 \leq z' \leq z} \frac{|A[1, z']|}{z'}.
\]

\noindent\textbf{Remark.} The definition of density is very sensitive to small numbers within $A$. Indeed, if $1\not\in A$, then $\rho_z(A)=0$. Therefore, we will typically assume that $1\in A$ to guarantee a nonzero density.

Assume that $0\in A$. Dyson's theorem~\cite{Dys45} (and also Mann's theorem~\cite{man42}) implies that
\[
    \rho_z(kA) \ge \min\{1, k\rho_z(A)\}.
\]
Therefore, when $k$ is large enough, the density of $kA$ over $[1, z]$ will be 1, and as a result, $kA$ will contain all the integers in $\mathbb{Z}[1, z]$. 
\begin{lemma} [\cite{man42,Dys45}] \label{lem:ap-by-density}
    Let $A \subseteq \mathbb{Z}[0,m]$ be a set of integers. Assume that $\{0,1\} \in A$.  For any integer $k$ with $k \geq 1/\rho_m(A)$, we have that 
    \[
        \{0,1, \ldots, m\} \subseteq kA.
    \]
\end{lemma}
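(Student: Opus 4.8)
The plan is to obtain the lemma as an essentially immediate corollary of the density inequality $\rho_m(kA) \geq \min\{1, k\rho_m(A)\}$ quoted just above (Dyson's theorem, also Mann's theorem), which is applicable here since $0 \in A$. First I would record that the hypothesis $1 \in A$ makes $\rho_m(A)$ a strictly positive number: for every integer $z'$ with $1 \leq z' \leq m$ the set $A[1,z']$ contains $1$, so $|A[1,z']|/z' \geq 1/z' \geq 1/m$, and taking the minimum gives $\rho_m(A) \geq 1/m > 0$. In particular $1/\rho_m(A)$ is finite, so the condition $k \geq 1/\rho_m(A)$ is meaningful, and for any such $k$ we have $k\rho_m(A) \geq 1$.

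The second step is to feed this into the quoted inequality to get $\rho_m(kA) \geq \min\{1, k\rho_m(A)\} = 1$, and then unwind the definition of density. By definition $\rho_m(kA) = \min_{1 \leq z' \leq m} |kA[1,z']|/z'$, so $\rho_m(kA) \geq 1$ forces $|kA[1,z']| \geq z'$ for every integer $z'$ in $[1,m]$. On the other hand $kA[1,z']$ is a set of integers contained in $\{1,2,\ldots,z'\}$, hence has at most $z'$ elements; combining the two bounds yields $kA[1,z'] = \{1,2,\ldots,z'\}$, i.e. every integer in $[1,z']$ belongs to $kA$. Specializing to $z' = m$ gives $\{1,2,\ldots,m\} \subseteq kA$, and since $0 \in A$ we also have $0 = 0 + \cdots + 0 \in kA$ ($k$ copies), so $\{0,1,\ldots,m\} \subseteq kA$.

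There is no genuinely hard step here; the only care needed is bookkeeping. One must check that the quoted inequality is invoked with the correct normalization (density over the interval $[1,m]$, with the side condition $0 \in A$), that the edge case $\rho_m(A) > 0$ is handled so that the threshold $k \geq 1/\rho_m(A)$ is well defined (this is exactly why the hypothesis $1 \in A$ is imposed), and that the trivial membership $0 \in kA$ is not overlooked. In effect the lemma is just a repackaging of the Dyson/Mann density bound at the threshold $k = 1/\rho_m(A)$, stated in the form "density $\Omega(1/k)$ in $A$ yields a length-$m$ arithmetic progression (indeed the full interval $\{0,\ldots,m\}$) inside $kA$," so that it can be cited cleanly in the subsequent sections.
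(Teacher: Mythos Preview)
Your proposal is correct and matches the paper's own treatment: the paper does not give a standalone proof of this lemma but simply states it as an immediate consequence of the Dyson/Mann density inequality $\rho_m(kA) \geq \min\{1, k\rho_m(A)\}$, which is exactly the route you take. Your added bookkeeping (that $1 \in A$ forces $\rho_m(A) > 0$, that density $1$ forces $kA[1,z'] = \{1,\ldots,z'\}$, and that $0 \in kA$) is the natural way to fill in the one-line deduction the paper leaves implicit.
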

Both Dyson's theorem and Mann's theorem are, however, not constructive. As a consequence, to find a solution for even a single term, one has to compute $kA$ explicitly, which can take $O(m\log m\log k)$ time in the worst case. 

This section presents a constructive proof for Lemma~\ref{lem:ap-by-density}.  Following the construction in our proof, we can also obtain an efficient witness. We summarize the main result of this section by the following lemma. 

\begin{restatable}{lemma}{lemconsapbydensity}
\label{lem:ap-by-density-cons}
    Let $A \subseteq \mathbb{Z}[0,m]$ be a set of $n$ integers. Assume that $\{0,1\} \subseteq A$.  Let $k$ be a positive integer with
    \(
            k \geq 2/\rho_m(A).
    \)  
    In $O(n \log n)$ time, we can construct a witness with $O(k \log n)$ expected query time for 
    \[
        \{0, 1, \ldots, m\} \subseteq 2kA.
    \]
\end{restatable}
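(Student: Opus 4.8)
The statement I want to prove is Lemma~\ref{lem:ap-by-density-cons}: given $A \subseteq \mathbb{Z}[0,m]$ with $\{0,1\}\subseteq A$ and $k \geq 2/\rho_m(A)$, in $O(n\log n)$ time build a witness with $O(k\log n)$ expected query time for $\{0,\ldots,m\}\subseteq 2kA$. The non-constructive route (Lemma~\ref{lem:ap-by-density}) goes through Mann/Dyson and gives $\{0,\ldots,m\}\subseteq kA$, but leaves no trace of how any particular $z$ is written as a $k$-fold sum. My plan is to replace the ordinary sumset with the \emph{greedy sumset} $\oplus$ advertised in the introduction and mimic Schnirelmann's constructive density argument, paying a constant factor (here, a factor of $2$ in $k$) for the loss incurred by greedification.

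\textbf{Main steps.}
First I would define $A \oplus B$ precisely: for each $c$ in the relevant range, include $c$ in $A\oplus B$ iff, writing $a$ for the largest element of $A$ that is $\leq c$, we have $c - a \in B$; equivalently $c \in A\oplus B$ means $c = a + b$ where $a = \max\{a' \in A : a' \leq c\}$ and $b \in B$. The key point is that membership comes with a canonical decomposition: one binary search in a presorted copy of $A$ recovers $a$, and then $b = c-a$ must lie in $B$, which we can recursively decompose. Second, I would prove the Schnirelmann-style density inequality for $\oplus$: if $0 \in B$ and $1 \in A$ (so densities are well-defined and nonzero), then $\rho_z(A \oplus B) \geq \min\{1,\ \rho_z(A) + \rho_z(B) - \rho_z(A)\rho_z(B)\}$ — the same bound Schnirelmann proved for ordinary $+$, because his proof only ever uses the greedy representative. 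Iterating this $j$ times from $A^{\oplus j} := A \oplus \cdots \oplus A$ gives $\rho_z(A^{\oplus j}) \geq 1 - (1-\rho_z(A))^j$, so once $j \geq \ln(z)/\rho_z(A)$ (roughly), we get $\rho_z(A^{\oplus j}) = 1$, i.e. $A^{\oplus j} \supseteq \{1,\ldots,z\}$ (using $1 \in A^{\oplus j}$, inherited since $1 \in A$ and $0 \in A$). Taking $z = m$ and being careful with constants, $j = 2k$ with $k \geq 2/\rho_m(A)$ should suffice — note $(1-\rho)^{2k} \leq e^{-2k\rho} \leq e^{-4}$ is not quite small enough pointwise, so I'd actually want the sharper telescoping: with $0 \in A$ at every stage, the "already covered prefix" grows, and a cleaner accounting (covering $\{1,\ldots,z'\}$ fully before worrying about $z'+1$) shows $k$ doublings of density reach full density on $[1,m]$; this is the calculation I'd do carefully rather than sketch here.

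\textbf{The witness and its query time.} The witness data structure is just: a sorted array of $A$ (built in $O(n\log n)$), plus the integer $j = 2k$ and the bound $m$; that is all, and it costs $O(n\log n)$ to prepare, matching the claim that preprocessing is deterministic. On query $z \in \{0,\ldots,m\}$: peel off the greedy representative $a_j = \max\{a\in A: a \leq z\}$ by binary search, recurse on $z - a_j \in A^{\oplus(j-1)}$, and so on down to $A^{\oplus 1} = A$. Each peel is $O(\log n)$ and there are $j = O(k)$ of them, giving $O(k\log n)$ — but this is \emph{deterministic}, so where does randomization enter? The subtlety the footnote hints at: the naive greedy peel is only guaranteed to land inside $A^{\oplus(j-1)}$ if we \emph{know} $z \in A^{\oplus j}$, and the density argument only guarantees $A^{\oplus j} = \{1,\ldots,m\}$ for the full chosen $j$, not for every intermediate prefix relationship — so to make the recursion provably valid one wants, at each level, that the residual $z - a_i$ sits in the covered range of $A^{\oplus(i-1)}$. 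I would handle this by choosing the greedy representative with a small random perturbation (pick $a$ not as the exact max but uniformly among the $\Theta(1)$ largest admissible elements, or randomize which "covered prefix" threshold we commit to), so that in expectation the residual stays in the good range and the recursion succeeds; the Las Vegas guarantee is that it always terminates correctly, with expected $O(k\log n)$ total work. I expect \textbf{this last point — proving the randomized peel keeps every residual inside the already-covered initial segment of the appropriate greedy power, with $O(1)$ expected retries per level — to be the main obstacle}, and the rest (defining $\oplus$, re-running Schnirelmann with the greedy representative, the density telescoping to pin down the constant $2$) to be routine once set up carefully.
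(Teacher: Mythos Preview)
Your setup is right and matches the paper: the greedy sumset $A\oplus B$, the Schnirelmann-style bound $\rho_z(A\oplus B)\geq \rho_z(A)+\rho_z(B)-\rho_z(A)\rho_z(B)$, and its iterate $\rho_m(j\otimes A)\geq 1-(1-\rho_m(A))^j$ are exactly what the paper proves (Lemmas~\ref{lem:two-greedy-density} and~\ref{lem:k-greedy-density}). You also correctly spot the obstruction: with $k\geq 2/\rho_m(A)$ one only gets $(1-\rho)^{2k}\leq e^{-4}$, so $(2k)\otimes A$ need \emph{not} cover all of $\{0,\ldots,m\}$, and hence the pure greedy peel, applied $2k$ times, can simply fail on some $z$. (Indeed the paper notes that to force $\rho_m(j\otimes A)=1$ via greediness alone one needs $j\gtrsim \ln(m)/\rho_m(A)$, which would give query time $O(k\log m\log n)$, not $O(k\log n)$.)

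Where you go wrong is the fix. Randomly perturbing the greedy representative at each of the $2k$ levels, or ``randomizing which covered prefix threshold we commit to,'' does not repair the situation: the greedy sumset $j\otimes A$ is \emph{defined} by the max-peel, so perturbing the peel just exits the set you have density control over, and there is no reason an $O(1)$-retry argument at each level should succeed. The paper's actual mechanism is different and cleaner. First, take $k$ greedy iterations (not $2k$): since $k\geq 2/\rho_m(A)$ one gets $\rho_m(k\otimes A)\geq 1-e^{-2}>3/4$ (Lemma~\ref{lem:constant-density}). Now the randomization happens \emph{once}, at the very top: given the query $z\in[1,m]$, sample $a$ uniformly from $\mathbb{Z}[1,z]$ and test whether $a\in k\otimes A$ and $z-a\in k\otimes A$. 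Because $k\otimes A$ has density at least $3/4$ on $[1,z]$, a union-bound argument (Lemma~\ref{lem:sample-solution}) shows both events hold with probability at least $1/2$, so in expectation two samples suffice. Each membership test---and, on success, each decomposition---is a \emph{deterministic} greedy peel of depth $k$ costing $O(k\log n)$ (Lemma~\ref{lem:k-greey-sumset-solution}). The two halves together give a solution for $z\in 2kA$. So the randomness lives in a single split $z=a+(z-a)$, not in the greedy recursion; that is the missing idea in your proposal.
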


\noindent\textbf{Remark.} Here the running time of $O(n \log n)$ is mainly due to the fact that we need to sort the integers within $A$. If $A$ is presorted, then the witness can be constructed in $O(n)$ time.

\subsection{Greedy Sumsets}
In general, finding a solution for $z \in kA$ can be expensive. For example, it remains open whether a solution for $z \in A + A + A$ can be founded in truly subquadratic time. Fortunately, now we are dealing only with the cases where $k$ is large, and when $k$ is large, the number of solutions for $z \in kA$ is also large.  We observe from Schnirelmann's seminal work~\cite{schnirelmann1933additive} that when $k$ is large enough, there must be a solution that can be obtained greedily.  To formalize this observation, we define the notion of greedy sumset.

\begin{definition}\label{def:greedy-sumset}
    Let $A$ and $B$ be two sets of integers.  Assume that $a_1 < a_2 \ldots < a_n$ are the elements of $A$. Let $a_{n+1} = \infty$. We define the greedy sumset of $A$ and $B$ as follows.
    \[
        A \oplus B = \{\text{$a_i + b$ : $a_i \in A$ and $b \in B$ and $0 \leq b < a_{i+1} - a_i$}\}
    \]
\end{definition}

It is easy to see that $A \oplus B$ is a subset of $A + B$.  Basically, $A \oplus B$ is the set of sums in $A + B$ that can be represented greedily.  In particular, for each $z \in A \oplus B$, a solution $(a,b)$ for $z\in A + B$ can be founded by greedily choosing $a$ to be the largest integer in $A$ not exceeding $z$.  
\begin{lemma}\label{lem:greedy-sumset-property}
    Let $A$ and $B$ be two sets of integers. Let $z$ be an integer. Let $a$ be the largest integer in $A$ not exceeding $z$. Then $z \in A \oplus B$ if and only if $z - a \in B$.
\end{lemma}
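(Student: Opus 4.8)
The statement to prove is Lemma~\ref{lem:greedy-sumset-property}: for integer sets $A$, $B$ and an integer $z$, letting $a$ be the largest element of $A$ not exceeding $z$, one has $z \in A \oplus B$ if and only if $z - a \in B$. The plan is a direct unwinding of Definition~\ref{def:greedy-sumset}, so the work is mostly bookkeeping about which element $a_i$ of $A$ a given sum ``greedily'' uses. I would first fix notation matching the definition: write $a_1 < a_2 < \cdots < a_n$ for the elements of $A$, set $a_{n+1} = \infty$, and let $a = a_j$ be the largest element with $a_j \le z$, so that $a_j \le z < a_{j+1}$ (this inequality is exactly what ``largest not exceeding $z$'' means, using the convention $a_{n+1}=\infty$ for the case $j=n$).

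For the ``if'' direction, suppose $b := z - a_j \in B$. Then $b \ge 0$ because $a_j \le z$, and $b = z - a_j < a_{j+1} - a_j$ because $z < a_{j+1}$. Hence the pair $(a_j, b)$ satisfies the membership condition $0 \le b < a_{j+1} - a_j$ in Definition~\ref{def:greedy-sumset}, so $z = a_j + b \in A \oplus B$.

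For the ``only if'' direction, suppose $z \in A \oplus B$. By definition there is some index $i$ and some $b \in B$ with $z = a_i + b$ and $0 \le b < a_{i+1} - a_i$. From $b \ge 0$ we get $a_i \le z$, and from $b < a_{i+1} - a_i$ we get $z = a_i + b < a_{i+1}$. Thus $a_i \le z < a_{i+1}$. But the intervals $[a_i, a_{i+1})$ for $i = 1, \ldots, n$ are pairwise disjoint and $a_j$ is by definition the unique index with $a_j \le z < a_{j+1}$; hence $i = j$, i.e., $a_i = a_j = a$. Therefore $z - a = b \in B$, as desired.

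The argument has no real obstacle; the only point requiring a little care is making sure the ``largest element not exceeding $z$'' is well-defined and that the half-open intervals $[a_i, a_{i+1})$ genuinely partition the range of $z$-values for which membership can hold — in particular handling the boundary case $i = n$ via the convention $a_{n+1} = \infty$, and implicitly assuming $z \ge a_1$ so that such an $a$ exists (if $z < a_1$ then neither side can hold, since any sum $a_i + b$ with $b \ge 0$ is at least $a_1 > z$, and there is no valid $a$; one may either restrict to $z \ge \min A$ or note both sides are vacuously handled). I would state this explicitly so the equivalence is unambiguous.
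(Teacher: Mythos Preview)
Your proof is correct and follows essentially the same approach as the paper's: both directions proceed by directly unwinding Definition~\ref{def:greedy-sumset} and using the characterization $a_i \le z < a_{i+1}$ to identify the greedy summand uniquely. Your treatment is slightly more explicit about the half-open interval partition and the edge case $z < \min A$, but the argument is the same.
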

\begin{proof}
    We first prove the if part. Suppose that $z - a \in B$. Let $a'$ be the successor of $a$ in $A$. If the successor does not exist, we define $a' = \infty$. To prove that $z \in A \oplus B$, it suffices to show that $z - a < a' - a$. This is straightforward since $a \leq z <  a'$.

    Next we prove the only if part. Suppose that $z \in A\oplus B$. By definition of greedy sumset, there must exist $a' \in A$ such that $z - a' \in B$ and that $0\leq z - a' < a'' - a'$ where $a''$ is the successor of $a'$ in $A$ (In case that the successor does not exist, $a'' = \infty$).  This implies that 
    \[
        a' \leq z < a''.
    \]
    Therefore, $a'$ must be the largest integer in $A$ not exceeding $z$.
\end{proof}

Using the above property, given any integer $z$, we can determine whether $z \in A \oplus B$ in $O(\log n)$ time. Moreover, if $z \in A \oplus B$, we can construct a solution for $z \in A + B$ in $O(\log n)$ time. 
\begin{lemma}\label{lem:greey-sumset-solution}
    Let $A$ and $B$ be two sorted sets of $n$ non-negative integers. In $O(n)$ time, we can construct a data structure that is able to answer the following query in $O(\log n)$ time: given any integer $z$, is $z \in A \oplus B$? If yes, return a solution for $z \in A + B$.
\end{lemma}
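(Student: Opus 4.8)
The plan is to build, in $O(n)$ time from the sorted array of $A$, a data structure that answers queries by a single application of Lemma~\ref{lem:greedy-sumset-property}. The key observation is that to test $z \in A \oplus B$ we only need two primitives: (i) find the largest element $a \in A$ with $a \leq z$ (the ``greedy predecessor'' of $z$ in $A$), and (ii) test whether a given integer, namely $z - a$, lies in $B$. By Lemma~\ref{lem:greedy-sumset-property}, $z \in A \oplus B$ \emph{iff} $z - a \in B$, and in that case $(a, z-a)$ is a valid solution for $z \in A + B$. So the whole query reduces to one predecessor search in $A$ plus one membership test in $B$.

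First I would handle primitive (i): since $A$ is already given sorted, a binary search on the array finds the greedy predecessor of any $z$ in $O(\log n)$ time with no preprocessing beyond storing the array; if $z < \min A$, then $z \notin A \oplus B$ (as $0 \in B$ forces $\min(A \oplus B) \geq \min A$ when we also note $\min A \le z$ is required) and we answer ``no''. For primitive (ii), I would preprocess $B$ into a structure supporting membership queries. The simplest route that respects the $O(n)$ build / $O(\log n)$ query budget is again binary search on the sorted array of $B$: membership of $z - a$ is one binary search in $O(\log n)$ time. (If one wants $O(1)$ membership one could hash $B$, but binary search already meets the stated bounds, so I would not bother.) Thus the data structure is just the two sorted arrays (already provided), and ``construction'' is $O(n)$ — indeed $O(1)$ if we are handed the arrays — comfortably within the claimed $O(n)$ bound.

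The query algorithm is then: on input $z$, binary-search $A$ for the largest $a \le z$; if none exists, return ``no''; otherwise binary-search $B$ for $z - a$; if found, return ``yes'' together with the pair $(a, z - a)$; else return ``no''. Correctness is immediate from Lemma~\ref{lem:greedy-sumset-property}: the returned pair satisfies $a \in A$, $z - a \in B$, and $a + (z-a) = z$, so it is a solution for $z \in A + B$; and the algorithm answers ``yes'' precisely when $z - a \in B$, which is exactly the characterization of $z \in A \oplus B$. Each query performs two binary searches, hence runs in $O(\log n)$ time.

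There is essentially no hard part here — the lemma is a direct algorithmic packaging of Lemma~\ref{lem:greedy-sumset-property}. The only point requiring a moment's care is the boundary case where $z$ has no predecessor in $A$ (equivalently $z < \min A$), and the implicit convention $a_{n+1} = \infty$ from Definition~\ref{def:greedy-sumset}, which makes the ``$z - a < a' - a$'' condition in the proof of Lemma~\ref{lem:greedy-sumset-property} automatically satisfied for the largest element of $A$; both are already dispatched by that lemma's proof, so nothing new is needed.
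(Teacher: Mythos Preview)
Your proposal is correct and follows essentially the same approach as the paper: store the sorted arrays, binary-search $A$ for the greedy predecessor $a$ of $z$, then binary-search $B$ for $z-a$, invoking Lemma~\ref{lem:greedy-sumset-property} for correctness. The only difference is that you spell out the boundary case $z < \min A$ and the $a_{n+1}=\infty$ convention explicitly, which the paper leaves implicit.
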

\begin{proof}
    The data structure simply stores $A$ and $B$. To answer the query, we first find the largest $a\in A$ not exceeding $z$.  Then we check whether $z-a\in B$ or not. If yes, we return $(a, z-a)$, which is a solution for $z \in A + B$. If no, then $z \notin A\oplus B$.  Both operations can be done in $O(\log n)$ time using binary search.  The correctness follows by Lemma~\ref{lem:greedy-sumset-property}.
\end{proof}

Compared with $A + B$, although $A \oplus B $ contains only the sums that are obtained greedily, its density is still significant. We remark that the proof of the following lemma is essentially the same as that of Schnirelmann's theorem (see Chapter I of \cite{HR83} for reference).

\begin{lemma}\label{lem:two-greedy-density}
    Let $A$ and $B$ be two sets of integers. Assume that $1 \in A$ and $0 \in B$. Let $z$ be a positive integer. Then
    \[
        \rho_z(A \oplus B) \geq \rho_z(A) + \rho_z(B) - \rho_z(A)\rho_z(B).
    \]
\end{lemma}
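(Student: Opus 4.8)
The plan is to mimic Schnirelmann's classical density argument, but track that only \emph{greedy} sums are produced. Fix $z$ and let $z^\star \in \mathbb{Z}[1,z]$ be the value attaining the minimum in the definition of $\rho_z(A\oplus B)$, so that $\rho_z(A\oplus B) = |(A\oplus B)[1,z^\star]|/z^\star$; it suffices to lower bound $|(A\oplus B)[1,z^\star]|$. First I would isolate the elements of $A$ lying in $[1,z^\star]$, say $0 < a_1 < a_2 < \cdots < a_r \le z^\star$ (note $1\in A$ forces $r\ge 1$), and think of them as cutting $[1,z^\star]$ into the half-open blocks $[a_i, a_{i+1})$ for $i=1,\dots,r-1$ together with the final block $[a_r, z^\star]$, plus the initial segment $[1,a_1)$. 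The key point is that within the block starting at $a_i$, the greedy sumset contains exactly $a_i + b$ for every $b\in B$ with $0\le b < a_{i+1}-a_i$ (and $b \le z^\star - a_r$ in the last block); by the definition of $\rho_z(B)$ applied at the scale $a_{i+1}-a_i$, the number of such $b$ with $b\ge 1$ is at least $\rho_z(B)\,(a_{i+1}-a_i)$, and moreover $b=0$ also contributes (since $0\in B$), giving the point $a_i$ itself.

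The counting then goes as follows. Each $a_i \in A[1,z^\star]$ contributes itself to $(A\oplus B)[1,z^\star]$ (that is $r$ points, using $0\in B$), and additionally in each block $[a_i,a_{i+1})$ it contributes at least $\rho_z(B)(a_{i+1}-a_i)$ further points from the nonzero elements of $B$; in the truncated last block it contributes at least $\rho_z(B)(z^\star - a_r)$ further points. Summing the block lengths telescopes: $\sum_{i=1}^{r-1}(a_{i+1}-a_i) + (z^\star - a_r) = z^\star - a_1$. Hence
\[
    |(A\oplus B)[1,z^\star]| \;\ge\; r + \rho_z(B)\,(z^\star - a_1).
\]
Now $r = |A[1,z^\star]| \ge \rho_z(A)\, z^\star$ by definition of $\rho_z(A)$, and also $a_1 \le r$ trivially is too weak — instead one uses $a_1 - 1 \le z^\star - r$ is also too weak; the right move (exactly as in Schnirelmann) is to bound $z^\star - a_1 \ge z^\star - 1 \ge (1 - \rho_z(A))z^\star$ is again not quite it. The correct step: the $r$ elements $a_1,\dots,a_r$ lie in $\{1,\dots,z^\star\}$, and the blocks cover $[a_1, z^\star]$ which has $z^\star - a_1 + 1$ integers but at most $r$ of them are elements of $A$, so $z^\star - a_1 + 1 - r$ is the number of non-$A$ integers in $[a_1,z^\star]$; combined with $|A[1,a_1-1]| = 0$ one gets $z^\star - r \ge$ (number of non-$A$ integers in $[1,z^\star]$) $\ge (1-\rho_z(A))z^\star$, hence $z^\star - a_1 \ge z^\star - r \ge (1-\rho_z(A))z^\star - $ small correction. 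I would carry out this bookkeeping carefully so that it yields $|(A\oplus B)[1,z^\star]| \ge z^\star(\rho_z(A) + \rho_z(B) - \rho_z(A)\rho_z(B))$, then divide by $z^\star$.

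The main obstacle is precisely this last piece of arithmetic: getting the telescoping sum and the off-by-one terms ($a_1$ versus $a_1-1$, the initial segment $[1,a_1)$, whether $z^\star$ itself is counted) to line up so the inequality comes out with the clean Schnirelmann constant $\rho_z(A)+\rho_z(B)-\rho_z(A)\rho_z(B)$ rather than something weaker. The cleanest route, which I would follow, is to write $|(A\oplus B)[1,z^\star]| \ge \sum_{i} \big(1 + \rho_z(B)(\text{length of block } i)\big)$ where the sum ranges over all $r$ blocks (the initial segment $[1,a_1)$ contains no element of $A$ and no greedy sum, so it is correctly ignored), observe the lengths sum to exactly $z^\star - a_1 + 1$ if we let the last block be $[a_r, z^\star]$ of length $z^\star - a_r + 1$ — wait, one must be careful that block $[a_i,a_{i+1})$ has length $a_{i+1}-a_i$ and contributes $b\in[0,a_{i+1}-a_i)$, i.e. exactly that many values of $b$ are \emph{allowed} but only $\ge \rho_z(B)(a_{i+1}-a_i)$ of them lie in $B$. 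I would then use $a_1 \ge 1$ so the lengths sum to $\le z^\star$ from one side and handle the $+1$'s via $r \ge \rho_z(A)z^\star$ and $a_1 \le 1 + (z^\star - r)\cdot 0$... In short, the real work is a disciplined interval-counting argument identical in spirit to the proof of Schnirelmann's inequality in~\cite{HR83}, and I would present it in that style, with the one novelty being the remark that every sum produced is by construction greedy, so nothing is lost in passing from $A+B$ to $A\oplus B$ at the level of density.
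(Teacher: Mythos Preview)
Your approach is exactly the paper's (and Schnirelmann's), but you are missing the one observation that dissolves all of your off-by-one agonizing: since $1\in A$, the smallest element of $A[1,z^\star]$ is $a_1=1$. You noted that $1\in A$ forces $r\ge 1$, but never used that it forces $a_1=1$; once you do, the ``initial segment $[1,a_1)$'' is empty, the blocks $\mathbb{Z}[a_i,a_{i+1}-1]$ (with the convention $a_{r+1}:=z^\star+1$) partition $\mathbb{Z}[1,z^\star]$ exactly, and the gap lengths telescope to
\[
    \sum_{i=1}^{r}\bigl(a_{i+1}-a_i-1\bigr) \;=\; a_{r+1}-a_1-r \;=\; (z^\star+1)-1-r \;=\; z^\star - r.
\]
Your per-block count also needs a small correction: block $i$ contributes $a_i$ itself (from $b=0\in B$) plus $|B[1,a_{i+1}-a_i-1]|\ge \rho_z(B)\,(a_{i+1}-a_i-1)$ further points, not $\rho_z(B)\,(a_{i+1}-a_i)$. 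Putting these together gives
\[
    |(A\oplus B)[1,z^\star]| \;\ge\; r + \rho_z(B)\,(z^\star - r) \;=\; \rho_z(B)\,z^\star + (1-\rho_z(B))\,r \;\ge\; z^\star\bigl(\rho_z(A)+\rho_z(B)-\rho_z(A)\rho_z(B)\bigr),
\]
using $r\ge \rho_z(A)\,z^\star$. No clever bounding of $z^\star - a_1$ is needed at all.
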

\begin{proof}
    To prove the lemma, it suffices to show that for any $z' \in [1, z]$,
    \[
        \frac{(A \oplus B)[1, z']}{z'} \geq  \rho_z(A) + \rho_z(B) - \rho_z(A)\rho_z(B).
    \]  

    Take an arbitrary $z' \in \mathbb{Z}[1,z]$. Label the elements of $A[1, z']$ as $\{a_1, \ldots, a_n\}$ in increasing order.  Note that $a_1 = 1$ as $1 \in A$. For simplicity, we define $a_{n+1} = z' + 1$.  The set $\mathbb{Z}[1, z']$ can be partitioned into $\bigcup_{i=1}^n \mathbb{Z}[a_i, a_{i+1}-1]$. Consider an integer $y \in \mathbb{Z}[a_i, a_{i+1}-1]$. If $y = a_i$, then $y \in A\oplus B$ as $0 \in B$. If $y = a_i + s$ for some $s \in \mathbb{Z}[1, a_{i+1} - a_i - 1]$, then $y \in A\oplus B$ if and only if $s \in B$. Therefore,
    \[
        |(A\oplus B)[a_i, a_{i+1} + 1]| = 1 + |B[1, a_{i+1} - a_i - 1]| \geq 1 + \rho_z(B) \cdot (a_{i+1} - a_i - 1).
    \]
    The last inequality is due to the definition of $\rho_z(B)$. Taking sum over $i$, we have that
    \begin{align*}
        |(A \oplus B)[1, z']| &\geq n + \rho_z(B) \sum_{i=1}^n (a_{i+1} - a_i - 1)\\ 
                            & = n + \rho_z(B)\cdot (z' - n)\\
                            & = \rho_z(B) \cdot z' + (1 - \rho_z(B))n\\
                            & \geq \rho_z(B) \cdot z' + (1 - \rho_z(B))\rho_z(A) \cdot z'\\
                            & = z' \cdot (\rho_z(A) + \rho_z(B) - \rho_z(A)\rho_z(B)).      \qedhere
    \end{align*}
\end{proof}

Next we extend Lemma~\ref{lem:greey-sumset-solution} and Lemma~\ref{lem:two-greedy-density} to $k$-fold greedy sumsets. Given a set $A$ of integers, its $k$-fold greedy sumset is defined as follows: $k\otimes  A = \{0\}$ when $k=0$, and $(k+1)\otimes  A = A\oplus (k\otimes  A)$ for all $k \geq 1$.  (Note that the $\oplus$ operation is not commutative nor associative. Therefore, $A\oplus (k\otimes  A) \neq (k\otimes  A) \oplus A$.)

\begin{lemma}\label{lem:k-greey-sumset-solution}
    Let $A$ be a sorted set of $n$ non-negative integers. In $O(n)$ time, we can obtain a data structure that is able to answer the following query in $O(k \log n)$ time:  given two non-negative integers $z$ and $k$, is $z \in k\otimes A$? If yes, also return a solution for $z \in kA$.
\end{lemma}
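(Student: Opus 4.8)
The plan is to prove Lemma~\ref{lem:k-greey-sumset-solution} by induction on $k$, mirroring the recursive definition $(k+1)\otimes A = A \oplus (k \otimes A)$. The data structure itself is trivial: we simply store the sorted array $A$ in $O(n)$ time (this is all that was needed in Lemma~\ref{lem:greey-sumset-solution} as well). All the work happens at query time. For the base case $k = 0$, we have $0 \otimes A = \{0\}$, so a query with input $(z, 0)$ answers ``yes'' iff $z = 0$, and the corresponding solution for $z \in 0A$ is the empty tuple; this takes $O(1)$ time.

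For the inductive step, suppose we can decide membership in $k \otimes A$ and recover a solution for $z \in kA$ in $O(k \log n)$ time. Given a query $(z, k+1)$, we invoke Lemma~\ref{lem:greedy-sumset-property} with $B = k \otimes A$: we find the largest $a \in A$ with $a \leq z$ via binary search in $O(\log n)$ time, and then $z \in (k+1)\otimes A = A \oplus (k \otimes A)$ if and only if $z - a \in k \otimes A$. To test the latter we make one recursive call on the query $(z - a, k)$, which costs $O(k \log n)$ by the induction hypothesis. If that call reports ``no'', we report ``no''. If it reports ``yes'' and returns a solution $(a_1, \ldots, a_k)$ for $z - a \in kA$, then $(a, a_1, \ldots, a_k)$ is a solution for $z \in (k+1)A$, since $a \in A$, each $a_i \in A$, and $a + (z-a) = z$. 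Prepending $a$ costs $O(1)$ (or $O(k)$ if we must physically copy into a new tuple, which is still within budget). The total time is the $O(\log n)$ for the binary search plus the $O(k \log n)$ recursive cost, i.e. $O((k+1)\log n)$, completing the induction.

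Correctness is immediate from Lemma~\ref{lem:greedy-sumset-property}: that lemma exactly characterizes when $z \in A \oplus B$ in terms of membership of $z - a$ in $B$ with $a$ the greedy choice, so unrolling the recursion we obtain a genuine greedy representation $z = a^{(1)} + a^{(2)} + \cdots + a^{(k)}$ where at each level $a^{(j)}$ is the largest element of $A$ not exceeding the current residual. In particular every reported tuple really lies in $A^k$ and sums to $z$, and conversely if $z \in k \otimes A$ then such a tuple exists and is found. One should also note the mild subtlety that the intermediate sets $j \otimes A$ are never materialized — we never need them explicitly because Lemma~\ref{lem:greedy-sumset-property} lets us recurse on the residual directly against $A$ alone.

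I do not expect a serious obstacle here; the lemma is essentially bookkeeping on top of Lemma~\ref{lem:greedy-sumset-property} and Lemma~\ref{lem:greey-sumset-solution}. The one point deserving a sentence of care is the running time accounting: the recursion has depth $k$ and does $O(\log n)$ work per level (binary search), so the honest bound is $O(k\log n)$ — provided we are careful that prepending the greedy element to the partial solution does not cost more than $O(k)$ in total, which it does not if we build the solution as a linked list or, more simply, just charge the $O(k)$ one-time array construction against the overall $O(k \log n)$ budget. No randomization is used in this lemma; it is fully deterministic, and the randomized/Las-Vegas aspects of the witnesses arise only later when these greedy sumsets are combined to form arithmetic progressions.
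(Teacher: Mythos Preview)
Your proposal is correct and essentially identical to the paper's proof: the data structure is just the sorted array $A$, and the query procedure greedily peels off the largest $a\in A$ not exceeding the current residual, recursing (the paper says ``iterating'') $k$ times with correctness justified by Lemma~\ref{lem:greedy-sumset-property}. The only cosmetic differences are that you phrase it recursively with base case $k=0$ while the paper writes an iterative loop and defers correctness to an (omitted) induction with base case $k=2$.
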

\begin{proof}
    Our data structure simply stores $A$. To answer a query for $z$ and $k$, we iterative compute $a_i$ for $i \in \{1, \ldots, k\}$ as follows: let $a_i$ be the largest integer in $A$ not exceeding $z$, and subtract $a_i$ from $z$.  If we successfully find all $a_1, \ldots, a_k$, and they sum to $z$, then $z \in k\otimes A$ and we have a solution for $z \in kA$. Otherwise, we conclude that $z \notin k\otimes A$.

    The correctness easily follows from induction on $k$, where the base case $k=2$ is due to Lemma~\ref{lem:greedy-sumset-property}. We omit the details of the proof. We have at most $k$ iterations, and each takes $O(\log n)$ time, so the query time is $O(k \log n)$.
\end{proof}

\begin{lemma}\label{lem:k-greedy-density}
    Let $A$ be a set of non-negative integers. Assume that $\{0,1\} \subseteq A$. Let $z$ be a positive integer.  Then
    \[
        \rho_z(k\otimes A) \geq 1 - (1 - \rho_z(A))^k.
    \]
\end{lemma}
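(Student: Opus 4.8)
The plan is to prove the density bound $\rho_z(k\otimes A) \geq 1 - (1-\rho_z(A))^k$ by induction on $k$, using the subadditivity-type estimate for greedy sumsets established in Lemma~\ref{lem:two-greedy-density}. The base case $k=1$ is immediate since $1\otimes A = A\oplus(0\otimes A) = A\oplus\{0\} = A$, so $\rho_z(1\otimes A) = \rho_z(A) = 1 - (1-\rho_z(A))^1$. (One should double-check the $k=0$ boundary case separately if desired, but the statement is only interesting for $k\geq 1$.)

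For the inductive step, suppose the bound holds for $k$, i.e. $\rho_z(k\otimes A) \geq 1 - (1-\rho_z(A))^k =: 1 - q^k$ where $q = 1-\rho_z(A)$. By definition $(k+1)\otimes A = A \oplus (k\otimes A)$. To apply Lemma~\ref{lem:two-greedy-density} with the roles ``$A$'' $\leftarrow A$ and ``$B$'' $\leftarrow k\otimes A$, I need $1\in A$ (given, since $\{0,1\}\subseteq A$) and $0 \in k\otimes A$. The latter holds because $0\in A$ and $0\in 0\otimes A = \{0\}$, so by induction $0 = 0 + 0 + \cdots + 0 \in k\otimes A$ (formally, $0\in k\otimes A$ follows since the greedy representation of $0$ picks $a_i = 0$ at every step). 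Then Lemma~\ref{lem:two-greedy-density} gives
\[
    \rho_z((k+1)\otimes A) \geq \rho_z(A) + \rho_z(k\otimes A) - \rho_z(A)\rho_z(k\otimes A) = 1 - (1-\rho_z(A))(1-\rho_z(k\otimes A)).
\]
Since $1 - \rho_z(k\otimes A) \leq q^k$ by the inductive hypothesis and $1-\rho_z(A) = q \geq 0$, the right-hand side is at least $1 - q\cdot q^k = 1 - q^{k+1} = 1 - (1-\rho_z(A))^{k+1}$, completing the induction.

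The only genuine subtlety — and the step I'd be most careful about — is checking that the hypotheses of Lemma~\ref{lem:two-greedy-density} are actually met at each inductive step, in particular that $0 \in k\otimes A$, which relies on $0\in A$; this is why the hypothesis $\{0,1\}\subseteq A$ (rather than just $1\in A$) is stated. Everything else is a routine manipulation: the expression $\rho_z(A) + \rho_z(B) - \rho_z(A)\rho_z(B)$ factors cleanly as $1 - (1-\rho_z(A))(1-\rho_z(B))$, which makes the telescoping product transparent and the induction essentially mechanical.
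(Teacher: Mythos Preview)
Your proof is correct and follows essentially the same approach as the paper: induction on $k$, invoking Lemma~\ref{lem:two-greedy-density} on $A$ and $k\otimes A$, then factoring $\rho_z(A)+\rho_z(k\otimes A)-\rho_z(A)\rho_z(k\otimes A)$ as $1-(1-\rho_z(A))(1-\rho_z(k\otimes A))$ and applying the inductive hypothesis. You are slightly more explicit than the paper in verifying the hypothesis $0\in k\otimes A$ needed to apply Lemma~\ref{lem:two-greedy-density}, which is a good instinct.
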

\begin{proof}
    When $k = 1$, the lemma holds straightforwardly. Suppose that the lemma holds for $k$. We show that it holds for $k+1$. Since $(k+1)\otimes  A = A\oplus (k \otimes  A)$, by Lemma~\ref{lem:two-greedy-density},
    \begin{align*}
        \rho_z((k+1)\otimes  A) &\geq \rho_z(A) + \rho_z(k\otimes A) - \rho_z(A)\rho_z(k\otimes  A)\\
            &= 1 - (1 - \rho_z(A))(1 - \rho_z(k\otimes A))\\
            & \geq 1 - (1 - \rho_z(A))\cdot (1 - \rho_z(A))^k\\
            &= 1 - (1 - \rho_z(A))^{k+1}.
    \end{align*}
    The second inequality is due to the inductive hypothesis.
\end{proof}

\subsection{Long Arithmetic Progressions}
By Lemma~\ref{lem:k-greedy-density}, we have that $\rho_m(k\otimes A) = 1$ for $k \geq \frac{\ln m}{\rho_m(A)}$. Therefore,
\[
    \{0, 1, \ldots, m\} \subseteq k\otimes  A.
\]
Then, by Lemma~\ref{lem:k-greey-sumset-solution}, in $O(n\log n)$ time, we can obtain an efficient witness with $O(k \log n)$ time.  

Below we shall shave the logarithmic factor in the threshold for $k$. We first add up $A$ (greedily) for $O(\frac{1}{\rho_m(A)})$ times so that the density becomes a constant. 

\begin{lemma}\label{lem:constant-density}
    Let $A \subseteq \mathbb{Z}[0,m]$ be a set of integers. Assume that $\{0,1\}\subseteq A$. Let $k = \lceil \frac{2}{\rho_{m}(A)} \rceil$. Then 
    \[
        \rho_m(k\otimes  A) \geq \frac{3}{4}.
    \]
\end{lemma}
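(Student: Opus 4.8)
The plan is to apply Lemma~\ref{lem:k-greedy-density} with the chosen value $k = \lceil \frac{2}{\rho_m(A)} \rceil$ and then bound the resulting tail $(1 - \rho_m(A))^k$ from above by $\frac14$. First I would write $\rho := \rho_m(A)$ for brevity; note $0 < \rho \le 1$ since $\{0,1\} \subseteq A$ guarantees $|A[1,z']| \ge 1$ for every $z' \ge 1$, so the density is strictly positive (and at most $1$). By Lemma~\ref{lem:k-greedy-density},
\[
    \rho_m(k\otimes A) \ge 1 - (1-\rho)^k,
\]
so it suffices to show $(1-\rho)^k \le \frac14$ whenever $k \ge \frac{2}{\rho}$.

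The key inequality is the standard bound $1 - \rho \le e^{-\rho}$, valid for all real $\rho$. Raising to the $k$-th power (both sides nonnegative since $\rho \le 1$) gives $(1-\rho)^k \le e^{-\rho k}$. Since $k \ge \frac{2}{\rho}$ we have $\rho k \ge 2$, hence $(1-\rho)^k \le e^{-2} < \frac14$ (indeed $e^{-2} \approx 0.135$). Therefore $\rho_m(k\otimes A) \ge 1 - e^{-2} \ge \frac34$, which is the claim. One minor point to handle carefully: $k$ is defined via the ceiling, so $k \ge \frac{2}{\rho}$ holds by definition of $\lceil \cdot \rceil$, and that is all that is used; the ceiling only makes $k$ an integer (required for $k\otimes A$ to make sense) without weakening the bound.

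I do not expect a genuine obstacle here — this is a short computation once Lemma~\ref{lem:k-greedy-density} is in hand. The only thing worth double-checking is the edge case $\rho = 1$: then $k = 2$, $(1-\rho)^k = 0$, and the bound is trivially satisfied; and the case where $\rho$ is very small, where $k$ is large but the exponential bound still delivers exactly the constant $\frac34$ independent of how small $\rho$ is. So the mild subtlety is simply making sure the argument is uniform in $\rho$, which it is because the exponent $\rho k$ is bounded below by the absolute constant $2$ regardless of $\rho$. A remark could be added that any constant strictly less than $1 - e^{-2}$ could be achieved with the same proof, and that $\frac34$ is chosen for convenience in later lemmas.
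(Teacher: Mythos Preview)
Your proof is correct and essentially identical to the paper's: both apply Lemma~\ref{lem:k-greedy-density} and then use the inequality $1-\rho \le e^{-\rho}$ (the paper phrases it equivalently as $\rho \le -\ln(1-\rho)$) together with $k \ge 2/\rho$ to conclude $(1-\rho)^k \le e^{-2} < \tfrac14$.
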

\begin{proof}
    Let $\rho = \rho_{m}(A)$. Note that 
    \[
        k \geq \frac{2}{\rho} \geq \frac{2}{ - \ln (1 - \rho)}.
    \]
    By Lemma~\ref{lem:k-greedy-density}, 
    \[
        \rho_m(k\otimes  A) \geq 1 - (1 - \rho)^k \geq 1 - \frac{1}{e^2} > \frac{3}{4}. \qedhere
    \]
\end{proof}

Next we show that when a set $A$ has large density over an interval $[1, m]$, then $\{1, \ldots, m\} \in A + A$. Moreover, for any $z\in \{1, \ldots, m\}$, we can guess a representation $z=a+(z-a)$ by randomly picking $a\in \mathbb{Z}[1,z]$, and the guess is correct with a constant probability.
\begin{lemma}\label{lem:sample-solution}
    Let $A$ be a set of integers. Assume that $0 \in A$ and that $\rho_m(A) \geq \frac{3}{4}$.  Let $z \in \mathbb{Z}[1, m]$ be an integer. If we sample an integer $a$ from $\mathbb{Z}[1, z]$ uniformly at random, then with probability at least $1/2$, we have that
    \[
        a \in A \qquad \textrm{and} \qquad z - a\in A.
    \]
\end{lemma}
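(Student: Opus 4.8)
The plan is to bound the number of ``bad'' choices of $a$ in two ways and add them up. Write $S = \mathbb{Z}[1,z]$; we sample $a$ uniformly from $S$. Call $a$ bad if $a \notin A$ or $z - a \notin A$. I would split the bad events: let $B_1 = \{a \in S : a \notin A\}$ and $B_2 = \{a \in S : z - a \notin A\}$. By the union bound it suffices to show $|B_1| + |B_2| \le \tfrac12 |S| = \tfrac{z}{2}$.

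\textbf{Key steps.} First, estimate $|B_1|$. Since $\rho_m(A) \ge \tfrac34$ and $z \le m$, the definition of density (applied with $z' = z$) gives $|A[1,z]| \ge \tfrac34 z$, so $|B_1| = z - |A[1,z]| \le \tfrac14 z$. Second, estimate $|B_2|$. As $a$ ranges over $\mathbb{Z}[1,z]$, the value $z - a$ ranges over $\mathbb{Z}[0, z-1]$, so $z - a \notin A$ happens for at most $|{\mathbb{Z}[0,z-1]} \setminus A|$ values of $a$. Since $0 \in A$, the element $0$ is not among the missing ones, so $|B_2| \le z - |A[1, z-1]| \le z - \tfrac34(z-1) = \tfrac14 z + \tfrac34$; a slightly cleaner route is to note $|A[1,z-1]| \ge \rho_m(A)(z-1) \ge \tfrac34(z-1)$ when $z \ge 2$, and handle $z = 1$ separately (there $a = 1$ forces $z - a = 0 \in A$, and we need $1 \in A$, which holds since $\rho_m(A) > 0$ forces $1 \in A$). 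Third, combine: $|B_1| + |B_2| \le \tfrac14 z + \tfrac14 z + \tfrac34 = \tfrac{z}{2} + \tfrac34$. This is $\le \tfrac{z}{2}$ only up to the additive slack $\tfrac34$, so I would tighten the second bound by observing that $z - a \notin A$ ranges over the complement of $A$ within $\mathbb{Z}[0,z-1]$, which has size $z - |A[0,z-1]| = z - |A[1,z-1]| - 1 \le z - \tfrac34(z-1) - 1 = \tfrac14 z - \tfrac14 < \tfrac14 z$. Then $|B_1| + |B_2| < \tfrac14 z + \tfrac14 z = \tfrac{z}{2}$, so the probability of a bad $a$ is strictly less than $\tfrac12$, i.e.\ the good event has probability at least $\tfrac12$.

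\textbf{Main obstacle.} The only delicate point is the off-by-one bookkeeping: the density bound $\rho_m(A) \ge \tfrac34$ controls $|A[1,z']|$ for $z' \le z \le m$, but the reflected set $\{z - a : a \in \mathbb{Z}[1,z]\} = \mathbb{Z}[0,z-1]$ is shifted, and one must use $0 \in A$ to avoid losing a unit there. I expect the clean way to present it is exactly as above: bound $|B_1|$ by the density at $z' = z$, bound $|B_2|$ by the density at $z' = z-1$ together with $0 \in A$, and check $z = 1$ (and, if needed, $z = 2, 3$) by hand where the integer rounding is tight. No use of greedy sumsets is needed here; this lemma is a direct density/counting argument, the same computation that underlies Schnirelmann's bound.
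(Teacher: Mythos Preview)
Your proposal is correct and follows essentially the same approach as the paper: bound the two bad events $\{a\notin A\}$ and $\{z-a\notin A\}$ separately by $z/4$ each and apply a union bound. The only difference is cosmetic---the paper simply asserts both bad sets have size at most $z/4$ from the single inequality $|A[1,z]|\ge \tfrac34 z$, whereas you are more explicit about the off-by-one in the reflected range $[0,z-1]$, invoking $0\in A$ and the density at $z'=z-1$ to close the gap.
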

\begin{proof}
    By the definition of density, we have that
    \[
        \frac{A[1, z]}{z} \geq \rho_m(A) \geq \frac{3}{4}.
    \]
    This implies that 
    \begin{align*}
        |\{\textrm{$(a, z - a)$ : $a \in \mathbb{Z}[1, z]$ but $a \notin A$}\}| &\leq \frac{z}{4};\\
        |\{\textrm{$(a, z - a)$ : $a \in \mathbb{Z}[1, z]$ but $z - a \notin A$}\}| &\leq \frac{z}{4}.
    \end{align*}
    Therefore,
    \[
        |\{\textrm{$(a, z-a)$ : $a\in \mathbb{Z}[1,z]$ and $a \in A$  and $z - a \in A$}\}| \geq z - \frac{z}{4} - \frac{z}{4} \geq \frac{z}{2}.
    \]
    The stated probability follows straightforwardly.
\end{proof}

Now we are ready to prove the main result of this section.

\lemconsapbydensity*
\begin{proof}
    By Lemma~\ref{lem:k-greey-sumset-solution}, in $O(n\log n)$ time, we can sort the integers in $A$ and then obtain a data structure $\mathcal{D}$ such that given any integer $z$, within $O(k\log n)$ time, $\mathcal{D}$ can check whether $z \in k\otimes A$, and returns a solution for $z \in kA$ if $z \in k\otimes A$. Our witness stores $\mathcal{D}$.  

    Given any $z \in \mathbb{Z}[0, m]$, we use the witness to find a solution for $z \in 2kA$ as follows.  If $z = 0$, we simply set $a_1 = a_2 = \cdots = a_{2k} = 0$. Assume that $z \geq 1$. We uniformly sample an integer  $a$ from $\mathbb{Z}[1, z]$ and check whether $a \in k\otimes A$ and $z - a \in k\otimes A$ both hold. We repeat this procedure until we successfully find such $a$.   Each round can be done in $O(k\log n)$ time using the data structure $\mathcal D$.  Recall that $\rho_m(A) \geq 2/k$. By Lemma~\ref{lem:constant-density} and~\ref{lem:sample-solution}, we need to sample only twice in expectation.  When $a \in k\otimes A$ and $z - a \in k\otimes A$, the data structure $\mathcal D$ also returns a solution for $a \in k A$ and a solution for $z - a \in k A$. They together form a solution for $z \in 2kA$.
\end{proof}

\section{Long Arithmetic Progressions in Sumsets}\label{sec:ka}
Lemma~\ref{lem:ap-by-density-cons} differs from Theorem~\ref{thm:ka} in two aspects.
\begin{enumerate}[label=(\roman*)]
    \item  Lemma~\ref{lem:ap-by-density-cons} requires that 
            \[
                k \geq \frac{2}{\rho_m(A)}, \quad \text{or equivalently,} \quad \rho_m(A)\geq \frac{2}{k}.
            \]
            This condition is rather strong in the sense that it requires that
            \(
                |A[1,z]| \geq \frac{2z}{k}
            \)
            for all $z \in \mathbb{Z}[1, m]$. In contrast, Theorem~\ref{thm:ka} only requires that $|A| \geq (m+1)/k$.

    \item Lemma~\ref{lem:ap-by-density-cons} additionally requires that $\{0,1\} \subseteq A$.
\end{enumerate}
We shall tackle these two issues separately. In Subsection~\ref{subsec:den-2-card}, we shall replace the density condition $\rho_m(A)\geq \frac{2}{k}$ in Lemma~\ref{lem:ap-by-density-cons} with cardinality condition $|A| \geq (m+1)/k$, and show that the same conclusion holds. In Subsection~\ref{subsec:short-ap}, we shall further remove the assumption that $\{0,1\} \subseteq A$, and show that we can still obtain an arithmetic progression in $kA$ despite that its length may not be as large as $m$. Then in Subsection~\ref{subsec:aug} we present our iterative augmentation framework which augments the arithmetic progression until its length increases to $m$.

We remark that the arguments in Subsection~\ref{subsec:den-2-card} and~\ref{subsec:short-ap} mainly follow from S{\'a}rk{\"o}zy's proof~\cite{Sar89}, except that we need to make sure that several parameters involved in the proof can be computed in near-linear time.  

\subsection{From Density to Cardinality}\label{subsec:den-2-card}
We show that when $A$ has a large cardinality, it must have a large density over some sub-interval of $[0, m]$. The following lemma is essentially the same as that in~\cite{Sar89}. We give a proof for completeness.
\begin{restatable}{lemma}{lemdensehalf}{\normalfont({\cite[Lemma 2]{Sar89}})}
    \label{lem:dense-half-exist}
    Let $A \subseteq \mathbb{Z}[0, m]$ be a set of $n$ integers. Let $k$ be a positive integer. If 
    \[
        n \geq \frac{m+1}{k},
    \]
    then there exists $u \in \mathbb{Z}[-1,m+1]$ satisfying one of the followings.
    \begin{enumerate}[label={\normalfont (\roman*)}]
        \item $-1 \leq u \leq m/2$ and for every $v \in \mathbb{Z}[u + 1, m]$,
        \[
            |A[u + 1 , v]| \geq \frac{v - u}{2k}.
        \]

        \item $m/2 < u \leq m + 1$ and for every $v \in \mathbb{Z}[0, u-1]$,
        \[
            |A[v, u - 1]| \geq \frac{u - v}{2k}.
        \]
    \end{enumerate}
\end{restatable}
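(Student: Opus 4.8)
The plan is to argue by a continuity/averaging argument on the "density deficit" as the splitting point moves across $[0,m]$. For a candidate point $u$, call $u$ \emph{good to the right} if $-1 \le u \le m/2$ and $|A[u+1,v]| \ge (v-u)/(2k)$ for every $v \in \mathbb{Z}[u+1,m]$, and \emph{good to the left} if $m/2 < u \le m+1$ and $|A[v,u-1]| \ge (u-v)/(2k)$ for every $v \in \mathbb{Z}[0,u-1]$. We want to show at least one $u \in \mathbb{Z}[-1,m+1]$ is good (to one side). Suppose, for contradiction, that no such $u$ exists. Then every $u$ in the left half has a "witness" $v = v(u) > u$ with $|A[u+1,v]| < (v-u)/(2k)$, and every $u$ in the right half has a witness $v = v(u) < u$ with $|A[v,u-1]| < (u-v)/(2k)$.

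The key step is then to use these witnesses to greedily cover $[0,m]$ (or, more precisely, the index set $\mathbb{Z}[0,m]$, up to the boundary offsets built into the $\pm 1$ in the range of $u$) by a collection of intervals, each of which is "light" in the sense that it contains strictly fewer than $(\text{length})/(2k)$ elements of $A$. Concretely, starting from $u_0 = -1$, its witness $v_0$ gives a light interval $[u_0+1, v_0]$; if $v_0 \le m/2$ we set $u_1 = v_0$ and repeat, obtaining a chain of light intervals marching rightward; symmetrically, starting from $u = m+1$ and using left-witnesses we march leftward. One then shows these two chains can be made to meet near $m/2$, so their union is a partition of $\mathbb{Z}[0,m]$ into intervals $I_1,\dots,I_r$ with $\sum_j |I_j| = m+1$ and $|A \cap I_j| < |I_j|/(2k)$ for each $j$. (A small care point: one must ensure the greedy march actually reaches the midpoint rather than stalling; since each witness interval has positive length this terminates, and the $m/2$ cutoffs in the two cases are exactly what let the two halves dovetail. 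The offsets are why $u$ ranges over $[-1,m+1]$ rather than $[0,m]$.)

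Summing the light-interval bounds then gives $n = |A| \le \sum_j |A \cap I_j| < \sum_j |I_j|/(2k) = (m+1)/(2k) < (m+1)/k \le n$, a contradiction. Hence some good $u$ exists. The bulk of the work — and the step I expect to be the main obstacle — is the bookkeeping that makes the two greedy chains meet cleanly at the midpoint and together tile $\mathbb{Z}[0,m]$ exactly: one has to be careful about off-by-one issues at the endpoints of the witness intervals (whether successive intervals share an endpoint or abut), about the case where a witness interval overshoots $m/2$, and about which of the two cases (i) or (ii) the meeting interval gets assigned to. Once the covering is set up correctly, the counting inequality is immediate. Since the statement is quoted from \cite[Lemma 2]{Sar89}, I would keep the write-up short and essentially reproduce S{\'a}rk{\"o}zy's argument, emphasizing only that the witnesses $v(u)$, and hence the final decomposition, can be located by a single left-to-right (and right-to-left) scan, which is what we need for the near-linear running time in later sections.
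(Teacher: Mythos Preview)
Your approach is essentially the paper's: assume no good $u$, build greedy chains of ``light'' witness intervals from $u=-1$ rightward and from $u=m+1$ leftward, and derive $n<(m+1)/k$ for a contradiction. The one place you overcomplicate things is insisting that the two chains tile $\mathbb{Z}[0,m]$ \emph{exactly} to get $\sum_j|I_j|=m+1$; the paper sidesteps this entirely by allowing overlap. It simply runs the right-march until the last endpoint $z_h$ exceeds $m/2$ (so the chain covers $[0,m/2]$ with total length $z_h+1\le m+1$, giving $|A[0,m/2]|<(m+1)/(2k)$), does the symmetric thing from the right to get $|A[m/2,m]|<(m+1)/(2k)$, and adds the two bounds. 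No dovetailing, no off-by-one worries at the junction---you lose a factor of two in the final count, but that is exactly the slack between the hypothesis $n\ge(m+1)/k$ and the $1/(2k)$ in the conclusion, so it costs nothing.
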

\begin{proof}
    Suppose, for the sake of contradiction, that no such $u$ exists.  

    We first show that 
    \[
        |A[0, m/2]| < \frac{m+1}{2k}.
    \]
    Let $z_0 = -1$. We recursively define $z_1, z_2, \ldots, z_h$ as follows. If $z_i > m/2$, then $h = i$ and we stop. Otherwise, there must be a $v \in \mathbb{Z}[z_i+1, m]$ such that 
    \[
        |A[z_i + 1, v]| < \frac{v - z_i}{2k}.
    \]
    Let $z_{i+1} = v$. Clearly, we ends with a sequence $-1 = z_0 < z_1 < \cdots < z_h \leq m$ such that $z_h > m/2$ and that
    \[
        |A[z_i + 1, z_{i+1}| < \frac{z_{i+1} - z_i}{2k}.
    \]
    Then we have that
    \[
        |A[0, m/2]| < |A[0, z_h]| < \sum_{i = 1}^h |A[z_i+1, z_{i+1}]| < \sum_{i = 1}^h \frac{z_{i+1} - z_i}{2k} \leq \frac{m+1}{2k}.
    \]

    Similarly, we can show that
    \[
        |A[m/2, m]| < \frac{m+1}{2k}.
    \]
    But then 
    \[
        |A[0, m]| = |A[0, m/2]| + |A[m/2, m]| < \frac{m+1}{2k} + \frac{m+1}{2k} = \frac{m+1}{k}.
    \]  
    Contradiction.
\end{proof}

The above lemma only proves the existence of $u$ but does not give a way to compute it. It can be proved that $u$ can be computed in $O(n)$ time. We defer the proof of Lemma~\ref{lem:dense-half-cons} to Appendix~\ref{app:comp-u}.

\begin{restatable}{lemma}{densehalfcons}\label{lem:dense-half-cons}
    The integer $u$ in Lemma~\ref{lem:dense-half-exist} can be computed in $O(n\log n)$ time.
\end{restatable}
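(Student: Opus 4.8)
The plan is to turn the existence argument of Lemma~\ref{lem:dense-half-exist} into an explicit $O(n\log n)$-time search. First I would sort $A$, which costs $O(n\log n)$; all subsequent steps will be linear. The key reformulation is that, for a fixed candidate $u$, case (i) asks whether $|A[u+1,v]| \ge \tfrac{v-u}{2k}$ holds simultaneously for \emph{every} $v \in \mathbb{Z}[u+1,m]$, and case (ii) is the symmetric condition on the left side. The crucial observation is that verifying such a "for all suffixes / prefixes" condition can be done by a single sweep: define, for the elements $a_1<a_2<\cdots<a_n$ of $A$ lying in $[u+1,m]$, the quantity that measures, as $v$ decreases from $m$ to $u+1$, the running minimum of $|A[u+1,v]| - \tfrac{v-u}{2k}$. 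Because $|A[u+1,v]|$ only changes (jumps by $1$) at the points $v \in A$, the minimum of $|A[u+1,v]| - \tfrac{v-u}{2k}$ over all integer $v$ in a range between two consecutive elements of $A$ is attained at the \emph{right} endpoint of that range (the linear term $-\tfrac{v-u}{2k}$ is decreasing in $v$ while the step term is constant there), so it suffices to check the $O(n)$ "critical" values of $v$, namely each $v = a_j - 1$ for $a_j \in A$, together with $v=m$.

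Concretely, I would proceed as follows. Scan $u$ over the relevant critical values. As in the proof of Lemma~\ref{lem:dense-half-exist}, the natural critical values of $u$ are $u=-1$ and $u = a_j$ for $a_j\in A$ with $a_j \le m/2$ for case (i), and symmetrically $u=m+1$ and $u=a_j$ for $a_j\in A$ with $a_j > m/2$ for case (ii); between consecutive elements of $A$ the failure condition is monotone in the same way, so these suffice. For case (i): iterate $u$ from $-1$ upward through these values; for each, we must test $\min_{v}\bigl(|A[u+1,v]| - \tfrac{v-u}{2k}\bigr) \ge 0$. Rather than recompute from scratch, maintain a pointer into the sorted array and a suffix-minimum data structure: precompute, once, the array of values $|A[1,v]| - \tfrac{v}{2k}$ at the critical points $v$, and note that $|A[u+1,v]| - \tfrac{v-u}{2k} = \bigl(|A[1,v]| - \tfrac{v}{2k}\bigr) - \bigl(|A[1,u]| - \tfrac{u}{2k}\bigr)$; hence the condition "for all $v \ge u+1$" becomes "$\bigl(|A[1,u]| - \tfrac{u}{2k}\bigr) \le \min_{v \ge u+1}\bigl(|A[1,v]| - \tfrac{v}{2k}\bigr)$", i.e. a single comparison against a precomputed suffix minimum. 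Sweeping $u$ once and maintaining this suffix-minimum (or precomputing the whole suffix-minimum array in one backward pass) takes $O(n)$ time total. Case (ii) is handled by the mirror-image sweep with prefix minima. Lemma~\ref{lem:dense-half-exist} guarantees at least one of the $O(n)$ candidates passes, so the algorithm returns a valid $u$.

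The main obstacle — and the part that needs the most care — is the reduction from "all integers $v$ in $[u+1,m]$" to "only the $O(n)$ critical $v$", i.e. arguing that it is safe to evaluate the step function $|A[u+1,v]|$ minus the linear penalty only at the right endpoints of gaps between consecutive elements of $A$. This needs a short monotonicity lemma: on an interval $(a_j, a_{j+1})$ of integers containing no element of $A$, the function $v \mapsto |A[u+1,v]| - \tfrac{v-u}{2k}$ is strictly decreasing (the count is constant, the penalty grows), so its infimum over integer $v$ in that gap is at the largest such $v$, which is $\min(a_{j+1}-1,\,m)$; and these largest-in-gap points are exactly $\{a_{j+1}-1\}\cup\{m\}$, a set of size $O(n)$. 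Once this is pinned down, the rest is bookkeeping. Everything else — computing the prefix counts $|A[1,v]|$, the linear terms, and the prefix/suffix minima — is a constant number of linear scans over the sorted array, so the total is dominated by the initial sort, giving $O(n\log n)$; if $A$ is presorted it is $O(n)$.
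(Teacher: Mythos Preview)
Your suffix-minimum reformulation is a valid and clean $O(n)$ alternative (after sorting) to the paper's argument. Writing $g(w) := |A \cap [0,w]| - \tfrac{w+1}{2k}$, the condition in case~(i) becomes $g(u) \le \min_{v \in [u+1,m]} g(v)$, and your reduction of the inner minimum to the $O(n)$ critical values $v \in \{a_j - 1\} \cup \{m\}$ via monotonicity on gaps is correct.

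However, your candidate set for $u$ has an off-by-one error that breaks the algorithm. You list the critical values of $u$ in case~(i) as $u \in \{-1\} \cup \{a_j : a_j \le m/2\}$, but the very monotonicity you invoke for $v$ goes the other way for $u$: on a gap $[a_{j-1}, a_j - 1]$ the left-hand side $g(u)$ is decreasing while the right-hand side $\min_{v>u} g(v)$ is nondecreasing, so the condition is \emph{easiest} at the right end $u = a_j - 1$, not at $u = a_{j-1}$. Equivalently, taking $v = u+1$ in the original inequality forces $u+1 \in A$, so the only viable candidates are $u = a_j - 1$. Your stated candidates can fail entirely: for $A = \{3,6,9\}$, $m = 12$, $k = 5$, the valid $u$'s are $\{2,5\}$ in case~(i) and $\{7,10\}$ in case~(ii), but your candidate sets $\{-1,3,6\}$ and $\{9,13\}$ all fail the test. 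The fix is immediate --- use $u = a_j - 1$ for case~(i) and $u = a_j + 1$ for case~(ii) --- but as written the algorithm does not find a valid $u$.

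For comparison, the paper's proof also reduces to the same critical values $u = a_{i^*} - 1$ and $v = a_i - 1$, but then runs a two-pointer sweep over index pairs $(i,j)$, maintaining the invariant that all pairs $(i,j')$ with $i \le j' \le j$ are ``good''; two structural observations about good/bad pairs justify advancing $i$ monotonically. Your suffix-minimum approach is arguably more transparent (one backward pass to build the suffix-minimum array, then one comparison per candidate), while the paper's sweep avoids the auxiliary array and directly returns the smallest valid index. Both are $O(n)$ after the sort.
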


Lemma~\ref{lem:dense-half-exist} basically states that if $|A| \geq \frac{m+1}{k}$, then $A$ has a density of at least $\frac{1}{2k}$ over some sub-interval and the length of this subinterval is at least $m/2$. Then we can use the part of $A$ within this subinterval to generate a long arithmetic progression via Lemma~\ref{lem:ap-by-density-cons}, and obtain the following lemma.

\begin{lemma}\label{lem:cons-ap-by-cardinality-restricted}
    Let $A \subseteq \mathbb{Z}[0,m]$ be a set of $n$ integers. Assume that $\{0,1\} \subseteq A$.  Let $k$ be a positive integer. Assume that
    \[
            n \geq \frac{m+1}{k}.
    \]  
    In $O(n\log n)$ time, we can compute an arithmetic progression 
    \[
        \{s\} + \{0,1, \ldots, m\} \subseteq 32kA,
    \]
    as well as a witness with $O(k \log n)$ expected query time.
\end{lemma}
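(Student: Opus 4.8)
The plan is to combine Lemma~\ref{lem:dense-half-exist} with Lemma~\ref{lem:ap-by-density-cons}, being careful that every parameter is computable in near-linear time. First I would compute, in $O(n\log n)$ time via Lemma~\ref{lem:dense-half-cons}, the integer $u \in \mathbb{Z}[-1,m+1]$ guaranteed by Lemma~\ref{lem:dense-half-exist}. Two symmetric cases arise; I treat case~(i), where $-1 \leq u \leq m/2$ and $|A[u+1,v]| \geq (v-u)/(2k)$ for all $v \in \mathbb{Z}[u+1,m]$ (case~(ii) is handled by reflecting $A$ about $m$, i.e.\ replacing each $a$ by $m-a$, which turns case~(ii) into case~(i) and whose arithmetic progression can be reflected back at the end). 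In case~(i), set $m' = m - u - 1$ and let $B = \{\,a - (u+1) : a \in A[u+1,m]\,\} \subseteq \mathbb{Z}[0,m']$. The density condition on $A$ translates to $\rho_{m'}(B) \geq \frac{1}{2k}$ (using $m' \geq m/2 - 1$, so the interval $[1,m']$ is long enough for the minimum in the definition of $\rho$ to be controlled). Hence $k' := \lceil 2/\rho_{m'}(B)\rceil \leq 4k$.

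There is a wrinkle: Lemma~\ref{lem:ap-by-density-cons} requires $\{0,1\} \subseteq B$, but after the shift we only know $0 \in B$ (since $u+1$ need not lie in $A$ — actually $u+1 \in A[u+1,m]$ forces nothing, but $u$ might equal an element minus one). The clean fix is to \emph{add the element $1$ to $B$} if it is missing: form $B' = B \cup \{1\}$, which only increases density, costs one extra summand, and whose solutions translate back to solutions over $A$ provided $u+2 \in A$ — so instead I would not artificially inject $1$ but rather invoke the fact that $1 \in A$ and $0 \in A$ (hypotheses of the present lemma) together with a small amount of bookkeeping. Concretely: apply Lemma~\ref{lem:ap-by-density-cons} to $B'$ (which has $\{0,1\}\subseteq B'$ and $\rho_{m'}(B') \geq \frac{1}{2k}$) with parameter $k' = \lceil 4k \rceil$, obtaining in $O(n\log n)$ time a witness for $\{0,1,\ldots,m'\} \subseteq 2k'B'$ with $O(k'\log n)$ expected query time. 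Every use of the element $1 \in B'$ that was not already in $B$ is a use of the integer $1 \in A$; every other element $b \in B'$ is $a - (u+1)$ for some $a \in A$, and each such usage consumes exactly one copy of the integer $u+1$ too (to convert $b$ back to the genuine element $a$, we write $a = b + (u+1)$, but $u+1$ itself may not be in $A$). This last point is the genuine subtlety and the main obstacle.

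The main obstacle is therefore the shift: translating an arithmetic progression in $2k'B'$ back to one in $(\text{const})\cdot k A$ when $u+1 \notin A$. The standard resolution — which is exactly what S\'ark\H{o}zy does and what the $O(1)$-factor blowup in $332$ vs.\ $32$ absorbs — is to observe that we do not need to recover elements of $A$ literally; we need elements of $kA$. Since $0 \in A$, padding with zeros is free, so a solution for $z \in jB'$ with $j \leq 2k'$ and using only the "real" elements $b = a-(u+1)$ gives $z + (\#\text{summands})\cdot(u+1) = \sum a_i \in (2k')A$ after padding; the offset $(\#\text{summands})\cdot(u+1)$ is a fixed shift once we fix the number of summands to exactly $2k'$ (pad with $0$'s). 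Thus $\{0,1,\ldots,m'\} \subseteq 2k'B'$ becomes $\{s_0\} + \{0,1,\ldots,m'\} \subseteq 2k'A$ where $s_0 = 2k'(u+1)$ plus a correction for the copies of $1$; the length $m' \geq m/2 - 1$ of this progression is only half of what we want. To reach length $m$, I would run the same construction on the reflected set (case (ii) data, or simply the mirror image) to get a second progression of length $\geq m/2$ covering the other half, with common difference $1$, and glue the two overlapping progressions of common difference $1$ into a single one of length $m$; this at most doubles the fold count again. Tracking constants: $k' \leq 4k$, the $2k'$ fold is $\leq 8k$, gluing the two halves gives $\leq 16k$, and the padding/alignment overhead pushes it to $32k$. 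The witness query time is $O(k'\log n) = O(k\log n)$ in expectation (from Lemma~\ref{lem:ap-by-density-cons}), plus $O(\log n)$ for locating which half a queried $z$ falls into and applying the fixed shift, which is absorbed. The whole construction is $O(n\log n)$: one call to Lemma~\ref{lem:dense-half-cons}, sorting, and two calls to the (near-linear) construction of Lemma~\ref{lem:ap-by-density-cons}.
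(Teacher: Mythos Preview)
Your high-level plan (compute $u$ via Lemma~\ref{lem:dense-half-cons}, shift, apply Lemma~\ref{lem:ap-by-density-cons}) is the same as the paper's, but two steps go wrong and leave a genuine gap.

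First, you overlook that the density condition already forces $u+1\in A$: taking $v=u+1$ in case~(i) gives $|A[u+1,u+1]|\ge 1/(2k)>0$. So in the shifted set you automatically have $1\in B$ (and $0$ is there because you put it there); no artificial injection of $1$ is needed, and the whole discussion about ``artificial $1$'' and whether $u+1\in A$ collapses. More importantly, this is what makes the back-conversion clean: each $b_i$ in a solution maps to \emph{two} elements of $A$, namely $(0,u+1)$ if $b_i=0$ and $(1,\,u+b_i)$ if $b_i>0$ (using the hypotheses $0,1\in A$ and the just-noted $u+1\in A$, plus $u+b_i\in A$ by definition of $B$). This doubling is where the factor $2$ between $16k$ and $32k$ comes from.

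Second, and this is the real gap: your shifted set $B\subseteq\mathbb{Z}[0,m']$ with $m'=m-u-1$ only yields an arithmetic progression of length $m'$, which can be as small as $m/2-1$. Your proposed fix---run the construction on the reflection to get ``the other half'' and glue---does not work: Lemma~\ref{lem:dense-half-exist} guarantees $u$ in \emph{one} of the two cases, not both, so you cannot count on having a right-anchored dense interval as well; and even if you did, the two progressions would have unrelated starting points $s_1,s_2$ and need not overlap. The paper's trick to get length $m$ in one shot is to define $B=\{0\}\cup\{\,b:1\le b\le m/2,\ u+b\in A\,\}$, i.e.\ truncate $B$ to $[0,m/2]$ but then measure density over $[1,m]$: for $z'\le m/2$ one has $|B[1,z']|\ge z'/(2k)$, and for $z'>m/2$ one has $|B[1,z']|=|B[1,m/2]|\ge (m/2)/(2k)\ge z'/(4k)$, so $\rho_m(B)\ge 1/(4k)$. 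Now Lemma~\ref{lem:ap-by-density-cons} (with parameter $8k$) gives $\{0,1,\dots,m\}\subseteq 16kB$ directly, and the doubling conversion above lands this inside $32kA$.
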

\begin{proof}
    By Lemma~\ref{lem:dense-half-cons}, in $O(n\log n)$ time, we can compute an integer $u \in \mathbb{Z}[-1, m+1]$ satisfying Lemma~\ref{lem:dense-half-exist}.

    \noindent\textbf{Case (i)}: $-1 \leq u \leq m/2$ and for every $v \in \mathbb{Z}[u + 1, m]$,
        \begin{equation}\label{eq:lem-cons-ap-by-card-1}
            |A[u + 1 , v]| \geq \frac{v - u}{2k}.
        \end{equation}
    Let $B = \{0\} \cup \{b : 1 \leq b \leq m/2, u + b \in A\}$. In view of~\eqref{eq:lem-cons-ap-by-card-1}, we have that 
    \[
        \rho_m(B) \geq \frac{1}{4k}.
    \]
    Note that $|B| \leq |A| = n$. By Lemma~\ref{lem:ap-by-density-cons}, in $O(n \log n)$ time, we can construct a witness ${\mathcal W}$ with $O(k \log n)$ expected query time for 
    \[
        \{0, 1, \ldots, m\} \subseteq 16kB.
    \]
    Let $s = 16k(u+1)$. Below we show that $\mathcal W$ can also be used as a witness for 
    \[
        \{s\} + \{0, 1, \ldots, m\} \subseteq 32kA.
    \]
    (Note that such a witness also certifies the existence of this arithmetic progression in $32kA$.)
    
    To find a solution for $s + j \in 32kA$ for $j \in \mathbb{Z}[0, m]$, we first use $\mathcal W$ to find a solution $(b_1, \ldots, b_{16k})$ for $j \in 16kB$, which takes $O(k\log n)$ time in expectation. Then we convert it to a solution for $s + j \in 32kA$ as follows. Recall that~\eqref{eq:lem-cons-ap-by-card-1} implies that $u+1 \in A$ and that the definition of $B$ implies $u + b \in A$. We claim that for each $b_i$, in $O(1)$ time, we find can two integers $a_{i,1}, a_{i,2} \in A$ with $a_{i,1} + a_{i,2} = b_i + u + 1$: if $b_i = 0$, we can set $a_{i,1} = 0$ and $a_{i_2} = u+1$, and if $b_{i} > 0$, we can set $a_{i,1} = 1$ and $a_{i_2} = u + b$. Therefore, in $O(k)$ time, we can obtain $32k$ integers from $A$ such that 
    \[
        a_{1,1} + a_{1,2} + a_{2,1} + a_{2,2} + \cdots + a_{16k,1} + a_{16k,2} = 16k(u+1) + b_1 + b_2 + \cdots + b_{16k} = s + j.
    \]
    They form a solution for $s + j \in 32kA$.

    \noindent\textbf{Case (ii)}: $m/2 < u \leq m + 1$ and for every $v \in \mathbb{Z}[0, u-1]$,
    \[
            |A[v, u - 1]| \geq \frac{u - v}{2k}.
    \]
    Let $B = \{0\} \cup \{b : 1 \leq b \leq m/2, u - b \in A\}$.  Similar to Case (i), in $O(n \log n)$ time, we can construct a witness ${\mathcal W}$ with $O(k\log n)$ expected query time for 
    \[
        \{0, 1, \ldots, m\} \subseteq 16kB.
    \]
    Also similar to Case (i), $\mathcal{W}$ can be used as a witness for 
    \[
        \{s\} + \{0, 1, \ldots, m\} \subseteq 32kA,
    \] 
    where $s = 16ku - m$.  We omit the details.
\end{proof}

\subsection{Arithmetic Progressions for General Sets}\label{subsec:short-ap}
Lemma~\ref{lem:cons-ap-by-cardinality-restricted} is very close to Theorem~\ref{thm:ka} except for that it additionally requires that $\{0,1\} \subseteq A$. We shall remove this assumption and show that we can still obtain an arithmetic progression despite that its length may not be as large as $m$.

\begin{lemma}\label{lem:cons-ap-by-cardinality-short}
    Let $A \subseteq \mathbb{Z}[0,m]$ be a set of $n$ integers. Assume that $n \geq 2$.  Let $k$ be a positive integer. Assume that
    \[
            n \geq \frac{m+1}{k}.
    \]  
    In $O(n\log n)$ time, we can compute an arithmetic progression 
    \[
        \{s\} + \{0, d, 2d \ldots, \ell d\} \subseteq 320kA
    \]
    with $d \leq \frac{2m}{n}$ and $\ell \cdot \min\{d, n\} \geq 5m$. In the same time, we can obtain a witness with $O(k \log n)$ expected query time.
\end{lemma}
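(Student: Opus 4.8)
The plan is to reduce Lemma~\ref{lem:cons-ap-by-cardinality-short} to Lemma~\ref{lem:cons-ap-by-cardinality-restricted} by a normalization step that forces the two smallest elements of a suitably transformed set to become $\{0,1\}$. First I would translate $A$ so that $0 \in A$ (replace $A$ by $A - \min A$; this changes $m$ by at most a constant factor and does not affect the hypothesis $n \ge (m+1)/k$ up to replacing $m$ by $m - \min A \le m$). Now let $d^\ast = \gcd(A)$. Dividing every element of $A$ by $d^\ast$ yields a set $\tilde A \subseteq \mathbb{Z}[0, \lfloor m/d^\ast\rfloor]$ with $\gcd(\tilde A) = 1$, $0 \in \tilde A$, and $|\tilde A| = n \ge (m+1)/k \ge (\lfloor m/d^\ast\rfloor + 1)/k$ whenever $d^\ast \ge 1$. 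The key remaining obstacle is that $1$ need not lie in $\tilde A$: $\gcd(\tilde A) = 1$ only guarantees $1$ appears in some bounded-fold sumset, not in $\tilde A$ itself.

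To handle that, I would follow S\'ark\"ozy's trick of passing to the difference-based set that is guaranteed to contain a short progression starting at a multiple of $1$. Concretely, let $a_1 < a_2$ be the two smallest nonzero elements of $\tilde A$ (they exist since $n \ge 2$), and let $g = \gcd(a_1, a_2) \le a_1 \le m/d^\ast$. Consider the set $B$ obtained from $\{0, a_1, a_2\}$: by a Bezout-type argument (or directly: $\{0, a_1, 2a_1, \dots\}$ combined with $a_2$), the $h$-fold sumset of $\{0,a_1,a_2\}$ for $h = O(a_1/g)$ already contains the full arithmetic progression $\{0, g, 2g, \dots, Lg\}$ for some $L \ge \Omega(a_1/g)$; this is a finite, fully explicit computation — essentially the two-generator Frobenius/Chicken McNugget bound, $a_1 a_2/g^2$ representable multiples of $g$ using $O(a_1 a_2 / g^2)$ summands, which I would tighten to the form needed. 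Rescaling that progression by $g$, I would then apply Lemma~\ref{lem:cons-ap-by-cardinality-restricted} to the rescaled set $\{0, 1, a_1/g, \dots\}$ (which contains $\{0,1\}$) to blow the progression up to length $m/(d^\ast g)$ inside $O(k)$-fold sumset, then scale back by $d^\ast g$ to get common difference $d = d^\ast g \le a_1 \le 2m/n$ (using $a_1 \le 2m/n$, which holds because $\tilde A$ has $n$ elements in $[0, m/d^\ast]$, so the second-smallest is at most $\approx m/n$ after scaling — this is where the bound $d \le 2m/n$ comes from).

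The witness is assembled compositionally: each query for $s + jd \in 320kA$ first calls the Lemma~\ref{lem:cons-ap-by-cardinality-restricted} witness on the rescaled set (cost $O(k\log n)$ expected), returning $O(k)$ rescaled summands, then expands each such summand into $O(1)$ original elements of $A$ using the fixed explicit representation of each multiple of $g$ as an integer combination of $a_1, a_2$ precomputed once in $O(\log m)$ time. Multiplying through, the total number of elements of $A$ used stays $O(k)$, so the query time remains $O(k\log n)$ in expectation, and the constant $320$ is obtained by tracking the blow-up factors: a factor from the $d^\ast$-and-$g$ rescaling, the factor $h = O(1)$ per unit of the short progression, and the factor $32$ from Lemma~\ref{lem:cons-ap-by-cardinality-restricted}, all multiplied together and rounded up. The main obstacle I anticipate is the bookkeeping in the last paragraph: ensuring that the short progression produced from $\{0, a_1, a_2\}$ is long enough — the condition $\ell \cdot \min\{d, n\} \ge 5m$ — while simultaneously keeping $d \le 2m/n$; this forces a careful two-case split on whether $d$ (equivalently $a_1$) is larger or smaller than $n$, exactly mirroring the two-case structure of Lemma~\ref{lem:dense-half-exist}, and the constants have to be chosen to survive both cases.
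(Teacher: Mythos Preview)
Your proposal has a genuine gap at the step where you assert $d \le 2m/n$. You claim ``the second-smallest is at most $\approx m/n$ after scaling,'' i.e., that the smallest nonzero element $a_1$ of $\tilde A$ satisfies $a_1 \le 2m/n$. This is false: $n$ points in $[0, m/d^\ast]$ only guarantees that the \emph{minimum gap between consecutive elements} is at most $m/(d^\ast(n-1))$; the smallest nonzero element can be nearly $m/d^\ast$. For example, $\tilde A = \{0, M-n+2, M-n+3, \dots, M\}$ has $a_1 = M-n+2$. Since your common difference is $d = d^\ast g$ with $g = \gcd(a_1,a_2) \le a_1$, you cannot bound $d$ this way, and the conclusion $d \le 2m/n$ collapses.

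There is a second, related problem: your ``rescaled set $\{0,1,a_1/g,\dots\}$'' does not exist, because $\gcd(\tilde A)=1$ means most elements of $\tilde A$ are \emph{not} divisible by $g$, so you cannot divide the whole set by $g$ and still apply Lemma~\ref{lem:cons-ap-by-cardinality-restricted} to a set of cardinality $n$. If instead you only use $\{0,a_1,a_2\}$, you lose the cardinality hypothesis $n \ge (m+1)/k$ entirely, and your Frobenius bound on the two-generator semigroup needs $h \approx a_1 a_2/g^2$ summands, which is unbounded in terms of $k$.

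The paper fixes both issues in one move. It takes the \emph{closest pair} $\{a^\ast, a^\ast+g\}$ in $A$ (so $g \le m/(n-1) \le 2m/n$ by pigeonhole), lets $t = |A \bmod g|$, and picks the largest residue class $A' \subseteq A$ modulo $g$ (so $|A'| \ge n/t$). Then $A' + \{a^\ast, a^\ast+g\}$, shifted and divided by $g$, yields a set $B$ that automatically contains $\{0,1\}$ (from $a' + a^\ast$ and $a' + (a^\ast+g)$) and has $|B| \ge n/t + 1$, large enough to invoke Lemma~\ref{lem:cons-ap-by-cardinality-restricted} with the target length $\lceil 5m/t\rceil$. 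The resulting AP in $160kB$ pulls back to an AP in $320kA$ of common difference $g$ and length $\lceil 5m/t\rceil \ge 5m/\min\{g,n\}$, since $t = |A \bmod g| \le \min\{g,n\}$. The closest-pair choice is exactly what makes both the bound $d \le 2m/n$ and the cardinality condition go through simultaneously.
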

\begin{proof}
    Assume that $A$ is sorted. This is without loss of generality since sorting takes only $O(n\log n)$ time. In $O(n)$ time, we can find the closest pair of integers in $A$. Suppose that they are $\{a^*, a^* + g\}$. We have that 
    \begin{equation}\label{eq:short-ap-1}
        1 \leq g \leq \frac{m}{n-1} \leq \frac{2m}{n}.
    \end{equation}
    Let $A^* = \{a^*, a^* + g\}$. Let $t = |A \mod g|$.  Let $A'$ be the largest subsets of $A$ so that the elements in $A'$ are all congruent modulo $g$. Note that $A'$ can be obtained in $O(n)$ time and that
    \[
        |A'| \geq \frac{n}{t}.
    \]
    Let $a'$ be the minimum element of $A'$. Now consider $A' + A^*$. All the elements in $A' + A^*$ are congruent to $(a'+ a^*)$ modulo $g$. In $O(n)$ time, we can compute the following set. 
    \[
        B = \{b : bg + a' + a^* \in A' + A^*\}.
    \]
    It is easy to see that $\{0,1\} \subseteq B$ and that 
    \begin{equation}\label{eq:cons-ap-by-cardinality-short-1}
        n + 1 \geq |B| \geq |A'| + 1 \geq \frac{n}{t} + 1\geq \frac{m+1}{tk} + 1 \geq \frac{1}{5k}\left(\frac{5(m+1)}{t} + 5k\right) \geq \frac{1}{5k}\left(\left\lceil\frac{5m}{t}\right\rceil + 1\right).
    \end{equation}
    We also have that 
    \begin{equation}\label{eq:cons-ap-by-cardinality-short-2}
        \max(B) \leq \frac{m  + g - a'}{g} \leq \frac{m}{t} + 1 \leq \left\lceil\frac{5m}{t}\right\rceil.
    \end{equation}
    The second inequality is due to that $t = |A \mod g| \leq g$.
    In view of~\eqref{eq:cons-ap-by-cardinality-short-1} and~\eqref{eq:cons-ap-by-cardinality-short-2} and by Lemma~\ref{lem:cons-ap-by-cardinality-restricted}, in $O(n \log n)$ time, we can compute an arithmetic progression
    \[
        \{s'\} + \left\{0, 1, \ldots, \left\lceil\frac{5m}{t}\right\rceil\right\} \subseteq 160kB,
    \]
    and construct a witness $\mathcal W$ with $O(k\log n)$ expected time for it.

    Recall that $a'$ is the minimum element in $A'$. Let $s = s'g + 160k(a' + a^*)$.   We claim that  
    \[
        \{s\} + \{0, g, 2g, \ldots, \left\lceil \frac{5m}{t}\right\rceil g\} \subseteq 320kA,
    \]
    and that $\mathcal W$ can be used as a witness for it. To prove the claim, it suffices to show that given any $j$ with $0\leq j\leq \lceil\frac{5m}{t}\rceil$, we can use $\mathcal W$ to find a solution for $s + jg \in 320kA$.

    Fix an arbitrary $j$ with $0\leq j\leq \lceil\frac{5m}{t}\rceil$. We first use $\mathcal W$ to find a solution $(b_1, b_2, \ldots, b_{160k})$ for $s' + j \in 160kB$, which takes $O(k \log n)$ time in expectation.  Next we shall convert it to a solution for $s + jg \in 320kA$ by replacing each $b_i$ with two integers $a_{i,1}, a_{i,2} \in A$ such that $bg + a' + a^* = a_{i,1} + a_{i,2}$.  Note that for each $b_i$, by definition of $B$, either $b_ig + a' + a^* = a + a^*$ or $b_ig + a' + a^* = a + a^* + g$ for some $a \in A$.  In other words, either $b_ig + a' \in A$ or $b_ig + a' - g \in A$.  These two cases can be distinguished in $O(\log n)$ time by checking whether $b_ig + a' \in A$.  In the former case, we can set $a_{i,1} = b_ig + a'$ and $a_{i,2} =  a^*$, while in the latter case, we can set $a_{i,1} = b_ig + a' - g$ and $a_{i,2} =  a^* + g$.
    As a result, in $O(k \log n)$ time, we can replace all the $160k$ integers $(b_1, b_2, \ldots, b_{160k})$ with $320$k integers $(a_{1,1}, a_{1,2}, \ldots, a_{160k,1}, a_{160k,2})$ such that 
    \[
        a_{1,1} + a_{1,2} + \cdots + a_{160k,1} + a_{160k,2} = \sum_{i=1}^{160k} (b_ig + a' + a^*) = (s' + j)g + 160k(a' + a^*) = s + jg.
    \]
    They form a solution for $s + jg \in 320kA$.

    In view of~\eqref{eq:short-ap-1}, the resulting arithmetic progression has common difference $d = g \leq \frac{2m}{n}$, and its length 
    \[
        \ell = \left\lceil\frac{5m}{t}\right\rceil \geq \frac{5m}{\min\{d, n\}}.
    \]
    The last inequality is due to that $t = |A \mod g| \leq \min\{n, g\}$.
\end{proof}

\subsection{Augmenting Arithmetic Progressions}\label{subsec:aug}
Lemma~\ref{lem:cons-ap-by-cardinality-short} gives an arithmetic progression $P = \{s\} + \{0, d, \ldots, \ell d\}$ with $ \ell \cdot \min\{n, d\} \geq 5m$. When $d  = 1$, then $\ell \geq 5m$ and we already have an arithmetic progression of length at least $m$. When $d$ is large, $\ell$ can be much smaller than $m$, and we have to augment $P$ in this case. In this section, we always assume that $0 \in A$ and $\gcd(A) = 1$.

Ideally, if for some proper divisor $d'$ of $d$, we have another set $Q = \{s_q\} + \{0, d', 2d', 3d', \ldots, (\frac{d}{d'}-1)\cdot d'\}$, then one can verify that 
\[
    \{s + s_q\} + \{0, d', 2d', \ldots, \frac{\ell d}{d'} \cdot d'\} \subseteq P + Q.
\]
That is, we can augment $P$ using $Q$ so that its length increases by a factor of $\frac{d}{d'}$ and its common difference decreases by the same factor. Such a set $Q$ is, however, too good to be generated using $A$. We shall use a less restricted set $Q$. Roughly speaking, in the following lemma, we use a set $Q$ that contains $\{0, d', 2d', \ldots, (\frac{d}{d'}-1)\cdot d'\}$ when modulo $d$.

\begin{lemma}\label{lem:aug-idea-1}
    Let $P = \{s\} + \{0, d, 2d, \ldots, \ell d\}$. Let $d'$ be a proper divisor of $d$.  Let $Q$ be a set that contains $\frac{d}{d'}$ integers $\{q_0, \ldots, q_\frac{d-d'}{d'}\}$ satisfying the following property: there exists an integer $s_q$ such that for every $i = 0, \ldots, \frac{d}{d'}-1$,
    \[
        q_i \equiv s_q + id' \pmod d.
    \]
    Let $h_{\max} \geq \max_i\lfloor\frac{q_{i} - s_q}{d}\rfloor$ and $h_{\min} \leq \min_i \lfloor\frac{q_i - s_q}{d}\rfloor$ be two integers. Then there exists some arithmetic progression
    \[
        \{s'\} + \{0, d', 2d', \ldots, \ell' d'\} \subseteq P + Q,
    \]
    where $s' = s + s_q + h_{\max}d$ and $\ell' = (\ell - h_{\max} + h_{\min} )\cdot \frac{d}{d'}$. Moreover, for any term $s' + j'd'$ in the resulting arithmetic progression, there is a solution $(p, q_i) \in P \times Q$ for $s' +j'd'\in P+Q$ satisfying
    \[
        i = \frac {j'd' \bmod d}{d'}.
    \]
\end{lemma}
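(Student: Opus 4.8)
The plan is to reparametrize the elements of $Q$ so that the structure of $P+Q$ becomes transparent, and then read off both the progression and the witness by a change of variables. First I would use the congruence hypothesis together with the fact that $0 \le id' < d$ for $i = 0,\dots,\frac{d}{d'}-1$ (which holds because $d'$ is a \emph{proper} divisor of $d$) to write, for each such $i$,
\[
    q_i = s_q + h_i d + i d', \qquad h_i := \Big\lfloor \tfrac{q_i - s_q}{d}\Big\rfloor .
\]
Indeed $q_i - s_q - i d' \equiv 0 \pmod d$ by assumption, and since $id'\in[0,d)$ the integer quotient $(q_i-s_q-id')/d$ equals $\lfloor (q_i-s_q)/d\rfloor = h_i$. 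By the choice of $h_{\max}$ and $h_{\min}$ we have $h_{\min} \le h_i \le h_{\max}$ for every $i$, a fact I will use twice at the end.

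Next I would expand a generic sum. For $p = s + jd \in P$ (so $0 \le j \le \ell$) and $q_i \in Q$, using $d = \tfrac{d}{d'}\,d'$ and $s' = s + s_q + h_{\max}d$,
\[
    p + q_i = (s + s_q) + (j + h_i)d + i d'
            = s' + \Big( (j + h_i - h_{\max})\tfrac{d}{d'} + i \Big) d' .
\]
Now fix a target index $j'$ with $0 \le j' \le \ell'$. Performing division with remainder, write $j' = m\,\tfrac{d}{d'} + i$ with $0 \le i < \tfrac{d}{d'}$ and $m = \lfloor j'/\tfrac{d}{d'}\rfloor \ge 0$; since $j'd' = md + id'$ with $0\le id'<d$, this $i$ is exactly $\frac{j'd'\bmod d}{d'}$, matching the witness index claimed in the lemma. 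Setting $j := m + h_{\max} - h_i$ and substituting into the displayed identity yields $p + q_i = s' + j'd'$. So once I verify that $s+jd$ is a genuine term of $P$, the containment $\{s'\}+\{0,d',\dots,\ell'd'\}\subseteq P+Q$ follows, and $(s+jd,\,q_i)$ is the required witness.

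The only point that needs care — and the place I would slow down — is checking $0 \le j \le \ell$. From $m\ge 0$ and $h_{\max}-h_i\ge 0$ we get $j\ge 0$. For the upper bound, $j' \le \ell' = (\ell - h_{\max} + h_{\min})\tfrac{d}{d'}$ gives $m \le j'/\tfrac{d}{d'} \le \ell - h_{\max} + h_{\min}$, hence
\[
    j = m + (h_{\max}-h_i) \le (\ell - h_{\max} + h_{\min}) + (h_{\max} - h_{\min}) = \ell ,
\]
using $h_i \ge h_{\min}$; if $\ell - h_{\max} + h_{\min} < 0$ the claimed progression is empty and there is nothing to prove. I do not expect a real obstacle: the entire argument is the reparametrization above plus this range check, so the remaining work is just presenting it cleanly (and noting that the witness for each $j'$ is produced in $O(1)$ arithmetic once the decomposition of $Q$ and the values $h_i$, $h_{\max}$, $h_{\min}$ are recorded).
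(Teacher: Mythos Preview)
Your proof is correct and follows essentially the same route as the paper: both write $q_i = s_q + h_i d + id'$ with $h_i=\lfloor (q_i-s_q)/d\rfloor$, pick $i=\frac{j'd'\bmod d}{d'}$, and verify that $s'+j'd'-q_i$ lands in $P$ via the same range check $0\le j\le \ell$. The only cosmetic difference is that you parametrize via $j'=m\tfrac{d}{d'}+i$ and solve for $j$ explicitly, whereas the paper subtracts $q_i$ directly and bounds the resulting expression; the computations are identical.
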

\begin{proof}
    For each $q_i$, let $k_i = (q_i - s_q - id')/d$. In other words, $q_i = s_q + id' + k_id$.  We have that 
    \begin{equation}\label{eq:aug-idea-1-1}
        h_{\min}-1 \leq \frac{q_i - s_q -d}{d}  < k_i \leq \frac{q_i - s_q}{d} \leq h_{\max}.
    \end{equation}
    To prove the lemma, it suffices to show that for any $j'\in \mathbb{Z}[0, \ell']$, we can find a solution for $s' + j'd' \in P + Q$. Fix an arbitrary $j' \in \mathbb{Z}[0, \ell']$.  Let $i = (j'd' \bmod d)/d'$. It is easy to see that $i \in \mathbb{Z}[0, \frac{d-d'}{d'}]$. Therefore, $q_i$ is well-defined. It remains to show that $s' + j'd' - q_i \in P$. Plug in $s'$ and $q_i$, we have
    \[
        s' + j'd' - q_i = s + s_q +  h_{\max}d + j'd' - s_q - id' - k_id = s + (h_{\max} - k_i)d + (j' - i)d'.
    \]
    By the definition of $i$, we have that $d | (j' - i)d'$ and that 
    \[
        0 \leq (j' - i)d' \leq j'd' \leq \ell'd' = (\ell - h_{\max} + h_{\min})d.
    \]
    This implies that $d | (h_{\max} - k_i)d + (j' - i)d'$ and that 
    \[
        0 \leq (h_{\max} - k_i)d + (j' - i)d' \leq (h_{\max} - k_i)d + (\ell - h_{\max} + h_{\min})d = (\ell + h_{\min} - k_i)d \leq \ell d.
    \]
    The first and the last inequalities are due to~\eqref{eq:aug-idea-1-1}.  Therefore, $s' + j'd' - q_i = s + jd$ for some $j \in \mathbb{Z}[0, \ell]$, which implies that $s' + j'd' - q_i \in P$.  The above also implies that there is a solution $(p, q_i) \in P \times Q$ for $s' + j'd' \in P + Q$ satisfying 
    $i = (j'd' \bmod d)/d'$.
\end{proof}

\subsubsection{Augmenting Using Pairs "Not Divisible" by d}
We show that the set $Q$ in Lemma~\ref{lem:aug-idea-1} is obtainable whenever we have a pair of integers $\{a^*, a^* + g\}$ with $d \nmid g$. Therefore, we can augment $P$ when we have such a pair.

\begin{lemma}\label{lem:remainder-ka}
    Let $d$ be an integer with $d \geq 2$. Let $A = \{a, a + g\}$ be a set of integers with $d \nmid g$. In $O(\log \frac{d}{d'})$ time, we can compute a proper divisor of $d$ such that for $i = 0, 1, \ldots, \frac{d}{d'}-1$, there is an integer $q_i \in \frac{d}{d'}A$ such that 
    \[
        q_i \equiv \frac{da}{d'} + id' \pmod d.
    \]
    Moreover, given any $i \in \mathbb{Z}[0, \frac{d}{d'} - 1]$, in $O(\frac{d}{d'})$ time, we can compute $q_i$ and a solution for $q_i \in (\frac{d}{d'} - 1)A$.
\end{lemma}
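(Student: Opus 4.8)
The plan is to simply take $d' = \gcd(g,d)$ and verify it has every required property. First I would reduce to the clean setting: replacing $A = \{a, a+g\}$ by the identical set $\{a+g, (a+g) + (-g)\}$ if necessary, we may assume $g \geq 1$; and since $\gcd(g,d) = \gcd(g \bmod d, d)$ we may further replace $g$ by $g \bmod d$, which lies in $\mathbb{Z}[1, d-1]$ precisely because $d \nmid g$. This immediately gives $1 \leq d' = \gcd(g \bmod d,\, d) \leq g \bmod d < d$, so $d'$ is a proper divisor of $d$ (and of course $d' \mid d$); moreover, by the running-time bound for the Euclidean algorithm recalled in Section~\ref{sec:pre}, computing $d'$ from $g \bmod d$ and $d$ takes $O\!\left(\log \frac{g \bmod d}{d'}\right) = O\!\left(\log \frac{d}{d'}\right)$ time.

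The second step is to identify the residues hit by multiples of $g$ modulo $d$. Write $D = d/d'$ and $G = g/d'$, both positive integers with $\gcd(G,D) = 1$. Then the map $k \mapsto kG \bmod D$ is a bijection of $\mathbb{Z}[0,D-1]$ onto $\mathbb{Z}/D\mathbb{Z}$, so, multiplying through by $d'$ and using $d = Dd'$, the values $kg \bmod d$ for $k \in \mathbb{Z}[0,D-1]$ are exactly $\{0, d', 2d', \dots, d - d'\}$. In particular, for each $i \in \mathbb{Z}[0, \frac{d}{d'}-1]$ there is a (unique) $k_i \in \mathbb{Z}[0, \frac{d}{d'}-1]$ with $k_i g \equiv i d' \pmod d$. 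Setting $q_i = \frac{d}{d'} a + k_i g$ then gives $q_i \equiv \frac{da}{d'} + i d' \pmod d$, and since $0 \leq k_i \leq \frac{d}{d'}$, the integer $q_i$ is the sum of $\frac{d}{d'} - k_i$ copies of $a$ and $k_i$ copies of $a + g$, which is both a proof that $q_i \in \frac{d}{d'}A$ and an explicit $\frac{d}{d'}$-term representation of it.

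For the per-query part, given $i$ I would recover $k_i = (i\, G^{-1}) \bmod D$, where $G^{-1} \bmod D$ is obtained from the extended Euclidean algorithm in $O(\log D) = O(\log \frac{d}{d'})$ time (this inverse, together with $D$ and $G$, can be computed once and stored at the moment $d'$ is found, so that each query only performs one multiplication and one reduction modulo $D$); then $q_i = \frac{d}{d'} a + k_i g$ and its representation as $\frac{d}{d'} - k_i$ copies of $a$ plus $k_i$ copies of $a+g$ are written out in $O\!\left(\frac{d}{d'}\right)$ time. (Here I read the ``$(\frac{d}{d'}-1)A$'' in the statement as the representation-length ``$\frac{d}{d'}$'' consistent with the first part; the congruence forces exactly $\frac{d}{d'}$ terms.)

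I do not expect a genuine obstacle here — the argument is elementary group theory in $\mathbb{Z}/d\mathbb{Z}$. The only points that require care are the time bounds: the $\gcd$ and modular-inverse computations must be charged against $\log\frac{d}{d'}$, not $\log d$, which is why the first step reduces $g$ modulo $d$ before invoking the $O(\log\frac{\min}{\gcd})$ Euclidean estimate; and one must confirm the returned index satisfies $0 \leq k_i \leq \frac{d}{d'}$ so that the greedy $\frac{d}{d'}$-term representation of $q_i$ over $\{a, a+g\}$ is valid.
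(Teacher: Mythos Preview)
Your proposal is correct and follows essentially the same approach as the paper: both take $d'=\gcd(d,g)$, use the (extended) Euclidean algorithm to obtain an inverse of $g/d'$ modulo $d/d'$ (the paper calls it $j^*$, you call it $G^{-1}$), and then set $q_i=\frac{d}{d'}a+k_ig$ with $k_i=i\cdot G^{-1}\bmod\frac{d}{d'}$, represented as $\frac{d}{d'}-k_i$ copies of $a$ plus $k_i$ copies of $a+g$. Your explicit reduction of $g$ modulo $d$ before invoking the $O(\log\frac{\min}{\gcd})$ Euclidean bound is a nice touch that the paper leaves implicit, and your reading of ``$(\frac{d}{d'}-1)A$'' as a typo for ``$\frac{d}{d'}A$'' matches the paper's own proof.
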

\begin{proof}
    Using the Euclidean algorithm~\cite{shallit1994origins}, in $O(\log \frac{d}{d'})$ time, we can obtain $d'=\gcd(d, g)$ and an integer $j^* \in \mathbb{Z}[0, \frac{d - d'}{d'}]$ with 
    \[
        j^* g \equiv d' \pmod d.
    \]

    Now consider $\frac{d}{d'}A$. It is easy to see that
    \[
        \frac{d}{d'}A = \{\frac{da}{d'}\} + \{0, g, 2g, \ldots, \frac{d}{d'}g\}. 
    \]
    
    Given any $i$ with $0\leq i \leq \frac{d}{d'}-1$, in $O(1)$ time, we can find an integer $q_i \in \frac{d}{d'}A$ with $q_i \equiv \frac{da}{d'} + id' \pmod d$ as follows. Recall that we have an integer $j^* \in \mathbb{Z}[0, \frac{d - d'}{d'}]$ with 
    \[
        j^* g \equiv d' \pmod d.
    \]  
    Let $j = ij^* \bmod \frac{d}{d'}$. Clearly, $0 \leq j \leq \frac{d}{d'} - 1$. It remains to show that $jg \equiv id' \pmod d$.
    By some arithmetic calculation, one can verify that for some integers $k$ and $h$, 
    \[
        jg = id' + ikd - \frac{hgd}{d'}
    \]
    Note that $d' \mid g$ as $d' = \gcd(d,g)$. Therefore, $d \mid (ikd - \frac{hgd}{d'})$, which implies that
    \begin{equation}\label{eq:aug-method-1-1}
        jg \equiv  i d' \pmod d.
    \end{equation}
    Now let $q_i = \frac{da}{d'} + jg$. We have that 
    \[
        q_i \equiv \frac{da}{d'} + id' \pmod d.
    \]

    Note that 
     \[
        q_i = \frac{da}{d'} + jg  = (\frac{d}{d'} - j)a + j(a+g).
    \]
    A solution for $q_i \in \frac{d}{d'}A$ can be constructive straightforwardly in $O(\frac{d}{d'})$ time.
\end{proof}
We remark that one can actually use $(\frac{d}{d'} - 1)A$ in the above lemma. We use $\frac{d}{d'}A$ for simplicity, and asymptotically it makes no difference in the final result.

Combing Lemma~\ref{lem:aug-idea-1} and Lemma~\ref{lem:remainder-ka}, we have the following.
\begin{lemma}\label{lem:aug-method-1}
    Let $P = \{s\} + \{0, d, 2d, \ldots, \ell d\}$ with $d > 1$.  Let $A = \{a, a + g\}$ for some non-negative integers $a$ and $g$ with $d \nmid g$.  In $O(\log \frac{d}{d'})$ time, we can compute a proper divisor $d'$ of $d$ and an arithmetic progression
    \[
        \{s'\} + \{0, d', 2d', \ldots, \ell'd'\} \subseteq P + \frac{d}{d'}A,
    \]
    where $\ell' \geq (\ell - \frac{g}{d'})\cdot \frac{d}{d'}$.  In the same time, we can build a witness with $O(\frac{d}{d'})$ query time.
\end{lemma}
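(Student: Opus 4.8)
The plan is to compose Lemma~\ref{lem:remainder-ka} with Lemma~\ref{lem:aug-idea-1}, with the set $Q$ in the latter taken to be the $\frac{d}{d'}$-fold sumset $\frac{d}{d'}A$ and the integers $q_0,\dots,q_{\frac{d-d'}{d'}}$ the ones produced by the former. First I would invoke Lemma~\ref{lem:remainder-ka}: since $d \geq 2$ and $d \nmid g$, in $O(\log\frac{d}{d'})$ time we obtain the proper divisor $d' = \gcd(d,g)$ (it is proper because $d\nmid g$ forces $d' < d$) and, for every $i \in \mathbb{Z}[0,\frac{d}{d'}-1]$, an integer $q_i \in \frac{d}{d'}A$ with $q_i \equiv \frac{da}{d'} + id' \pmod d$. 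Thus the hypothesis of Lemma~\ref{lem:aug-idea-1} holds with $Q = \frac{d}{d'}A$ and $s_q = \frac{da}{d'}$.

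Next I would pin down the parameters $h_{\max}$ and $h_{\min}$. Recall from the proof of Lemma~\ref{lem:remainder-ka} that $q_i = \frac{da}{d'} + jg$ for some $j$ with $0 \leq j \leq \frac{d}{d'}-1$, so $q_i - s_q = jg$ and hence $0 \leq \lfloor\frac{q_i - s_q}{d}\rfloor \leq \lfloor\frac{(\frac{d}{d'}-1)g}{d}\rfloor \leq \frac{g}{d'}$ (using $\frac{(\frac{d}{d'}-1)g}{d} < \frac{g}{d'}$). So I can set $h_{\min} = 0$ and $h_{\max} = \lceil \frac{g}{d'}\rceil$, or more conveniently just take $h_{\max} = \frac{g}{d'}$ rounded up; either way $h_{\max} \leq \frac{g}{d'}$ up to the rounding that the statement absorbs. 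Plugging into the conclusion of Lemma~\ref{lem:aug-idea-1}, we get an arithmetic progression $\{s'\} + \{0, d', 2d', \ldots, \ell' d'\} \subseteq P + \frac{d}{d'}A$ with $s' = s + \frac{da}{d'} + h_{\max}d$ and $\ell' = (\ell - h_{\max} + h_{\min})\cdot\frac{d}{d'} \geq (\ell - \frac{g}{d'})\cdot\frac{d}{d'}$, which is exactly the claimed bound.

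Finally I would address the witness. Lemma~\ref{lem:aug-idea-1} guarantees that for any term $s' + j'd'$ of the new progression there is a solution $(p, q_i) \in P \times Q$ with $i = (j'd' \bmod d)/d'$; so a query is answered by computing $i$ in $O(1)$ arithmetic, invoking the $O(\frac{d}{d'})$-time procedure of Lemma~\ref{lem:remainder-ka} to recover $q_i$ together with a solution for $q_i \in (\frac{d}{d'}-1)A \subseteq \frac{d}{d'}A$, and then reading off $p = s' + j'd' - q_i$ as a term of $P$ (which is given in closed form by its tuple $(s,\ell,d)$, so no extra recursion is needed). The total query time is $O(\frac{d}{d'})$, and the construction time is the $O(\log\frac{d}{d'})$ inherited from the Euclidean-algorithm step in Lemma~\ref{lem:remainder-ka}. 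There is no real obstacle here — the work is entirely in correctly matching the two lemmas' parameters; the only point demanding a little care is verifying $d' < d$ (which is where $d\nmid g$ is used) and the arithmetic bounding $\lfloor\frac{q_i - s_q}{d}\rfloor$ by $\frac{g}{d'}$ so that the length loss is as stated.
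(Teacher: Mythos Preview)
Your proposal is correct and follows essentially the same route as the paper: invoke Lemma~\ref{lem:remainder-ka} to get $d'=\gcd(d,g)$ and the residues $q_i$, then apply Lemma~\ref{lem:aug-idea-1} with $Q=\frac{d}{d'}A$, $s_q=\frac{da}{d'}$, $h_{\min}=0$, $h_{\max}=\frac{g}{d'}$, and build the witness by composing the index lookup from Lemma~\ref{lem:aug-idea-1} with the $O(\frac{d}{d'})$ solution-recovery of Lemma~\ref{lem:remainder-ka}. One small cleanup: since $d'=\gcd(d,g)$ divides $g$, the quantity $\frac{g}{d'}$ is already an integer, so no rounding is needed for $h_{\max}$.
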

\begin{proof}
    Let $d'$ be the proper divisor of $d$ given by Lemma~\ref{lem:remainder-ka}. Then $\frac{d}{d'}A$ can be used as the set $Q$ to augment $P$ via Lemma~\ref{lem:aug-idea-1}. Note that $s_q = \frac{da}{d'}$, $h_{\min} = 0$, and $h_{\max} = \frac{g}{d'}$. Therefore, 
    \[
        \{s + \frac{da}{d'} + \frac{dg}{d'}\} + \{0, d', 2d', \ldots, \ell'd'\} \subseteq P + \frac{d}{d'}A
    \]
    with $\ell' \geq (\ell - \frac{g}{d'})\frac{d}{d'}$.

    Let $s'  = s + \frac{da}{d'} + \frac{dg}{d'}$. To find a solution for $s' + j'd' \in P + \frac{d}{d'}A$, we first use Lemma~\ref{lem:aug-idea-1} to find the index $i$ of the corresponding $q_i$, then we use this index $i$ to find $q_i$ and a solution for $q_i \in \frac{d}{d'}A$ via Lemma~\ref{lem:remainder-ka}. Then let $p = s' + j'd' - q_i$. Lemma~\ref{lem:aug-idea-1} implies that $p \in P$. The solution for  $q_i \in \frac{d}{d'}A$, together with $p$, form a solution for $s' + j'd' \in P + \frac{d}{d'}A$. One can verify that the query time is $O(\frac{d}{d'})$.
\end{proof}

\subsubsection{Augmenting Using Pairs "Divisible" by d}
Lemma~\ref{lem:aug-method-1} increase the length of $P$ from $\ell$ to $(\ell - \frac{g}{d'})\cdot \frac{d}{d'}$. When $g$ is small, it increases the length of $P$ by a large factor. When $g$ is large, however, it may not be as effective as we need.  In this case, we need another method to make up for the loss of length due to large $g$. The second method augments $P$ using a pair from $A$ whose gap is a multiple of $d$.

\begin{lemma}\label{lem:aug-method-2}
    Let $P = \{s\} + \{0, d, 2d, \ldots, \ell d\}$.  Let $A = \{a, a + g\}$ for some non-negative integers $a$ and $g$. Assume that $d \mid g$ and that $1\leq g \leq \ell d$. Then for any $h > 0$, in $O(1)$ time, we can compute an arithmetic progression
    \[
        \{s + ha\} + \{0, d, 2d, \ldots, (\ell + \frac{hg}{d})\cdot d\} \subseteq P + hA.
    \]
    as well as a witness with $O(h)$ query time for it.
\end{lemma}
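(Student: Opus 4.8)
The plan is to observe that using the pair $A = \{a, a+g\}$ with $d \mid g$ repeatedly, one can "extend" the arithmetic progression $P$ at its top end. Concretely, writing $g = cd$ for the positive integer $c = g/d$, I would take $h$ copies of $a$ from $hA$ (contributing $ha$ to the sum) and then show that by swapping some of these copies of $a$ for copies of $a+g$, one can realize every residue-$s$ shift in a longer range. More precisely, for $0 \le r \le h$, the element $(h-r)a + r(a+g) = ha + rg = ha + rcd$ lies in $hA$, and adding this to any term $s + jd$ of $P$ (with $0 \le j \le \ell$) yields $s + ha + (j + rc)d$. As $j$ ranges over $\{0,\dots,\ell\}$ and $r$ over $\{0,\dots,h\}$, the quantity $j + rc$ takes every value in $\{0, 1, \dots, \ell + hc\}$: this is the key elementary point, and it holds precisely because $c \le \ell$ (which is exactly the hypothesis $g \le \ell d$), so consecutive "blocks" $\{rc, rc+1, \dots, rc+\ell\}$ for successive $r$ overlap and their union is the full interval $[0, \ell + hc]$.

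The steps, in order, are: (i) set $c = g/d$ (an integer by $d \mid g$) and note $1 \le c \le \ell$ from $1 \le g \le \ell d$; (ii) for a given target $s + ha + j'd$ with $0 \le j' \le \ell + hc$, compute $r = \lfloor j'/c \rfloor$ if $j' \le \ell$ is taken care of by $r$ possibly $0$, but more carefully pick $r = \max\{0, \lceil (j' - \ell)/c \rceil\}$ so that $0 \le j' - rc \le \ell$, and check $r \le h$; (iii) verify that $j := j' - rc$ then lies in $[0,\ell]$ using $c \le \ell$ and $j' \le \ell + hc$, which guarantees $r \le h$; (iv) output the solution consisting of the solution for $s + jd \in P$ together with $r$ copies of $a+g$ and $h - r$ copies of $a$ from $hA$. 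Each of these steps is $O(1)$ arithmetic (computing $r$ and $j$), and producing the witness amounts to storing $P$'s witness plus emitting $h$ integers on query, giving $O(h)$ query time.

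The only mildly delicate step is (ii)–(iii): one must choose $r$ so that simultaneously $0 \le r \le h$ and $0 \le j' - rc \le \ell$, and confirm both are feasible for every $j' \in [0, \ell + hc]$. Taking $r = \lceil (j' - \ell)/c \rceil$ clamped below at $0$: when $j' \le \ell$ we get $r = 0$ and $j = j' \in [0,\ell]$; when $j' > \ell$ we get $r \ge 1$ with $rc \ge j' - \ell$ so $j = j' - rc \le \ell$, and $rc = \lceil (j'-\ell)/c\rceil c < j' - \ell + c \le j'$ so $j \ge 0$, while $r \le \lceil hc/c \rceil = h$ since $j' - \ell \le hc$. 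So feasibility holds throughout. Everything else — that the emitted multiset sums correctly and that $\{s+ha\}+\{0,d,\dots,(\ell + hg/d)d\} \subseteq P + hA$ as a set inclusion — then follows directly. I expect no real obstacle here; the lemma is essentially a counting/covering observation, and the main thing to get right is the off-by-one bookkeeping in the choice of $r$.
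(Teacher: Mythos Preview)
Your proposal is correct and follows essentially the same approach as the paper. Both proofs reduce to the covering observation that the blocks $\{rc, rc+1, \dots, rc+\ell\}$ for $r = 0, \dots, h$ cover $[0, \ell + hc]$ precisely because $c = g/d \le \ell$; the only cosmetic difference is that the paper picks $r = \lfloor j'/c\rfloor$ (with a separate easy case $j' \ge hc$ handled by $r = h$), whereas you pick $r = \max\{0, \lceil (j'-\ell)/c\rceil\}$, and both choices land $j = j' - rc$ in $[0,\ell]$ with $0 \le r \le h$.
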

\begin{proof}
    Observe that 
    \[
        hA = \{ha\} + \{0, g, \ldots, hg\}.
    \]
    Let $s' = s + ha$. To prove the lemma, we show that for any $j' \in \mathbb{Z}[0, \ell + \frac{hg}{d}]$, we can find a solution $(p, a_1, \ldots, a_h)$ for $s' + j'd \in P + hA$. When $\frac{hg}{d} \leq j' \leq \ell + \frac{hg}{d}$, this is trivial since we can simply let $p = s + (j' - \frac{hg}{d})d$ and $a_1 = a_2 =\cdots = a_h = a + g$. Assume that $j' < \frac{hg}{d}$.
     Let $q = \lfloor j'd / g \rfloor$, and let $j = (j'd \mod g)/d$.  Then 
    \[
        s' + j'd = s + ha + j'd = s + jd + ha + qg.
    \]
    We shall show that $s + jd \in P$ and $ha + qg \in hA$.  Recall that $g \leq \ell d$. By definition of $j$, we have that $j \leq \frac{g}{d} \leq \ell$. Therefore, $s + jd \in P$. Since $j' < \frac{hg}{d}$, we have that $q < h$, so $ha + qg \in hA$. Let $p = s + jd$, let $a_1 = \cdots = a_q = a + g$, and let $a_{q+1} = \cdots = a_{h} = a$. $(p, a_1, \ldots, a_h)$ forms a solution for $s' + j'd \in P + hA$.

    Let the witness store $d$ and $g$. It is easy to see that the above procedure to find $(p, a_1, \ldots, a_h)$ takes only $O(h)$ time.
\end{proof}

\subsubsection{Combining the Two Approaches of Augmenting}

One can see that, in contrast to the first method, the second method prefers a pair with a large gap $g$ (as long as $d\mid g$ and $g \leq \ell d$). A natural approach is to use these two methods simultaneously.  A crucial observation is that $A$ either contains a pair $(a, a+g)$ of integers with $g$ small and not divisible by $d$, or a pair $(a', a' + g)$ with $g$ large and divisible by $d$.  In the former case, the first method has a good effect; in the latter case, although the first method has a big loss in the length of the resulting arithmetic progression, this loss can be made up by the second method.

\begin{lemma}\label{lem:small-and-large-gap}
    Let $A \subseteq \mathbb{Z}[0, m]$ be a sorted set of $n$ integers. Assume that $0 \in A$ and that $\gcd(A) = 1$.  Let $d > 1$ be an integer. In $O(n)$ time, we can find either
    \begin{itemize}
        \item a pair $(a, a + g) \in A \times A$ such that $d \nmid g$ and that $1\leq g \leq \frac{4m}{n}$, or

        \item two pairs $(a, a + g)$ and $(a', a' + g')$ in $A \times A$ such that $d \nmid g$, $d \mid g'$, and that
        \[
            \frac{nd}{4m}g \leq g' \leq m.
        \]
    \end{itemize}
\end{lemma}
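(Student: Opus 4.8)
The plan is to exploit the structure of $A$ relative to the fixed modulus $d$. Since $\gcd(A)=1$ and $d>1$, not all elements of $A$ can lie in a single residue class modulo $d$; hence there must exist a consecutive pair (in sorted order) of elements of $A$ whose gap is not divisible by $d$. First I would sort $A$ (given, it is already sorted) and scan the $n-1$ consecutive gaps $g_i = a_{i+1}-a_i$. Let $g$ be the smallest consecutive gap with $d \nmid g$; this $g$ is well-defined by the argument above and can be found in $O(n)$ time. Note $g \le m$ trivially.

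Now I would split on the size of $g$. \textbf{Case 1: $g \le \frac{4m}{n}$.} Then the pair $(a_i, a_i + g)$ realizing this gap already satisfies the first bullet, and we are done. \textbf{Case 2: $g > \frac{4m}{n}$.} Here I need to produce, in addition to the pair $(a,a+g)$ with $d\nmid g$ (which we have), a second pair $(a', a'+g')$ with $d \mid g'$ and $\frac{nd}{4m}g \le g' \le m$. The key observation is that if \emph{every} consecutive gap not divisible by $d$ were larger than $\frac{4m}{n}$, then consecutive elements in the same residue class modulo $d$ must be "close": between any two consecutive large gaps there is a run of elements all congruent mod $d$, and since $A \subseteq \mathbb{Z}[0,m]$ with $n$ elements but only few large gaps, some two elements $a', a'+g'$ in the same residue class (so $d\mid g'$) must be separated by a gap that is neither too small nor too large. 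Concretely: there are at most $\frac{nm/( 4m/n)}{1}$... rather, the number of consecutive gaps exceeding $\frac{4m}{n}$ is at most $\frac{m}{4m/n} = \frac{n}{4}$, so at least $\frac{3n}{4}$ consecutive gaps are $\le \frac{4m}{n}$ and (being small) must be divisible by $d$ by minimality of $g$ — wait, that needs care, since a small gap need not be divisible by $d$ unless $g$ was chosen as the minimal \emph{non-divisible} gap. Let me restate: choose $g$ minimal among non-divisible gaps; then every consecutive gap strictly smaller than $g$ \emph{is} divisible by $d$. In Case 2, $g > \frac{4m}{n}$, so all consecutive gaps of size $\le \frac{4m}{n}$ are divisible by $d$. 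Since there are $\ge \frac{3n}{4}$ such gaps (summing to $\le m$), and they are all multiples of $d$, I can \emph{combine a block of consecutive such small gaps} into a single pair $(a', a'+g')$ with $d\mid g'$ whose total length $g'$ I can tune: greedily accumulate consecutive small-divisible gaps until the running sum first exceeds $\frac{nd}{4m}g$; the sum at that point is at most $\frac{nd}{4m}g + \frac{4m}{n} \le m$ (using $g \le m$ so $\frac{nd}{4m}g \le \frac{nd}{4} \cdot \frac{m}{m}$... I must check $\frac{nd}{4m}g + \frac{4m}{n}\le m$, which should follow from the regime constraints — this is the routine inequality to verify). This gives $g'$ in the desired window, and the whole procedure is $O(n)$.

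The main obstacle I anticipate is the bookkeeping in Case 2: showing that the greedy accumulation of small divisible gaps does not overshoot past $m$, and confirming that enough small divisible gaps exist to reach the lower threshold $\frac{nd}{4m}g$. The counting "at most $\frac{n}{4}$ large gaps, hence $\ge \frac{3n}{4}$ small gaps summing to at most $m$" must be combined with the constraint $g\le m$ to close the interval; one also has to handle the degenerate possibility that \emph{no} consecutive gap is both small and divisible (which cannot happen once $\frac{3n}{4}\ge 1$, i.e. $n\ge 2$, since a small gap below $g$ is divisible by minimality). I expect the verification $\frac{nd}{4m} g + \frac{4m}{n} \le m$ to reduce to $d \le \tfrac{m}{n}(\ldots)$-type bounds that hold in the augmentation regime where this lemma is invoked, but I would double-check whether an extra hypothesis like $d \le \frac{2m}{n}$ (inherited from Lemma~\ref{lem:cons-ap-by-cardinality-short}) is implicitly needed and, if so, state it.
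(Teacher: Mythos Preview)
Your overall strategy---take the minimal non-divisible consecutive gap $g$ and split on whether $g\le 4m/n$---is essentially the paper's (the paper phrases the split as $h\ge n/4$ versus $h<n/4$, where $h$ is the number of non-divisible consecutive gaps; since $g\le m/h$ these are equivalent). Case~1 is fine. The gap is in Case~2.

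In Case~2 you propose to accumulate a run of consecutive \emph{small} gaps (those $\le 4m/n$, which are indeed all divisible by $d$ there) until the running sum passes $\tfrac{nd}{4m}g$. But runs of small gaps are broken not only by the non-divisible gap(s) but also by any \emph{large divisible} gaps, and the number $L$ of large gaps (which can be close to $n/4$) is not coupled to the size of $g$. This breaks the argument. Concretely: take $n=100$, $m=10000$, $d=3$, and let the $99$ consecutive gaps consist of $89$ copies of $3$, nine copies of $402$, and a single non-divisible gap $g=3601$, arranged so that the ten large gaps split the $3$'s into eleven blocks of eight or nine. Then $g>4m/n=400$, the threshold is $\tfrac{nd}{4m}g=\tfrac{300}{40000}\cdot 3601\approx 27.01$, yet every block of small gaps sums to at most $9\cdot 3=27$. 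Your greedy accumulation can never reach the threshold.

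The paper avoids this by grouping consecutive \emph{divisible} gaps (large ones included), so the groups are separated only by the $h$ non-divisible gaps. The largest such group then has at least $(n-1-h)/(h+1)\ge n/(4h)$ gaps, each $\ge d$, giving sum $\ge nd/(4h)\ge \tfrac{nd}{4m}\cdot\tfrac{m}{h}\ge \tfrac{nd}{4m}g$. The crucial coupling $g\le m/h$ is precisely what your ``small-gap'' blocks lose. Two minor remarks: the overshoot check $\tfrac{nd}{4m}g+\tfrac{4m}{n}\le m$ is unnecessary, since any sum of consecutive gaps equals $a_j-a_i\le m$ automatically; and no extra hypothesis like $d\le 2m/n$ is needed (indeed the counterexample above has $d=3\ll 2m/n$).
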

\begin{proof}
    Let $a_1 < a_2 < \ldots < a_n$ be the elements of $A$ labeled in increasing order. For $i \in \mathbb{Z}[1, n-1]$, let $g_i = a_{i+1} - a_i$. Let $h$ be the number of $g_i$'s that are not divisible by $d$. Since $\gcd(A) = 1$ and $0 \in A$, we have that $h \geq 1$.  Let $g_{i^*}$ be the smallest $g_i$ that are not divisible by $d$. It can be found in $O(n)$ time.

    Note that $g_i > 0$ and that $\sum_{i} g_i \leq m$. Therefore, $g_{i^*} \leq m/h$. If $h \geq n/4$, then we are done since
    \[
        g_{i^*} \leq \frac{m}{h} \leq \frac{4m}{n}.
    \]

    Suppose that $h < n/4$. Let $i_1, \ldots, i_h$ be the indices of the $g_i$'s that are not divisible by $d$. These $h$ gaps separate the other gaps into at most $h+1$ groups. For simplicity define $i_0 = -1$ and $i_{h+1} = n+1$. For $j \in \mathbb{Z}[0, h]$, define 
    \[
        G_j = \{g_i : i_j < i < i_{j+1}\}
    \]
    to be the set of $g_i$'s whose indices are between $i_j$ and $i_{j+1}$. Since $\sum_j |G_j| = n - 1 - h$, there must be some $G_{j^*}$ with 
    \[
        |G_{j^*}| \geq \frac{n-1-h}{h+1} \geq \frac{n}{h+1} - 1 \geq \frac{n}{2h} - 1 \geq \frac{n}{4h}.
    \]
    The last inequality is due to that $h < n / 4$. Such a $G_{j^*}$ can be found in $O(n)$ time. Moreover, every $g_i \in G_{j^*}$  is a positive integer divisible by $d$, and hence is at least $d$. Therefore, we have that $d \mid \sum_{g_i\in G_{j^*}} g_i$ and that 
    \[
        \sum_{g_i\in G_{j^*}} g_i \geq d|G_{j^*}| \geq \frac{nd}{4h}.
    \]
    Recall that $g_{i^*} \leq \frac{m}{h}$. Therefore,
    \[
        \sum_{g_i\in G_{j^*}} g_i \geq \frac{nd}{4m}g_{i^*}.
    \]
    Note that $\sum_{g_i\in G_{j^*}} g_i = a_{i_{j^*+1}} - a_{i_{j^*}+1}$ and that $g_{i^*} = a_{i^*+1} - a_{i^*}$.  Taking $g'=\sum_{g_i\in G_{j^*}} g_i$ and $g=g_{i^*}$ finishes the proof.
\end{proof}

As a result of Lemma~\ref{lem:aug-method-1}, Lemma~\ref{lem:aug-method-2}, and Lemma~\ref{lem:small-and-large-gap}, we can always augment $P$ so its length increases by a large amount.

\begin{lemma}\label{lem:augment-method-ult}
    Let $A \subseteq \mathbb{Z}[0, m]$ be a sorted set of $n$ integers. Assume that $0 \in A$ and that $\gcd(A) = 1$. Let $P = \{s\} + \{0, d, 2d, \ldots, \ell d\}$ with $\ell d \geq m$ and $d > 2$. In $O(n + \log \frac{d}{d'})$ time, we can compute a proper divisor $d'$ of $d$ and also an arithmetic progression 
    \[
        \{s'\} + \{0, d', 2d', \ldots, \ell'\cdot d'\} \subseteq P + (\frac{d}{d'} + \left\lceil\frac{4m}{nd'}\right\rceil)A.
    \]
    where $\ell' \geq (\ell - \frac{4m}{nd'})\frac{d}{d'}$. In the same time, we can build a witness with $O(\frac{d}{d'} + \left\lceil\frac{4m}{nd'}\right\rceil)$ query time.
\end{lemma}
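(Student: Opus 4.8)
The plan is to feed $A$ and $d$ into Lemma~\ref{lem:small-and-large-gap}, which in $O(n)$ time returns one of two gap configurations, and then to augment $P$ using Lemma~\ref{lem:aug-method-1} and/or Lemma~\ref{lem:aug-method-2} according to the case. \textbf{Case (i):} a pair $(a,a+g)\in A\times A$ with $d\nmid g$ and $1\leq g\leq \frac{4m}{n}$. Here a single application of Lemma~\ref{lem:aug-method-1} to this pair yields, in $O(\log\frac{d}{d'})$ additional time, a proper divisor $d'=\gcd(d,g)$ of $d$ and an arithmetic progression $\{s'\}+\{0,d',\ldots,\ell'd'\}\subseteq P+\frac{d}{d'}A$ with $\ell'\geq(\ell-\frac{g}{d'})\frac{d}{d'}\geq(\ell-\frac{4m}{nd'})\frac{d}{d'}$, together with a witness of $O(\frac{d}{d'})$ query time; since $0\in A$ we may pad $\lceil\frac{4m}{nd'}\rceil$ zeros and view the same progression and witness as certifying containment in $P+(\frac{d}{d'}+\lceil\frac{4m}{nd'}\rceil)A$, so every claimed bound holds verbatim.

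\textbf{Case (ii):} pairs $(a,a+g)$ with $d\nmid g$ and $(a',a'+g')$ with $d\mid g'$ and $\frac{nd}{4m}g\leq g'\leq m$. The idea is to first lengthen $P$ cheaply \emph{without} changing its common difference, using the large divisible gap $g'$, and only afterwards reduce the common difference using the small non-divisible gap $g$. I would first compute $d'=\gcd(d,g)$ (a proper divisor of $d$, as $d\nmid g$) and set $h=\lceil\frac{4m}{nd'}\rceil$. The hypotheses $d\mid g'$ and $1\leq g'\leq m\leq\ell d$ let us invoke Lemma~\ref{lem:aug-method-2} with this pair and this $h$, producing $P_1:=\{s+ha'\}+\{0,d,\ldots,\ell_1 d\}\subseteq P+hA$ with $\ell_1=\ell+\frac{hg'}{d}$ and a witness of $O(h)$ query time. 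Multiplying $h\geq\frac{4m}{nd'}$ by $g'\geq\frac{nd}{4m}g$ gives the key inequality $\frac{hg'}{d}\geq\frac{g}{d'}$; in particular $\ell_1\geq\frac{g}{d'}$, so Lemma~\ref{lem:aug-method-1} applies to $P_1$ with the pair $(a,a+g)$ and returns the proper divisor $d'$ and an arithmetic progression $\{s'\}+\{0,d',\ldots,\ell'd'\}\subseteq P_1+\frac{d}{d'}A\subseteq P+(\tfrac{d}{d'}+h)A$ with $\ell'\geq(\ell_1-\frac{g}{d'})\frac{d}{d'}\geq\ell\cdot\frac{d}{d'}\geq(\ell-\frac{4m}{nd'})\frac{d}{d'}$. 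The combined witness chains the two: on query $j'$, expand via the Lemma~\ref{lem:aug-method-1} witness into $p_1\in P_1$ plus a solution in $\frac{d}{d'}A$, then expand $p_1$ via the Lemma~\ref{lem:aug-method-2} witness into $p\in P$ plus a solution in $hA$, for total query time $O(\frac{d}{d'}+h)=O(\frac{d}{d'}+\lceil\frac{4m}{nd'}\rceil)$.

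For the running time, Lemma~\ref{lem:small-and-large-gap} costs $O(n)$, the gcd step and Lemma~\ref{lem:aug-method-1} cost $O(\log\frac{d}{d'})$ (and Lemma~\ref{lem:aug-method-1} remains this cheap even applied to the very long $P_1$, since it only computes $d'$ and the two scalars $s',\ell'$), while Lemma~\ref{lem:aug-method-2} only builds an $O(1)$-size object, so the total is $O(n+\log\frac{d}{d'})$ as required. The one nontrivial point — everything else is routine bookkeeping — is the design of Case (ii): one must apply the length-extending Lemma~\ref{lem:aug-method-2} to $P$ \emph{before} any common-difference reduction (otherwise the hypothesis $g'\leq\ell d$ can fail once $\ell$ has shrunk by a factor $d/d'$), and one must pick $h=\lceil\frac{4m}{nd'}\rceil$ so that it neither exceeds the copy budget $\frac{d}{d'}+\lceil\frac{4m}{nd'}\rceil$ nor leaves $\ell_1$ too small to absorb the loss $g/d'$ incurred by the subsequent Lemma~\ref{lem:aug-method-1}; the precise asymmetric guarantee $\frac{nd}{4m}g\leq g'$ from Lemma~\ref{lem:small-and-large-gap} is exactly what makes $\frac{hg'}{d}\geq\frac{g}{d'}$ come out right.
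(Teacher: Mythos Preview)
Your proposal is correct and follows essentially the same route as the paper's proof: invoke Lemma~\ref{lem:small-and-large-gap}, in Case~(i) apply Lemma~\ref{lem:aug-method-1} alone and pad with zeros, and in Case~(ii) first lengthen $P$ via Lemma~\ref{lem:aug-method-2} with $h=\lceil\frac{4m}{nd'}\rceil$ using the divisible gap $g'$, then reduce the common difference via Lemma~\ref{lem:aug-method-1} using the non-divisible gap $g$, with the key inequality $\frac{hg'}{d}\geq\frac{g}{d'}$ ensuring $\ell'\geq\ell\cdot\frac{d}{d'}$. Your explanatory remarks about the order of the two augmentations and the choice of $h$ are an accurate reading of why the argument is arranged this way.
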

\begin{proof}
    Consider the two cases of Lemma~\ref{lem:small-and-large-gap}. 

    \noindent\textbf{Case (i).} In $O(n)$ time, we find a pair $(a, a+g) \in A\times A$ with $d\nmid g$ and $1\leq g \leq \frac{4m}{n}$.  Let $A^* = \{a, a + g\}$. Let $d' = \gcd(d, g)$, which can be computed in $O(\log \frac{d}{d'})$ time.  By Lemma~\ref{lem:aug-method-1}, in $O(\log d)$ time, we can compute an divisor $d'$ of $d$ with $d' < d$ and also an arithmetic progression
    \[
        \{s'\} + \{0, d', 2d', \ldots, \ell'd'\} \subseteq P + \frac{d}{d'}A^*,
    \]
    where $\ell' \geq (\ell - \frac{g}{d'})\cdot \frac{d}{d'} \geq (\ell - \frac{4m}{nd'})\cdot \frac{d}{d'}$.   In the same time, we can also build a witness $\mathcal W$ with $O(\frac{d}{d'})$ query time.  Since $A^* \subseteq A$ and $0 \in A$, any solution for $z \in P + \frac{d}{d'}A^*$ can be converted to a solution for $z \in P + (\frac{d}{d'} + \left\lceil\frac{4m}{nd'}\right\rceil)A$ by appending $0$'s.  Therefore, 
    \[
        \{s'\} + \{0, d', 2d', \ldots, \ell'd'\} \subseteq P + (\frac{d}{d'} + \left\lceil\frac{4m}{nd'}\right\rceil)A,
    \]
     and $\mathcal W$ can be used as a witness with $O(\frac{d}{d'} + \left\lceil\frac{4m}{nd'}\right\rceil)$ query time for this progression.

    \noindent\textbf{Case (ii).} In $O(n)$ time, we find two pairs $(a, a + g)$ and $(a', a' + g')$ in $A \times A$ such that $d \nmid g$, $d \mid g'$, and that
        \begin{equation}\label{eq:augment-method-ult-1}
            \frac{nd}{4m}g \leq g' \leq m.
        \end{equation}
    Let $A^*_1 = \{a, a + g\}$ and let $A^*_2 = \{a', a' + g'\}$. Let $d' = \gcd(d, g)$, which can be computed in $O(\log \frac{d}{d'})$ time. Let $h = \left\lceil\frac{4m}{nd'}\right\rceil$. We first use $A^*_2$ to augment $P$. Since $\ell d \ge m\ge g'$, by Lemma~\ref{lem:aug-method-2}, in $O(1)$ time, we can compute an arithmetic progression 
    \[
        P'' = \{s''\} + \{0, d, 2d, \ldots, \ell''\cdot d\} \subseteq P + hA^*_2
    \]
    and a witness $\mathcal{W}$ with $O(h)$ query time, where 
    \begin{equation}\label{eq:augment-method-ult-2}
        \ell'' = \ell + \frac{hg'}{d} = \ell + \left\lceil\frac{4m}{nd'}\right\rceil \cdot \frac{g'}{d} \geq \ell + \frac{g}{d'}.
    \end{equation}
    The last inequality is due to~\eqref{eq:augment-method-ult-1}.

    Then we use $A^*_1$ to augment $P''$.  By lemma~\ref{lem:aug-method-1}, in $O(\log \frac{d}{d'})$ time, we can compute an arithmetic progression 
    \[
        P' = \{s'\} + \{0, d', 2d', \ldots, \ell'd'\} \subseteq P'' + \frac{d}{d'}A^*_1
    \]
    with $\ell' \geq (\ell'' - \frac{g}{d'})\cdot \frac{d}{d'}$ and a witness $\mathcal{W'}$ with $O(\frac{d}{d'})$ query time.  In view of~\eqref{eq:augment-method-ult-2}, we have that
    \[
        \ell' \geq (\ell'' - \frac{g}{d'})\frac{d}{d'} \geq \ell \cdot \frac{d}{d'} 
    \]

    It is easy to see that 
    \[
        P' \subseteq P'' + \frac{d}{d'}A^*_1 \subseteq P + hA^*_2 + \frac{d}{d'}A^*_1 \subseteq P + (\frac{d}{d'} + \left\lceil\frac{4m}{nd'}\right\rceil)A.
    \]
     and that $\mathcal{W}$ and $\mathcal{W'}$ together serve a witness with $O(\frac{d}{d'} + \left\lceil\frac{4m}{nd'}\right\rceil)$ query time for 
     \[
        P' \subseteq P + (\frac{d}{d'} + \left\lceil\frac{4m}{nd'}\right\rceil)A. \qedhere
    \]
\end{proof}

\subsubsection{Augmenting Iteratively}
When $P$ is not too short, we can augment $P$ by repeatedly invoking Lemma~\ref{lem:augment-method-ult} until its length is at least $m$.
\begin{lemma}\label{lem:augment-iterative}
    Let $A \subseteq \mathbb{Z}[0, m]$ be a sorted set of $n$ integers. Assume that $0 \in A$ and that\\$\gcd(A) = 1$. Let $P = \{s\} + \{0, d, 2d, \ldots, \ell d\}$ with 
    \begin{equation}\label{eq:augment-iterative-0}
        \ell \cdot \min\{d,n\} \geq 5m.
    \end{equation}
    In $O(n \log d)$ time, we can compute an arithmetic progression 
    \[
        \{s'\} + \{0, 1, \ldots, m\} \subseteq P + \left(2d + \left\lceil\frac{8m}{n}\right\rceil\right)A.
    \]
    and a witness with $O(d + \frac{m}{n})$ query time.
\end{lemma}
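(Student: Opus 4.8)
The plan is to start from $P_0 := P$ and iterate Lemma~\ref{lem:augment-method-ult}: as long as the current common difference $d_i$ exceeds $1$, that lemma yields a progression $P_{i+1} = \{s_{i+1}\} + \{0, d_{i+1}, 2d_{i+1}, \ldots, \ell_{i+1}d_{i+1}\}$ whose common difference $d_{i+1}$ is a proper divisor of $d_i$, together with the inclusion $P_{i+1} \subseteq P_i + c_iA$ for $c_i := \frac{d_i}{d_{i+1}} + \lceil\frac{4m}{nd_{i+1}}\rceil$ and a witness of query time $O(c_i)$. Because $d = d_0 > d_1 > \cdots$ is a chain of divisors, each round at least halves the common difference, so after $T \leq \log_2 d$ rounds we reach $d_T = 1$ and $P_T = \{s'\} + \{0, 1, \ldots, \ell_T\}$. (Lemma~\ref{lem:augment-method-ult} is stated for $d > 2$; in the one remaining case $d_i = 2$ the very same argument through Lemmas~\ref{lem:small-and-large-gap}, \ref{lem:aug-method-1} and~\ref{lem:aug-method-2} goes through and produces a common-difference-$1$ progression directly, so I will treat $d_i \leq 2$ as a base case.)

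The core of the argument is to show the length never collapses. Writing $L_i := \ell_i d_i$ and $e_i := d_i/d_{i+1} \geq 2$, multiplying the bound $\ell_{i+1} \geq (\ell_i - \frac{4m}{nd_{i+1}})\frac{d_i}{d_{i+1}}$ from Lemma~\ref{lem:augment-method-ult} by $d_{i+1}$ gives, with no positivity assumption,
\[
    L_{i+1} \;\geq\; L_i - \tfrac{4m}{n}e_i .
\]
Telescoping and using the elementary inequality that a product of integers each $\geq 2$ dominates their sum --- so $\sum_{j<i}e_j \leq \prod_{j<i}e_j = d/d_i \leq d$ --- yields $L_i \geq L_0 - \frac{4md}{n}$ for all $i$, in particular for $i = T$. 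To finish I would invoke the hypothesis $\ell\cdot\min\{d,n\} \geq 5m$ via a case split: if $d \leq n$ then $L_0 = \ell d \geq 5m$ and $\frac{4md}{n} \leq 4m$; if $d > n$ then $L_0 = \ell d \geq \frac{5md}{n}$ and $\frac{4md}{n} < \frac{5md}{n}$, so $L_0 - \frac{4md}{n} \geq \frac{md}{n} \geq m$. In both cases $L_i \geq m$ for every $i$, which is exactly the precondition $\ell_i d_i \geq m$ needed to keep calling Lemma~\ref{lem:augment-method-ult}, and $\ell_T = L_T \geq m$, so $\{s'\} + \{0, 1, \ldots, m\} \subseteq P_T$.

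It then remains to control the multiplicity and the running time. Unrolling the inclusions, $P_T \subseteq P + \big(\sum_{i<T}c_i\big)A$, and $\sum_i c_i = \sum_i e_i + \sum_i \lceil\frac{4m}{nd_{i+1}}\rceil$. The first sum is $\leq \prod_i e_i = d$ by the same product-dominates-sum fact; for the second, since the $d_{i+1}$ form a decreasing chain down to $1$ we have $d_{i+1} \geq 2^{T-1-i}$, hence $\sum_i \lceil\frac{4m}{nd_{i+1}}\rceil \leq T + \frac{4m}{n}\sum_{j\geq 0}2^{-j} \leq \log_2 d + \frac{8m}{n}$. As $\log_2 d \leq d$, the total is at most $2d + \frac{8m}{n} \leq 2d + \lceil\frac{8m}{n}\rceil$, and I pad the final solution with copies of $0\in A$ up to that many summands. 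The time is $\sum_i O(n + \log e_i) = O(Tn) + O(\log\prod_i e_i) = O(n\log d)$. The witness is the chain of the $T$ per-round witnesses: to answer a query for $s'+j$ one asks the round-$(T-1)$ witness to write $s'+j = p_{T-1} + (\text{sum in }c_{T-1}A)$ with $p_{T-1} \in P_{T-1}$, then asks the round-$(T-2)$ witness to expand $p_{T-1}$, and so on down to an element of $P$; the query time is $\sum_i O(c_i) = O(d + m/n)$.

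The main obstacle is the second paragraph. A single round can shrink $L_i$ by as much as $\frac{4m}{n}e_i$, which is large precisely when $e_i = d_i/d_{i+1}$ is large, so a careless union bound over the $\leq \log_2 d$ rounds would cost a $\log d$ factor both in the length loss and in the accumulated multiplicity, giving only the weaker ``$(k\log k)A$''-type statement. The point that rescues the bound is the identity $\prod_i e_i = d/d_T = d$ combined with $\sum_i e_i \leq \prod_i e_i$, together with the telescoping $\sum_i 1/d_{i+1} < 2$; once these are in place both the length invariant $L_i \geq m$ and the multiplicity bound $2d + \lceil 8m/n\rceil$ drop out, and the $\min\{d,n\}$ appearing in the hypothesis is exactly what absorbs the single surviving $\frac{4md}{n}$ term.
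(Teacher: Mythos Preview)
Your proposal is correct and follows essentially the same approach as the paper: iterate Lemma~\ref{lem:augment-method-ult}, telescope the length bound using $\sum_j d_j/d_{j+1} \leq \prod_j d_j/d_{j+1} = d$, and bound the accumulated multiplicity by $\sum_j d_j/d_{j+1} + \sum_j\lceil 4m/(nd_{j+1})\rceil \leq d + (\log d + 8m/n)$ via the geometric decay of $1/d_{j+1}$. The paper handles the length invariant slightly more uniformly---it observes $\ell_0 \geq 5m/n$ directly from~\eqref{eq:augment-iterative-0}, giving $(\ell_0 - 4m/n)d_0 \geq \ell_0 d_0/5 \geq m$ without a case split---but your two-case argument is equally valid; and your remark that Lemma~\ref{lem:augment-method-ult} is stated for $d>2$ while the iteration needs it for $d_i=2$ catches a cosmetic gap that the paper silently ignores.
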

\begin{proof}
    We first define a series of arithmetic progressions $P_0, P_1, \ldots, P_h$ as follows. 
    We denote $P$ as $P_0 = \{s_0\} + \{0, d_0, \ldots, \ell_0d_0\}$. Given $P_i$, if $d_i = 1$, then $h = i$; if $d_i \geq 2$, let $P_{i+1} = \{s_{i+1}\} + \{0, d_{i+1}, \ldots, \ell_{i+1}d_{i+1}\}$ be the arithmetic progression obtained from $P_{i}$ by applying Lemma~\ref{lem:augment-method-ult}.  Clearly, $h \leq \log d_0$ as $d_{i+1}$ is a proper divisor of $d_i$. 

    We first show that for any $i$, we have that $\ell_id_i \geq m$. It will certify that every $P_i$ (except for $P_h$) satisfies the precondition (i.e., $\ell d \geq m$) of Lemma~\ref{lem:augment-method-ult} and therefore, Lemma~\ref{lem:augment-method-ult} can be applied to generate $P_{i+1}$.  Moreover, it will imply that $\ell_h \geq m$ as $d_h = 1$. By Lemma~\ref{lem:augment-method-ult},
    \[
        \ell_{i + 1} \geq (\ell_i - \frac{4m}{nd_{i+1}})\frac{d_i}{d_{i+1}},
    \]
    or equivalently,
    \[
        \ell_{i+1}d_{i+1} \geq \ell_id_i - \frac{4m}{n}\cdot \frac{d_{i}}{d_{i+1}}.
    \]
    This implies that 
    \begin{equation}\label{eq:augment-iterative-1}
        \ell_id_i \geq \ell_0d_0 - \frac{4m}{n} \sum_{j = 0}^{i-1}\frac{d_j}{d_{j+1}} \geq \ell_0d_0 - \frac{4m}{n}\cdot d_0.
    \end{equation}
    The second inequality is due to that $d_{j+1}$ is a proper divisor of $d_{j}$. Then in view of~\eqref{eq:augment-iterative-0} and~\eqref{eq:augment-iterative-1}, we have that 
    \[
        \ell_id_i \geq (\ell_0 - \frac{4m}{n})d_0 \geq \frac{\ell_0d_0}{5} \geq m.
    \]

    Then we show that 
    \[
        P_h \subseteq P_0 + \left(2d_0 + \left\lceil\frac{8m}{n}\right\rceil\right)A
    \]
    By Lemma~\ref{lem:augment-method-ult}, 
    \[
        P_{i+1} \subseteq P_i + (\frac{d_i}{d_{i+1}} + \lceil \frac{4m}{nd_{i+1}} \rceil)A.
    \]
    Therefore
    \begin{equation}\label{eq:augment-iterative-2}
        P_h \subseteq P_0 + (\sum_{i=0}^{h-1}\frac{d_i}{d_{i+1}} +\sum_{i=0}^{h-1}\lceil \frac{4m}{nd_{i+1}} \rceil)A.
    \end{equation}
    Since $d_{i+1}$ is a proper divisor of $d_i$ with $d_{i+1} < d_i$, we have that 
    \begin{equation}\label{eq:augment-iterative-3}
        \sum_{i=0}^{h-1}\frac{d_i}{d_{i+1}} \leq d_0
    \end{equation}
    and that
    \begin{align}
        \sum_{i=0}^{h-1}\lceil \frac{4m}{nd_{i+1}} \rceil &\leq \sum_{i=0}^{h-1}(\frac{4m}{nd_{i+1}} + 1)\nonumber\\
            & \leq h + \frac{4m}{n}(\frac{1}{d_1} + \ldots + \frac{1}{d_h}) \nonumber\\
            &\leq \log d_0 + \frac{4m}{n}(\frac{1}{2^{h-1}} + \frac{1}{2^{h-2}} + \cdot + \frac{1}{2^0}) \leq d_0 + \lceil\frac{8m}{n}\rceil\label{eq:augment-iterative-4}
    \end{align}
    Recall that $0 \in A$. In view of~\eqref{eq:augment-iterative-2},~\eqref{eq:augment-iterative-3}, and~\eqref{eq:augment-iterative-4}, we have 
    \[
        P_h \subseteq P_0 + \left(2d_0 + \left\lceil\frac{8m}{n}\right\rceil\right)A.
    \]

    The time needed to compute $P_{i+1}$ from $P_{i}$ is $O(n + \log \frac{d_{i}}{d_{i+1}})$. Recall that $h \leq \log d_0$. Therefore, the total time to compute $P_h$ is 
    \[
        O(n \log d_0 + \sum_{i=0}^h \log \frac{d_i}{d_{i+1}}) = O(n\log d_0).
    \]

    As to the witness, Lemma~\ref{lem:augment-method-ult} also provides a witness $\mathcal{W}_{i+1}$ with $O(\frac{d_i}{d_{i+1}} + \lceil \frac{4m}{nd_{i+1}} \rceil)$ query time for $P_{i+1} \subseteq P_i + (\frac{d_i}{d_{i+1}} + \lceil \frac{4m}{nd_{i+1}} \rceil)A$. All these witnesses can be combined to form a witness for 
    \[
         P_h \subseteq P_0 + \left(2d_0 + \left\lceil\frac{8m}{n}\right\rceil\right)A.
    \]
    The query time is $O(d_0 + \frac{m}{n})$.
\end{proof}

\subsection{Putting Things Together}
Now we can show that $kA$ contains a long arithmetic progression by first generating an arithmetic progression of moderate length using Lemma~\ref{lem:cons-ap-by-cardinality-short} and then augmenting it using Lemma~\ref{lem:augment-iterative}.
\begin{lemma}\label{lem:cons-ap-by-cardinality-long}
    Let $A \subseteq \mathbb{Z}[0,m]$ be a set of $n$ integers. Assume that $0 \in A$ and that $\gcd(A) = 1$.  Let $k$ be a positive integer. Assume that
    \[
            n \geq \frac{m + 1}{k}.
    \]  
    In $O(n\log m)$ time, we can compute an arithmetic progression 
    \[
        \{s\} + \{0, 1, 2, \ldots, m\} \subseteq 332kA,
    \]
    and a witness with $O(k \log n)$ expected query time.
\end{lemma}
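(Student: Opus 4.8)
The plan is to obtain the result by composing the two main ingredients built up in this section: the ``short AP'' construction of Lemma~\ref{lem:cons-ap-by-cardinality-short} and the iterative augmentation of Lemma~\ref{lem:augment-iterative}. First I would sort $A$ in $O(n\log n)$ time, which both ingredients need; note that $0\in A$ together with $\gcd(A)=1$ forces $n\ge 2$, so Lemma~\ref{lem:cons-ap-by-cardinality-short} is applicable and yields, in $O(n\log n)$ time, an arithmetic progression
\[
    P = \{s\} + \{0, d, 2d, \ldots, \ell d\} \subseteq 320kA,
\]
with $d \le 2m/n$ and $\ell\cdot\min\{d,n\} \ge 5m$, along with a witness $\mathcal{W}_1$ of $O(k\log n)$ expected query time. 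The precondition $\ell\cdot\min\{d,n\}\ge 5m$ of Lemma~\ref{lem:augment-iterative} is exactly what Lemma~\ref{lem:cons-ap-by-cardinality-short} hands us, so I would then feed $P$ into Lemma~\ref{lem:augment-iterative}; in $O(n\log d)$ time this produces a progression $\{s'\}+\{0,1,\ldots,m\}\subseteq P + \bigl(2d+\lceil 8m/n\rceil\bigr)A$ together with a witness $\mathcal{W}_2$ of $O(d+m/n)$ query time. Since $0\in A$, padding with zeros gives $P + \bigl(2d+\lceil 8m/n\rceil\bigr)A \subseteq \bigl(320k + 2d + \lceil 8m/n\rceil\bigr)A$.

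The one calculation that needs care is checking that the coefficient stays within $332k$, i.e. that $2d+\lceil 8m/n\rceil \le 12k$; here the inequality is essentially tight, so a crude bound is not enough and one must exploit integrality. From $n\ge (m+1)/k$ we get $m \le nk-1$, hence $m/n \le k - 1/n$, and combined with $d\le 2m/n$ this gives $2d+\lceil 8m/n\rceil \le 12m/n + 1 \le 12k - 12/n + 1$. If $n\le 12$ the right-hand side is $\le 12k$ outright; if $n\ge 13$ it is strictly less than $12k+1$, and since $2d+\lceil 8m/n\rceil$ is a non-negative integer it is again $\le 12k$. Therefore $320k + 2d + \lceil 8m/n\rceil \le 332k$ and $\{s'\}+\{0,1,\ldots,m\}\subseteq 332kA$.

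Finally I would assemble the witness. To answer a query for a term $z$ of the length-$m$ progression, invoke $\mathcal{W}_2$ to get a representation $z = p + a_1 + \cdots + a_c$ with $c = 2d+\lceil 8m/n\rceil$, where $p$ is an honest term of $P$ (the augmentation lemmas always return $p\in P$), then invoke $\mathcal{W}_1$ on $p$ to expand it into a sum of $320k$ elements of $A$; concatenating the two parts and padding with $12k-c\ge 0$ zeros yields a solution for $z\in 332kA$. The query time is $O(d+m/n)+O(k\log n)$, which is $O(k\log n)$ since $d\le 2m/n < 2k$, and the total construction time is $O(n\log n + n\log d) = O(n\log m)$ because $d\le m$ and $n\le m+1$. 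The proof is thus essentially bookkeeping; the only genuine subtlety is the integrality argument pinning down the constant $332$.
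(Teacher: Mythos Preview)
Your proof is correct and follows essentially the same route as the paper: apply Lemma~\ref{lem:cons-ap-by-cardinality-short} to get $P\subseteq 320kA$, feed $P$ into Lemma~\ref{lem:augment-iterative}, and chain the two witnesses. The only difference is cosmetic: your case-split integrality argument for $2d+\lceil 8m/n\rceil\le 12k$ is more work than needed, since from $nk\ge m+1$ one has $m/n<k$, giving $2d\le 4m/n\le 4k$ and $\lceil 8m/n\rceil\le 8k$ directly.
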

\begin{proof}
    By Lemma~\ref{lem:cons-ap-by-cardinality-short},  in $O(n\log n)$ time, we can compute an arithmetic progression 
    \begin{equation}\label{eq:cons-ap-by-cardinality-long-1}
        P = \{s\} + \{0, d, 2d \ldots, \ell d\} \subseteq 320kA
    \end{equation}
    with $d \leq \frac{2m}{n}$ and $\ell\cdot \min\{d, n\} \geq 5m$,
    as well as a witness $\mathcal W$ with $O(k \log n)$ expected query time.  

    Then we augment $P$ via Lemma~\ref{lem:augment-iterative}. In $O(n \log d)$ time, we can compute an arithmetic progression
    \begin{equation}\label{eq:cons-ap-by-cardinality-long-2}
        P' = \{s'\} + \{0,1, \ldots, m\} \subseteq P + (2d + \lceil\frac{8m}{n}\rceil)A.
    \end{equation}
    and a witness $\mathcal{W}'$ with $O(d + \frac{m}{n})$ query time.

    In view of~\eqref{eq:cons-ap-by-cardinality-long-1} and~\eqref{eq:cons-ap-by-cardinality-long-2}, we have that 
    \[
        P' \subseteq (320k + 2d + \lceil\frac{8m}{n}\rceil)A.
    \]
    Moreover, $\mathcal{W}$ and $\mathcal{W}'$ together form a witness $\mathcal{W}''$ with $O(k \log n + d + \frac{m}{n})$ expected query time.

    Note that
    \[
        \frac{m}{n} \leq k
    \]
    and that 
    \[
        d \leq \frac{2m}{n} \leq 2k.
    \]
    As $0 \in A$, we have that 
    \[
        P' \subseteq (320k + 4k + 8k)A = 332kA.
    \]
    $\mathcal{W''}$ can also be used as a witness for $P' \subseteq 332kA$, and the query time is $O(k \log n)$.
\end{proof}

\subsection{Reducing the Query Time by Compact Encoding}

We briefly explain how to reduce the query time of the witness in Lemma~\ref{lem:cons-ap-by-cardinality-long}  from $O(k \log n)$ to $O(\min\{\frac{m}{n}, n\}\log n + \log m)$. 

The term $k$ in the original query time mainly results from the fact that a solution can have $k$ integers. But if some integer appears in a solution multiple times, we can encode them compactly by storing the number of times it appears. With such compact encoding, all the operations with solutions can be done within time linear in the number of distinct integers in the solutions. Note that a solution can have $\min\{k ,n\}$ integers. Therefore, the query time can be reduced to $O(\min\{k, n\}\log n + \log m)$. The term $\log m$ results from the $\log \frac{m}{n}$ iterations we use to augment $P$. 

We can further reduce the query time to $O(\min\{\frac{m}{n}, n\}\log n + \log m)$ by assuming that $k = \lceil \frac{m+1}{n} \rceil$. Note that $0 \in A$. For any integer $k'$ large than $k$, we have that 
\(
    kA \subseteq k'A.
\)
Therefore, if an arithmetic progression $P \subseteq kA$, then $P \subseteq k'A$. The witness for $P \subseteq kA$ can also be used as a witness for $P \subseteq k'A$. 

\thmka*

\section{Long Arithmetic Progressions in Subset Sums}\label{sec:ss}
We first rewrite Lemma~\ref{lem:cons-ap-by-cardinality-long} so that it can be conveniently used later.

\begin{lemma}\label{lem:ka}
    Let $A \subseteq \mathbb{Z}[0,m]$ be a set of $n$ integers. Assume that $0 \in A$ and that $n \geq 2$.  Let $k$ be a positive integer. Assume that
    \[
            k \geq \frac{996 m }{n}.
    \]  
    In $O(n\log m)$ time, we can compute an arithmetic progression 
    \[
        P = \{s\} + \{0, d, 2d, \ldots, md\} \subseteq kA
    \]
    with $d = \gcd(A)$, and a witness with $O(k \log n)$ expected query time.
\end{lemma}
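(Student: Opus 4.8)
The plan is to reduce this statement to Lemma~\ref{lem:cons-ap-by-cardinality-long} by two standard normalizations. First I would handle the assumption $\gcd(A) = 1$: let $d = \gcd(A)$ and set $\bar A = \{a/d : a \in A\}$, so that $\gcd(\bar A) = 1$, $0 \in \bar A$, and $\bar A \subseteq \mathbb{Z}[0, \lfloor m/d \rfloor]$. A solution for $z \in k\bar A$ immediately converts (by multiplying each term by $d$) into a solution for $dz \in kA$, and vice versa, so a witness for an arithmetic progression in $k\bar A$ yields a witness for the corresponding $d$-scaled progression in $kA$ with only $O(1)$ overhead per query. Computing $d$ itself takes $O(n \log m)$ time via repeated Euclidean gcd. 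Second, I would check the hypothesis of Lemma~\ref{lem:cons-ap-by-cardinality-long} for $\bar A$: writing $\bar m = \lfloor m/d \rfloor \le m$, we need $n \ge (\bar m + 1)/k$, which is implied by the given bound $k \ge 996m/n \ge 996\bar m/n \ge (\bar m + 1)/n \cdot (\text{something} \ge 1)$ — more carefully, $k \ge 996m/n$ gives $n \ge 996m/k \ge 996\bar m / k \ge (\bar m+1)/k$ since $996\bar m \ge \bar m + 1$ for $\bar m \ge 1$ (and the $\bar m = 0$ case, i.e. $A = \{0\}$, is trivial or excluded by $n \ge 2$ together with $\gcd$). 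So Lemma~\ref{lem:cons-ap-by-cardinality-long} applies to $\bar A$.

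Next I would invoke Lemma~\ref{lem:cons-ap-by-cardinality-long} on $\bar A$ with parameter $k' = \lceil (\bar m + 1)/n \rceil \le \lceil (m+1)/n \rceil$. It produces, in $O(n \log \bar m) = O(n \log m)$ time, an arithmetic progression $\{\bar s\} + \{0, 1, \ldots, \bar m\} \subseteq 332 k' \bar A$ together with a witness of $O(k' \log n)$ expected query time. Scaling back by $d$, this becomes $\{d\bar s\} + \{0, d, 2d, \ldots, \bar m d\} \subseteq 332 k' A$. Since $\bar m = \lfloor m/d \rfloor$ may be strictly smaller than $m$, the progression as stated only has length $\bar m$, not $m$; but note that when $d \ge 2$ the stretched progression $\{0, d, \ldots, \bar m d\}$ already spans a range of roughly $m$, and when $d = 1$ we have $\bar m = m$ exactly. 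To match the lemma's stated form $\{0, d, 2d, \ldots, md\}$ I would, in the $d \ge 2$ case, simply truncate or note that Lemma~\ref{lem:cons-ap-by-cardinality-long} actually gives a progression of length $\bar m \ge m/d$, and one can pad the \emph{count} of multiples up — wait, that over-stretches. The clean fix: apply Lemma~\ref{lem:cons-ap-by-cardinality-long} to $\bar A$ with the \emph{target length} being $\bar m$, obtaining length exactly $\bar m$; then observe $\{s\} + \{0,d,\ldots,\bar m d\}$ has $\bar m + 1$ terms and we only need $m+1$; since $\bar m + 1 \le m + 1$, this is insufficient. The correct resolution is that we should run Lemma~\ref{lem:cons-ap-by-cardinality-long} on $\bar A$ with a parameter ensuring a progression of length $m$ in $k''\bar A$ where $k'' = 332 \lceil (m+1)/n\rceil$ suffices because $\bar A$ has $n$ integers in $\mathbb{Z}[0,\bar m]$ and $n \ge (m+1)/k'' \ge (\bar m + 1)/k''$ — the length of the output progression in Lemma~\ref{lem:cons-ap-by-cardinality-long} is the ambient bound $m$ of the \emph{input set's range}, so I should instead not rescale the range bound but keep it as $m$ (treating $\bar A \subseteq \mathbb{Z}[0,m]$, which is weaker but valid), giving a progression of full length $m$.

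The main obstacle — and the step requiring the most care — is reconciling the common difference. Lemma~\ref{lem:cons-ap-by-cardinality-long} with $\gcd(\bar A) = 1$ produces common difference $1$; after scaling by $d$ this gives common difference $d = \gcd(A)$, which is exactly what is claimed, so the approach is: rescale $A$ to $\bar A = A/d$, apply Lemma~\ref{lem:cons-ap-by-cardinality-long} to $\bar A$ viewed inside $\mathbb{Z}[0,m]$ (a legitimate superset of $\mathbb{Z}[0,\lfloor m/d\rfloor]$) with $k'' = \lceil (m+1)/n \rceil \le k/996 \cdot (\text{const})$, obtaining $\{\bar s\} + \{0,1,\ldots,m\} \subseteq 332 k'' \bar A$, then scale by $d$ to get $\{d\bar s\} + \{0,d,\ldots,md\} \subseteq 332 k'' A$. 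Finally verify $332 k'' \le k$: we have $k'' \le (m+1)/n + 1 \le 2m/n$ (using $n \ge 2 \le m$, or handling small $m$ separately), so $332 k'' \le 664 m/n \le 996 m/n \le k$, and since $0 \in A$ we may pad solutions with zeros to use exactly $k$ summands. The witness query time is $O(k'' \log n) = O(k \log n)$, and all preprocessing is $O(n \log m)$. The one genuinely delicate point is the chain of inequalities guaranteeing $332 k'' \le k$ while $k'' $ is still large enough that Lemma~\ref{lem:cons-ap-by-cardinality-long}'s hypothesis $n \ge (m+1)/k''$ holds — both are satisfied by the generous constant $996$ in the hypothesis, so the bookkeeping, though routine, must be done explicitly.
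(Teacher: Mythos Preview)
Your proposal is correct and follows essentially the same route as the paper: normalize by dividing through by $d=\gcd(A)$, view $\bar A=A/d$ as a subset of $\mathbb{Z}[0,m]$, apply Lemma~\ref{lem:cons-ap-by-cardinality-long} with the smallest admissible parameter $k''=\lceil (m+1)/n\rceil$, and then verify $332k''\le k$ using the slack in the constant $996$. The paper's proof is just a terser version of your final paragraph (it writes $k\ge 996m/n\ge 332\cdot 3m/n\ge 332\lceil(m+1)/n\rceil$ after noting $n\le m+1$), so there is no substantive difference; your meandering through the $\bar m$ versus $m$ issue is exactly the point the paper sweeps under the phrase ``assume $d=1$''.
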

\begin{proof}
    We assume that $d = 1$ since otherwise we can divide every integer in $A$ by $d$.  We further assume that $n \leq m + 1$ since otherwise the lemma holds trivially. Then we have 
    \[
        k \geq \frac{996m}{n} \geq 332\cdot \frac{3m}{n} \geq 332\cdot \left\lceil\frac{m+1}{n}\right\rceil.
    \]
    Then the lemma follows by Lemma~\ref{lem:cons-ap-by-cardinality-long}.
\end{proof}

The main difference between sumset $kA$ and subset sums $\mathcal{S}(A)$ is that each element of $A$ can be used up to $k$ times to obtain an integer in $kA$, but can be used at most once to obtain an integer in $\mathcal{S}(A)$. To tackle this issue, S{\'a}rk{\"o}zy~\cite{Sar94} (and also Lev~\cite{Lev03}) considered $A + A$. They showed that when $n \geq \Omega(\sqrt{m\log m})$, there is a set $B \subseteq A + A$ such that every $b \in  B$ can be represented as a sum of two integers of $A$ in many disjoint ways. In other words, each integer $b \in B$ can be used many times.  Then they showed for some integer $u$, the sumset $uB$ has an arithmetic progression of length $m$, and each integer in $uB$ can be represented as a sum of $2u$ distinct integers from $A$. As a result, $\mathcal{S}(A)$ contains an arithmetic progression of length $m$.

Given Lemma~\ref{lem:ka}, S{\'a}rk{\"o}zy's proof can be made constructive directly. However, the construction may take as much as $\Theta(n^2)$ time because the set $B$ may contain up to $\Theta(n^2)$ integers. 

To reduce the construction time to $O(n \log n)$, we need to find a set that has a similar effect as $B$ but is much smaller. We consider a set $G \subseteq A - A$ such that every $g \in G$ can be represented as a gap of two elements of $A$ in many disjoint ways and that $g \leq O(\frac{m}{n})$. Unlike the elements of $B$, which can be as large as $2m$, the elements of $G$ is a factor of $n$ smaller than $m$. As a result, $G$ needs only $O(n)$ integers in order to produce an arithmetic progression in $\mathcal{S}(A)$.  We formalize this idea in Subsection~\ref{sec:sumset2ss}.

The disadvantage of considering $G \subseteq A - A$ is that we can obtain an arithmetic progression $P$ of length only $\frac{m}{n}$. As a consequence, we have to augment $P$. As in the sumset case, we will iteratively augment $P$ using Lemma~\ref{lem:aug-idea-1} and Lemma~\ref{lem:aug-method-2} until its length becomes $m$. The difficulty is that each element of $A$ can be used for at most once during the augmentation. We shall tackle this issue in Subsection~\ref{sec:aug-ss}

\subsection{Generating Short Arithmetic Progressions}\label{sec:sumset2ss}
In this subsection, we will use Lemma~\ref{lem:ka} to prove that $\mathcal{S}(A)$ contains an arithmetic progression of length roughly $\frac{m}{n}$, which can be summarized by the following lemma.
\begin{restatable}{lemma}{lemshortapss}\label{lem:short-ap-ss}
    Let $A \subseteq \mathbb{Z}[1,m]$ be a set of $4n$ integers. Let $\ell$ be an integer with
    \[
        \frac{m}{n}  \leq \ell \leq \frac{n}{1000\log 2n}
    \]
    In $O(n \log n)$ time, we can compute a subset $A^* \subseteq A$ with $|A^*| \leq 2000\ell$, an arithmetic progression
    \[
        \{s\} + \{0, d, 2d, \ldots, \ell d\} \subseteq \mathcal{S}(A^*)
    \]
    with $d \leq \frac{m}{n}$, and a witness with $O(\ell)$ expected query time.
\end{restatable}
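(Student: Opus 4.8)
The plan is to transform the Subset Sum problem into a $k$-fold sumset problem to which Lemma~\ref{lem:ka} can be applied. The key is to extract from $A$ a small subset whose $k$-fold sumset can be "simulated" using distinct elements of $A$. First I would partition the $4n$ integers of $A$ arbitrarily into $n$ disjoint blocks $A_1, \ldots, A_n$ of size $4$ each. Within each block, form the set of pairwise gaps $A_i - A_i$; each block thus contributes a handful of positive gaps, all lying in $\mathbb{Z}[0,m]$. The crucial pigeonhole step: since there are $n$ blocks and each block's smallest positive gap is at most $m/3$ (four integers in an interval of length $m$ force a gap $\le m/3$), and more importantly we have many blocks producing small gaps, some fixed value $g^* \le m/n$ (or thereabouts) occurs as a gap in $\Omega(n)$ of the blocks --- actually one should be more careful: we want a \emph{single} common difference $d \le m/n$ that arises as a gap in sufficiently many ($\ge$ roughly $\ell$ times, up to log factors) \emph{disjoint} blocks. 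This uses that the gaps are bounded by $m/n$ on average, so by an averaging/pigeonhole argument over the $\le m/n$ possible small values, some value $d$ is hit by $\Omega(n/\log n) \gg \ell$ blocks. Let $B$ index these blocks.

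Having fixed $d$ and the collection of blocks each containing a pair $(a_j, a_j + d)$ with both elements in $A$, consider the set $G = \{d\}$ viewed as a multiset with multiplicity $|B|$: choosing whether to "use" block $j$ corresponds to adding either $0$ or $d$ to a running sum, using the two distinct elements $a_j$ (for $0$-contribution we actually just don't use the block, or use $a_j$ and account for a base offset --- one normalizes by always including $a_j$ in the base sum and optionally swapping it for $a_j + d$). Concretely, set $s_0 = \sum_{j \in B} a_j$; then for any subset $T \subseteq B$, $s_0 + d|T| \in \mathcal{S}(\bigcup_{j \in B}\{a_j, a_j+d\})$ by taking $a_j + d$ for $j \in T$ and $a_j$ otherwise. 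Since $|B| \ge \ell$, this already yields the arithmetic progression $\{s_0\} + \{0, d, 2d, \ldots, \ell d\}$ inside $\mathcal{S}(A^*)$ where $A^* = \bigcup_{j \in B}\{a_j, a_j + d\}$, with $|A^*| = 2|B|$. The witness for the query $s_0 + j'd$ simply picks any $j'$ of the blocks to flip, an $O(\ell)$ operation. This is essentially a direct construction, not even needing Lemma~\ref{lem:ka}, \emph{if} the pigeonhole gives enough disjoint blocks with a common gap.

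The subtle point --- and the likely main obstacle --- is guaranteeing a \emph{common} small gap value appears in $\Omega(\ell \log n)$-ish blocks while keeping $|A^*| \le 2000\ell$. If the small gaps are spread across many \emph{different} values, no single $d$ repeats enough. This is exactly where one invokes Lemma~\ref{lem:ka} rather than the naive construction: instead of demanding a single repeated gap, collect from $\Omega(n)$ blocks one small gap each (values $g_j \le m/n$, not necessarily equal), giving a set $A'$ of $\Omega(n)$ integers all at most $m/n$, with $0 \in A' - (\text{base})$; then $\gcd$ of these may be $d > 1$, and applying Lemma~\ref{lem:ka} to $A'/\gcd$ with $k = \Theta(m/n^2 \cdot \ldots)$ --- wait, the relevant bound is $k \ge 996 (m/n)/|A'| = \Theta(m/n^2)$, which is tiny --- produces an arithmetic progression of length $\ge \ell$ (since $\ell \le n/(1000\log 2n)$ keeps things in range) with common difference $d = \gcd(A')$, living in $kA'$. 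Each use of an element of $A'$ costs one block (one pair of distinct integers of $A$), and a solution uses $\le k \cdot (\text{number of distinct elements}) = O(\ell)$ of them after compact encoding; since we have $\Omega(n) \gg k|A'|$ blocks available, distinctness is maintained, and we collect these into $A^*$ with $|A^*| = O(\ell)$. So the final proof: (i) greedily carve out $\Omega(n)$ disjoint pairs $(a_j, a_j + g_j)$ with $g_j \le m/n$ --- guaranteed because elements are dense enough in $\mathbb{Z}[1,m]$; (ii) form $A' = \{g_j\}$, apply Lemma~\ref{lem:ka} to $A'/\gcd(A')$ to get $\{s'\} + \{0, d', \ldots, \ell d'\} \subseteq k A'$ with $d' = \gcd(A')$ and small $k$; (iii) lift each term of the progression in $kA'$ to a subset sum of $A$ by translating "$g_j$ used $c_j$ times" into "pick $a_j + g_j$ from block $j$" across distinct blocks, charging $O(1)$ blocks per unit of multiplicity and absorbing the base sum into $s$; (iv) bound $|A^*|$ and the query time. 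The running time is dominated by sorting and the $O(n\log m)$ call to Lemma~\ref{lem:ka}, but with $\log m$ replaced by $\log n$ since $A'$ has entries $\le m/n \le \ell \le n$, giving $O(n \log n)$ overall.
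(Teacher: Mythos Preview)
Your overall architecture is right and matches the paper: extract $\Omega(n)$ disjoint pairs from $A$ with gaps at most $m/n$, form the gap set, apply Lemma~\ref{lem:ka} to get an arithmetic progression in a $k$-fold sumset of the gaps, and lift back to $\mathcal{S}(A^*)$ by swapping $a_j$ for $a_j+g_j$ in the chosen pairs. But step (iii) hides a real gap.

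When you write ``$g_j$ used $c_j$ times $\to$ pick $a_j+g_j$ from block $j$ across distinct blocks,'' you need $c_j$ \emph{distinct} pairs whose gap equals that particular value. Nothing in your construction guarantees this. If the set $A'$ of distinct gap values is applied to Lemma~\ref{lem:ka}, a solution for $z\in kA'$ may use a single gap value $g$ up to $k$ times, yet you might have harvested only one pair with gap $g$. Your sentence ``since we have $\Omega(n)\gg k|A'|$ blocks available, distinctness is maintained'' conflates the \emph{total} number of pairs with the number of pairs realizing a \emph{specific} gap; it also rests on the estimate $|A'|=\Theta(n)$, which is the multiset size, not the set size (as a set, $|A'|\le m/n$, so $k\ge 996\ell/|A'|$ need not be tiny).

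The paper closes exactly this gap with a uniformization step (Lemma~\ref{lem:uniform}): from the $n$ conflict-free pairs one extracts, at the cost of a single $\log$ factor, a subset $T'$ in which every gap value has the \emph{same} multiplicity $u$, with $|T'|\ge n/\log(2n)$. One then chooses $k=\lfloor 1000\ell/|G_{T'}|\rfloor\le u$ and applies Lemma~\ref{lem:ka} to $G_{T'}\cup\{0\}$. Because $k\le u$, any term of $k(G_{T'}\cup\{0\})$ uses each gap value at most $u$ times, and there are exactly $u$ disjoint pairs available for it; this is what makes the lift to $\mathcal{S}(A_{T^*})$ go through (Lemma~\ref{lem:pair-to-subset-sum}). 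Inserting this uniformization into your outline would fix the argument.
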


\subsubsection{From Sumsets to Subset Sums via Integer Pairs}

To prove this lemma, we shall consider integer pairs and their gaps. As one will see in this section, we can connect sumsets and subset sums by considering integer pairs.
\begin{restatable}{definition}{defpair}
    Let $T \subseteq \mathbb{Z} \times \mathbb{Z}$ be a set of integer pairs.  We say that $T$ is conflict-free if no two pairs in $T$ share a common integer. We define two sets as follows.
    \begin{align*}
        A_T &= \{\textrm{$z$ : $(z, z') \in T$ or $(z', z) \in T$}\}\\
        G_T &= \{z' - z : (z, z') \in T\}
    \end{align*}
    For each $g \in G_T$, the multiplicity of $g$ is define to be $|\{(z, z') \in T : z' - z = g\}|$.  We say that $G_T$ is $u$-uniform if the multiplicity of every $g \in G_T$ is $u$.
\end{restatable}
Basically, $A_T$ is the set of integers that appears in $T$. Every integer in $G_T$ can be obtained as a gap of some pair in $T$, and the multiplicity of $g \in G_T$ is the number of distinct ways that $g$ can be represented as a gap of some pair in $T$. 

Later we will show that there is a large set $T \subseteq A\times A$ that is conflict-free and $u$-uniform for some $u$. At this moment, let's consider a  set $T \subseteq \mathbb{Z} \times \mathbb{Z}$.

We make the following observation. 
\begin{observation}\label{obs:pairs}
    Let $T$ be a set of integer pairs. If $T$ is conflict-free, than $|A_T| = 2 |T|$. If $G_T$ is $u$-uniform, then $|T|=u|G_T|$.
\end{observation}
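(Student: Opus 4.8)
The plan is to establish both equalities by direct counting from the definitions of $A_T$ and $G_T$; no deep argument is needed, so I expect this to be a short bookkeeping lemma.

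For the first equality, I would start from $A_T=\bigcup_{(z,z')\in T}\{z,z'\}$, which is just the definition of $A_T$. Being conflict-free means precisely that for any two \emph{distinct} pairs $(z_1,z_1'),(z_2,z_2')\in T$ the two-element sets $\{z_1,z_1'\}$ and $\{z_2,z_2'\}$ are disjoint; together with the fact that the two coordinates of a single pair are distinct, this exhibits the displayed union as a disjoint union of $|T|$ sets each of size $2$, whence $|A_T|=2|T|$.

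For the second equality, I would partition $T$ according to the gap of each pair. Define $\phi\colon T\to G_T$ by $\phi(z,z')=z'-z$. By the definition of $G_T$ this map is surjective, and for each $g\in G_T$ its fiber $\phi^{-1}(g)=\{(z,z')\in T:\ z'-z=g\}$ has cardinality equal to the multiplicity of $g$, which is $u$ since $G_T$ is $u$-uniform. Hence $T$ is the disjoint union of the $|G_T|$ fibers, each of size $u$, so $|T|=u\,|G_T|$. (Combining the two identities yields $|A_T|=2u\,|G_T|$, which is presumably how the observation will be used downstream.)

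I do not anticipate any real obstacle. The single point that deserves a remark is the convention, implicit in the way $T$ will later be chosen (a conflict-free family of pairs drawn from $A\times A$ with positive gap), that a pair never has equal coordinates; this is exactly what licenses the factor $2$ in the first identity, and without it one would only get $|A_T|\le 2|T|$.
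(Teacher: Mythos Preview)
Your proposal is correct and is exactly the direct counting argument the paper has in mind; indeed, the paper states this as an observation without proof, so your two-line bookkeeping is precisely what is expected. Your explicit remark that the first identity tacitly uses $z\neq z'$ within a pair is a valid caveat (the paper relies on this implicitly via $G_T\subseteq\mathbb{Z}[1,g]$ in every application).
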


Next we show that when $T$ is conflict-free and $G_T$ is $u$-uniform, then we can shift every integer in the $u$-fold sumset $u(G_T \cup \{0\})$ by a constant value so that the resulting integers belong to $\mathcal{S}(A_T)$.  This lemma implies that if we have an arithmetic progression in the sumset $u(G_T \cup \{0\})$, then we will have an arithmetic progression of the same length in the subset sums $\mathcal{S}(A_T)$.
\begin{lemma}\label{lem:pair-to-subset-sum}
    Let $T \subseteq \mathbb{Z} \times \mathbb{Z}$ be a set of $n$ integer pairs. Suppose that $T$ is conflict-free and that $G_T$ is $u$-uniform.  Then in $O(n)$ time we can compute an integer $s_T$ such that for any $z \in u(G_T \cup \{0\})$, we have $s_T + z \in \mathcal{S}(A_T)$. Moreover, given any solution for $z \in u(G_T \cup \{0\})$, in $O(n)$ time, we can convert it to a solution for $s_T + z \in \mathcal{S}(A_T)$.
\end{lemma}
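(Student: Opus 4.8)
The plan is to fix, for each $g \in G_T$, an ordering of the $u$ pairs in $T$ that realize $g$ as a gap, and then use the $i$-th such pair to "toggle" between including the smaller or the larger element of the pair. Concretely, write $G_T = \{g_1, \dots, g_r\}$ and for each $g_j$ list the $u$ witnessing pairs as $(x_{j,1}, x_{j,1}+g_j), \dots, (x_{j,u}, x_{j,u}+g_j)$; since $T$ is conflict-free these $2ru = 2n$ integers are all distinct and together they are exactly $A_T$. Define the base value $s_T = \sum_{(z,z') \in T} z$, i.e.\ the sum of the \emph{smaller} element of every pair (equivalently $\sum_{j,i} x_{j,i}$). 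This is computable in $O(n)$ time by a single pass over $T$. The point is that replacing the smaller element $x_{j,i}$ by the larger element $x_{j,i}+g_j$ in the subset adds exactly $g_j$ to the running sum, and these replacements can be made independently across all pairs.

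Next I would show that every $z \in u(G_T \cup \{0\})$ has the form $\sum_{j=1}^r c_j g_j$ with each $c_j \in \{0,1,\dots,u\}$: indeed a solution for $z \in u(G_T \cup \{0\})$ is a multiset of $u$ elements of $G_T \cup \{0\}$, so letting $c_j$ be the number of times $g_j$ is chosen (the remaining slots being $0$) we get $\sum_j c_j \le u$ and $z = \sum_j c_j g_j$, so in particular $c_j \le u$ for every $j$. Now build the subset $A' \subseteq A_T$ as follows: for each $j$, among the $u$ pairs witnessing $g_j$, pick the larger element $x_{j,i}+g_j$ for $i = 1, \dots, c_j$ and the smaller element $x_{j,i}$ for $i = c_j+1, \dots, u$. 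Then
\[
    \sum_{a \in A'} a \;=\; \sum_{j=1}^r \Bigl( \sum_{i=1}^{c_j} (x_{j,i} + g_j) + \sum_{i=c_j+1}^{u} x_{j,i}\Bigr) \;=\; \sum_{j=1}^r \Bigl(\sum_{i=1}^u x_{j,i} + c_j g_j\Bigr) \;=\; s_T + \sum_{j=1}^r c_j g_j \;=\; s_T + z,
\]
so $s_T + z \in \mathcal{S}(A_T)$. For the algorithmic "moreover" part: given a solution for $z \in u(G_T\cup\{0\})$, read off the counts $c_j$ in $O(n)$ time, then for each $g_j$ walk through its witnessing pairs (stored in a precomputed table indexed by gap value) and emit the appropriate element, again $O(n)$ total; the output is the set $A'$ described above.

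The main thing to be careful about — rather than a deep obstacle — is bookkeeping: verifying that conflict-freeness really makes the $2n$ selected integers distinct (so that $A'$ is a genuine subset of $A_T$ of the right size and the sum does not accidentally double-count), and that $u$-uniformity is exactly what guarantees there are enough witnessing pairs for \emph{every} feasible count $c_j \le u$ simultaneously. Both follow directly from Observation~\ref{obs:pairs} and the definitions, so no delicate estimate is needed; the $O(n)$ time bounds are immediate once the witnessing pairs are bucketed by gap value in a preprocessing pass.
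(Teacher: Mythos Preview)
Your proposal is correct and follows essentially the same approach as the paper: define $s_T$ as the sum of the first coordinates of all pairs in $T$, and then for each gap occurring in the given solution for $z$, ``toggle'' one witnessing pair from its first coordinate to its second, which adds exactly that gap to the running sum. The paper phrases this without first bucketing by gap value (it directly picks $k$ distinct pairs realizing the $k$ nonzero summands), whereas you first aggregate into counts $c_j$ per gap value and then peel off the first $c_j$ witnessing pairs; this is a purely organizational difference and both rely on $u$-uniformity in exactly the same way. One cosmetic point: you call $x_{j,i}$ the ``smaller'' element, but the lemma does not assume $z < z'$ for $(z,z') \in T$; your computation only uses that $x_{j,i}$ is the \emph{first} coordinate, so nothing is actually wrong.
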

\begin{proof}
   Let $\{(a_1, b_1), \ldots, (a_n, b_n)\}$ be the pairs in $T$. Let $s_T = \sum_{i=1}^n a_i$. Consider an arbitrary integer $z \in u(G_T \cup \{0\})$. We shall show that $s_T + z \in \mathcal{S}(A_T)$. Since $z \in u(G_T \cup \{0\})$, we have that 
   \[
        z = g_1 + g_2 + \cdots + g_k
   \]
   where $g_j \in G_T$ and $k \leq u$. Since $G_T$ is $u$-uniform, every $g \in G_T$ can be represented in $u$ distinct ways as a gap of some pair in $T$.  Therefore, we can find $k$ distinct pairs $\{(a_{i_1}, b_{i_1}), \ldots, (a_{i_k}, b_{i_k})\}$ from $P$ such that $g_j = b_{i_j} - a_{i_j}$ for every $j \in \mathbb{Z}[1, k]$. That is,
   \[
        z = (b_{i_1} - a_{i_1}) + \cdots + (b_{i_k} - a_{i_k}).
   \]
   Let $I = \{i_1, \ldots, i_k\}$. Now consider $s_T + z$. 
   \[
        s_T + z = \sum_{i = 1}^n a_i + (b_{i_1} - a_{i_1}) + \cdots + (b_{i_k} - a_{i_k}) = \sum_{i\notin I} a_i + \sum_{i\in I} b_i.
   \]
   Since $T$ is conflict-free, we have that all the $2n$ integers $\{a_1, b_1, a_2, b_2, \ldots, a_n, b_n\}$ are distinct.  Therefore, 
   \[
        s_T + z \in \mathcal{S}(A_T).
   \]

   In the above argument, we actually convert a solution $(g_1, \ldots, g_k, 0, \ldots, 0)$ for $z \in u(G_T \cup \{0\})$ to a solution $\sum_{i\notin I} a_i + \sum_{i\in I} b_i$ for $s + z \in \mathcal{S}(A_T)$. It is not hard to see that the conversion can be done in $O(n)$ time.
\end{proof}

\subsubsection{Arithmetic Progressions Via Integer Pairs}

We shall prove that when $G_T$ has large cardinality and is $u$-uniform for  some $u$, then $u(G_T\cup\{0\})$ contains arithmetic progressions. As a result, $\mathcal{S}(A_T)$ also contains arithmetic progressions.

We first show that at the cost of a logarithmic factor, we can always make $G_T$ uniform for some $u$.

\begin{lemma}\label{lem:uniform}
    Let $T$ be a set of $n$ integer pairs. In $O(n)$ time, we can obtain an integer $u$ and a subset $T' \subseteq T$ with 
    \[
        |T'| \geq \frac{n}{\log 2n}
    \]
    such that $G_{T'}$ is $u$-uniform.
\end{lemma}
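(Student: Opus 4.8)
The plan is a dyadic bucketing argument followed by an averaging step. First I would compute, for every pair $(z,z')\in T$, its gap $z'-z$, bucket the pairs of $T$ by gap value, and record the multiplicity $m_g := |\{(z,z')\in T : z'-z=g\}|$ of each gap $g$; this takes $O(n)$ time and satisfies $\sum_{g\in G_T} m_g = |T| = n$. Then, for each integer $j$ with $0 \leq j \leq \lfloor\log n\rfloor$, I would set $u_j = 2^j$ and define $T_{u_j} \subseteq T$ by retaining, for every gap $g$ with $m_g \geq u_j$, exactly $u_j$ of the pairs realizing that gap (and discarding the rest). By construction every surviving gap has multiplicity exactly $u_j$ in $T_{u_j}$, so $G_{T_{u_j}}$ is $u_j$-uniform, and $|T_{u_j}| = 2^j\cdot|\{g : m_g \geq 2^j\}|$.

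The crux is to show that one of these choices of $u_j$ is good. A fixed gap $g$ contributes $2^j$ to $|T_{u_j}|$ precisely for those $j$ with $2^j \leq m_g$, and $\sum_{j:\,2^j\leq m_g}2^j = 2^{\lfloor\log m_g\rfloor+1}-1 \geq m_g$. Summing over all gaps gives $\sum_{j=0}^{\lfloor\log n\rfloor}|T_{u_j}| \geq \sum_{g\in G_T} m_g = n$. Since the index $j$ takes at most $\lfloor\log n\rfloor+1 \leq \log(2n)$ values, by averaging there is some $j^\ast$ with $|T_{u_{j^\ast}}| \geq n/\log(2n)$; I would output $T' = T_{u_{j^\ast}}$ and $u = 2^{j^\ast}$. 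For the running time, the counts $|\{g : m_g\geq 2^j\}|$ for all $j$ are obtained as a suffix sum over the $O(\log n)$ distinct values of $\lfloor\log m_g\rfloor$, so identifying $j^\ast$ and extracting $T'$ stay within $O(n)$ time overall.

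The one point that needs care — and the main obstacle in the sense of the only non-obvious design choice — is the decision, in the definition of $T_{u_j}$, to keep $u_j$ pairs from \emph{every} gap with $m_g\geq u_j$ rather than only from gaps whose multiplicity lies in the dyadic band $[2^j,2^{j+1})$. The band-restricted variant loses a factor of roughly $2$ in the rounding and yields only $n/(2\log(2n))$; the version above makes the telescoping sum $\sum_j|T_{u_j}|$ reach exactly $n$, which is what produces the clean bound $n/\log(2n)$ with no spurious constant. Everything else is routine bookkeeping with the multiplicity histogram.
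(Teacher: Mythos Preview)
Your proof is correct. Both your argument and the paper's are averaging arguments on the quantity $u\cdot|\{g\in G_T:m_g\geq u\}|$, but they parametrize differently. The paper ranges over \emph{all} $u\in\{1,\dots,n\}$, uses the identity $\sum_{u=1}^{n}|\{g:m_g\geq u\}|=\sum_{g}m_g=n$, and combines it with the harmonic-sum estimate $\sum_{u=1}^{n}1/u\leq\log(2n)$ to conclude (by contradiction) that some $u$ satisfies $u\,|\{g:m_g\geq u\}|\geq n/\log(2n)$. You instead restrict to dyadic values $u=2^{j}$, use the geometric-series inequality $\sum_{j:\,2^{j}\leq m_g}2^{j}\geq m_g$ to obtain $\sum_{j}|T_{2^{j}}|\geq n$, and then average over only $\lfloor\log n\rfloor+1\leq\log(2n)$ indices. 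Your route is a touch more algorithmic---having only $O(\log n)$ candidate values of $u$ makes the $O(n)$ running-time claim immediate---whereas the paper's route allows the optimal $u$ to be any integer. Both reach the identical bound $n/\log(2n)$.
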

\begin{proof}
    For any integer $u \geq 1$, define 
    \[
        G^u_{T} = \{\text{$g \in G_{T}$ :  the multiplicity of $g$ is at least $u$}\}.
    \]
    Note that $|G^u_T| = 0$ for any $u > n$. 

    We shall show that for some $u \geq 1$, 
    \(
        |G^u_{T}| \geq \frac{n}{u\log 2n}.
    \) 
    Suppose, for the sake of contradiction, that 
    \(
        |G^u_{T}| < \frac{n}{u\log 2n}
    \)
    for all $u \geq 1$.
    This implies that 
    \[
        \sum_{u=1}^n |G^u_{T}| < \sum_{u=1}^n \frac{n}{u\log 2n} = \frac{n}{\log 2n}\sum_{u=1}^n \frac{1}{u} \leq \frac{n}{\log 2n} \cdot \log 2n = n
    \]
    But by the definition of $G^u_{T}$, we have
    \[
        \sum_{u=1}^n |G^u_{T}| = \sum_{u=1}^{n} u(|G^u_{T}| - |G^{u+1}_{T}|) = |T| = n.
    \]
    Contradiction.

    Let $u$ and $G^u_{T}$ be such that 
    \[
        |G^u_{T}| \geq \frac{n}{u\log 2n}.
    \]
    They can be computed in $O(n)$ time.  Then we can form a $u$-uniform set $T' \subseteq T$ as follows. For each $g \in G^u_{T}$, we pick $u$ distinct pairs in $T$ whose gaps are $g$. Then we have that 
    \[
        |T'| \geq u|G^u_T| \geq u\cdot \frac{n}{u\log 2n} = \frac{n}{\log 2n}. \qedhere
    \]
\end{proof}

Using Lemma~\ref{lem:pair-to-subset-sum} and Lemma~\ref{lem:uniform}, we can show that when $T$ is conflict-free and its cardinality is large when compared with the maximum gap of the pair, $\mathcal{S}(A_T)$ contains a long arithmetic progression.

\begin{lemma}\label{lem:ap-by-pairs}
    Let $T \subseteq \mathbb{Z} \times \mathbb{Z}$ be a conflict-free set of $n$ integer pairs. Suppose that $G_T \subseteq [1, g]$ for some integer $g$ and that 
    \begin{equation}\label{eq:ap-by-pairs-1}
        g \leq \frac{n}{1000\log 2n}.
    \end{equation}
    In $O(n \log n)$ time, we can compute a subset $T^* \subseteq T$ of at most $1000g$ pairs, an arithmetic progression 
    \[
        \{s\} + \{0, d, 2d, \ldots, g d\} \subseteq \mathcal{S}(A_{T^*})
    \]
    with $d  \leq \max(G_T)$, and a witness with $O(g)$ expected query time.
\end{lemma}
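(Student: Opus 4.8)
The plan is to push everything through the set of gaps $G_T$: obtain a long arithmetic progression in a suitable $k$-fold sumset of the gaps via Lemma~\ref{lem:ka}, and then transport it into $\mathcal{S}(A_{T^*})$ via Lemma~\ref{lem:pair-to-subset-sum}. I may assume $g\ge 1$, since for $g=0$ hypothesis~\eqref{eq:ap-by-pairs-1} forces $G_T=\emptyset$ and one can take $T^*=\emptyset$, $s=0$; so from now on every pair in $T$ has distinct entries and $\emptyset\ne G_T\subseteq\mathbb{Z}[1,g]$.

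First I would apply Lemma~\ref{lem:uniform} to $T$, obtaining in $O(n)$ time an integer $u$ and a conflict-free $T'\subseteq T$ with $|T'|\ge n/\log(2n)$ such that $G_{T'}$ is $u$-uniform; by Observation~\ref{obs:pairs}, $|T'|=u\,|G_{T'}|$, while $G_{T'}\subseteq G_T\subseteq[1,g]$ gives $1\le|G_{T'}|\le g$ and~\eqref{eq:ap-by-pairs-1} gives $|T'|\ge 1000g$. Next I would set $k=\lceil 996g/|G_{T'}|\rceil$. From $u\,|G_{T'}|=|T'|\ge 1000g>996g$ we get $u>996g/|G_{T'}|$, hence $u\ge k$, so every gap value occurring in $T'$ occurs there at least $k$ times. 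Keeping exactly $k$ pairs of each gap value yields a conflict-free $T^*\subseteq T'$ with $G_{T^*}=G_{T'}$ being $k$-uniform and $|T^*|=k\,|G_{T'}|\le 996g+|G_{T'}|\le 997g\le 1000g$; grouping $T'$ by gap value costs $O(n\log n)$.

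Then I would invoke Lemma~\ref{lem:ka} on $A:=G_{T^*}\cup\{0\}\subseteq\mathbb{Z}[0,g]$ (its ``$m$'' being $g$ and its ``$n$'' being $|A|=|G_{T^*}|+1\ge 2$) with this $k$: the condition $k\ge 996g/|G_{T'}|\ge 996g/|A|$ holds, so in $O(|A|\log g)=O(n\log n)$ time we obtain $P=\{s'\}+\{0,d,2d,\dots,gd\}\subseteq k(G_{T^*}\cup\{0\})$ with $d=\gcd(G_{T^*})$ and a witness $\mathcal{W}_1$ of $O(k\log|A|)$ expected query time. Since $G_{T^*}$ is nonempty, $d$ divides some element of $G_T$, so $d\le\max(G_T)$. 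As $T^*$ is conflict-free and $G_{T^*}$ is $k$-uniform, Lemma~\ref{lem:pair-to-subset-sum} supplies in $O(|T^*|)=O(g)$ time a shift $s_{T^*}$ with $s_{T^*}+z\in\mathcal{S}(A_{T^*})$ for all $z\in k(G_{T^*}\cup\{0\})$, together with an $O(|T^*|)$-time solution converter; taking $s:=s'+s_{T^*}$ gives $\{s\}+\{0,d,2d,\dots,gd\}\subseteq\mathcal{S}(A_{T^*})$. The witness answers a query $s+jd$ ($0\le j\le g$) by running $\mathcal{W}_1$ on $s'+jd$ and converting the result, with query time $O(k\log|A|+|T^*|)$. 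To see this is $O(g)$: here $k\le 996g/|G_{T'}|+1$ and $|A|=|G_{T'}|+1$, so $k\log|A|\le 996g\cdot\frac{\log(|G_{T'}|+1)}{|G_{T'}|}+\log(g+1)$, and $\log(x+1)/x\le 1$ for $x\ge 1$ makes this $O(g)$; with $|T^*|\le 1000g$ the query time is $O(g)$, and the total construction time is $O(n\log n)$.

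The main obstacle is the choice of $k$: it must simultaneously satisfy $k\le u$ (enough copies of each gap are available), $k\,|G_{T'}|\le 1000g$ (so $T^*$ is small enough for the claimed bound), and $k\,|A|\ge 996g$ (so that Lemma~\ref{lem:ka} applies), and these three requirements pin $k\,|G_{T'}|$ into a narrow window that is feasible only because $|T'|\ge 1000g$. A secondary subtlety, easy to misread, is that the $\log$ factor in $\mathcal{W}_1$'s query time is $\log|A|$ with $A$ the \emph{small} gap-set rather than $\log n$, which is exactly what keeps the final witness query time at $O(g)$; the remaining steps are routine bookkeeping.
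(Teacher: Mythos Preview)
Your proposal is correct and follows essentially the same route as the paper: apply Lemma~\ref{lem:uniform} to get a uniform $T'$, trim to a $k$-uniform $T^*$ of size at most $1000g$, invoke Lemma~\ref{lem:ka} on $G_{T^*}\cup\{0\}$, and shift into $\mathcal{S}(A_{T^*})$ via Lemma~\ref{lem:pair-to-subset-sum}. The only cosmetic difference is your choice $k=\lceil 996g/|G_{T'}|\rceil$ versus the paper's $k=\lfloor 1000g/|G_{T'}|\rfloor$; both satisfy the three constraints you identified, and your query-time analysis (noting that the $\log$ is in $|A|$, not $n$) matches the paper's.
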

\begin{proof}
    By Lemma~\ref{lem:uniform}, in $O(n)$ time, we can obtain an integer $u$ and a subset $T' \subseteq T$ such that $G_{T'}$ is $u$-uniform and that
    \begin{equation}\label{eq:ap-by-pairs-2}
        |T'| \geq \frac{n}{\log 2n}.
    \end{equation}
    Note that $T'$ is conflict-free as it is a subset of $T$. By Observation~\ref{obs:pairs},
    \begin{equation}\label{eq:ap-by-pairs-3}
        |T'| = u|G_{T'}|. 
    \end{equation}
    Let $k = \lfloor \frac{1000g}{|G_{T'}|} \rfloor$. In view of~\eqref{eq:ap-by-pairs-1} and~\eqref{eq:ap-by-pairs-2},
    \[
        k \leq \frac{1000g}{|G_{T'}|} = \frac{1000g}{|T'|} \cdot u \leq u.
    \]
    We can obtain a $k$-uniform subset $T^* \subseteq T'$ as follows. For each $g \in G_{T'}$, by selecting $k$ distinct pairs from $T'$ whose gaps are $g$.  It is easy to see that $G_{T^*} = G_{T'}$. Moreover, $T'$ must be conflict-free. By Observation~\ref{obs:pairs},
    \[
        |T^*| = k|G_{T'^*}| \leq \frac{1000g}{|G_{T^*}|} \cdot |G_{T^*}| \leq 1000g.
    \]

     Also, note that 
    \[
        k \geq \frac{1000g}{|G_{T^*}|} - 1 \geq \frac{996g}{|G_{T^*}|}.
    \]
    Let $G = G_{T^*} \cup\{0\}$. By Lemma~\ref{lem:ka}, we can compute an arithmetic progression 
    \[
        \{s'\} + \{0, d, \ldots, g d\} \subseteq kG
    \]
    and a witness $\mathcal{W}$ for this arithmetic progression. Note that $d = \gcd(G) \leq \max(G) \leq \max(G_T)$. The construction time is 
    \[
        O(|G|\log g) \leq  O(n \log n),
    \]
    and the expected query time of $\mathcal{W}$ is
    \[
        O(k \log |G|) = O(\frac{g}{|G|} \log |G|) \leq O(g).
    \]
    Then by Lemma~\ref{lem:pair-to-subset-sum}, in $O(|T^*|)$ time, we can compute an integer $s_{T^*}$ such that 
    \[
        \{s' + s_{T^*}\} + \{0, d, \ldots, g d\} \subseteq \mathcal{S}(A_{T^*}).
    \]
    Moreover, we can use $\mathcal{W}$ to obtain a solution for $s' + s_{T^*} + jd$ as follows: first use $\mathcal{W}$ to obtain a solution for $s' + jd \in kG$, and then convert it to a solution for $s' + s_{T*} + jd \in \mathcal{S}(A_{T^*})$ via Lemma~\ref{lem:pair-to-subset-sum}. The first step takes $O(g)$ time, and the second step takes $O(T^*) = O(g)$ time. So the total query time is $O(g)$.
\end{proof}

\subsubsection{Short Arithmetic Progressions in Subset Sums}

Now we shall prove the main lemma of this subsection. We first show that there is a large conflict-free set $T \subseteq A\times A$ and the gaps of the pairs in $T$ are small. Then we use this set $T$ to produce arithmetic progressions.

\begin{lemma}\label{lem:gen-pairs}
    Let $A \subseteq \mathbb{Z}[1,m]$ be a sorted set of $4n$ integers. In $O(n)$ time, we can compute a conflict-free set $T \subseteq A \times A$ of at least $n$ pairs with $G_T \subseteq \mathbb{Z}[1, \frac{m}{n}]$.
\end{lemma}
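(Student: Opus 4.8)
The plan is to exploit the fact that a sorted sequence of $4n$ integers contained in an interval of length at most $m$ cannot have many large consecutive gaps: most of its consecutive gaps are at most $m/n$, and each such small gap $(a_i,a_{i+1})$ is already a pair with difference in $\mathbb{Z}[1,\frac{m}{n}]$. The only nuisance is that consecutive pairs $(a_i,a_{i+1})$ and $(a_{i+1},a_{i+2})$ overlap, so we cannot keep all of them.

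Concretely, write $A=\{a_1<a_2<\cdots<a_{4n}\}$ and set $g_i=a_{i+1}-a_i$ for $i\in\mathbb{Z}[1,4n-1]$; since $A$ is sorted these can all be computed in $O(n)$ time. Because $\sum_{i=1}^{4n-1} g_i = a_{4n}-a_1 \le m-1 < m$, the number of indices $i$ with $g_i>\frac{m}{n}$ is at most $n-1$ (if there were $n$ of them, just those would contribute more than $n\cdot\frac{m}{n}=m$ to the sum, a contradiction). Call an index $i$ \emph{good} if $1\le g_i\le \lfloor m/n\rfloor$; then at least $(4n-1)-(n-1)=3n$ indices are good. Next, split the good indices into those of odd index and those of even index, and let $J$ be the larger class, so $|J|\ge \lceil 3n/2\rceil \ge n$. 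If $i<j$ are both in $J$ then $j\ge i+2$, hence $a_i,a_{i+1},a_j,a_{j+1}$ are four distinct elements of $A$; therefore the pairs $\{(a_i,a_{i+1}) : i\in J\}$ are pairwise conflict-free. Selecting any $n$ of them gives a set $T\subseteq A\times A$ of exactly $n$ pairs that is conflict-free, with $G_T\subseteq \mathbb{Z}[1,\lfloor m/n\rfloor]\subseteq \mathbb{Z}[1,\frac{m}{n}]$. Every step — forming the gaps, flagging good indices, partitioning by parity, and picking $n$ pairs — runs in $O(n)$ time.

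There is no real obstacle here: the argument is a pigeonhole bound on large gaps followed by a parity (equivalently, a greedy "take a pair, skip the next index") selection to kill the overlaps between consecutive pairs. The only point requiring a moment's care is that the cardinality $|A|=4n$ is precisely what leaves enough slack ($3n$ good indices, of which at least $3n/2\ge n$ lie in one parity class) for the final count to go through.
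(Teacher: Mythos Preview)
Your proof is correct and follows essentially the same approach as the paper: bound the number of large consecutive gaps via the telescoping sum $\sum g_i \le m$, then split the small-gap indices by parity and take the larger class to ensure conflict-freeness. The paper's argument is identical up to minor bookkeeping (it only uses $|I|\ge 2n$ rather than your sharper $|I|\ge 3n$, but either suffices).
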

\begin{proof}
    Let $a_1, \ldots, a_{4n}$ be the elements of $A$ labeled in increasing order. For $i \in \mathbb{Z}[1, 4n-1]$, define $g_i = a_{i+1} - a_i$. Let $I = \{i : g_i \leq \frac{m}{n}\}$ and let $\overline{I} = \{i : g_i > \frac{m}{n}\}$. Note that 
    \[
        \sum_{i=1}^{n-1} g_i = a_n - a_1 \leq m.
    \]
    Therefore, $\overline{I} < n$, which implies that $|I| > 3n - 1 \geq 2n$. Consider the following two sets of pairs.
    \begin{align*}
        T_1 &= \{\text{$(a_i, a_{i+1})$ : $i \in I$ and $i$ is odd} \}.\\
        T_2 & = \{\text{$(a_i, a_{i+1})$ : $i \in I$ and $i$ is even}\}
    \end{align*}
    Both $T_1$ and $T_2$ are conflict-free, and both $G_{T_1}$ and $G_{T_2}$ are subsets of $\mathbb{Z}[1, \frac{m}{n}]$. Moreover, since $|I| \geq 2n$, at least one of $T_1$ and $T_2$ have cardinality at least $n$.  We pick the one with a larger cardinality and denote it as $T$.
\end{proof}

\lemshortapss*
\begin{proof}
    We first sort $A$, which takes $O(n\log n)$ time.  By Lemma~\ref{lem:gen-pairs}, in $O(n)$ time, we can compute a conflict-free pair $T \subseteq A \times A$ such that $|T| = n$ and that $G_T \subseteq \mathbb{Z}[1, \frac{m}{n}]$. Since $\ell \geq m/n$, the set $G_T$ can be viewed as a subset of $\mathbb{Z}[1, \ell]$. Also, note that 
    \[
        \frac{n}{1000\log 2n} \geq \ell.
    \]
    By Lemma~\ref{lem:ap-by-pairs}, in $O(n \log n)$ time, we can compute a subset $T^* \subseteq T$ of at most $1000\ell$ pairs, an arithmetic progression 
    \[
        \{s\} + \{0, d, 2d, \ldots, \ell d\} \subseteq \mathcal{S}(A_{T^*})
    \]
    with $d  \leq \max({G_T}) \leq \frac{m}{n}$, and a witness with $O(\ell)$ expected query time. Note that $A_{T^*} \subseteq A$ and that $|A_{T^*}| = 2|T^*| \leq 2000\ell$. This completes the proof.
\end{proof}

\subsection{Augmenting Arithmetic Progressions Using Different Integers}\label{sec:aug-ss}
We compare Lemma~\ref{lem:short-ap-ss} with Theorem~\ref{thm:ss}. The length of the arithmetic progression produced by Lemma~\ref{lem:short-ap-ss} is a factor of $n$ smaller than that in Theorem~\ref{thm:ss}. So we shall augment it. 

Let $P = \{s\} + \{0, d, 2d, \ldots, \ell d\}$ be an arithmetic progression that to be augmented. On a high level, we shall iteratively augment $P$ using Lemma~\ref{lem:aug-idea-1} and Lemma~\ref{lem:aug-method-2} until its length becomes $m$. As in the sumset case, in order to apply Lemma~\ref{lem:aug-idea-1}, we should find pairs of integers whose gaps are not divisible by $d$, and to apply Lemma~\ref{lem:aug-method-2}, we should find pairs whose gaps are divisible by $d$. Since now we are considering the subset sums case, all the pairs should be conflict-free.

We first show that there is either a pair whose gap is small and not divisible by $d$ or a pair whose gap is large and divisible by $d$.
\begin{lemma}\label{lem:one-aug-pair}
    Let $A \subseteq \mathbb{Z}[1, m]$ be a sorted set of $n$ integers. Assume that $n \geq 4$. Let $d$ be a positive integer. Let $\ell$ be an integer with $\ell \geq \frac{16000m}{n}$. In $O(n)$ time, we can find a pair $(a, a') \in A \times A$ satisfying one of the two following conditions.
    \begin{enumerate}[label={\normalfont (\roman*)}]
        \item $d \nmid (a'  - a)$ and $1 \leq a' - a \leq \frac{\ell}{4000}$

        \item $d \mid (a' - a)$ and $ \frac{\ell d}{8000\gamma} \leq a' - a \leq \frac{\ell d}{4000}$, where $\gamma = \frac{m}{n} + \frac{\ell}{4000n}$.
    \end{enumerate} 
\end{lemma}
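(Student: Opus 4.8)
The plan is to imitate the two-case structure of Lemma~\ref{lem:small-and-large-gap}, now working conflict-free. Sort $A$ as $a_1<\cdots<a_n$, set $g_i=a_{i+1}-a_i$ (so $\sum_i g_i=a_n-a_1\leq m-1$), and call a gap \emph{bad} if $d\nmid g_i$ and \emph{good} otherwise. A single $O(n)$ scan of the $g_i$ will either exhibit a short bad gap, giving~(i) directly, or expose enough good structure to assemble a pair for~(ii).

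First I would look for a bad gap with $g_i\leq\frac{\ell}{4000}$. If one exists, the consecutive pair $(a_i,a_{i+1})$ satisfies~(i) (the bound $g_i\geq 1$ is automatic). Otherwise every bad gap exceeds $\frac{\ell}{4000}$; since the bad gaps are disjoint subintervals of $[a_1,a_n]$, their count $h$ obeys $h<\frac{4000m}{\ell}\leq\frac n4$ by $\ell\geq\frac{16000m}{n}$. Moreover the minimum gap, being at most $\frac{m-1}{n-1}<\frac{\ell}{4000}$, is not bad, hence divisible by $d$, so $d\leq\frac{m-1}{n-1}$; with $n\geq 2$ this makes $\frac{\ell d}{8000\gamma}\leq m-1$, so the window $\bigl[\frac{\ell d}{8000\gamma},\frac{\ell d}{4000}\bigr]$ of~(ii) is actually reachable by a difference of two elements. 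Finally, if $h\geq 1$ then, a bad gap being at most $m-1$, also $\ell<4000m$ — a bound we will need.

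For~(ii): the $h$ bad gaps split the element list into at most $h+1$ \emph{runs} of consecutive elements joined only by good gaps, and all elements of a run lie in one residue class modulo $d$, so every difference taken inside a run is divisible by $d$. Put $L=\frac{\ell d}{8000\gamma}$, $U=\frac{\ell d}{4000}$; since $\gamma\geq 1$ we have $2L\leq U$. I would scan a run from the left, keeping the running sum $S$ of the good gaps crossed so far: $S$ starts at $0$ and increases up to the run's span, and it can jump over the interval $[L,U]$ only at a single gap $g$ with $g>U-L\geq L$; but a good gap of size in $[L,U]$ already solves~(ii), and a good gap larger than $U$ I would reclassify as an extra separator (there are at most $\frac mU\leq\frac n4$ of these). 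Hence, as soon as a run has span at least $L$, the scan produces a prefix with $S\in[L,U]$, and its endpoints give a pair $(a,a')$ inside the run with $d\mid(a'-a)$ and $\frac{\ell d}{8000\gamma}\leq a'-a\leq\frac{\ell d}{4000}$, i.e.\ exactly~(ii). So the combinatorial core is to guarantee that \emph{some} run is this long; one argues by contradiction, using the bound ``a run of span $<L$ holds at most $\frac Ld+1$ elements, and its gaps total at most its span'', summed over the at most $\frac n2+1$ runs, and then forces a contradiction with $n\geq 4$ by plugging in $h<\frac{4000m}{\ell}$, $d\leq\frac{m-1}{n-1}$, $\ell\geq\frac{16000m}{n}$ (and $\ell<4000m$ when $h\geq 1$); the telescoping identity $\frac{mL}{dU}+\frac Ld=\frac n2$ is what makes the absolute constants fit.

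The step I expect to be the main obstacle is exactly this counting argument. In Lemma~\ref{lem:small-and-large-gap} the admissible large gap only had to be bounded by $m$, but here both ends $L$ and $U$ of the target window scale with $\ell$, for which only a lower bound is available, so there is essentially no slack: one must simultaneously exploit the short-bad-gap dichotomy (to bound $h$ and to force $d\leq\frac{m-1}{n-1}$), fold in the extra ``large good gap'' separators without destroying the long run that is needed, and pin down $\gamma$ so that the pigeonhole closes with the quoted constants — and in parallel so that the output meets the \emph{lower} bound $a'-a\geq\frac{\ell d}{8000\gamma}$, not just some weaker one. Everything else — the two case tests, the counting of bad gaps, the prefix-sum scan, and the $O(n)$ time bound — is routine bookkeeping.
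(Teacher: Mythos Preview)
Your overall two-case structure and the prefix-sum scan are right, but your decomposition is inverted relative to the paper's, and that is exactly what makes the counting step you flagged not close. The paper separates by gap \emph{size} first: with $\ell'=\ell/4000$, there are at most $\beta\leq m/(\ell'+1)\leq n/4$ gaps exceeding $\ell'$, and by pigeonhole some maximal block of consecutive small gaps contains at least $\frac{n-1-\beta}{\beta+1}\geq \frac{\ell'}{2\gamma}$ of them. Only \emph{then} does the paper look at divisibility inside that single block: either some gap there is not a multiple of $d$ (giving case~(i), automatically $\leq\ell'$), or every gap is, hence each is $\geq d$, so the block has span $\geq d\cdot\frac{\ell'}{2\gamma}=L$ and a prefix-sum scan (each step $\leq\ell'\leq U$) lands in $[L,U]$ (case~(ii)). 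The pigeonhole is on \emph{gap count}, not span, and the only separator count needed is $\beta$.

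Your plan instead rules out short bad gaps globally, then cuts at bad gaps and at good gaps $>U$, and tries to show some resulting run has span $\geq L$ by an element-count contradiction. That argument does not close. Your ``telescoping identity'' $\frac{mL}{dU}+\frac{L}{d}=\frac n2$ holds only for $d=1$: indeed $\frac{mL}{dU}=\frac{m}{2\gamma d}$ and $\frac Ld=\frac{\ell}{8000\gamma}$, so the sum equals $\frac{n(4000m/d+\ell)}{2(4000m+\ell)}$, which is $n/2$ iff $d=1$. And the bound $n<R\bigl(\frac Ld+1\bigr)$ with $R\leq\frac n2+1$ is too weak, since $\frac Ld=\frac{\ell}{8000\gamma}$ is independent of $d$ and can be as large as $\Theta(n)$; for instance with $d=2$, $n=100$, $\ell=160m$ one gets $R\leq 37$ and $\frac Ld\approx 1.9$, so $R(\frac Ld+1)\approx 108>n$, no contradiction. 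Your route can be salvaged by observing that your separators (long bad gaps, good gaps $>U\geq\ell'$) are a \emph{subset} of the paper's $>\ell'$ separators, so in fact $R\leq\beta+1$, and then pigeonhole on gap \emph{count} per run gives $\geq\ell'/(2\gamma)$ good gaps in some run, hence span $\geq L$ --- but that is the paper's argument, not the one you sketched.
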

\begin{proof}
    Let $\ell' = \frac{\ell}{4000}$. Note that $\ell' \geq \frac{4m}{n}$. We shall prove the lemma with $\frac{\ell}{4000}$ replaced with $\ell'$. Label the elements of $A$ as $a_1, \ldots, a_{n}$ in increasing order. For $i \in \mathbb{Z}[1, n-1]$, define $g_i = a_{i+1} - a_i$. Let $\alpha$ be the number of $g_i$'s that are at most $\ell'$, and $\beta$ be the number of $g_i$'s that are at least $\ell' + 1$. Then we have that
    \[
        \beta \leq \frac{m}{\ell' + 1}.
    \]
    These $\beta$ large gaps separate the $\alpha$ small gaps into at most $\beta + 1$ groups, and at least one group has cardinality at least
    \[
        \frac{\alpha}{\beta + 1} = \frac{n - 1- \beta}{\beta + 1} = \frac{n}{\beta+1} - 1 \geq \frac{n(\ell' + 1)}{m + \ell' + 1} - 1 \geq \frac{n\ell'}{m + \ell'} - 1 \geq \frac{n\ell'}{2(m + \ell')} \geq \frac{\ell'}{2\gamma}.
    \]
    The second inequality from the end is due to that $\ell' \geq 4m/n$ and $n \geq 4$.  Let $\{g_{i+1}, \ldots,  g_{i + k}\}$ be this group. Note that all the gaps in this group is at most $\ell'$. If any of them is not divisible by $d$, then we are done as we have a pair satisfying condition (i). In the following, we assume that they are all divisible by $d$.  Then $g_{i+j} \geq d$ for any $j \in \mathbb{Z}[1, k]$ since they are all positive.  If any $g_{i + j} \geq \frac{\ell' d}{4\gamma}$, then we are also done as the pair $(a_{i+j}, a_{i+j+1})$ satisfies condition (ii). Suppose that they are less than $\frac{\ell' d}{4\gamma}$. Let $j^*$ be the minimum integers such that 
    \[
        g_{i+1} + g_{i+2} + \cdots + g_{i + j^*} \geq \frac{\ell' d}{2\gamma}.
    \]
    Such $j^*$ must exist since $g_{i+j} \geq d$ and $k \geq \frac{\ell'}{2\gamma}$. Due to the minimality of $j^*$, the above sum is at most $\frac{\ell' d}{\gamma} \leq \ell' d$.  This implies that the pair $(a_{i+1}, a_{i + j^* + 1})$ satisfies condition (ii).
\end{proof}

\smallskip
\noindent\textbf{Remark.} We observe that the running time $O(n)$ in Lemma~\ref{lem:one-aug-pair} is actually due to computing all the gaps $a_{i+1}-a_i$. Suppose all the gaps are pre-computed, together with a data structure $\mathcal{D}$ that stores the following information: whether each gap is no more than $\ell'$; for each gap no more than $\ell'$, whether it is divisible by $d$; for each gap no more than $\ell'$ and divisible by $d$, whether it is less than $\frac{\ell' d}{4\gamma}$; and moreover, for consecutive indices of gaps satisfying the last condition, it also stores the smallest and largest indices.  Then it is easy to verify that with $\mathcal{D}$ finding the desired pair $(a,a')\in A\times A$ only takes $O(1)$ time.

When $A$ has lots of integers, then we can iteratively extra pairs of integers from $A$ via Lemma~\ref{lem:one-aug-pair}.

\begin{lemma}\label{lem:aug-pairs}
    Let $d$, $\ell$, and $n$ be three positive integers.  Let $A \subseteq \mathbb{Z}[1, m]$ be a sorted set of at least $n + 40000d\log d + 8000\gamma$ integers, where $\gamma = \frac{m}{n}+ \frac{\ell}{4000n}$.  Assume that $\ell \geq \frac{16000m}{n}$.
    In $O(|A|)$ time, we can obtain a conflict-free set $T \subseteq A \times A$ satisfying one of the following.
    \begin{enumerate}[label={\normalfont (\roman*)}]
        \item $|T| = 20000 d\log d$, $G_T \subseteq \mathbb{Z}[1, \frac{\ell}{4000}]$, and $d \nmid g$ for any $g \in G_T$.

        \item $|T| = 4000\gamma$, $G_T \subseteq \mathbb{Z}[\frac{\ell d}{8000\gamma}, \frac{\ell d}{4000}]$, and $d \mid g$ for any $g \in G_T$
    \end{enumerate} 
\end{lemma}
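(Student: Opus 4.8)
The plan is to build $T$ by repeatedly invoking Lemma~\ref{lem:one-aug-pair}. Keep a working set $A'$, initialised to $A$, and two initially empty lists $T_{(\mathrm i)},T_{(\mathrm{ii})}$ of pairs. As long as $|T_{(\mathrm i)}|<20000\,d\log d$ and $|T_{(\mathrm{ii})}|<4000\gamma$, apply Lemma~\ref{lem:one-aug-pair} to $A'$: it returns a pair $(a,a')\in A'\times A'$ falling under case (i) or case (ii); append it to the matching list and then delete both $a$ and $a'$ from $A'$. When the loop stops, one list has reached its target size; output that list (which has the stated size) as $T$. Conflict-freeness is automatic, since the two integers of each extracted pair are removed before the next invocation, so no integer is used twice. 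The gap and divisibility conditions in (i) and (ii) are exactly those guaranteed by Lemma~\ref{lem:one-aug-pair} for the extracted pairs; the one thing to check is that that lemma is applied to $A'$, not to $A$, so its case-(ii) lower bound reads $\tfrac{\ell d}{8000\gamma'}$ with $\gamma'=\tfrac{m}{|A'|}+\tfrac{\ell}{4000|A'|}$. But $|A'|\ge n$ throughout (verified next), hence $\gamma'\le\gamma$, so $\tfrac{\ell d}{8000\gamma'}\ge\tfrac{\ell d}{8000\gamma}$ and the extracted gap lies in $\bigl[\tfrac{\ell d}{8000\gamma},\tfrac{\ell d}{4000}\bigr]$; likewise $\ell\ge\tfrac{16000m}{n}\ge\tfrac{16000m}{|A'|}$ and $|A'|\ge n\ge4$ (we may assume $n\ge4$, the statement being vacuous or trivial otherwise), so the hypotheses of Lemma~\ref{lem:one-aug-pair} keep holding.

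Termination is a pigeonhole count: each iteration raises exactly one of $|T_{(\mathrm i)}|,|T_{(\mathrm{ii})}|$ by one, so after $K$ iterations $|T_{(\mathrm i)}|+|T_{(\mathrm{ii})}|=K$; while the loop is running, $|T_{(\mathrm i)}|<20000\,d\log d$ and $|T_{(\mathrm{ii})}|<4000\gamma$, forcing $K<20000\,d\log d+4000\gamma$. Thus there are at most $20000\,d\log d+4000\gamma$ iterations, deleting at most $40000\,d\log d+8000\gamma$ integers from $A'$ in total; since $|A|\ge n+40000\,d\log d+8000\gamma$, this keeps $|A'|\ge n$ throughout — which is what was used above.

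The main obstacle is the time bound: a from-scratch call to Lemma~\ref{lem:one-aug-pair} costs $\Theta(|A'|)=\Theta(|A|)$, which over $\Theta(d\log d+\gamma)$ iterations is far too slow for the claimed $O(|A|)$. This is exactly what the Remark after Lemma~\ref{lem:one-aug-pair} is designed to fix: I would, in a single $O(|A|)$ preprocessing pass, compute all gaps $a_{i+1}-a_i$ and the auxiliary structure $\mathcal D$ (tagging each gap as ``$>\ell'$'', ``$\le\ell'$ and not divisible by $d$'', or ``$\le\ell'$ and divisible by $d$'', and, within each maximal run of the last kind, recording which gaps fall below $\ell'd/(4\gamma)$ together with the extremal indices of consecutive such gaps), and keep $A'$ as a doubly linked list over the sorted order. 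Given $\mathcal D$, one extraction is $O(1)$; deleting the two endpoints of an extracted pair merges $O(1)$ adjacent gaps, re-classifies them, and touches $O(1)$ run records, so $\mathcal D$ is maintained in $O(1)$ amortised time per iteration. The subtlety here is that the value of $\gamma$ implicit in the threshold $\ell'd/(4\gamma)$ drifts upward as $|A'|$ shrinks; but $|A'|$ stays in $[n,|A|]$, so this drift is by at most a constant factor, and freezing the threshold at the value for $|A'|=|A|$ costs only constants in the analysis (an accumulated sum still lands in $\bigl[\tfrac{\ell d}{8000\gamma},\tfrac{\ell d}{4000}\bigr]$, using $\gamma\ge1$, which holds because $|A|\le m$ forces $m\ge n$). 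With $O(d\log d+\gamma)=O(|A|)$ iterations, the total cost is $O(|A|)$. I expect this data-structure maintenance under deletions, rather than any of the combinatorics, to be the step requiring the most care.
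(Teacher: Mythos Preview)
Your approach is the same as the paper's: iteratively apply Lemma~\ref{lem:one-aug-pair}, delete the two extracted endpoints, and use the data structure $\mathcal{D}$ from the Remark to make each subsequent extraction $O(1)$ after an $O(|A|)$ setup. Your explicit verification that $|A'|\ge n$ throughout, and the observation that $\gamma'\le\gamma$ so case-(ii) outputs land in the required interval, are details the paper leaves implicit.

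One small gap in your handling of the drift: the ratio $|A|/n$ is \emph{not} bounded by a constant under the hypotheses of the lemma (nothing forbids $40000d\log d+8000\gamma\gg n$), so ``drift by at most a constant factor'' does not follow, and freezing the threshold at the value for $|A'|=|A|$ can break the accumulation step inside the proof of Lemma~\ref{lem:one-aug-pair} (the group of consecutive small gaps is only guaranteed to have size $\ge \ell'/(2\gamma')$, which may be below $\ell'/(2\gamma_0)$). The clean fix is to freeze at $\gamma=m/n+\ell/(4000n)$ instead: inspecting the proof of Lemma~\ref{lem:one-aug-pair}, it goes through verbatim for any working set of size at least $n$ with this fixed $\gamma$, so the thresholds in $\mathcal{D}$ never change and the $O(1)$ maintenance under deletions is genuine. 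This is also the reading that makes the paper's own (terse) proof correct.
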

\begin{proof}
    We can iteratively extract pairs from $A$ via Lemma~\ref{lem:one-aug-pair} until the size of $A$ becomes $n$.  Every time we obtain a pair $(a, a')$, we remove $a$ and $a'$ from $A$. Therefore, the set of pairs we obtain must be conflict-free. The pair returned in each iteration either contributes to (i) or contributes to (ii), hence we end up with a sufficiently large $T$ satisfying either (i) or (ii). It remains to analyze the overall running time.
    
    If in each iteration, we compute a pair from scratch, then we could spend as much as $O(|A|^2)$ time in total. To reduce the running time, we can use the data structure $\mathcal{D}$ according to the remark above.
    The crucial observation is that whenever a pair of integers is extracted from $A$, the gaps of the remaining integers, together with its data structure can be computed in $O(1)$ time. Thus, only the first pair costs $O(|A|)$ time, and each of the other pairs costs only $O(1)$ time. 
\end{proof}

If Lemma~\ref{lem:aug-pairs} ends with case (ii), then we can use the pairs in $T$ to augment $P$ via Lemma~\ref{lem:aug-method-2}. But if it ends with case (i), we should first use the pairs in $T$ to generate the set $Q$ required by Lemma~\ref{lem:aug-idea-1}, and then use $Q$ to augment $P$.
To generate the set $Q$, we need the following lemma.

\begin{restatable}{lemma}{lemremainder}\label{lem:remainder}
    Let $d$ be a positive integer. Let $T \subseteq \mathbb{Z} \times \mathbb{Z}$ be a conflict-free set of $4n$ integer pairs. Assume that $G_T \subseteq [1, g_{\max}]$ for some integer $g_{\max}$ and that 
    \[
        d \leq \frac{n}{1000\log 2n}.
    \]
    Also, assume that $d \nmid g$ for any $g \in G_T$. In $O(n)$ time, we can compute a proper divisor $d'$ of $d$, two integers $s$ and $h$, and a subset $T^* \subseteq T$ of at most $1000d$ pairs such that $\mathcal{S}(A_{T^*})$ contains $\frac{d}{d'}$ integers $q_0, \ldots, q_{\frac{d-d'}{d'}}$ satisfying 
    \[
        h \leq q_i \leq h + 1000dg_{\max} \quad \text{and} \quad q_i \equiv s + id' \pmod d
    \]
    for every $i = 0, 1, \ldots, \frac{d -d'}{d'}$. Within the same time, we can build a data structure that, given any $i$, returns $q_i$ and a solution $q_i \in \mathcal{S}(A_{T^*})$ in $O(d)$ time.
\end{restatable}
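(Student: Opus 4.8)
The plan is to port the $k$-fold-sumset argument of Lemma~\ref{lem:remainder-ka} (and the pair/subset-sum dictionary of Lemma~\ref{lem:pair-to-subset-sum}) to the subset-sum setting, the obstruction being that in $\mathcal{S}(A_{T^*})$ each integer of $A$ may be used only once, so the single-pair "reuse it $\tfrac{d}{d'}$ times" trick of Lemma~\ref{lem:remainder-ka} has to be replaced by one that consumes many \emph{disjoint} pairs carrying the same gap (modulo $d$). First I would apply Lemma~\ref{lem:uniform} to $T$ to get an integer $u$ and a $u$-uniform conflict-free $T'\subseteq T$ with $|T'|\ge \frac{4n}{\log 8n}$; the hypothesis $d\le n/(1000\log 2n)$ then forces $|T'|\ge \Omega(d)$ (say $\ge 1000d$), i.e.\ $|G_{T'}|=|T'|/u$ distinct gap values, each of multiplicity exactly $u$, all in $[1,g_{\max}]$ and none divisible by $d$. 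This is the "dense" regime that makes everything go through constructively.

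Next I would run a Euclid-style staircase inside $\mathbb{Z}/d\mathbb{Z}$. Greedily pick gap values $g_1,g_2,\dots\in G_{T'}$ so that the gcds $e_0=d\succ e_1\succ\cdots$ with $e_k=\gcd(e_{k-1},g_k)$ strictly decrease, and stop at the first $e_r$ that is a proper divisor of $d$; set $d':=e_r$. To realize the coset jump $\langle e_k\rangle/\langle e_{k-1}\rangle$ of order $q_k:=e_{k-1}/e_k$: if $q_k\le u+1$ a single value $g_k$ used up to $q_k-1$ times suffices, exactly as in Lemma~\ref{lem:remainder-ka}; if $q_k>u+1$ (which happens when no gap value has large multiplicity), group roughly $q_k/u$ distinct values with the same level-$k$ projection and use a Cauchy--Davenport/Kneser-type bound in $\mathbb{Z}/e_{k-1}\mathbb{Z}$ to show that sums $\sum_\ell a_\ell g_k^{(\ell)}$ with $0\le a_\ell\le u$ already cover the whole coset. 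Either way the level-$k$ cost is $O(q_k)$ pairs, and since $\prod_k q_k=d/d'\le d$ the costs telescope to $O(d)$ pairs overall, which fits the $1000d$ budget once the constants from Step~1 are absorbed. Collect these pairs as $T^*\subseteq T'$ and put $h:=\sum$ of the smaller endpoints of all pairs of $T^*$.

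Finally I would assemble the $q_i$ and the witness. Set $s:=h\bmod d$. For $i\in\{0,\dots,\frac{d}{d'}-1\}$ decompose $id'$ along the staircase and flip the corresponding pairs (replace the smaller endpoint by the larger one); this gives a subset of $A_{T^*}$ summing to $q_i=h+(\text{sum of the flipped gaps})$, which satisfies $q_i\equiv s+id'\pmod d$ by construction and $h\le q_i\le h+1000d\,g_{\max}$ since at most $|T^*|\le 1000d$ gaps, each $\le g_{\max}$, are flipped. The data structure stores $d',s,h$, the list of pairs in $T^*$ tagged by level, and one extended-Euclid datum per level ($\le\log d$ levels computed in $O(\log d)$ gcd steps); given $i$ it produces the flip pattern, hence $q_i$ and the solution $q_i\in\mathcal{S}(A_{T^*})$, in time linear in $|T^*|=O(d)$. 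The construction is $O(n)$: Lemma~\ref{lem:uniform} is $O(n)$, scanning gaps and doing the greedy reductions is $O(n)+O(\log d)$, and building $T^*$ and the data structure is $O(|T^*|)=O(d)=O(n)$.

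The hard part is Step~2: making the staircase complete with only $O(d)$ disjoint pairs when $u$ is small. Then a Euclid step of order $q$ must be spread over $\Theta(q/u)$ distinct gap values, and one must prove that that many values genuinely \emph{cover} the coset (not merely hit $q/u\cdot u$ residues with collisions) — a Cauchy--Davenport-type statement in $\mathbb{Z}/e_{k-1}\mathbb{Z}$ — and do so constructively and fast, since $d$ can be nearly as large as $n$, ruling out a residue-indexed dynamic program. Handling composite $d$ without factoring it (so CRT is unavailable) and controlling how the stabilizer subgroups of successive levels interact, so that the per-index decomposition stays $O(d)$-time for the witness, is where the real work lies; everything else is bookkeeping built on Lemmas~\ref{lem:uniform}, \ref{lem:pair-to-subset-sum}, \ref{lem:ka} and the ideas already used for Theorem~\ref{thm:ka}.
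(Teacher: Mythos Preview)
Your proposal heads in a workable direction but takes a substantially harder road than the paper, and the difficulty you flag in Step~2 is real and is exactly what the paper avoids.

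The paper's proof does not build a Euclid staircase by hand. Instead it works with the \emph{residues} $R_T(d)=\{(z'-z)\bmod d:(z,z')\in T\}$ rather than the gaps $G_T$. It applies (the analogue of) Lemma~\ref{lem:uniform} to make $R_{T'}(d)$ $u$-uniform, then takes $k=\lfloor 1000d/|R_{T'}(d)|\rfloor\le u$, trims to a $k$-uniform $T^*$ with $|T^*|\le 1000d$, and applies Lemma~\ref{lem:ka} \emph{as a black box} to the set $R=R_{T^*}(d)\cup\{0\}\cup\{d\}\subseteq\mathbb{Z}[0,d]$. Lemma~\ref{lem:ka} directly returns an arithmetic progression $\{s'\}+\{0,d',2d',\ldots,dd'\}\subseteq kR$ with $d'=\gcd(R)$ (a proper divisor of $d$ since some residue lies strictly between $0$ and $d$), together with a witness. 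A modular analogue of Lemma~\ref{lem:pair-to-subset-sum} then converts each solution in $kR$ into a subset of $A_{T^*}$, giving the $q_i$'s and the $O(d)$ witness.

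Compared to your plan: uniformizing $G_{T'}$ instead of $R_{T'}(d)$ is a misstep, because what matters for hitting residue classes is how many disjoint pairs share a given \emph{residue}, not a given gap; and your Step~2 (covering a coset of order $q_k>u$ via a constructive Cauchy--Davenport/Kneser argument) is precisely the problem that Lemma~\ref{lem:ka} already solves in full generality. Once you view the residues as a subset of $\mathbb{Z}[0,d]$ and feed them to Lemma~\ref{lem:ka}, the entire staircase---including the ``hard part'' you identify---comes for free, and no new coset-covering argument is needed.
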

Comparing Lemma~\ref{lem:remainder} with Lemma~\ref{lem:ap-by-pairs}, one can find that these lemmas share a common flavor. Indeed, Lemma~\ref{lem:remainder} considers the remainder of integers modulo $d$, and can be regarded as a modular version of Lemma~\ref{lem:ap-by-pairs}.  We leave the proof of Lemma~\ref{lem:remainder} to Appendix~\ref{app:remainder}.

With Lemma~\ref{lem:aug-pairs} and Lemma~\ref{lem:remainder}, we can augment $P$ via Lemma~\ref{lem:aug-idea-1} and Lemma~\ref{lem:aug-method-2} when $|A|$ is large enough.

\begin{lemma}\label{lem:aug-one-step}
    Let $P = \{s\} + \{0, d, 2d, \ldots, \ell d\}$ be an arithmetic progression. Let $A$ be a sorted set of at least $n + 40000d\log d + 8000\gamma$ integers, where $\gamma = \frac{m}{n}+ \frac{\ell}{4000n}$. Assume that $\ell \geq \frac{16000m}{n}$.

    In $O(|A|)$ time, we can compute a proper divisor $d'$ of $d$, a subset $A' \subseteq A$ with $|A'| \leq 2000d + 8000\gamma$,  an arithmetic progression
    \[
        \{s'\} + \{0, d', \ldots, \ell'd'\} \subseteq P + \mathcal{S}(A').
    \]
    with $\ell' \geq \frac{3}{2}\ell$. Within the same time, we can build a witness with $O(d + \gamma)$ expected query time.
\end{lemma}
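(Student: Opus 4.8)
The plan is to drive everything through the dichotomy of Lemma~\ref{lem:aug-pairs}. First run Lemma~\ref{lem:aug-pairs} on $A$; in $O(|A|)$ time it returns a conflict-free $T\subseteq A\times A$ satisfying one of its two cases. In Case~(i) there are enough small pairs with gaps not divisible by $d$ to \emph{shrink} the common difference (the subset-sum version of the ``non-divisible pair'' step), and in Case~(ii) there are enough pairs with gaps divisible by $d$ to \emph{lengthen} $P$ (the subset-sum version of the ``divisible pair'' step). In both cases conflict-freeness guarantees that the integers of $A$ we touch form an honest subset sum, i.e.\ each is used at most once.

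\textbf{Case (i)} ($|T|=20000d\log d$, $G_T\subseteq\mathbb{Z}[1,\ell/4000]$, $d\nmid g$ for all $g\in G_T$). I would apply Lemma~\ref{lem:remainder} to $T$ with $g_{\max}=\ell/4000$. It produces a proper divisor $d'$ of $d$, integers $s_q,h$, a subset $T^*\subseteq T$ of at most $1000d$ pairs, and $\frac d{d'}$ integers $q_0,\dots,q_{(d-d')/d'}\in\mathcal{S}(A_{T^*})$ with $q_i\equiv s_q+id'\pmod d$, all lying in a window of length $1000d\cdot g_{\max}=d\ell/4$, together with an $O(d)$-time oracle returning each $q_i$ and a solution for $q_i\in\mathcal{S}(A_{T^*})$. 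Feeding $Q=\{q_0,\dots,q_{(d-d')/d'}\}$ into Lemma~\ref{lem:aug-idea-1}, with $h_{\max}-h_{\min}\leq\ell/4+1$ (the $q_i$ fit in a window of length $d\ell/4$), gives $\{s'\}+\{0,d',\dots,\ell'd'\}\subseteq P+Q\subseteq P+\mathcal{S}(A_{T^*})$ with $\ell'=(\ell-h_{\max}+h_{\min})\frac d{d'}\geq(\tfrac34\ell-1)\cdot2\geq\tfrac32\ell$ (using $\frac d{d'}\geq2$ since $d'\mid d,\ d'<d$, and $\ell\geq16000m/n$). Set $A'=A_{T^*}$, so $|A'|\leq2000d$. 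Witness: on query $s'+j'd'$, Lemma~\ref{lem:aug-idea-1} pins down $i=\frac{j'd'\bmod d}{d'}$, the Lemma~\ref{lem:remainder} oracle returns $q_i$ and its solution in $O(d)$ time, and $p=s'+j'd'-q_i\in P$; so the query time is $O(d)$.

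\textbf{Case (ii)} ($|T|=4000\gamma$, $G_T\subseteq\mathbb{Z}[\frac{\ell d}{8000\gamma},\frac{\ell d}{4000}]$, $d\mid g$ for all $g\in G_T$). I would enumerate $T=\{(a_1,a_1+g_1),\dots\}$ and apply Lemma~\ref{lem:aug-method-2} once per pair with $h=1$: with $P_0=P$ and $P_j=P_{j-1}+\{a_j,a_j+g_j\}$, each step is legal since $d\mid g_j$ and $g_j\leq\frac{\ell d}{4000}\leq\ell d\leq\ell_{j-1}d$, keeps the common difference $d$, and grows the length by $g_j/d\geq\frac\ell{8000\gamma}$. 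After all $4000\gamma$ steps, $\ell'=\ell+\sum_j g_j/d\geq\ell+4000\gamma\cdot\frac\ell{8000\gamma}=\tfrac32\ell$. By conflict-freeness the $8000\gamma$ integers $\{a_j,a_j+g_j\}_j$ are distinct, so $P_{4000\gamma}\subseteq P+\mathcal{S}(A_T)$; set $A'=A_T$, $|A'|=8000\gamma$. The witness chains the $O(1)$-query witnesses of the $4000\gamma$ applications of Lemma~\ref{lem:aug-method-2} (peel off one pair per level), for $O(\gamma)$ query time. The common difference is unchanged, so here the returned ``divisor'' is $d'=d$, which is harmless for the outer iteration: it only needs to keep growing the length, never needs $d$ to reach $1$.

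The main obstacle is constant-chasing together with one genuine edge case: verifying that the $20000d\log d$ pairs of Case~(i) really satisfy $d\leq\frac n{1000\log 2n}$ with $4n=|T|$ forces $d$ above an absolute constant, so a small-$d$ base case must be peeled off (it can be routed to Case~(ii), where divisibility by a tiny $d$ is automatic); one must also check that the floor bookkeeping in Lemma~\ref{lem:aug-idea-1} still yields $\ell'\geq\tfrac32\ell$, and that the totals match $|A'|\leq2000d+8000\gamma$, query time $O(d+\gamma)$, and running time $O(|A|)$ (using $|A|\geq8000\gamma$ and $|T|\leq|A|$). Everything else is routine accounting.
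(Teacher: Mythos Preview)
Your plan is essentially the paper's proof: apply Lemma~\ref{lem:aug-pairs}, then in Case~(i) feed the pairs into Lemma~\ref{lem:remainder} and augment via Lemma~\ref{lem:aug-idea-1}, while in Case~(ii) iterate Lemma~\ref{lem:aug-method-2} with $h=1$ over all $4000\gamma$ pairs. The paper's accounting matches yours (window $\le \ell d/4$ giving $\ell'\ge(\ell-\ell/4)\cdot\frac{d}{d'}\ge\frac32\ell$ in Case~(i); $|A'|=|A_{T^*}|\le 2000d$ there and $|A'|=|A_T|=8000\gamma$ in Case~(ii)), and your observation that Case~(ii) returns $d'=d$ is exactly what the paper does as well---the ``proper divisor'' clause in the statement is only realised in Case~(i), and the outer loop in Lemma~\ref{lem:long-ap-ss} terminates on length, not on $d$.

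One small correction to your obstacle paragraph: you cannot ``route small $d$ to Case~(ii)'', since Lemma~\ref{lem:aug-pairs} decides the case for you. The paper simply asserts that $|T|=20000\,d\log d$ already forces $d\le\frac{|T|}{1000\log(2|T|)}$, which is the hypothesis it actually feeds into Lemma~\ref{lem:remainder}; this holds for all $d\ge 2$ (check $\log(40000\,d\log d)\le 20\log d$), so no separate small-$d$ case is needed. Your floor bookkeeping ($h_{\max}-h_{\min}\le\ell/4+1$) is in fact slightly more careful than the paper's, which silently drops the $+1$.
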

\begin{proof}
    By Lemma~\ref{lem:aug-pairs}, in $O(|A|)$ times, we can obtain a conflict-free set $T \subseteq A \times A$ satisfying one of the following.
    \begin{enumerate}[label={\normalfont (\roman*)}]
        \item $|T| = 20000 d\log d$, $G_T \subseteq \mathbb{Z}[1, \frac{\ell}{4000}]$, and $d \nmid g$ for any $g \in G_T$.

        \item $|T| = 4000\gamma$, $G_T \subseteq \mathbb{Z}[\frac{\ell d}{8000\gamma}, \frac{\ell d}{4000}]$, and $d \mid g$ for any $g \in G_T$
    \end{enumerate} 
    In case (ii), every pair can be used to augment $P$ via Lemma~\ref{lem:aug-method-2}. Since each pair can be used only once, we set the parameter $h$ in Lemma~\ref{lem:aug-method-2} to be $1$. After applying Lemma~\ref{lem:aug-method-2} for each pair once, which takes a total time of $O(\gamma) \leq O(|A|)$, we can obtain an arithmetic progression 
    \[
        \{s'\} + \{0, d, \ldots, \ell'd\} \subseteq P + \mathcal{S}(A_{T}).
    \]
    as well as a witness of $O(\gamma)$ query time. Note that $|A_T| = 2|T| = 8000\gamma$. Since each pair increases the length of $P$ by at least $\frac{\ell}{8000\gamma}$, we have that 
    \[
        \ell' \geq \ell + \frac{\ell}{8000\gamma} \cdot 4000\gamma = \frac{3}{2}\ell.
    \]

    Now consider case (i). One can verify that 
    \[
            d \leq \frac{|T|}{1000\log (2|T|)}
    \]

    By Lemma~\ref{lem:remainder}, in $O(|T| ) \leq O(|A|)$ time, we can compute a proper divisor $d'$ of $d$, two integers $s$ and $h$, and a subset $T^* \subseteq T$ of at most $1000d$ pairs such that $\mathcal{S}(A_{T^*})$ contains $\frac{d}{d'}$ integers $q_0, \ldots, q_{\frac{d-d'}{d'}}$ satisfying 
    \[
        h \leq q_i \leq h + \frac{\ell d}{4} \quad \text{and} \quad q_i \equiv s + id' { \pmod d}
    \]
    for every $i = 0, 1, \ldots, \frac{d -d'}{d'}$. In the same time, we can build a data structure that, given any $i$, returns $q_i$ and a solution $q_i \in \mathcal{S}(A_{T^*})$ in {$O(d)$} time.

    Then we use $\mathcal{S}(A_{T^*})$ to augment $P$. Applying Lemma~\ref{lem:aug-idea-1} (by let $Q = \mathcal{S}(A_{T^*})$, $h_{\min} = \lfloor\frac{h - s}{d}\rfloor$, and $h_{\max} = \lfloor\frac{h + \frac{\ell d}{4} - s}{d}\rfloor$), we compute an arithmetic progression
    \[
        \{s'\} + \{0, d', \ldots, \ell'd'\} \subseteq P + \mathcal{S}(A_{T^*}).
    \]
    as well as a witness of $O(d)$ query time.  We have that
    \[
        \ell' \geq (\ell - h_{\max} + h_{\min}) \cdot \frac{d}{d'} \geq (\ell - \frac{\ell}{4} ) \cdot \frac{d}{d'} \geq \frac{3}{2}\ell.
    \]
    The last inequality is due to that $d'$ is a proper divisor of $d$.
    Note that $|A_{T^*}| = 2|T^*| \leq 2000d$.
\end{proof}

\subsection{Putting Things Together}
Now we are to prove Theorem~\ref{thm:ss} using the results of the previous two subsections.
\begin{lemma}\label{lem:long-ap-ss}
    Let $A \subseteq \mathbb{Z}[1,m]$ be a set of $6n$ integers. For any integer $\ell$ with 
    \[
        m \leq \ell \leq \frac{n^2}{10^7\log 2n}
    \]
    In $O(n \log n)$ time, we can compute a subset $A^* \subseteq A$ with $|A^*| \leq \frac{30000\ell\log n}{n}$, an arithmetic progression
    \[
        \{s\} + \{0, d, 2d, \ldots, \ell d\} \in \mathcal{S}(A^*)
    \]
    with $d \leq \frac{m}{n}$, and a witness with $O(\frac{\ell \log n}{n})$ expected query time for this arithmetic progression.
\end{lemma}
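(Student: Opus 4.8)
The plan is to build the required progression in two phases: first obtain a short arithmetic progression of length $\ell_0 = \Theta(m/n)$ from a small portion of $A$ via Lemma~\ref{lem:short-ap-ss}, then repeatedly lengthen it with Lemma~\ref{lem:aug-one-step}, each application spending a fresh slice of $A$, until the length reaches $\ell$. Concretely, after sorting $A$ I would fix $\ell_0$ to be a suitable constant multiple of $\lceil m/n\rceil$, large enough that $\ell_0 \geq 16000m/n$ (so the precondition of Lemma~\ref{lem:aug-one-step} holds for every progression that ever appears) yet small enough that $\ell_0 \leq n/(1000\log 2n)$; the hypothesis $\ell \leq n^2/(10^7\log 2n)$, hence $m \leq n^2/(10^7\log 2n)$, is exactly what makes the second inequality hold once constants are adjusted. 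Feeding $4n$ elements of $A$ to Lemma~\ref{lem:short-ap-ss} with target length $\ell_0$ then produces, in $O(n\log n)$ time, a subset $A^*_0$ with $|A^*_0| = O(m/n)$, a progression $P_0 = \{s_0\}+\{0,d_0,\ldots,\ell_0 d_0\} \subseteq \mathcal{S}(A^*_0)$ with $d_0 \leq m/n$, and a witness of expected query time $O(\ell_0)$.

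For the augmentation phase I would keep a pool of the elements of $A$ not yet committed. Given the current progression $P_{i-1} = \{s_{i-1}\}+\{0,d_{i-1},\ldots,\ell_{i-1}d_{i-1}\}$ with $d_{i-1} \leq m/n$ and $\ell_{i-1} \geq \ell_0 \geq 16000m/n$, if $\ell_{i-1} \geq \ell$ I stop; otherwise I draw from the pool a sorted chunk of the size demanded by Lemma~\ref{lem:aug-one-step} (with the parameter $n$ of the statement and $\gamma_{i-1} = m/n + \ell_{i-1}/(4000n)$) and apply it. This yields a divisor $d_i \mid d_{i-1}$, a subset $A'_i$ with $|A'_i| \leq 2000d_{i-1} + 8000\gamma_{i-1}$, a progression $P_i \subseteq P_{i-1} + \mathcal{S}(A'_i)$ with $\ell_i \geq \tfrac{3}{2}\ell_{i-1}$, and a witness. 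Crucially, the $\Theta(n)$ ``reserve'' elements that Lemma~\ref{lem:aug-one-step} (through Lemmas~\ref{lem:aug-pairs} and~\ref{lem:one-aug-pair}) leaves untouched are returned to the pool, so the only elements removed permanently at step $i$ are those entering $A'_i$ together with the (at most as many) pair-elements extracted but discarded as being of the ``wrong'' type. Taking $A^* = A^*_0 \cup \bigcup_i A'_i$ and chaining the containments $P_h \subseteq P_{h-1} + \mathcal{S}(A'_h) \subseteq \cdots \subseteq \mathcal{S}(A^*)$ (a genuine union of subset sums, since committed elements are never reused), then truncating $P_h$ to length exactly $\ell$ gives the desired progression, whose common difference is $d_h \leq d_0 \leq m/n$.

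The accounting rests on two facts. First, each augmentation multiplies the length by at least $3/2$, so only $h = O(\log(\ell n/m)) = O(\log n)$ iterations occur. Second, an iteration only changes the common difference when Lemma~\ref{lem:aug-one-step} actually returns a proper divisor, and then $d$ at least halves, so the common differences form a chain $d_0 \geq d_0/2 \geq \cdots$ and $\sum_i d_i\log d_i = O((m/n)\log n)$; meanwhile $\sum_i \ell_i$ is a geometric series dominated by its last term $\Theta(\ell)$, so $\sum_i \gamma_i = O((m/n)\log n + \ell/n)$. Hence $|A^*| = O((m/n)\log n + \ell/n) = O(\tfrac{\ell\log n}{n})$ using $m \leq \ell$; this is below $6n$ precisely when $\ell = O(n^2/\log n)$, which is why the upper bound on $\ell$ is needed, and it is at most $\tfrac{30000\ell\log n}{n}$ with the stated constants. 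The running time is $\sum_i O(n + d_i\log d_i + \gamma_i) = O(n\log n)$ (the per-iteration $O(n)$ dominating). For the witness, a query for a term of $P_h$ recurses down the $h$ levels, spending $O(d_i + \gamma_i)$ at level $i$ and $O(\ell_0)$ at the base, for a total expected query time of $O(\tfrac{\ell\log n}{n})$ by the same geometric estimate.

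I expect the main obstacle to be purely quantitative: there is little new conceptual content beyond invoking Lemmas~\ref{lem:short-ap-ss}, \ref{lem:aug-one-step}, and~\ref{lem:remainder}, but one must carefully track how many elements of $A$ each iteration permanently consumes and check that the pool never runs dry. The delicate points are (i) that the $\Theta(n)$ reserve of each call to Lemma~\ref{lem:aug-one-step} has to be recycled across iterations, so that the cumulative consumption is governed by $\sum_i |A'_i|$ rather than $\Theta(n\log n)$; (ii) that $\sum_i d_i\log d_i$ stays $O((m/n)\log n)$ and not $O((m/n)\log^2 n)$, which relies on $d$ halving whenever it moves; and (iii) lining up all the constants so that the total simultaneously fits under $6n$ and under $30000\ell\log n/n$ against the admissible range $m \leq \ell \leq n^2/(10^7\log 2n)$.
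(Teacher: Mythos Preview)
Your plan is the same as the paper's: seed a short progression with Lemma~\ref{lem:short-ap-ss} on a $4n$-element slice, then repeatedly apply Lemma~\ref{lem:aug-one-step} to the remaining pool, committing only the returned subset $A'_i$ each round and chaining the witnesses. The paper's only cosmetic difference is the choice $\ell_0 = 16000\ell/n$ instead of your $\Theta(m/n)$; either gives $h=O(\log n)$ rounds.

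There is, however, a genuine gap in your pool accounting. The parenthetical ``(at most as many)'' claim for the discarded wrong-type pairs is false: in case~(ii) of Lemma~\ref{lem:aug-one-step} the committed set has size $|A'_i|=8000\gamma_i$, while the discarded type-(i) bucket can contain up to $40000d_i\log d_i$ elements, and for early rounds (small $\ell_i$, $d_i$ still near $m/n$) the latter dominates. Your fallback, point~(ii), is also not right: the bound $\sum_i d_i\log d_i = O((m/n)\log n)$ would hold if $d$ halved \emph{every} round, but $d$ stays fixed throughout all case-(ii) rounds, of which there can be $\Theta(\log n)$ with $d=d_0$, giving only $\sum_i d_i\log d_i = O((m/n)\log^2 n)$.

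The fix is simply not to discard the wrong-type pairs: return \emph{everything} except $A'_i$ to the pool (reserve elements, wrong-type pairs, and in case~(i) also the right-type pairs in $T\setminus T^*$). None of these entered $\mathcal{S}(A'_i)$, so reusing them later creates no conflict. Then the cumulative consumption is exactly $\sum_i|A'_i|\leq\sum_i(2000d_i+8000\gamma_i)=O((\ell/n)\log n)$, and the term $40000d_i\log d_i$ appears only in the \emph{per-step} size requirement, which is handled individually (each round needs at most $n+O((\ell/n)\log n)\leq 3n/2$ elements, and the pool never drops below $2n-O((\ell/n)\log n)\geq 3n/2$). This is precisely how the paper proceeds, and with it your delicate point~(ii) disappears entirely.
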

\begin{proof}
    We first partition $A$ into $A'$ and $A''$ with $|A'| = 4n$ and $|A''| = 2n$. Let $\ell_0 = \frac{16000\ell}{n}$. We will use $A'$ to produce an arithmetic progression of length $\ell_0$, and use $A''$ to iteratively augment it via Lemma~\ref{lem:aug-one-step} until its length becomes at least $\ell$. 

    One can verify that 
    \[
        \frac{16000m}{n} \leq \ell_0 \leq \frac{n}{1000\log 2n}.
    \]
    By Lemma~\ref{lem:short-ap-ss},  in $O(n \log n)$ time, we can obtain a subset $A_0 \subseteq A'$ with $A_0 \leq 2000\ell_0$, an arithmetic progression 
    \[
        P_0 = \{s_0\} + \{0, d_0, \ldots, \ell_0d_0\} \subseteq \mathcal{S}(A_0)
    \]
    with $d_0 \leq \frac{m}{n}$, and a witness $\mathcal{W}_0$ with $O(\ell_0)$ expected query time.

    Then we shall augment the arithmetic progression using $A''$. At this stage, let's assume that Lemma~\ref{lem:aug-one-step} is always applicable, and we will prove this later. Given an arithmetic progression $P_i = \{s_i\} + \{0, d_i, \ldots, \ell_i d_i\}$, if $\ell_i < \ell$, we can compute a subset $A_{i+1}$ of $A''$ and use it to augment $P_i$  to obtain an arithmetic progression $P_{i+1}$ with $\ell_{i+1} \geq \frac{3}{2}\ell_i$ via Lemma~\ref{lem:aug-one-step}. We can also obtain a witness $\mathcal{W}_{i+1}$ for $P_{i+1} \subseteq P_i + \mathcal{S}(A_{i+1})$.  To avoid conflict, $A_i$ will be removed from $A''$ after the augmentation. Let $P_1, \ldots, P_h$ be the arithmetic progression we obtained. Since $\ell_0 = \ell/n$, after at most $2\log n$ iterations, the length of the arithmetic progression will be greater than $\ell$. Therefore, we have that $h \leq 2\log n$. 

    Let $A^* = A_0 \cup A_1 \cdots A_h$. Then we have that 
    \[
        P_h \subseteq \mathcal{S}(A_0) + \mathcal{S}(A_1) + \cdots + \mathcal{S}(A_h) = \mathcal{S}(A^*).
    \]
    Note that for $i \in \mathbb{Z}[0,h]$,
    \[
        d_i \leq d_0 \leq \frac{m}{n} \leq \frac{\ell}{n}
    \]
     and that for $i = \mathbb{Z}[0, h-1]$,
     \[
        |A_{i+1}| \leq 2000d_i + \frac{8000m}{n} + \frac{8000\ell_i}{4000n} \leq  \frac{10002\ell}{n}.
     \]
    So we have that 
    \[
        |A^*| = |A_0| + \cdots + |A_h| \leq 2000\ell + \frac{10002\ell}{n} \cdot 2\log n \leq \frac{30000\ell}{n}\cdot \log n.
    \]
    As to the witness, $\mathcal{W}_0, \ldots, \mathcal{W}_h$ together form a witness for $P_h \subseteq \mathcal{S}(A')$. The total query time is 
    \[
        \ell_0 + \sum_{i=1}^h (d_i + \frac{m}{n} + \frac{\ell_i}{n}) \leq O(\frac{\ell}{n}\cdot h) = O(\frac{\ell}{n} \log n).
    \]

    It remains to show that $A''$ indeed has enough integers so that Lemma~\ref{lem:aug-one-step} is always applicable when obtaining $P_1, \ldots,  P_h$. Recall that 
    \[
        |A^*| \leq \frac{30000\ell}{n}\cdot \log n \leq \frac{n}{300}.
    \]
    Therefore, $A''$ has at least $2n - \frac{n}{300} \geq \frac{3}{2}n$ integers during the procedure. Also, note that 
    \[
        n + 40000d_i\log d_i + \frac{8000m}{n} + \frac{8000\ell_i}{4000n} \leq n + 50000\frac{\ell}{n} \cdot \log n \leq n + \frac{n}{200}\leq \frac{3n}{2}   
    \]
    and that 
    \[
        \ell_i \geq \ell_0 \geq \frac{16000\ell}{n} \geq \frac{16000m}{n}.
    \]
    These imply that Lemma~\ref{lem:aug-one-step} is always applicable when we obtain $P_1, \ldots, P_h$.
\end{proof}

\thmss*
\begin{proof}
    Since
    \[
        m \leq \ell \leq \frac{n^2}{5 \times 10^8\log(2n)},
    \]
    we have that $n$ is sufficiently large so that $\lfloor\frac{n}{6} \rfloor \geq \frac{n}{7}$.  The theorem followings by replacing the $n$ in Lemma~\ref{lem:long-ap-ss} with $\lfloor\frac{n}{6} \rfloor$.
\end{proof}

\section{Applications}\label{sec:application}
\subsection{Unbounded Subset Sum}
We shall prove the following theorem.
\thmunbounded*

\begin{proof}
    Let $A_{n-1}=\{0,a_1,\cdots,a_{n-1}\}$ and $d=\gcd(A_{n-1})$, then it is straightforward that $\gcd(d,a_{n})=1$. Further, $d$ can be computed in $O(n\log a_{1})$ time. Notice that for any $1\le i\le n-1$, $a_i-a_{i-1}>0$ is always a multiple of $d$. Thus $a_{n-1}\ge (n-1)d$, which means
    \begin{eqnarray}\label{eq:unbounded-1}
        d\le \frac{a_{n-1}}{n-1}.   
    \end{eqnarray}

    Denote by $\frac{A_{n-1}}{d}=\{0,a_1/d,a_2/d,\cdots,a_{n-1}/d\}$. Apply Theorem~\ref{thm:ka} with $A=\frac{A_{n-1}}{d}$, $m=a_{n}-1$ and $k=\lceil\frac{a_n}{n}\rceil$, then within $O(n\log n+\log a_n)$ time we can compute some integer $s$ such that 
    \begin{eqnarray}\label{eq:unbounded}
    \{s\}+\{0,1,\cdots,a_{n}-1\}\subseteq 332k\frac{A_{n-1}}{d},    
    \end{eqnarray}  
    together with a witness. 
    
    Let $R=\{0,a_n,2a_n,\cdots,(d-1)a_n\}$. Since $\gcd(d,a_n)=1$, it is clear that $ia_n\not\equiv i'a_n \pmod d$ for $0\le i<i'\le d-1$, thus $R\equiv \{0,1,\cdots,d-1\}\pmod d$. Moreover, we can compute every $ia_n \pmod d$ in $O(n\log d)\le O(n\log a_1)$ time.
    
    Let $t_0 = (d-1)a_n+sd$. Consider arbitrary $t\geq t_0$. We first compute $t \bmod d$ and find some $0\le i_t\le d-1$ such that $i_ta_n\equiv t \pmod d$. Hence, $d|(t-i_ta_n)$, and moreover, $\frac{t-i_ta_n}{d}\ge s$. Consequently, for some $0\le r\le a_n-1$ and $q\in\mathbb{N}$, we have that
    $$\frac{t-i_ta_n}{d}-s=r+qa_n.$$
Thus,
$$\frac{t-i_ta_n}{d}\in \{qa_n+s\}+\{0,1,\cdots,a_n-1\}\subseteq \{qa_n\}+332k\frac{A_{n-1}}{d},$$
or equivalently,
$${t}\in \{(qd+i_t)a_n+sd\}+\{0,d,\cdots,(a_n-1)d\}\subseteq \{(qd+i_t)a_n\}+332k{A_{n-1}},$$
    giving a solution to $\sum_i a_ix_i=t$.
    
    
    It remains to upper bound $t_0$. Notice that $s\le 332ka_{n-1}/d\le 332\lceil\frac{a_n}{n-1}\rceil a_{n-1}/d$, combining Eq~\eqref{eq:unbounded-1}, we know that
    $$t_0\le 332\lceil\frac{a_n}{n-1}\rceil a_{n-1}+\frac{a_{n-1}a_n}{n-1}\le 333\lceil\frac{a_n}{n-1}\rceil a_{n-1}.$$

Finally, we analyze the running time. It is easy to see that computing $d,i_t$ takes $O(n\log a_1)$-time, computing $q,r$ takes $O(1)$-time. Recall that computing $s$ together with a witness for Eq~\eqref{eq:unbounded} takes $O(n\log n+\log a_n)$, and the witness returns a solution per query in $O(n\log n)$-time. Thus, the overall running time is bounded by $O(n\log a_n)$.   
\end{proof}

\subsection{Dense Subset Sum}
Let $\Sigma_A = \sum_{a \in A}a$. 
\begin{theorem}
     There exists a constant $c_\delta$ such that the following is true. Given a set $A \subseteq \mathbb{Z}[1,m]$ of $n$ integers and a target $t\leq \Sigma_A/2$. If
     \[
        t\geq 0.5 c_\delta \log (2n) m\Sigma_A/n^2,
    \]
    we can decide whether $t\in \mathcal{S}(A)$ in time $\widetilde{O}(n)$, and moreover,  if the answer is "yes", we can return $B\subseteq A$ such that $\Sigma_B=t$ in time $\widetilde{O}(n)$.
\end{theorem}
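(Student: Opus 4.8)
The plan is to follow the architecture of Bringmann and Wellnitz's near-linear-time algorithm for Dense Subset Sum~\cite{BW21} and to substitute, one by one, each of its non-constructive ingredients by a constructive counterpart established in this paper. At a high level, the decision algorithm of~\cite{BW21} shows that, under a density hypothesis, membership of the target in $\mathcal{S}(A)$ obeys a clean structural characterization: after a mild preprocessing of the input, there is a subset $A_1\subseteq A$ whose subset sums $\mathcal{S}(A_1)$ contain a long arithmetic progression $P=\{s\}+\{0,d,2d,\dots,\ell d\}$ with small common difference $d=O(m/n)$, and the remaining elements $A_2=A\setminus A_1$ serve only to shift into $P$, so that $t\in\mathcal{S}(A)$ if and only if there is $B_2\subseteq A_2$ with $\Sigma_{B_2}\leq t$ and $t-\Sigma_{B_2}\in P$; the latter condition amounts to hitting a prescribed residue class modulo $d$ inside an $O(\ell d)$-length window of $\mathcal{S}(A_2)$. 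The single genuinely non-constructive component is the existence of the long arithmetic progression, which~\cite{BW21} borrows from the finite addition theorem (Theorem~\ref{thm:ss-old}).

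We replace that component by Theorem~\ref{thm:ss}. Invoked on the preprocessed set with a length $\ell$ picked so that the shift step can succeed --- a choice that the density hypothesis $t\geq 0.5\,c_\delta\log(2n)\,m\Sigma_A/n^2$ guarantees is compatible with the upper bound $\ell\leq n^2/(5\times 10^8\log(2n))$ required by the theorem --- it returns, in $\tilde{O}(n)$ time, a coreset $A_1\subseteq A$ of size $\tilde{O}(\ell/n)$, the arithmetic progression $P\subseteq\mathcal{S}(A_1)$ with $d\leq 7m/n$, and a randomized witness $\mathcal{W}_P$ answering "solution in $\mathcal{S}(A_1)$" queries in $\tilde{O}(\ell/n)$ expected time. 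Crucially, since the coreset is genuinely sparse it removes only an $\tilde{O}(1/n)$-fraction of $A$, so $A_2$ still carries essentially all of $\Sigma_A$ and the structural argument of~\cite{BW21} applies unchanged. For the shift step we apply Theorem~\ref{thm:ss} a second time (or a bounded number of further times, matching the recursion depth of~\cite{BW21}) to $A_2$, obtaining an arithmetic progression in $\mathcal{S}(A_2)$ together with its witness, and intersect it with the required residue class modulo $d$; constructiveness there is supplied by the machinery of Sections~\ref{sec:density}, \ref{sec:ka}, and~\ref{sec:ss}. Finally the witnesses compose: once the decision procedure answers "yes", we read off $B_2\subseteq A_2$, query $\mathcal{W}_P$ at $t-\Sigma_{B_2}\in P\subseteq\mathcal{S}(A_1)$ to obtain $B_1\subseteq A_1$ with $\Sigma_{B_1}=t-\Sigma_{B_2}$, and output $B=B_1\cup B_2$, a subset of $A$ with $\Sigma_B=t$ (the two parts are disjoint because $A_1$ and $A_2$ are).

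The running time is $\tilde{O}(n)$: the preprocessing, the invocation(s) of Theorem~\ref{thm:ss}, and the bounded-depth shift step each cost $\tilde{O}(n)$, all witness data structures are built within the same budget, and answering the single query that produces $B$ costs $\tilde{O}(|B|)=\tilde{O}(n)$ since $B$ has at most $n$ elements and each is traced in $\polylog(n)$ time; the expected-time guarantee is inherited from the Las Vegas witnesses of Theorems~\ref{thm:ss} and~\ref{thm:ka}. I expect the main obstacle to be bookkeeping rather than any isolated idea: the algorithm of~\cite{BW21} also layers color-coding and a bounded recursion on top of the additive-combinatorial core, and one must verify that every such layer can be instrumented to emit a witness (recording the realized color classes for color-coding, composing witnesses level-by-level for the recursion) without harming the $\tilde{O}(n)$ budget, and that the nested parameter regimes line up so that the hypothesis $\ell\leq n^2/\polylog(n)$ of Theorem~\ref{thm:ss} is met precisely when the stated density condition on $t$ holds. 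A secondary technicality, deferred to the appendix as the remark after Theorem~\ref{thm:dense-ss-set-informal} indicates, is the passage to multisets, where elements of very large multiplicity are first peeled off and handled by an elementary greedy argument before the finite-addition machinery is applied.
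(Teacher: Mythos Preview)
Your high-level plan---follow the Bringmann--Wellnitz architecture and replace the non-constructive long-AP ingredient by Theorem~\ref{thm:ss}---is exactly what the paper does. Where you diverge is in how the ``shift step'' is handled, and there your description is both different from the paper and not obviously correct.

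The paper does \emph{not} split $A$ into a coreset $A_1$ and a remainder $A_2$ and then invoke Theorem~\ref{thm:ss} a second time on $A_2$. Instead it follows~\cite{BW21} more faithfully: first a divisor $\gamma$ is computed so that $A':=A(\gamma)/\gamma$ has no $\alpha$-almost divisor; one then finds $Y\subseteq \overline{A(\gamma)}$ with $\Sigma_Y\equiv t\pmod\gamma$ using an off-the-shelf near-linear modular Subset Sum algorithm (plus a pigeonhole shrink to enforce $|Y|\le\gamma$); and finally, inside $A'$, a three-way partition $R\cup P\cup G$ is used. Here $G$ is greedily filled to get within $O(md)$ of the target, $R$ is a small explicitly constructed set that is $d$-complete (i.e.\ $\mathcal{S}(R)\bmod d=\mathbb{Z}_d$, again handled by modular Subset Sum, not by Theorem~\ref{thm:ss}), and only $P$ invokes Theorem~\ref{thm:ss} to obtain the long AP together with its witness. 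Your proposal to ``apply Theorem~\ref{thm:ss} a second time to $A_2$ and intersect with the required residue class modulo $d$'' is the fragile part: the second AP would come with its own common difference $d_2$ unrelated to $d$, and intersecting an AP of step $d_2$ with a residue class mod $d$ need not produce anything useful. The paper sidesteps this entirely via the $d$-completeness of $R$.

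Two smaller corrections: the~\cite{BW21} dense algorithm does not use color-coding, so that layer of bookkeeping does not arise; and the $\gamma$-reduction step (removing almost-divisors) is not optional preprocessing but is essential for the $R/P/G$ structure lemma to apply, so it should appear explicitly in your outline rather than be absorbed into ``mild preprocessing.''
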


The proof the theorem is quite involved. Basically one should follow the original proof in~\cite{BW21}, and analyze the time of every step of back-tracing. We defer it to Appendix~\ref{apsec:dense}. We also remark that Appendix~\ref{apsec:dense} actually proves a more general theorem that works for multi-sets.

\section{Conclusion}\label{sec:conclude}
In this paper, we show that there exists a near-linear time algorithm which computes a long arithmetic progression in sumsets and subset sums.  As an application, we obtain near-linear time algorithms for the search version of dense Subset Sum, and the search version of unbounded Subset Sum within the region $t\ge t_0=\Theta(a_n^2/n)$. 

We remark that Subset Sum admits an algorithm of running time $\widetilde{O}(n+\sqrt{a_{\max}t})$ that solves the decision version~\cite{chen2024improved}, but we are not able to obtain an algorithm with a similar running time for the search version by using our results in this paper. The main reason is that the $\widetilde{O}(n+\sqrt{a_{\max}t})$-time algorithm is based on another combinatorics result by Szemer{\'e}di and Vu~\cite{SV05}, which roughly states that given a sequence of integer sets $A_1,A_2,\cdots,A_\ell$ with $A_i\subseteq \mathbb{Z}[1,m]$, if $\sum_i |A_i|=\Omega(m)$, then there exists an arithmetic progression of length $m$ in $A_1+A_2+\cdots+A_\ell$. Though of a similar flavor, Szemer{\'e}di and Vu's theorem is proved by a completely different approach than the two finite addition theorems studied in this paper. It is far from clear whether the techniques for finite addition theorems also work there. We remark that Lev~\cite{lev2010consecutive} presented a proof for a weaker version of Szemer{\'e}di and Vu's theorem. Lev's proof shares a similar flavor as the proof for finite addition theorems, but it requires additionally that none of the $A_i$'s are contained in an arithmetic progression with difference greater than 1. It is not clear how to get rid of such a condition on $A_i$'s, meanwhile, the $\widetilde{O}(n+\sqrt{a_{\max}t})$-time algorithm cannot follow from this weaker version. It is an important open problem whether we can obtain a (near-)linear time algorithm for the Szemer{\'e}di and Vu's theorem which produces the arithmetic progression in $A_1+A_2+\cdots+A_\ell$ together with a witness.

\clearpage
\appendix
\section{A Proof for Lemma~\ref{lem:dense-half-cons}}\label{app:comp-u}
We shall prove that the integer $u$ in Lemma~\ref{lem:dense-half-exist} can be computed in $O(n \log n)$ time. Recall Lemma~\ref{lem:dense-half-exist}.
\lemdensehalf*

\densehalfcons*

\begin{proof}
Suppose that there is some $u$ satisfying case (i) of Lemma~{\ref{lem:dense-half-exist}}. We show that it can be found in $O(n \log n)$ time. Case (ii) can be tackled similarly.

We sort $A$ and label its elements as $\{a_1, \ldots, a_n\}$ in increasing order. Since $|A[u+1, u+1]| \geq \frac{1}{2r} > 0$, we have that $u+1 \in A$. Therefore, $u$ must be $a_{i} - 1$ for some $a_{i} \leq \frac{m}{2} + 1$.  Finding $u$ is equivalent to finding $a_{i^*} \in A$ such that $a_{i^*} \leq \frac{m}{2} + 1$ and that for every $v \in [a_{i^*}, m]$,
    \[
        |A[a_{i^*}, v]|\geq \frac{v - a_{i^*} + 1}{2k}, \quad \textrm{or equivalently,} \quad \frac{|A[a_{i^*}, v]|}{v - a_{i^*} + 1} \geq \frac{1}{2k}.
    \]
    For simplicity, we define $a_{n+1} = m + 1$.  It is easy to observe that the minimum value of $\frac{A[a_{i^*}, v]}{v - a_{i^*} + 1}$ must be reached when $v = a_i -1$ for some $i > i^*$ and that $|A[a_{i^*}, a_i - 1]| = i - i^*$. Therefore, finding $u$ is equivalent to finding $i^*$ such that $a_{i^*} \leq \frac{m}{2} + 1$ and that for all $i \in [i^*+1, n+1]$,
    \[
        (i - i^* ) \geq \frac{a_i - a_{i^*}}{2k}, \quad \textrm{or equivalently,} \quad 2k(i - i^* ) \geq (a_i - a_{i^*}).
    \]

For $1 \leq i \leq j \leq n+1$, we say the pair $(i, j)$ of indices is good if 
\[
    2k(j - i) \geq (a_j - a_i)
\]
and bad, otherwise. Therefore, our goal is to find the smallest index $i^*$ such that $(i^*, k)$ is good for all $k$ with $i^* \leq k \leq n+1$.  We claim that, in $O(n)$ time, Algorithm~\ref{alg:dense-half} returns the target $i^*$. 

\begin{algorithm}
\caption{An Linear-Time Algorithm to Find $i^*$}
\label{alg:dense-half}
\begin{algorithmic}[1]
    \Statex \textbf{Input}: $n$ integers with $0 \leq a_1 < a_2 \ldots < a_n \leq m$
    \State $a_{n+1} \leftarrow m+1$
    \State $i \leftarrow 1$, $j \leftarrow 1$
    \While{$j \leq n+1$}
        \State $j \leftarrow j + 1$
        \While{$2k(j - i) < a_j - a_i$}
            \State $i\leftarrow i + 1$
        \EndWhile
    \EndWhile
    \State \Return $i$
\end{algorithmic}
\end{algorithm}

We make the following observations.

\begin{enumerate}[label={(\alph*)}]
    \item If $(i,j)$ is bad but $(i', j)$ is good for some $i'$ with $i < i' \leq j$, then $(i, i')$ must be bad.

    \item If $(i,j)$ is good but $(i, i')$ is bad for some $i'$ with $i < i' \leq j$, then $(i', j)$ must be good.
\end{enumerate} 

To prove the correctness of the algorithm, we first prove that the outer while loop of Algorithm~\ref{alg:dense-half} maintains the invariant that for any $k$ with $i \leq k \leq j$, the pair $(i, k)$ is good. Before the first iteration, we have that $i = j = 1$, so the invariant holds. We shall show that each iteration preserves this invariant. Suppose that before an iteration, the invariant holds for $i$ and $j$. If the pair $(i, j+1)$ is good, then the current iteration does not update $i$, and clearly, the invariant holds for $i$ and $j + 1$. If $(i, j+1)$ is bad, then the current iteration updates $i$ to $i'$ so that $(i' ,j + 1)$ is good. We should show that $(i', k)$ is good for all $k$ with $i' \leq k \leq j+1$. If $k = j+1$, then this is obvious. Assume that $k \leq j$. By observation (a), the pair $(i, i')$ must be bad. Recall that the invariant before the current iteration guarantees that $(i, k)$ is good. Then by observation (b), $(i', k)$ must be good. This finishes the proof of the invariant.

Let $i$ be the index returned by the algorithm. The invariant implies that $(i, k)$ is good for all $k$ with $i^* \leq k\leq n+1$. To prove the correctness of the algorithm, it remains to show that the algorithm must return the smallest such $i$. Let $i^*$ be the smallest such index $i$. Note that the algorithm increases $i$ one by one. Therefore, it must be that $i = i^*$ in some iteration of the algorithm. But then since $(i^*, k)$ is good for all $k$ with $i^* \leq k\leq n+1$, the algorithm will never update $i$ again. Therefore, it will return $i^*$ at the end.

It is easy to see that the running time of the algorithm is $O(n)$.
\end{proof}

\section{A Proof for Lemma~\ref{lem:remainder}}\label{app:remainder}
We shall prove the following lemma in this section.
\lemremainder*

Recall the following definition. 
\defpair*

Given an integer $d$, we further define $R_T(d) = \{ (z'- z) \bmod d: (z, z' ) \in T \}$. The each $r \in R_T(d)$, the multiplicity of $r$ is defined to be the number of pairs in $T$ whose gap is congruent to $r$ modulo $d$. That is, the multiplicity of $r$ is $|\{(z, z') \in T : z' - z \equiv r \pmod d\}|$. We say that $R_T(d)$ is $u$-uniform if the multiplicity of every $r \in R_T(d)$ is $u$. The proof of Lemma~\ref{lem:remainder} is quite similar to that of Lemma~\ref{lem:ap-by-pairs}, except that now we use $R_T(d)$ instead of $G_T$.

We can always make $R_T(d)$ $u$-uniform for some $u$ at the cost of a logarithmic factor.
\begin{lemma}\label{lem:uniform-remainder}
    Let $d$ be a positive integer. Let $T$ be a set of $n$ integer pairs.  In $O(n)$ time, we can obtain an integer $u$ and a subset $T' \subseteq T$ with 
    \[
        |T'| \geq \frac{n}{\log 2n}
    \]
    such that $R_{T'}(d)$ is $u$-uniform.
\end{lemma}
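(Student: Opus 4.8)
The plan is to mimic the proof of Lemma~\ref{lem:uniform} almost verbatim, with $G_T$ replaced by $R_T(d)$. The only structural facts we used there were: (a) the "multiplicity" function partitions $T$, so the sum of multiplicities equals $|T|=n$; (b) there are at most $n$ distinct multiplicity levels. Both hold for $R_T(d)$ as well, since every pair of $T$ contributes its gap's residue to exactly one element of $R_T(d)$, and the multiplicity of any $r\in R_T(d)$ is an integer between $1$ and $n$.

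Concretely, first I would define, for each integer $u\geq 1$,
\[
    R^u_T(d) = \{\, r \in R_T(d) : \text{the multiplicity of } r \text{ is at least } u \,\}.
\]
Since the multiplicity of any residue is at most $n$, we have $R^u_T(d)=\emptyset$ for $u>n$, and the telescoping identity
\[
    \sum_{u=1}^{n} |R^u_T(d)| = \sum_{u=1}^{n} u\bigl(|R^u_T(d)| - |R^{u+1}_T(d)|\bigr) = \sum_{r\in R_T(d)} (\text{multiplicity of } r) = |T| = n
\]
holds exactly as before. Then a contradiction argument identical to the one in Lemma~\ref{lem:uniform} shows that $|R^u_T(d)| \geq \frac{n}{u\log 2n}$ for some $u\geq 1$: if not, summing the strict inequality $|R^u_T(d)| < \frac{n}{u\log 2n}$ over $u=1,\dots,n$ and using $\sum_{u=1}^n \frac{1}{u}\leq \log 2n$ gives $\sum_u |R^u_T(d)| < n$, contradicting the telescoping identity.

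Next I would fix such a $u$ (it can be found in $O(n)$ time by one pass: bucket each pair by $(z'-z)\bmod d$, tally the bucket sizes, then scan $u=1,\dots,n$ maintaining $|R^u_T(d)|$). To build $T'$, for each residue $r\in R^u_T(d)$ pick exactly $u$ pairs of $T$ whose gap is $\equiv r\pmod d$; this is possible because $r$ has multiplicity $\geq u$. The resulting $T'\subseteq T$ has $R_{T'}(d)=R^u_T(d)$ with every residue of multiplicity exactly $u$, so $R_{T'}(d)$ is $u$-uniform, and
\[
    |T'| = u\,|R^u_T(d)| \geq u\cdot\frac{n}{u\log 2n} = \frac{n}{\log 2n}.
\]
All steps are linear-time (bucketing by residue mod $d$, tallying, and selecting the required pairs), giving the claimed $O(n)$ bound.

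There is essentially no hard part here: this lemma is a drop-in modular analogue of Lemma~\ref{lem:uniform}, and the only point requiring a moment's care is confirming that the multiplicity function on residues genuinely partitions $T$ (each pair has a single well-defined gap, hence a single residue class mod $d$), so that the telescoping sum equals $|T|$ — everything else is the same pigeonhole-style counting argument already carried out for $G_T$.
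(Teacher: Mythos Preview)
Your proposal is correct and is exactly the approach the paper intends: the paper explicitly omits the proof, noting it is ``similar to that of Lemma~\ref{lem:uniform},'' and what you have written is precisely that proof with $G_T$ replaced by $R_T(d)$. The telescoping/pigeonhole argument and the linear-time bucketing-by-residue implementation are the expected details.
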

We omit the proof since it is similar to that of Lemma~\ref{lem:uniform}.  

When $T$ is conflict-free and $R_T(d)$ is $u$-uniform, we can map each integer in $u(R_T(d) \cup \{0\} \cup \{d\})$ to an integer in $\mathcal{S}(A_T)$ so that certain conditions are satisfied.

\begin{lemma}\label{lem:pair-to-subset-sum-remainder}
    Let $T \subseteq \mathbb{Z} \times \mathbb{Z}$ be a conflict-free set of $n$ integer pairs. Suppose that $G_T \subseteq \mathbb{Z}[1, g]$ and that $R_T(d)$ is $u$-uniform.  Then in $O(n)$ time we can compute an integer $s_T$ such that for any $z \in u(R_T(d) \cup \{0\} \cup \{d\})$, there is some integer $q \in \mathcal{S}(A_T)$ such that $q \equiv s_T + z \pmod d$ and that $s_T \leq r \leq s_T + ug$. Moreover, given any solution for $z \in u(R_T(d) \cup \{0\} \cup \{d\})$, in $O(n)$ time, we can convert it to a solution for the corresponding $q \in \mathcal{S}(A_T)$.
\end{lemma}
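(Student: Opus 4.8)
The plan is to mimic the proof of Lemma~\ref{lem:pair-to-subset-sum}, but to work with gaps taken modulo $d$ rather than with exact gaps, and in addition to track the true value of the resulting subset sum so that it stays within $[s_T,s_T+ug]$. Write the pairs of $T$ as $\{(a_1,b_1),\ldots,(a_n,b_n)\}$ and set $s_T=\sum_{i=1}^n a_i$, which costs $O(n)$ time. As preprocessing I would also build, in $O(n)$ time, for each residue $r\in R_T(d)$ the list of indices $i$ with $b_i-a_i\equiv r\pmod d$; by the $u$-uniformity of $R_T(d)$ each such list has exactly $u$ entries.

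Given $z\in u(R_T(d)\cup\{0\}\cup\{d\})$, write $z=r_1+\cdots+r_u$ with each $r_j\in R_T(d)\cup\{0\}\cup\{d\}$. For each residue value $r\in R_T(d)$ that occurs among the $r_j$, say $m_r\le u$ times, I pick $m_r$ distinct pairs from the length-$u$ list of $r$; pairs chosen for distinct residues are automatically distinct. Let $I$ be the index set of all pairs picked, so $|I|=\#\{j:r_j\in R_T(d)\}\le u$, and set
\[
    q=\sum_{i\notin I}a_i+\sum_{i\in I}b_i=s_T+\sum_{i\in I}(b_i-a_i).
\]
Since $T$ is conflict-free the $2n$ integers $a_1,b_1,\ldots,a_n,b_n$ are pairwise distinct, so $q\in\mathcal{S}(A_T)$. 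Reducing modulo $d$, each $b_i-a_i$ with $i\in I$ equals the residue it was picked for, the summands $r_j=0$ contribute $0$, and the summands $r_j=d$ contribute $d\equiv0$; hence $q\equiv s_T+\sum_j r_j=s_T+z\pmod d$. Finally $G_T\subseteq\mathbb{Z}[1,g]$ gives $1\le b_i-a_i\le g$, so $0\le q-s_T\le|I|\,g\le ug$, which is the claimed bound. The conversion of a solution is the same bookkeeping run with the precomputed lists -- look up the residue of each summand, take the next unused pair of that list, and output the subset $\{a_i:i\notin I\}\cup\{b_i:i\in I\}$ -- and it runs in $O(u)+O(n)=O(n)$ time.

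I do not expect a genuine obstacle: the statement is a routine modular analogue of Lemma~\ref{lem:pair-to-subset-sum}. The two points that need a little care are that the pairs picked for different residue classes are genuinely distinct (this is exactly why $u$-uniformity of $R_T(d)$ is assumed, paralleling the role of $u$-uniformity of $G_T$ there), and that the extra generator $d$ causes no trouble because it vanishes modulo $d$, while the new boundedness claim $q\le s_T+ug$ is immediate from $|I|\le u$ together with $G_T\subseteq\mathbb{Z}[1,g]$.
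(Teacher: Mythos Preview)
Your proposal is correct and follows essentially the same argument as the paper's proof: define $s_T=\sum_i a_i$, discard the summands equal to $0$ or $d$ (they vanish modulo $d$), use $u$-uniformity of $R_T(d)$ to select pairwise distinct pairs for the remaining at most $u$ summands, and read off both the congruence $q\equiv s_T+z\pmod d$ and the range bound $s_T\le q\le s_T+ug$ from $G_T\subseteq\mathbb{Z}[1,g]$. The only cosmetic difference is that you spell out the bucketing of pairs by residue class, which the paper leaves implicit.
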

\begin{proof}
    Let $\{(a_1, b_1), \ldots, (a_n, b_n)\}$ be the pairs in $T$. Let $s_T = \sum_{i=1}^n a_i$. Consider an arbitrary integer $z \in u(R_T(d) \cup \{0\} \cup \{d\})$.
    \[
        z = r_1 + r_2 + \cdots + r_u.
    \]
    Without loss of generality, assume that $r_1, \ldots, r_k$ are from $R_T(d)$, and $r_{k+1}, \ldots, r_u$ are either $0$ or $d$.  Let $z' = r_1 + \cdots + r_k$. We have that 
    \[
        z' \equiv z \pmod d.
    \]
    Since $R_{T'}(d)$ is $u$-uniform and $k \leq u$, we can find $k$ distinct pairs $(a_{i_1}, b_{i_1}), \ldots, (a_{i_k}, b_{i_k})$ from $T$ such that $r_j = (b_{i_j} - a_{i_j}) \bmod d$. Let $z'' = (b_{i_1} - a_{i_1}) + \cdots + (b_{i_k} - a_{i_k})$. We have that 
    \[
        z'' \equiv z' \pmod d \quad \text{and}\quad 0\leq z'' \leq kg \leq ug.
    \]
    Let $q = z'' + s_T$. Then $q \equiv z'' + s_T \equiv z + s_T \pmod d$ and $s_T \leq q \leq s_T + ug$. Let $I = \{i_1, \ldots, i_k\}$.  
    \[
        q = \sum_{i \notin I} a_i + \sum_{i\in I} b_i.
    \]
    Since $T$ is conflict-free, 
    \[
        q \in \mathcal{S}(A_{T^*})
    \] 

    The above argument actually converts a solution for $z \in u(R_T(d) \cup \{0\} \cup \{d\})$ to a solution for $q \in \mathcal{S}(A_{T^*})$. It is easy to see that the conversion can be done in $O(n)$ time.
\end{proof}

Now we are ready to prove Lemma~\ref{lem:remainder}.
\lemremainder*
\begin{proof}
    By Lemma~\ref{lem:uniform}, in $O(n)$ time, we can obtain an integer $u$ and a subset $T' \subseteq T$ such that $R_{T'}(d)$ is $u$-uniform and that
    \begin{equation}\label{eq:ap-by-pairs-remainder-2}
        |T'| \geq \frac{n}{\log 2n}.
    \end{equation}
    Note that $T'$ is conflict-free as it is a subset of $T$. Since $R_{T'}(d)$ is $u$-uniform,
    \begin{equation}\label{eq:ap-by-pairs-remainder-3}
        |T'| = u|R_{T'}(d)|. 
    \end{equation}
    Let $k = \lfloor \frac{1000d}{|R_{T'}(d)|} \rfloor$. In view of~\eqref{eq:ap-by-pairs-remainder-2},
    \[
        k \leq \frac{1000d}{|R_{T'}(d)|} = \frac{1000d}{|T'|} \leq \frac{1000d\log 2n}{n} \cdot u \leq u.
    \]
    The last inequality is due to that $d \leq \frac{n}{1000\log 2n}$. We can obtain a $k$-uniform subset $T^* \subseteq T'$ as follows. For each $r \in R_{T'}(d)$, by selecting $k$ distinct pairs from $T'$ whose gaps are $g$ .  It is easy to see that $R_{T^*}(d) = R_{T'}(d)$. Moreover, $T'$ must be conflict-free. Therefore,
    \[
        |T^*| = k|R_{T^*}(d)| \leq \frac{1000d}{|R_{T^*}(d)|} \cdot |R_{T^*}(d)| \leq 1000d.
    \]

    Also, note that 
    \[
        k \geq \frac{1000d}{|R_{T^*}(d)|} - 1 \geq \frac{996d}{|R_{T^*}(d)|}.
    \]
    Let $R = R_{T^*}(d) \cup\{0\} \cup \{d\}$.  One can see that $R \subseteq \mathbb{Z}[0, d]$. By Lemma~\ref{lem:ka}, we can compute an arithmetic progression 
    \[
        \{s'\} + \{0, d', \ldots, d d'\} \subseteq kR
    \]
    and a witness $\mathcal{W}$ for this arithmetic progression. Note that $d' = \gcd(R)$. Since $d \nmid g$ for any $g \in G_T$, we have that $R_{T^*}(d) \subseteq \mathbb{Z}[1, d-1]$. As a result, $R$ contains integers other than $0$ and $d$. Therefore, $d'$ must be a proper divisor of $d$. The construction time is 
    \[
        O(|R|\log d) \leq  O(n),
    \]
    and the expected query time of $\mathcal{W}$ is
    \[
        O(k \log |R|) = O(\frac{d}{|R|} \log |R|) \leq O(d).
    \]
    Then by Lemma~\ref{lem:pair-to-subset-sum-remainder}, in $O(|T^*|)$ time, we can compute an integer $s_{T^*}$ such that for any $i = 0, 1, \ldots, d^*$, there is an integer $q_i \in \mathcal{S}(A_{T^*})$,
    \[
        q_i \equiv s' + s_{T^*} + id' \quad \text{and} \quad s_{T^*} \leq r_i \leq s_{T^*} + kg \leq S_T + 1000dg.
    \]
    Moreover, given any $i \in \mathbb{Z}[0, d]$, we can use $\mathcal{W}$ to obtain $q_i$ and a solution for $q_i \in \mathcal{S}(A_{T^*})$ as follows: first use $\mathcal{W}$ to obtain a solution for $s' + id \in kR$, and then convert it to a solution for $q_i\in \mathcal{S}(A_{T^*})$ via Lemma~\ref{lem:pair-to-subset-sum-remainder}. The first step takes $O(d)$ time, and the second step takes $O(T^*) = O(d)$ time. So the total query time is $O(d)$.
\end{proof}

\section{Application: Dense Subset Sum}\label{apsec:dense}

In this section, we revisit the dense Subset Sum theorem by Bringmann and Wellnitz~\cite{BW21}. 
Roughly speaking, the dense Subset Sum theorem states that, given a Subset Sum instance with $N$ integers in $[1,m]$, if $N$ is sufficiently large compared with $m$, then within $\widetilde{O}(N)$ time we can decide whether an arbitrary target $t$ can be hit, provided that $t$ is not too far away from the middle (i.e., the sum of all integers divided by 2). See Theorem~\ref{thm:BW21} below as a formal description. Bringmann and Wellnitz explicitly stated that their algorithm only resolves the decision in near-linear time, but not the solution reconstruction (see Remark 3.7 of~\cite{BW21}). We show that, with our Theorem~\ref{thm:ss}, we can also reconstruct a solution in near-linear time. 

Towards formally describing the dense Subset Sum theorem, we need to introduce some notions together with some parameters.

Throughout this section, we are dealing with multi-sets. For a multi-set $A$ and an integer $x$, we denote by $\mu(x;A)$ the multiplicity of $x$ in $A$. A number that does
not appear in $A$ has multiplicity 0. A subset $A'\subseteq A$
is also a multi-set with $\mu(x;A')\le \mu(x;A)$ for all integer $x$.

The following notations are in line with that of~\cite{BW21}.
\begin{itemize}
    \item Size $|A|$: the number of elements in $A$, counted with multiplicity, i.e., $|A|=\sum_x \mu(x;A)$. We shall also refer to it as $N$, to distinguish it with $n$ when $A$ is a set in previous sections.
    \item Multiplicity $\mu_A$: the maximal multiplicity, i.e., $\mu_A=\max_x\mu(x;A)$.  
    \item $\Sigma_A$: the sum of all elements in $A$, i.e., $\Sigma_A=\sum_x x\mu(x;A)$. 
    \item $m_A$: the largest integer in $A$. We may also write it as $m$ if it is clear from the context which multi-set we are discussing.
    \item $\delta$-dense. A multi-set $A$ is $\delta$-dense if it satisfies that $|A|^2\ge \delta\mu_A m$.
    \item $A(d)$ and $\overline{A(d)}$. We use $A(d):=A\cap d\mathbb{Z}$ to denote the multi-set of all integers in $A$ that are divisible by $d$, and $\overline{A(d)}=A\setminus A(d)$ the multi-set of all integers in $A$ that are not divisible by $d$.
    \item Almost divisor. An integer $d>1$ is called an $\alpha$-almost divisor of $A$ if $|\overline{A(d)}|\le \alpha\mu_A\Sigma_A/|A|^2$.
\end{itemize} 
Define parameters $\alpha$, $\delta$ with respect to a multi-set $A$ as
\begin{itemize}
    \item $\alpha:=c_{\alpha}\log (2\mu_A)$;
    \item $\delta:=c_\delta \log (2N)\log^2 (2\mu_A)$;
\end{itemize}
where $c_\alpha,c_\delta$ are fixed constant.

With the above notations, we are ready to present the dense Subset Set Sum theorem in~\cite{BW21}.

\begin{theorem}[Cf. Theorem 4.6 of \cite{BW21}]\label{thm:BW21}
    Given any multi-set $A$ of $N$ integers, the following holds:
    \begin{itemize}
        \item[(i).] Let $C_\lambda:=c_\lambda\log (2\mu_A)$ for some fixed constant $c_{\lambda}$. If $A$ is $\delta$-dense, then within $\widetilde{O}(N)$ time we can preprocess $A$ such that given any query $t$ satisfying that $(4+2C_\lambda)\mu_Am\Sigma_A/N^2\le t\le \Sigma_A/2$, we can decide whether $t\in \mathcal{S}(A)$ in time $O(1)$.
        \item[(ii).] In particular, given a multi-set $A$ and a target $t\le \Sigma_A/2$ with $t\ge 0.5 \delta\mu_Am\Sigma_A/N^2$, we can decide whether $t\in \mathcal{S}(A)$ in time $\widetilde{O}(N)$. 
    \end{itemize}
\end{theorem}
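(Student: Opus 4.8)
This statement is not ours to prove from scratch: as the \emph{Cf.} makes explicit, it is, up to cosmetic bookkeeping of the parameters $\alpha,\delta,C_\lambda$, Theorem~4.6 of~\cite{BW21}. The plan is therefore to recall the skeleton of Bringmann and Wellnitz's argument, both to keep this appendix self-contained and to expose the single additive-combinatorics black box it relies on, since that is the box that our Theorem~\ref{thm:ss} (in its multi-set form) will replace when, later in this appendix, we upgrade the algorithm to also reconstruct a witnessing subset.

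\textbf{Structure and preprocessing (part (i)).} The heart of~\cite{BW21} is a description of $\mathcal{S}(A)$ in the ``middle'' band: when $A$ is $\delta$-dense there is a distinguished integer $d\ge 1$, the maximal $\alpha$-almost divisor of $A$, such that for every $t$ with $(4+2C_\lambda)\mu_A m\Sigma_A/N^2\le t\le\Sigma_A/2$ one has $t\in\mathcal{S}(A)$ if and only if $t$ splits as $t=t_1+t_2$ with $t_1\in\mathcal{S}(A(d))$ (so $d\mid t_1$) and $t_2\in\mathcal{S}(\overline{A(d)})$ using only $\widetilde{O}(1)$ of its elements (hence $t_2$ small); for $d=1$ the condition is vacuous and the entire band lies in $\mathcal{S}(A)$. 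Granting this, the $\widetilde{O}(N)$ preprocessing does three things. (a) Find $d$: only $d>1$ with $|\overline{A(d)}|$ as small as an $\alpha$-almost divisor can matter, such $d$ are read off the gcd structure of $A$, and the maximal one (or $d=1$) is kept. (b) Certify $\mathcal{S}(A(d))$: divide $A(d)$ by $d$ --- the quotient multi-set has gcd $1$ by maximality of $d$ and is still dense --- and feed it to (the multi-set analogue of) Theorem~\ref{thm:ss}; in $\widetilde{O}(N)$ time this exhibits a long arithmetic progression in $\mathcal{S}(A(d)/d)$, which, combined with the role of $\overline{A(d)}$ as a ``residue filler'', pins down a long central band of multiples of $d$ contained in $\mathcal{S}(A(d))$. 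This is exactly where~\cite{BW21} invoke Szemer\'edi--Vu and where an FFT over $[1,m]$ would otherwise be unavoidable. (c) Tabulate residues: $\overline{A(d)}$ has only $\widetilde{O}(\mu_A\Sigma_A/N^2)=\widetilde{O}(N)$ elements (this is what $\delta$-density buys), so a subset-sum-modulo-$d$ computation on it yields the set $\mathcal{R}=\{\,t_2\bmod d:\ t_2\in\mathcal{S}(\overline{A(d)}),\ t_2\ \text{small}\,\}$ of reachable residues, and one stores a small representative $t_2(r)$ for each $r\in\mathcal{R}$. A query $t$ is then handled in $O(1)$: set $r=t\bmod d$, reject if $r\notin\mathcal{R}$, otherwise accept iff $t-t_2(r)$ lies in the certified central band of multiples of $d$; correctness is precisely the ``if and only if'' above.

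\textbf{Part (ii) and the obstacles.} Taking $c_\delta$ large relative to $c_\lambda$ ensures $0.5\,\delta\mu_A m\Sigma_A/N^2\ge(4+2C_\lambda)\mu_A m\Sigma_A/N^2$, so any $t$ with $0.5\,\delta\mu_A m\Sigma_A/N^2\le t\le\Sigma_A/2$ lies in the admissible band of part~(i); running the preprocessing followed by one query gives the $\widetilde{O}(N)$ bound. The real difficulty is not any individual preprocessing step but the structural ``if and only if'': the forward direction (a valid split yields $t\in\mathcal{S}(A)$) is the easy half, whereas the converse --- that when no valid split exists no $d$-oblivious representation of $t$ can exist either --- is the technical core of~\cite{BW21} and rests on the density hypothesis together with Kneser-type modular arguments and the logarithmic slack built into $\alpha$ and $\delta$; I would reproduce that argument essentially verbatim. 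The only extra running-time subtlety is that one must never materialize an object of size $\Theta(m)$ or $\Theta(\Sigma_A)$, in particular in steps (b) and (c) --- which is exactly why the long progression in $\mathcal{S}(A(d)/d)$ is produced via Theorem~\ref{thm:ss} rather than by an FFT-based routine on $[1,m]$.
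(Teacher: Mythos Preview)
Your high-level skeleton---find a distinguished divisor, show the divisible part covers an entire central band, and reduce the decision to a residue check---matches the paper's summary of \cite{BW21}, which is packaged here as Lemmas~\ref{lemma:cite-1}--\ref{lemma:cite-3}. Part~(ii) indeed follows from part~(i) exactly as you say.

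Two factual corrections, though. First, the additive-combinatorics black box in \cite{BW21} is \emph{not} Szemer\'edi--Vu; it is S\'ark\"ozy's finite addition theorem (Theorem~\ref{thm:ss-old} here), and the paper says so explicitly when discussing the constants in Lemma~\ref{lemma:cite-4}. Szemer\'edi--Vu is the ingredient behind the separate $\widetilde{O}(n+\sqrt{a_{\max}t})$ algorithm of~\cite{chen2024improved}, which this paper explicitly flags as \emph{not} covered by the present techniques (see Section~\ref{sec:conclude}). Second, your step~(b) conflates the original decision algorithm with the later search upgrade: in \cite{BW21}'s proof of Theorem~\ref{thm:BW21} no arithmetic progression is ever \emph{computed}. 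The non-constructive S\'ark\"ozy theorem is used purely to establish the existence statement of Lemma~\ref{lemma:cite-2} (that the entire interval $[\lambda_{A'},\Sigma_{A'}-\lambda_{A'}]$ lies in $\mathcal{S}(A')$ once $A'$ has no $\alpha$-almost divisor); the preprocessing then only needs to compute $\gamma$ and the set of attainable residues $\mathcal{S}(A)\bmod\gamma$, and the $O(1)$ query is the membership test of Lemma~\ref{lemma:cite-3}. Theorem~\ref{thm:ss} enters only in the paper's own Theorem~\ref{thm:dense}, where an explicit AP with witness is needed to \emph{reconstruct} $Z$ in Step~3 of the procedure---so your parenthetical ``since that is the box that our Theorem~\ref{thm:ss} will replace'' is correct, but then you should not also invoke Theorem~\ref{thm:ss} inside the description of the original \cite{BW21} preprocessing.
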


Our goal in this section is to establish a similar theorem that not only resolves the decision version of dense Subset Sum, but also returns a solution in near-linear time. More precisely, 

\begin{theorem}\label{thm:dense}
    Given any multi-set $A$ of $N$ integers,  the following holds:
    \begin{itemize}
        \item[(i).] Let $C_\lambda':=c_\lambda'\log (2\mu_A)\log (2N)$ for some fixed constant $c_{\lambda}'$. If $A$ is $\delta$-dense, then within $\widetilde{O}(N)$ time we can preprocess $A$ such that given any query $t$ satisfying that $(4+2C_\lambda')\mu_Am\Sigma_A/N^2\le t\le \Sigma_A/2$, we can decide whether $t\in \mathcal{S}(A)$ in time $O(1)$, and moreover, if the answer is "yes", we can return $B\subseteq A$ such that $\Sigma_B=t$ in time $\widetilde{O}(N)$.
        \item[(ii).] In particular, given a multi-set $A$ and a target $t\le \Sigma_A/2$ with $t\ge 0.5 \delta\mu_Am\Sigma_A/N^2$, we can decide whether $t\in \mathcal{S}(A)$ in time $\widetilde{O}(N)$, and moreover,  if the answer is "yes", we can return $B\subseteq A$ such that $\Sigma_B=t$ in time $\widetilde{O}(N)$.
    \end{itemize}
\end{theorem}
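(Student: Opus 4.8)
The plan is to re-run Bringmann and Wellnitz's algorithm behind Theorem~\ref{thm:BW21} essentially verbatim, replacing its one non-constructive ingredient — the appeal to a finite addition theorem — with our constructive Theorem~\ref{thm:ss}, and then to check that every other step of their algorithm can be back-traced to an explicit subset. Recall the skeleton of their algorithm: one searches for an $\alpha$-almost divisor $d$ of $A$; if none exists, $A$ is ``spread out'' enough that a finite addition theorem guarantees a length-$m$ arithmetic progression of common difference $1$ inside $\mathcal{S}(A)$ that straddles the admissible window of targets; if $d$ exists, one peels off the small part $\overline{A(d)}$, uses a short subset of it to steer the residue of $t$ into $d\mathbb{Z}$, and recurses on the rescaled, still-dense multi-set $A(d)/d$ with the adjusted target divided by $d$. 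The recursion has $O(\log m)$ depth, since the almost divisors multiply along it, and it bottoms out in the ``no almost divisor'' case.

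The constructive replacements are as follows. In the base case, instead of merely certifying that $t$ (after the accumulated shifts and rescalings) lies in an arithmetic progression contained in the subset sums of the current set, I apply Theorem~\ref{thm:ss} to the relevant dense subset, obtaining both the arithmetic progression and its randomized witness; a single query then returns a subset summing to the current target in $\widetilde{O}(N)$ expected time. The density threshold in Theorem~\ref{thm:dense} — with $C_\lambda'=c_\lambda'\log (2\mu_A)\log (2N)$ in place of BW's $C_\lambda=c_\lambda\log (2\mu_A)$, and the same $\delta=c_\delta\log (2N)\log^2(2\mu_A)$ — is chosen precisely so that the slightly weaker parameters of Theorem~\ref{thm:ss} (its coreset carries an extra $\log n$ factor and the constant in its admissible length bound is larger) still cover the whole window $(4+2C_\lambda')\mu_Am\Sigma_A/N^2\le t\le \Sigma_A/2$; the extra $\log (2N)$ in $C_\lambda'$ absorbs exactly this overhead. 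For the residue-steering step, $\overline{A(d)}$ is small (its size is $O(\alpha\mu_A\Sigma_A/|A|^2)$), so the subset of $\overline{A(d)}$ realizing the needed residue shift is found by the same routine BW use — a direct search on a small structured set — which is already constructive and whose back-tracing costs $\widetilde{O}(N)$. I then compose: the subset returned for the rescaled instance, scaled back by $d$, together with the residue-fixing subset from $\overline{A(d)}$, is a solution for the original instance, and since the recursion has $O(\log m)$ depth the total reconstruction time telescopes to $\widetilde{O}(N)$.

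Two remaining points. Multiplicities are handled as in~\cite{BW21}: one collapses equal elements and records counts, and applies the constructive finite addition theorem to a set that keeps only $O(\polylog)$ copies of each distinct value, so the returned witness automatically respects the multiplicity bounds — this is where the multi-set variant of Theorem~\ref{thm:ss} (announced after Theorem~\ref{thm:dense-ss-set-informal}) is invoked. The split of Theorem~\ref{thm:dense} into parts (i) and (ii) is then immediate: the preprocessing of part (i) builds the almost-divisor decomposition along the recursion together with the witness data structures of Theorem~\ref{thm:ss}, all in $\widetilde{O}(N)$ time and independent of $t$; a query in $O(1)$ time locates $t$ in the appropriate arithmetic progression; and the $\widetilde{O}(N)$ reconstruction cost is paid only when a solution is actually requested. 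Part (ii) is part (i) run once for the given $t$.

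The hard part will not be any single step but the bookkeeping: one must revisit every branch of BW's algorithm — including their auxiliary routines for locating an almost divisor and for performing the residue adjustment — and verify that each branch is either already constructive or can be back-traced, all while keeping the accumulated polylogarithmic factors within the budget encoded by $C_\lambda'$ and $\delta$. In particular one has to check that the Las Vegas witness of Theorem~\ref{thm:ss}, whose query time is randomized, is queried only $O(\log m)$ times along the recursion, so that the total expected reconstruction time remains $\widetilde{O}(N)$.
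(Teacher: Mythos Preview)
Your high-level plan---follow Bringmann--Wellnitz and swap their non-constructive appeal to a finite addition theorem for Theorem~\ref{thm:ss}---matches the paper's, but your description of the base case has a real gap. You write that when $A$ has no $\alpha$-almost divisor, ``a finite addition theorem guarantees a length-$m$ arithmetic progression of common difference $1$ inside $\mathcal{S}(A)$,'' and that therefore ``a single query [to the witness of Theorem~\ref{thm:ss}] returns a subset summing to the current target.'' This is not what happens: Theorem~\ref{thm:ss} produces an arithmetic progression of common difference $d\le O(m\mu_A/N)$, not $1$, and it only lives in $\mathcal{S}(A')$ for a small coreset $A'\subseteq A$. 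The conclusion that every integer in $[\lambda_A,\Sigma_A-\lambda_A]$ is a subset sum (BW's Theorem~4.2) is \emph{not} the finite addition theorem; it is proved via a three-part decomposition $A=R\cup P\cup G$ (BW's Theorem~4.35, the paper's Lemma~\ref{lemma:cite-4}): Theorem~\ref{thm:ss} is applied only to build $P$ and its length-$2m$ progression of step $d$; a separate ``remainder set'' $R$ is constructed so that $\mathcal{S}(R)$ hits every residue class modulo $d$; and the bulk $G$ is used greedily to land near $t$. The actual reconstruction greedily fills from $G$, then fixes the residue modulo $d$ using $R$, and only then queries the witness for $P$. You never mention $R$ or $G$, and your residue-steering paragraph concerns only the almost-divisor $\gamma$, not the progression's step $d$---these are two different residue problems, and the second one is the substantive part of making the base case constructive.

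Two smaller corrections. First, there is no recursion of depth $O(\log m)$ in which Theorem~\ref{thm:ss} is invoked at every level: BW compute a single $\gamma$ (their Theorem~4.1, the paper's Lemma~\ref{lemma:cite-1}) such that $A(\gamma)/\gamma$ already has no $\alpha$-almost divisor, and the constructive finite addition theorem is applied once, to that set. Your concern about querying the Las Vegas witness $O(\log m)$ times is therefore moot. Second, finding the residue-fixing subset $Y\subseteq\overline{A(\gamma)}$ is not ``a direct search on a small structured set'': $\gamma$ can be as large as $\widetilde{O}(N)$, so this is a genuine modular subset sum instance, handled by invoking a near-linear modular subset sum algorithm and then shrinking the returned set via pigeonhole on prefix sums. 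The paper spells both of these out; your proposal would need to as well.
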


\noindent\textbf{Remark on the relationship between Statement (i) and (ii) in Theorem~\ref{thm:BW21} and Theorem~\ref{thm:dense}.} 
In both theorems, the condition $\Sigma_A/2\ge t\ge 0.5 \delta\mu_Am\Sigma_A/N^2$ in $(ii)$ implies that $A$ is $\delta$-dense. Moreover, since $\delta\gg C_{\lambda}$ and $\delta\gg C_{\lambda}'$, it holds that $0.5 \delta\mu_Am\Sigma_A/N^2\ge (4+2C_\lambda)\mu_Am\Sigma_A/N^2$ and $0.5 \delta\mu_Am\Sigma_A/N^2\ge (4+2C_\lambda')\mu_Am\Sigma_A/N^2$. Hence, in both theorems, $(ii)$ follows as a direct consequence of $(i)$. In other words, it suffices to prove Theorem~\ref{thm:dense}-$(i)$.

 
 \smallskip

\noindent\textbf{Remark on $C_{\lambda}$ vs. $C_{\lambda}'$.} The $C_{\lambda}'$ in our Theorem~\ref{thm:dense} is by a logarithmic factor larger than $C_{\lambda}$, this diminishes the nontrivial region $[(4+2C_{\lambda}),\Sigma_A/2]$. In the later analysis, we shall specify where this loss comes from. Nevertheless, the nontrivial region in Statement $(ii)$, i.e., $[0.5 \delta\mu_Am\Sigma_A/N^2,\Sigma_A/2]$, is not affected. We remark that, for the case of $\mu_A=O(1)$, in \cite{BW21} a fine-grained conditional lower bound is established. Roughly speaking, the lower bound states that if $t$ lies out of the region specified by Theorem~\ref{thm:BW21}-$(ii)$, then there does not exist a near-linear time algorithm for the decision problem unless Strong Exponential Time Hypothesis fails. Since this lower bound matches the region specified by Theorem~\ref{thm:BW21}-$(ii)$, it also matches our Theorem~\ref{thm:dense}-$(ii)$. In other words, we solve the search problem of Subset Sum within a tight region that matches the lower bound.

\smallskip
\noindent\textbf{Remark on the accurate values of $c_{\alpha}$, $c_\delta$, $c_\lambda$.} \cite{BW21} sets these constants as  $c_{\alpha}=42480$, $c_\delta=1699200$ and $c_{\lambda}=169920$. Since we are not optimizing these constants, their actual value does not matter much. It is sufficient to know that if $c_{\alpha}$ and $c_{\delta}$ changes by $O(1)$ times, then the argument in \cite{BW21} (which we shall elaborate in the subsequent subsection) still holds by adjusting $c_{\lambda}$ by $O(1)$ times. Since in our Theorem~\ref{thm:BW21}, $C_{\lambda}'$ is by a logarithmic factor larger, this will dominate any $O(1)$ factor. Hence, we will not work with exact values of $c_{\alpha},c_\delta$ and $c_{\lambda}'$, but rather give an estimation whenever necessary. 

Before we proceed, it is useful to observe the following.
By the definition of $\delta$-dense, we have 
\begin{eqnarray}\label{eq:1}
    \mu_Am\le N^2/\delta=\frac{1}{c_\delta}\cdot \frac{N^2}{\log (2N)\log^2(2\mu_A)}.
\end{eqnarray}
Consequently,

\begin{eqnarray}\label{eq:2}
    (4+2C_\lambda')\mu_Am/N^2=O(\frac{1}{\log (2\mu_A)}).
\end{eqnarray}





 The remainder part of this section is devoted to the proof of Theorem~\ref{thm:dense}-$(i)$.
We first discuss the proof of Theorem~\ref{thm:BW21}-$(i)$ in \cite{BW21}, and then we show how to modify that proof for Theorem~\ref{thm:dense}-$(i)$.

\subsection{Proof of Theorem~\ref{thm:BW21}-$(i)$ in \cite{BW21}}
The proof of Theorem~\ref{thm:BW21}-$(i)$ is based on the following three lemmas.

\begin{lemma} [Cf. Theorem 4.1 of \cite{BW21}]\label{lemma:cite-1}
    Given an $\delta$-dense multi-set $A$ of size $N$, in time $\widetilde{O}(N)$ we can compute an integer $\gamma\ge 1$ such that $A'=A(\gamma)/\gamma$ is $\delta$-dense and has no $\alpha$-almost divisor. Moreover, we have the following properties:
    \begin{itemize}
        \item[1.] $\gamma\le 4\mu_A\Sigma_A/|A|^2$;
        \item[2.] $\gamma=O(N)$;
        \item[3.] $|A'|\ge 0.75|A|$;
        \item[4.] $\Sigma_{A'}\ge 0.75\Sigma_A/\gamma$. 
    \end{itemize} 
\end{lemma}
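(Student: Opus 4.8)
The plan is to peel off almost-divisors greedily. Put $B_0:=A$ and $\gamma^{(0)}:=1$; given $B_i$, if it has no $\alpha$-almost divisor we stop and return $\gamma:=\gamma^{(i)}$ and $A':=B_i$, and otherwise we locate an $\alpha$-almost divisor $d_i>1$ of $B_i$ and pass to $B_{i+1}:=B_i(d_i)/d_i$ with $\gamma^{(i+1)}:=d_i\gamma^{(i)}$. The output has no $\alpha$-almost divisor by construction, and since a number $x\in A$ reaches the last set exactly when $\gamma=\prod_i d_i$ divides $x$, we have $A'=A(\gamma)/\gamma$. What remains is to (i) execute one step in $\widetilde{O}(|B_i|)$ time, (ii) verify that $\delta$-density is an invariant, and (iii) bound the number of steps together with the four quantitative claims.

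For (ii) and (iii) I would carry $|B_i|,\Sigma_{B_i},m_{B_i},\mu_{B_i}$ along. One step has $\mu_{B_{i+1}}\le\mu_{B_i}$ and $m_{B_{i+1}}\le m_{B_i}/d_i\le m_{B_i}/2$, while the almost-divisor inequality combined with $\Sigma_{B_i}\le|B_i|m_{B_i}$ and the $\delta$-density of $B_i$ gives $|\overline{B_i(d_i)}|\le\alpha\mu_{B_i}\Sigma_{B_i}/|B_i|^2\le\alpha\mu_{B_i}m_{B_i}/|B_i|\le(\alpha/\delta)|B_i|$. Here $\alpha/\delta=c_\alpha/(c_\delta\log(2N)\log(2\mu_A))$ can be made smaller than any prescribed constant by taking $c_\delta$ a sufficiently large multiple of $c_\alpha$, so $|B_{i+1}|\ge(1-\alpha/\delta)|B_i|$ and hence $|B_{i+1}|^2\ge(1-\alpha/\delta)^2\delta\mu_{B_i}m_{B_i}\ge\delta\mu_{B_{i+1}}m_{B_{i+1}}$ since $2(1-\alpha/\delta)^2\ge1$; thus every $B_i$ is $\delta$-dense, in particular nonempty, so $m_{B_i}\ge1$ throughout. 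Together with $m_{B_{i+1}}\le m_{B_i}/2$ this caps the number of iterations $h$ at $\log_2 m=O(\log N)$ (density forces $m\le N^2/\delta$) and gives $m_{B_h}\le m/\gamma$. Now iterating $|B_{i+1}|\ge(1-\alpha/\delta)|B_i|$ over these $O(\log N)$ steps yields $|A'|\ge(1-\alpha/\delta)^h|A|\ge0.75|A|$, which is claim 3; iterating $\Sigma_{B_{i+1}}=\Sigma_{B_i(d_i)}/d_i\ge(1-\alpha/\delta)\Sigma_{B_i}/d_i$, where the discarded mass is $\le m_{B_i}|\overline{B_i(d_i)}|\le(\alpha/\delta)\Sigma_{B_i}$, yields $\Sigma_{A'}\ge0.75\Sigma_A/\gamma$, which is claim 4; and since every element of $A(\gamma)$ is a multiple of $\gamma$ not exceeding $m$, there are at most $m/\gamma$ distinct values in $A(\gamma)$, each of multiplicity at most $\mu_A$, so $0.75|A|\le|A(\gamma)|\le\mu_A m/\gamma$, which combined with the recursion applied to $B_1$ (legitimate since $B_1$ has strictly smaller maximum element) gives claim 1, after which claim 2 follows from $\mu_A\Sigma_A/|A|^2\le\mu_A m/|A|\le|A|/\delta=O(N)$. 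The one place where the exact constants $0.75$ and $4$ need care is precisely this bookkeeping, which I would take over from \cite[Theorem 4.1]{BW21}.

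The part I expect to be the genuine obstacle is (i): detecting an $\alpha$-almost divisor, if one exists, in near-linear time. The handle is that on a $\delta$-dense set an almost divisor $d$ divides all but an $(\alpha/\delta)$-fraction of the elements, so a constant-size uniform sample consists entirely of multiples of $d$ with constant probability, and then $d$ divides the gcd of the sample; running $O(\log|B_i|)$ such samples and testing the $O(\log m)$ prime powers dividing each resulting gcd against the almost-divisor inequality --- an $O(|B_i|)$-time check per candidate --- finds an almost divisor with high probability whenever one exists and certifies its absence otherwise. This is exactly the subroutine underlying \cite[Theorem 4.1]{BW21}; I would invoke it rather than re-derive it, the only extra observation being that performing $O(\log N)$ such searches on successively shrinking multisets telescopes to a total running time of $\widetilde{O}(N)$.
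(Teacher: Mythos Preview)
The paper does not prove this lemma: it is quoted from \cite[Theorem~4.1]{BW21} and used as a black box in Appendix~\ref{apsec:dense}. Your sketch is precisely a reconstruction of that external argument --- iterative peeling of $\alpha$-almost divisors while tracking density, size, and sum --- and you already defer both the exact constant bookkeeping and the near-linear almost-divisor detection subroutine to \cite{BW21}, so in effect you and the paper appeal to the same source and there is nothing further to compare.

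One small remark: your derivation of property~1 is not self-contained. The counting argument you give yields $\gamma\le (4/3)\,\mu_A m/N$, not $\gamma\le 4\mu_A\Sigma_A/N^2$, and the phrase ``combined with the recursion applied to $B_1$'' does not bridge the gap (the two bounds are in general incomparable). Since you explicitly hand this off to \cite{BW21} anyway, this does not affect the overall correctness of the proposal.
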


\begin{lemma}[Cf. Theorem 4.2 of \cite{BW21}]\label{lemma:cite-2}
    Let $A$ be a multi-set. If $A$ is $\delta$-dense and has no $\alpha$-almost divisor, then for $\lambda_A:=C_\lambda\cdot  \mu_A m \Sigma_A/|A|^2$, we have
    $$\mathbb{Z}[\lambda_A, \Sigma_A-\lambda_A]\subseteq \mathcal{S}_A.$$
\end{lemma}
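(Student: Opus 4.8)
Since Lemma~\ref{lemma:cite-2} is quoted essentially verbatim from \cite[Theorem~4.2]{BW21}, one legitimate option is simply to invoke that reference; but let me describe the route I would take to reprove it, since it is exactly the place where the finite addition theorems of this paper enter. The plan is to proceed in three phases: (i) produce a long arithmetic progression inside $\mathcal{S}(A_1)$ for a small subset $A_1\subseteq A$; (ii) use the ``no $\alpha$-almost divisor'' hypothesis to drive the common difference of that progression down to $1$, obtaining a block of consecutive integers inside $\mathcal{S}(A_2)$ for a still-small $A_2\subseteq A$; and (iii) absorb the remaining elements of $A$ one at a time to stretch this block all the way out to $\mathbb{Z}[\lambda_A,\Sigma_A-\lambda_A]$, using the reflection $z\in\mathcal{S}(A)\iff\Sigma_A-z\in\mathcal{S}(A)$ to obtain the right half for free.

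For phase (i): $\delta$-density gives $N^2\ge\delta\mu_A m\ge\delta m$ with $\delta=\Theta\bigl(\log(2N)\log^2(2\mu_A)\bigr)$, so $N=\Omega(\sqrt{m\log m})$, which is precisely the regime of Theorem~\ref{thm:ss} (equivalently Theorem~\ref{thm:ss-old}). Running it with $\ell=\Theta(m)$ yields a subset $A_1\subseteq A$ with $|A_1|=\widetilde{O}(m/N)=\widetilde{O}(N/\delta)$, a common difference $d\le 7m/N$, and a progression $\{s_1\}+\{0,d,2d,\ldots,\ell d\}\subseteq\mathcal{S}(A_1)$ of length $\ell\gg m$.

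For phase (ii): if $d=1$ we are done. Otherwise the no-almost-divisor hypothesis first forces $\gcd(A)=1$ (if $\gcd(A)=e>1$ then $\overline{A(e)}=\emptyset$, making $e$ an $\alpha$-almost divisor), and similarly forces $\gcd(d,\overline{A(d)})=1$ (a common divisor $d'>1$ of $d$ and of every element of $\overline{A(d)}$ would divide all of $A$, so $\overline{A(d')}=\emptyset$). Hence the elements of $\overline{A(d)}$ generate $\mathbb{Z}/d\mathbb{Z}$, and since $|\overline{A(d)}|>\alpha\mu_A\Sigma_A/N^2$ is large while $d\le 7m/N$ is small, one can extract from $\overline{A(d)}$ a sub-multiset of total weight $\widetilde{O}(\mu_A m\Sigma_A/N^2)$ whose subset sums realize a prescribed residue pattern modulo the current difference. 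This is exactly what Lemma~\ref{lem:remainder} delivers (from conflict-free pairs of elements of $\overline{A(d)}$ whose gap is not divisible by $d$); feeding its output into the augmentation step of Lemma~\ref{lem:aug-idea-1}, as in Lemma~\ref{lem:aug-one-step}, replaces $d$ by a proper divisor, and iterating $O(\log d)$ times drives the difference to $1$ while keeping the length $\gg m$. The outcome is an interval $[x,y]\subseteq\mathcal{S}(A_2)$ with $y-x\ge m$, where $A_2\subseteq A$ has size $\widetilde{O}(N/\delta)$ and weight $\Sigma_{A_2}=\widetilde{O}(\mu_A m\Sigma_A/N^2)$.

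For phase (iii): every $a\in A\setminus A_2$ satisfies $a\le m\le y-x$, so $[x,y]\cup[x+a,y+a]=[x,y+a]$; adding all of $A\setminus A_2$ successively gives $[x,\,y+\Sigma_{A\setminus A_2}]\subseteq\mathcal{S}(A)$. Since $x\le\Sigma_{A_2}=\widetilde{O}(\mu_A m\Sigma_A/N^2)$ and $\Sigma_{A\setminus A_2}\ge\Sigma_A-\widetilde{O}(\mu_A m\Sigma_A/N^2)$, the left endpoint is $\le\lambda_A$ and the right endpoint is $\ge\Sigma_A-\lambda_A$ once the logarithmic factors are absorbed into the constant $C_\lambda$; the reflection symmetry then upgrades this to $\mathbb{Z}[\lambda_A,\Sigma_A-\lambda_A]\subseteq\mathcal{S}(A)$. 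The hard part will be the quantitative bookkeeping in phases (ii)--(iii): the no-almost-divisor bound is a single global inequality, so one must check that it still supplies enough elements after $A_1$ has been removed, and that the cumulative weight $\Sigma_{A_2}$ consumed in phases (i)--(ii) stays inside the $\lambda_A$ budget across all $O(\log d)$ rounds of difference reduction. This is precisely why the density threshold $\delta$ must carry the extra $\log(2N)\log^2(2\mu_A)$ factors instead of a single logarithm --- the slack is spent on those rounds and on the repeated use of the counting bound.
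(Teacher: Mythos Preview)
The paper does not prove this lemma; it is quoted from \cite{BW21}, and the paper only sketches the argument (in preparation for the constructive analogue, Lemma~\ref{lemma:constructive-cite-2}) via Lemma~\ref{lemma:cite-4}. That argument partitions $A=R\cup P\cup G$: the set $P$ supplies an arithmetic progression in $\mathcal{S}(P)$ of some common difference $d$ (with $d\le c_1\mu_A\Sigma_A\log(2\mu_A)/N^2$); the remainder set $R$ is constructed upfront (Lemma~\ref{lemma:cite-R-1}) so that for every small $\gamma$ it contains at least $\gamma$ elements not divisible by $\gamma$, which makes $\mathcal{S}(R)$ $d$-complete by a pigeonhole argument on partial sums; and $G$ carries at least half of $\Sigma_A$. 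To hit a target $t$ one greedily picks from $G$ to land within $\Theta(\lambda_A)$ of $t$, corrects the residue modulo $d$ using a subset of $R$, and finishes with a term of the AP in $\mathcal{S}(P)$. In particular, the difference $d$ is \emph{never driven down to $1$}; the residue set $R$ absorbs the slack.

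Your phases (i) and (iii) match this picture (phase (iii) is exactly the role of $G$), but phase (ii) takes a genuinely different route---iteratively reducing $d$ to $1$ via Lemma~\ref{lem:remainder}---and this is where the proposal is shaky. First, Lemma~\ref{lem:remainder} consumes conflict-free \emph{pairs} whose \emph{gaps} are not divisible by $d$, whereas the no-$\alpha$-almost-divisor hypothesis hands you individual elements of $\overline{A(d)}$; pairing two such elements can easily give a gap that \emph{is} divisible by $d$, so producing the required pairs needs an extra argument you have not supplied. Second, quantitatively Lemma~\ref{lem:remainder} needs on the order of $d\log d$ pairs, while the hypothesis only guarantees $|\overline{A(d)}|>\alpha\mu_A\Sigma_A/N^2$; these two quantities are not comparable in general (and your bound $d\le 7m/N$ from Theorem~\ref{thm:ss} applies to sets, not multi-sets---for a multi-set you would need Lemma~\ref{lem:multiset}, which gives the larger $d\le O(m\mu_A\log\mu_A/N)$). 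The \cite{BW21} proof avoids both pitfalls by never invoking an AP-augmentation lemma for the residue step: the pigeonhole argument only needs $\gamma$ elements not divisible by $\gamma$, and the remainder-set construction delivers exactly that for all relevant $\gamma$ simultaneously. So your high-level decomposition is right, but the tool you reach for in phase~(ii) is not the one that makes the bookkeeping close.
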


\begin{lemma}[Cf. Theorem 4.3 of \cite{BW21}]\label{lemma:cite-3}
    Given a $\delta$-dense multi-set $A$, in time $\widetilde{O}(N)$ we can compute an integer $\gamma\ge 1$ such that for any $t\le \Sigma_A/2$ with $t\ge (4+2C_\lambda)\mu_A m \Sigma_A/|A|^2$:
    
    $t\in \mathcal{S}(A)$ if and only if $(t \mod \gamma) \in (\mathcal{S}_A \mod \gamma)$.
    
\noindent   In particular, $\gamma$ can be chosen as that in Lemma~\ref{lemma:cite-1}. 
\end{lemma}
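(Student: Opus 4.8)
The plan is to reprove the statement following Bringmann and Wellnitz's route for their Theorem~4.3, applying Lemma~\ref{lemma:cite-1} to $A$ to obtain $\gamma$ and using Lemma~\ref{lemma:cite-2} as the structural engine; one could of course just cite \cite[Theorem~4.3]{BW21} verbatim, but the self-contained argument is short. I would let $\gamma\ge 1$ be the integer produced by Lemma~\ref{lemma:cite-1}, so that $A':=A(\gamma)/\gamma$ is $\delta$-dense, has no $\alpha$-almost divisor, and $\gamma\le 4\mu_A\Sigma_A/|A|^2$, $|A(\gamma)|\ge 0.75\,|A|$, $\Sigma_{A(\gamma)}\ge 0.75\,\Sigma_A$. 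Computing $\gamma$ is the only algorithmic step and costs $\widetilde{O}(N)$ by Lemma~\ref{lemma:cite-1}, which matches the claimed time bound. The forward implication is trivial: if $t=\Sigma_B$ with $B\subseteq A$ then $t\bmod\gamma=\Sigma_B\bmod\gamma\in\mathcal{S}(A)\bmod\gamma$.

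For the reverse implication, assume $(t\bmod\gamma)\in(\mathcal{S}(A)\bmod\gamma)$. Since a subset's elements divisible by $\gamma$ do not affect its sum mod $\gamma$, we have $\mathcal{S}(A)\bmod\gamma=\mathcal{S}(\overline{A(\gamma)})\bmod\gamma$, so there is a sub-multiset $B\subseteq\overline{A(\gamma)}$ with $\Sigma_B\equiv t\pmod\gamma$; fix one of \emph{minimum} sum. Then $q:=(t-\Sigma_B)/\gamma$ is an integer, and since $B$ is disjoint from $A(\gamma)$ it suffices to show $q\in\mathcal{S}(A')$: a sub-multiset of $A(\gamma)$ summing to $\gamma q$ together with $B$ then certifies $t\in\mathcal{S}(A)$. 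By Lemma~\ref{lemma:cite-2} applied to $A'$ one has $\mathbb{Z}[\lambda_{A'},\,\Sigma_{A'}-\lambda_{A'}]\subseteq\mathcal{S}(A')$ with $\lambda_{A'}=C_\lambda\mu_{A'}m_{A'}\Sigma_{A'}/|A'|^2$, so the reverse direction reduces entirely to the two-sided bound $\lambda_{A'}\le q\le\Sigma_{A'}-\lambda_{A'}$.

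That two-sided bound is the crux, and the step I expect to require the most care; it is precisely where the threshold $t\ge(4+2C_\lambda)\mu_A m\Sigma_A/|A|^2$ and the hypothesis $t\le\Sigma_A/2$ are consumed. For the lower bound, minimality of $B$ together with the Davenport constant $D(\mathbb{Z}/\gamma\mathbb{Z})=\gamma$ forces the residues of the elements of $B$ to form a zero-sum-free sequence over $\mathbb{Z}/\gamma\mathbb{Z}$ (otherwise deleting a nonempty zero-sum sub-multiset keeps $\Sigma_B\equiv t$ and strictly shrinks the sum), so $|B|\le\gamma-1$ and $\Sigma_B<\gamma m\le 4\mu_A m\Sigma_A/|A|^2$; crude estimates $\mu_{A'}\le\mu_A$, $m_{A'}\le m/\gamma$, $\Sigma_{A'}\le\Sigma_A/\gamma$, $|A'|\ge 0.75\,|A|$ give $\gamma\lambda_{A'}\le 2C_\lambda\mu_A m\Sigma_A/|A|^2$, hence $\Sigma_B+\gamma\lambda_{A'}\le(4+2C_\lambda)\mu_A m\Sigma_A/|A|^2\le t$, i.e. $q\ge\lambda_{A'}\ge 0$. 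For the upper bound, $\delta$-density yields $\gamma\lambda_{A'}\le 2C_\lambda\Sigma_A/\delta\le\Sigma_A/4$ because $\delta\gg C_\lambda$, so $\gamma(\Sigma_{A'}-\lambda_{A'})=\Sigma_{A(\gamma)}-\gamma\lambda_{A'}\ge 0.75\Sigma_A-0.25\Sigma_A\ge t\ge t-\Sigma_B$, i.e. $q\le\Sigma_{A'}-\lambda_{A'}$. The only genuinely non-routine ingredient here is the Davenport-constant bound on $\Sigma_B$; should one prefer to avoid it, the statement may be taken directly from \cite[Theorem~4.3]{BW21}.
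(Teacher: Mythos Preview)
The paper does not give its own proof of this lemma; it is quoted from \cite[Theorem~4.3]{BW21} and used as a black box. Your self-contained argument is correct and is exactly the BW21 route that the paper subsequently \emph{sketches} (Steps~1--3 of the ``Solution reconstruction procedure''): the subset $Y\subseteq\overline{A(\gamma)}$ with $|Y|\le\gamma$ there is precisely your minimal $B$, obtained via the partial-sums pigeonhole rather than the Davenport constant (the two are equivalent for $\mathbb{Z}/\gamma\mathbb{Z}$), and the check that $(t-\Sigma_Y)/\gamma\in[\lambda_{A'},\Sigma_{A'}-\lambda_{A'}]$ is the same two-sided estimate you carry out. So your proposal in fact goes slightly beyond the paper by spelling out what it merely cites.
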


Recall that our goal is to reconstruct a solution when the answer is "yes".  We remark that \cite{BW21} actually shows how to reconstruct a solution when the answer is "yes", the only problem is that its reconstruction procedure may well exceed linear. In the following, we will describe this reconstruction procedure, and then show how to modify it.

\smallskip
\noindent\textbf{Solution reconstruction procedure in \cite{BW21} when the answer is "yes".}
\smallskip

\noindent Step 1. Apply Lemma~\ref{lemma:cite-1} 
to compute $\gamma$ and obtain $A'=A(\gamma)/\gamma$ in $\widetilde{O}(N)$-time. Observe that $A'$ has no $\alpha$-almost divisor and is $\delta$ dense,  so  Lemma~\ref{lemma:cite-2} is applicable to $A'$. In particular, for any $z\in \mathbb{Z}[\lambda_{A'},\Sigma_{A'}-\lambda_{A'}]$ there exists some $Z\subseteq A(\gamma)$ such that
$$\Sigma_Z=\gamma z.$$

Note that we aim to reconstruct a solution when the answer is "yes". In view of Lemma~\ref{lemma:cite-3}, this means $t \mod \gamma\in \mathcal{S}_A \mod \gamma$.

\smallskip

\noindent Step 2. Find $Y\subseteq \overline{A(\gamma)}$ such that $|Y|\le \gamma$ and $\Sigma_Y \equiv t \quad (\mod \gamma)$. It is shown in \cite{BW21} that such a $Y$ must exist provided that $t \mod \gamma\in \mathcal{S}_A \mod \gamma$. Moreover, $|Y|\le \gamma$ directly implies that $\Sigma_Y\le \gamma m\le 4\mu_Am\Sigma_A/|A|^2$ by Property 1 of Lemma~\ref{lemma:cite-1}.

\smallskip

\noindent Step 3. It is shown in \cite{BW21} that with $Y$ in Step 2, $\gamma| (t-\Sigma_Y)$ and $(t-\Sigma_Y)/\gamma\in \mathbb{Z}[\lambda_{A'},\Sigma_{A'}-\lambda_{A'}]$ both hold. Hence, from Step 1 we know there exists some $Z\subseteq A(\gamma)$ such that $\Sigma_Z=t-\Sigma_Y$, that is, $Y\cup Z$ is the solution.

\subsection{Modifying the solution reconstruction procedure in \cite{BW21}}
Recall that our goal is to  find a solution in $\widetilde{O}(N)$-time for Theorem~\ref{thm:BW21}-$(i)$.
We shall follow the solution reconstruction procedure in \cite{BW21}.

Since Step 1 can be carried out in $\widetilde{O}(N)$-time, it suffices to bound the running time of Step 2 and 3 in $\widetilde{O}(N)$-time. That is, our goal is to find $Y$ and $Z$ in Step 2 and 3 in $\widetilde{O}(N)$-time, respectively.

\subsubsection{Finding $Y$ in $\widetilde{O}(N)$-time.}\label{subsec:Y} We first observe that, under the condition of Theorem~\ref{thm:BW21}-$(i)$, $(4+2C_\lambda)\mu_Am\Sigma_A/N^2\le \Sigma_A/2$. Meanwhile Lemma~\ref{lemma:cite-1} ensures that $\gamma\le 4\mu_A\Sigma_A/N^2$. Simple calculation shows that $\gamma=\widetilde{O}(N)$. So it suffices to find $Y$ in $\widetilde{O}(\gamma)$-time.

Now the problem boils down to the following: among integers in $\overline{A(\gamma)}$, find a subset $Y$ such that $|Y|\le \gamma$ and $\Sigma_Y\equiv t \quad (\mod \gamma)$ in $\widetilde{O}(\gamma)$ time. The proof of Lemma~\ref{lemma:cite-3} in \cite{BW21} already implies an algorithm for it. For the completeness of the paper, we elaborate below. 

$Y$ can be constructed as follows. 
\begin{itemize}
    \item First, we drop the constraint $|Y|\le \gamma$ and simply search for $Y'\subseteq \overline{A(\gamma)}$ with $\Sigma_{Y'}\equiv t \quad (\mod \gamma)$. This is exactly the modular Subset Sum problem. There are several algorithms for this problem that runs in $\widetilde{O}(|\overline{A(\gamma)}|+\gamma)\le \widetilde{O}(N) $-time~\cite{axiotis2019fast, potkepa2021faster, bringmann2021fast, axiotis2021fast}. In particular, the algorithm in~\cite{axiotis2021fast} follows the dynamic programming framework of Bellman~\cite{bellman1966dynamic} that recursively computes the (modular) sumsets $S_i$ that represents all attainable subset sums modulo $d$ for the first $i$ integers, and thus can be used to return a solution in $\widetilde{O}(N)$-time by straightforward backtracking.
    
    \item Next, we shrink the size of $Y'$ in case $|Y'|>\gamma$. This follows the same argument as Claim 4.4 of \cite{BW21}. We order integers of $Y'$ arbitrarily as $y_1,\cdots,y_\ell$, and computes within $\widetilde{O}(\ell)\le \widetilde{O}(N)$ time the partial sums $y_1$, $y_1+y_2$, etc. As long as $\ell>\gamma$, by Pigeonhole we know there must exist some $i<j$ such that $y_1+y_2+\cdots+y_i\equiv y_1+y_2+\cdots+y_j \quad (\mod \gamma)$. Consequently, $y_{i+1}$ to $y_j$ can be removed from $Y'$. Repeatedly applying the above argument we end up with the desired $Y$.
\end{itemize}

\subsubsection{Finding $Z$ in $\widetilde{O}(N)$-time.} This is the challenging part and requires our Theorem~\ref{thm:ss} in this paper.

According to the argument in Step 3, to obtain $Z$ it suffices to establish the following lemma (which is an algorithmic version of Lemma~\ref{lemma:cite-2}).

\begin{lemma}\label{lemma:constructive-cite-2}
    Let $A$ be a multi-set. If $A$ is $\delta$-dense and has no $\alpha$-almost divisor, then for $\lambda_A:=C_\lambda'\cdot  \mu_A m \Sigma_A/|A|^2$ where $C_\lambda':=c_\lambda'\log (2\mu_A)\log (2N)$ for some fixed constant $c_{\lambda}'$, we have
    $$\mathbb{Z}[\lambda_A, \Sigma_A-\lambda_A]\subseteq \mathcal{S}_A.$$
    Moreover, for any $t\in \mathbb{Z}[\lambda_A, \Sigma_A-\lambda_A]$, within $\widetilde{O}(|A|)$-time we can find $B\subseteq A$ such that $\Sigma_B=t$.
\end{lemma}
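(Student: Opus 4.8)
The strategy is to make Bringmann and Wellnitz's proof of Lemma~\ref{lemma:cite-2} (their Theorem 4.2) constructive step by step, replacing each non-constructive ingredient with a near-linear-time subroutine already developed in this paper and threading a representation of the target through every step. Their argument has three layers: (a) the finite addition theorem produces a long arithmetic progression in the subset sums of a ``small'' part of $A$; (b) the no-$\alpha$-almost-divisor hypothesis, through a Kneser-type argument, turns that progression into a genuine interval, informally by driving its common difference down to $1$ with a few more elements of $A$; (c) a greedy partial-sums argument uses the remaining bulk of $A$ to extend the interval over all of $\mathbb{Z}[\lambda_A,\Sigma_A-\lambda_A]$. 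We will constructivize (a) with Theorem~\ref{thm:ss}, (b) with the augmentation subroutines of Section~\ref{sec:aug-ss}, and note that (c) is essentially constructive already. For Step (a): first pick a sub-multiset $A_1\subseteq A$ of $\Theta(\sqrt{m\log m})$ distinct, small elements (such exist because $A$ is $\delta$-dense with no $\alpha$-almost divisor, which forces many distinct elements, and the smallest of them make $\Sigma_{A_1}$ negligible against $\lambda_A$; this is the one place needing a short calculation). Viewing $A_1$ as a set, Theorem~\ref{thm:ss} with $\ell=\Theta(m)$ gives, in $\widetilde O(N)$ time, a coreset $A_1'\subseteq A_1$ with $|A_1'|=\widetilde O(\sqrt m)$, an arithmetic progression $\{s\}+\{0,d,2d,\dots,\ell d\}\subseteq\mathcal S(A_1')$ with $d\le 7m/|A_1|=\widetilde O(\sqrt m)$, and a witness of query time $\widetilde O(1)$.

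\textbf{Step 2 (driving the common difference to $1$).} If $d>1$, I would repeatedly apply the subset-sum augmentation of Lemma~\ref{lem:aug-one-step} (its case~(i), built on Lemma~\ref{lem:aug-idea-1} and Lemma~\ref{lem:remainder}), each round consuming fresh disjoint elements of $A$ and passing from an arithmetic progression of common difference $d_i>1$ to one whose common difference is a proper divisor $d_{i+1}\mid d_i$; after at most $\log d$ rounds the common difference is $1$ and we hold an interval $\{s'\}+\{0,1,\dots,L\}\subseteq\mathcal S(A_2')$ with $L\ge m$ and a witness of query time $\widetilde O(1)$. The existence of fresh pairs with small gaps \emph{not} divisible by the current $d_i$ is exactly where no-$\alpha$-almost-divisor enters: since $|\overline{A(d_i)}|>\alpha\mu_A\Sigma_A/N^2$ is large, a Lemma~\ref{lem:one-aug-pair}-style pigeonhole inside $\overline{A(d_i)}$ supplies the required conflict-free pair family, and the distinct elements consumed over all rounds stay $\widetilde O(\sqrt m)$.

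\textbf{Step 3 (filling by partial sums, and reconstruction).} Let $A_3$ be $A$ with the $\widetilde O(\sqrt m)$ elements used in Steps~1--2 removed, sorted as $b_1\le\dots\le b_r$; each $b_i\le m\le L$. With $P_0=0$ and $P_j=b_1+\dots+b_j$ we have $P_{j+1}-P_j=b_{j+1}\le L$, so $\bigcup_{j=0}^{r}\bigl(\{P_j\}+\{s',s'+1,\dots,s'+L\}\bigr)=[\,s',\,s'+L+\Sigma_{A_3}\,]$, and every integer in this range lies in $\mathcal S(A_2')+\mathcal S(A_3)\subseteq\mathcal S(A)$. Bounding $s'\le\Sigma_{A_1'}+\Sigma_{A_2'}$ from above and $\Sigma_{A_3}=\Sigma_A-O(s')$ from below via $\delta$-density, one checks $[\,s',\,s'+L+\Sigma_{A_3}\,]\supseteq[\lambda_A,\Sigma_A-\lambda_A]$ with $\lambda_A=C_\lambda'\mu_A m\Sigma_A/N^2$; the extra logarithmic factor in $C_\lambda'$ over $C_\lambda$ is precisely the $\log n$ blow-up in $|A_1'|$ (hence in $d$ and in $\Sigma_{A_1'}$) coming from Theorem~\ref{thm:ss}. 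For reconstruction, given $t\in[\lambda_A,\Sigma_A-\lambda_A]$ one binary-searches the largest $j$ with $P_j\le t-s'$, so $t-P_j\in[s',s'+L]$ is a term of the difference-$1$ interval; querying the Step-2 witness for a subset of $A_2'$ summing to $t-P_j$ and outputting it together with $\{b_1,\dots,b_j\}$ finishes in $\widetilde O(N)$.

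\textbf{Main obstacle.} The crux will be Step~2: this is where \cite{BW21} invokes Kneser's theorem non-constructively, and the replacement must (i) locate, near-linearly and without ever materializing a difference set, enough conflict-free pairs inside $\overline{A(d_i)}$ with small gaps not divisible by $d_i$, and (ii) do so consuming only $\widetilde O(\sqrt m)$ distinct elements, since it is exactly this frugality that keeps $\Sigma_{A_3}$ within $O(\lambda_A)$ of $\Sigma_A$ and hence $\lambda_A$ at the claimed size. A secondary difficulty is multi-set bookkeeping: Theorem~\ref{thm:ss} is a statement about sets, so $A_1,A_2$ must be chosen with distinct representatives, and one must verify that the $\delta$-dense, no-$\alpha$-almost-divisor hypotheses do supply enough small distinct elements.
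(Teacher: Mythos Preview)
Your overall three-part structure (a short AP, then get an interval, then fill with partial sums) matches Bringmann--Wellnitz, and Step~3 is fine. The substantive gap is Step~2, and it is exactly the place where your route diverges from the paper's.

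\textbf{Why Step~2 does not go through as written.} You want to force case~(i) of Lemma~\ref{lem:aug-one-step} in every round so that the common difference strictly drops, and you justify this by ``a Lemma~\ref{lem:one-aug-pair}--style pigeonhole inside $\overline{A(d_i)}$.'' But $\overline{A(d_i)}$ is only the set of elements not divisible by $d_i$; two such elements can perfectly well differ by a multiple of $d_i$ (e.g.\ if all of $\overline{A(d_i)}$ lies in a single nonzero residue class). So pigeonhole inside $\overline{A(d_i)}$ produces pairs with small gap, but not pairs whose gap is \emph{not} divisible by $d_i$. Applying Lemma~\ref{lem:one-aug-pair} to the whole of $A$ instead gives you either case~(i) or case~(ii), and nothing in the no-$\alpha$-almost-divisor hypothesis rules out case~(ii). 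In case~(ii) the length grows but $d$ is unchanged, so ``after at most $\log d$ rounds the common difference is $1$'' is not established. This is a genuine missing lemma, not a detail.

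\textbf{What the paper does instead.} The paper never drives $d$ to $1$. It keeps the arithmetic progression $\{s\}+\{0,d,\dots,2md\}\subseteq\mathcal S(P)$ with its common difference $d$, and handles residues mod $d$ by a \emph{separate} remainder set $R\subseteq A$ (the construction of \cite[Theorem~4.20]{BW21}): $R$ has small size and sum, and the no-$\alpha$-almost-divisor hypothesis guarantees $\mathcal S(R)\bmod d=\mathbb Z_d$ for every $d\le \alpha\mu_A\Sigma_A/N^2$; a witness for any target residue is obtained by a modular Subset Sum call in $\widetilde O(N)$ time. The AP itself is produced not from Theorem~\ref{thm:ss} on a set of distinct elements but from its multi-set extension (Lemma~\ref{lem:multiset}) applied to the smallest $N/4$ elements of $A\setminus R$; this is what gives $d\le O(\mu_A m\log\mu_A/N)$ and $s+2md\le O(\mu_A m\Sigma_A\log\mu_A\log N/N^2)$, and the latter bound is where the extra $\log N$ in $C_\lambda'$ enters. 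With $G:=A\setminus(P\cup R)$ one finishes exactly as in your Step~3: greedily fill with $G'\subseteq G$ until $t-\Sigma_{G'}$ lands in the right range, then pick $R'\subseteq R$ with $|R'|\le d$ to fix the residue, then read off $P'\subseteq P$ from the AP witness. The multi-set point you flagged as secondary is in fact handled by Lemma~\ref{lem:multiset}, not by first extracting distinct small elements; your proposed extraction of $\Theta(\sqrt{m\log m})$ distinct small elements is not obviously available from $\delta$-density plus no-$\alpha$-almost-divisor alone.
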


It is easy to see that if Lemma~\ref{lemma:constructive-cite-2} holds, then Theorem~\ref{thm:dense}-$(i)$ follows since we have found $Y$ in $\widetilde{O}(N)$-time, and Lemma~\ref{lemma:constructive-cite-2} allows us to find $Z$ in  $\widetilde{O}(N)$-time, and $Y\cup Z$ gives the solution.

\subsubsection{Proof of Lemma~\ref{lemma:constructive-cite-2}}
This subsection is devoted to the proof of Lemma~\ref{lemma:constructive-cite-2}. Towards that, we first discuss the proof of Lemma~\ref{lemma:cite-2} in \cite{BW21}, and then we show how it can be adapted for the proof of Lemma~\ref{lemma:constructive-cite-2}.

The key to Lemma~\ref{lemma:cite-2} is the following.

\begin{lemma} [Cf. Theorem 4.35 of \cite{BW21}]\label{lemma:cite-4}
    Let $A$ be $\delta$-dense and has no $\alpha$-almost divisor. Then there exists a partition $A=R\cup P\cup G$ 
    such that:
    \begin{itemize}
        \item the set $\mathcal{S}(P)$ contains an arithmetic progression of $\{s\}+\{0,d,2d,\cdots,2md\}$ where $d\le c_1\cdot \frac{\mu_A\Sigma_A\log (2\mu_A)}{N^2}$ and $s+2md\le c_2\frac{\mu_A m\Sigma_A \log(2\mu_A)}{N^2}$; 
                \item the set $\mathcal{S}(R)$ is $d$-complete, i.e., $\mathcal{S}(R) \pmod d = \mathbb{Z}_d$; 
        \item the multi-set $G$ has sum $\Sigma_G\ge \Sigma_A/2$.
    \end{itemize}
\end{lemma}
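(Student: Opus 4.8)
The plan is to follow the proof of Theorem~4.35 of~\cite{BW21} and localize the single place where it uses a non-constructive additive-combinatorics ingredient — the finite addition theorem for subset sums (\cite{Sar94}, or \cite{SV06a}), which supplies the long arithmetic progression inside $\mathcal{S}(P)$ — and replace that use by our Theorem~\ref{thm:ss} together with the augmentation framework of Section~\ref{subsec:aug}. Recall how $P$ is built in~\cite{BW21}: one partitions $[1,m]$ into dyadic windows and, window by window, either isolates a value of very large multiplicity (whose multiples already form an arithmetic progression) or, when the window carries many distinct values, applies the finite addition theorem to those distinct values; the per-window progressions are then stitched into one progression of small common difference $d$, using only elements whose total sum is small. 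I would keep this skeleton verbatim, but every call to the finite addition theorem becomes a call to Theorem~\ref{thm:ss}, and every stitching step becomes an application of Lemma~\ref{lem:aug-idea-1} and Lemma~\ref{lem:aug-method-2}; since both replacements come with explicit witnesses and run in near-linear time, the resulting $P$, its arithmetic progression, and a witness for it are produced in $\widetilde O(N)$ time, which is exactly what the downstream Lemma~\ref{lemma:constructive-cite-2} needs.

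After the substitution two accounting points must be re-checked. First, Theorem~\ref{thm:ss} reserves a subset of size $O(\ell\log n/n)$ rather than the $O(\ell/n)$ of the non-constructive theorem, which inflates $|P|$, and hence $\Sigma_P$, the common difference $d$, and the right endpoint $s+2md$, by a single logarithmic factor; this is the source of the gap between $C_\lambda$ and $C_\lambda'$. Feeding the inflated quantities back into BW21's inequalities and invoking $\delta$-density (which gives $\mu_A m\le N^2/\delta$ with $\delta\gg\log(2\mu_A)\gg 1$) shows that the target bounds $d\le c_1\mu_A\Sigma_A\log(2\mu_A)/N^2$ and $s+2md\le c_2\mu_A m\Sigma_A\log(2\mu_A)/N^2$ still hold for suitably enlarged constants $c_1,c_2$, and that $\Sigma_P\ll\Sigma_A$. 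Second, the progression must reach length $2m$: after stitching, $d$ is small enough that $\ell=2m$ still lies in the admissible range of Theorem~\ref{thm:ss} applied with $\ell$ a constant multiple of $m$, exactly as in~\cite{BW21}.

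The set $R$ is built from the no-$\alpha$-almost-divisor hypothesis, and here BW21's argument is already constructive and needs no change. For each prime power $q$ dividing $d$, the hypothesis guarantees that more than $\alpha\mu_A\Sigma_A/N^2\ge d$ elements of $A$ — still available after removing $P$ — are not divisible by $q$; greedily absorbing such elements makes $\mathcal{S}(R)$ cover every residue modulo $q$, and combining over all prime powers via the Chinese Remainder Theorem yields $\mathcal{S}(R)\bmod d=\mathbb{Z}_d$ using only $O(d)$ elements, hence $\Sigma_R=O(dm)\ll\Sigma_A$. Taking $G=A\setminus(P\cup R)$ and using $\Sigma_P+\Sigma_R\ll\Sigma_A$ then gives $\Sigma_G\ge\Sigma_A/2$, and the whole partition is computed in $\widetilde O(N)$ time.

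The main obstacle is not any single step but their interaction in the high-multiplicity regime: when $\mu_A$ is large the support of $A$ can be far below $\sqrt{m\log m}$, so Theorem~\ref{thm:ss} cannot be fed the full set of distinct values, and one must interleave the trivial progressions coming from dominant values with the addition-theorem progressions coming from diverse windows and stitch them so that simultaneously (i) the final common difference is genuinely as small as $c_1\mu_A\Sigma_A\log(2\mu_A)/N^2$, (ii) the progression reaches length $2m$, and (iii) the elements reserved for $P$ are disjoint from those reserved for $R$ while $\Sigma_P+\Sigma_R$ stays below $\Sigma_A/2$. Verifying these three constraints at once, while propagating the single extra logarithmic factor through BW21's chain of inequalities, is where essentially all of the bookkeeping lives.
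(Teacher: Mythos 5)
Your plan is workable in outline, but it is a genuinely different route from the paper's, and the part you defer is precisely where the paper's new idea lives. You propose to re-trace BW21's proof of Theorem 4.35 window by window, substitute Theorem~\ref{thm:ss} for each invocation of the finite addition theorem, and stitch the per-window progressions with Lemma~\ref{lem:aug-idea-1} and Lemma~\ref{lem:aug-method-2}; you then correctly single out the high-multiplicity regime (where the support of $A$ is too small to feed Theorem~\ref{thm:ss}) as the place ``where essentially all of the bookkeeping lives'' --- and leave that bookkeeping undone. The paper dissolves this obstacle rather than fighting it: it first proves a multi-set version of Theorem~\ref{thm:ss} (Lemma~\ref{lem:multiset}), whose only multiplicity device is to bucket the distinct values by dyadic multiplicity class, keep the class maximizing $2^i|B_i|$, and truncate it to uniform multiplicity $2^{i^*}$, losing a single $\log \mu_B$ factor; after that the set-version machinery of Sections~\ref{sec:sumset2ss}--\ref{sec:aug-ss} applies essentially verbatim. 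That lemma is then invoked \emph{once}, on the smallest $N/4$ elements of $A\setminus R$ with $\ell=2m$, so there is no per-window case analysis, no interleaving of ``dominant-value'' progressions with ``diverse-window'' progressions, and no stitching. What the paper's route buys is exactly the part your sketch does not carry out; what your route would buy, if completed, is staying closer to BW21's original argument at the cost of redoing its multiplicity case analysis constructively.

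Two further discrepancies worth noting. First, the paper builds $R$ \emph{before} $P$ (an arbitrary $2\tau$-subset plus $\tau$ elements of $\overline{A(p)}$ for each deficient prime $p\le\tau$), so $R$ is $\gamma$-complete for every modulus $\gamma\le\alpha\mu_A\Sigma_A/N^2$ simultaneously and disjointness from $P$ is automatic; you build $R$ after $P$, targeted at the specific $d$, which forces you to re-argue that enough non-multiples survive outside $P$. Second, your accounting of the logarithmic loss is slightly off: the common difference $d$ is controlled by the short-AP step and does not inherit the coreset's $\log$ factor; the genuine loss (the $\log N$ separating $C_\lambda$ from $C_\lambda'$) enters only through the bound $s+2md\le p\,|P|$, because the witness-bearing coreset $P$ is a $\log N$ factor larger than the non-constructive one. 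Relatedly, the lemma as stated, with only $\log(2\mu_A)$ factors, is BW21's existential statement, which the paper merely cites; the version the paper actually constructs (and which suffices downstream) carries that extra $\log N$, which is why it works with $C_\lambda'$ rather than $C_\lambda$.
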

Here $c_1$ and $c_2$ are fixed constants. In particular, \cite{BW21} sets $c_1=42480$ and $c_2=84960$. These constants mainly follow from the constant in the finite addition theorems by S{\'a}rk{\"o}zy. The actual value of $c_1$ and $c_2$ affects the values $c_\alpha$ and $c_\delta$ before. But as we mentioned, when these values change by $O(1)$ times, Theorem~\ref{thm:BW21} still holds by suitably adjusting $c_{\lambda}$ by $O(1)$ times, and an $O(1)$ factor is dominated by that our $C_{\lambda}'$ blows up $C_{\lambda}$ by a logarithmic factor.

Given Lemma~\ref{lemma:cite-4}, we sketch the proof strategy for Lemma~\ref{lemma:cite-2} and details will follow: we first pick arbitrary elements in $G$ that sum to some $t'=t-\Theta(\lambda_A)$. It remains to hit $t-t'=\Theta(\lambda_A)$ using $P\cup R$. We pick elements from $R$ that sum to $t-t'-s$ modulo $d$. It remains to add some $s+kd$ for some suitable $k$, and this can be achieved as $\mathcal{S}(P)$ contains a sufficiently long AP. Indeed, the proof of Theorem 4.2 in \cite{BW21} showed that within $\widetilde{O}(N)$-time we can compute $(p,r,g)\in P\times R\times G$ such that $t=p+r+g$. The only bottleneck is to find a witness for $p$ efficiently, which we handle by our Theorem~\ref{thm:ss}.

 In the following we show that, $P$, $R$ and $G$ can be efficiently constructed together with a witness, thus concluding Lemma~\ref{lemma:constructive-cite-2}. 
 
 We start with the remainder set $R$. 

\paragraph{Remainder set $R$ - definition and existence.}
\begin{lemma}[Cf. Theorem 4.20 of \cite{BW21}] \label{lemma:cite-R-1}
    Let $A$ be a $\delta$-dense multi-set of size $N$ within $[1,m]$ that has no $\alpha$-almost divisor. Then there exists a subset $R\subseteq A$ such that  
    \begin{itemize}
        \item $|R|\le |A|\cdot 8\alpha/\delta \cdot \log(2N)$;
        \item $\Sigma_R\le \Sigma_A\cdot 8\alpha/\delta\cdot \log (2N)$;
        \item for any integer $1<\gamma\le \alpha \mu_A\Sigma_A/N^2$, the multi-set $R$ contains at least $\gamma$ integers not divisible by $\gamma$, that is, $|\overline{R(\gamma)}|\ge \gamma$.
    \end{itemize}
\end{lemma}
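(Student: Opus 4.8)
\textbf{Proof proposal for Lemma~\ref{lemma:cite-R-1}.}

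The plan is to follow the construction in~\cite{BW21} (Theorem 4.20) and to recast it so that the three bullet points are obtained by a greedy, iterative selection of ``witnesses of non-divisibility''. The guiding idea is this: for each candidate modulus $\gamma$ in the relevant range, the fact that $A$ has no $\alpha$-almost divisor gives $|\overline{A(\gamma)}| > \alpha\mu_A\Sigma_A/|A|^2$, i.e.\ there are \emph{many} integers of $A$ not divisible by $\gamma$. We want $R$ to retain at least $\gamma$ of them, simultaneously for every $\gamma$ in the range $1 < \gamma \le \alpha\mu_A\Sigma_A/N^2$. First I would observe that it suffices to handle the $\gamma$ that are prime powers (or even just primes $p$ together with the bound $\gamma \le \alpha\mu_A\Sigma_A/N^2$), since if $|\overline{R(p)}| \ge \gamma$ for the largest prime power dividing some $\gamma$ then $|\overline{R(\gamma)}| \ge |\overline{R(p)}| \ge \gamma$ as well — being non-divisible by a divisor of $\gamma$ implies being non-divisible by $\gamma$. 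This reduces an a priori infinite family of constraints to roughly $\alpha\mu_A\Sigma_A/N^2$ constraints, one per prime (power) below that threshold.

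Next I would build $R$ by a bucketing/peeling argument. For each prime $p \le \alpha\mu_A\Sigma_A/N^2$, select (up to) $p$ elements of $A$ not divisible by $p$; call this multiset $R_p$, so $|R_p| \le p$. Set $R = \bigcup_p R_p$ (as a sub-multiset of $A$, taking multiplicities into account — if the same element is chosen for several primes, it is still counted once in $R$, which only helps the size bound). Then $|R| \le \sum_{p \le \alpha\mu_A\Sigma_A/N^2} p$. To get the claimed bound $|R| \le |A| \cdot 8\alpha/\delta \cdot \log(2N)$, I would combine (i) the crude estimate $\sum_{p \le x} p = O(x^2/\log x)$ (or even $\sum_{q \le x} q \le x^2$ summing over all integers, which is wasteful but still works after a dyadic refinement), with (ii) the density inequality~\eqref{eq:1}, namely $\mu_A m \le N^2/\delta$, and hence $\alpha\mu_A\Sigma_A/N^2 \le \alpha m \cdot (\mu_A m/N^2) \cdot (\Sigma_A/m^2)$ — more directly, since $\Sigma_A \le N m$ one has $\alpha\mu_A\Sigma_A/N^2 \le \alpha\mu_A m/N$, and feeding in $\mu_A m \le N^2/(\delta)$ gives $\alpha\mu_A\Sigma_A/N^2 \le \alpha N/\delta$. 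A dyadic decomposition of the range $[1,\alpha N/\delta]$ into $O(\log(\alpha N/\delta)) = O(\log(2N))$ scales, with the number of primes in $(x/2,x]$ being $O(x/\log x)$ and each contributing at most $x$, yields $\sum_p p = O((\alpha N/\delta)\cdot \log(2N)/\log(\cdot))$; absorbing constants this is at most $|A| \cdot 8\alpha/\delta \cdot \log(2N)$ for a suitable choice of the constants $c_\alpha, c_\delta$. The bound on $\Sigma_R$ follows identically: each selected element is at most $m$, and $\sum_p p\cdot m \le m \cdot \alpha N/\delta \cdot O(\log 2N)$, while $\Sigma_A \ge$ (something comparable to) $Nm/\text{const}$ when $A$ is dense — more carefully, one uses $\Sigma_R \le m|R|$ together with the fact that the ratio $m|A|/\Sigma_A$ is controlled; alternatively one picks the $p$ \emph{smallest} non-multiples of $p$ in each $R_p$ and argues the total weight is a $(8\alpha/\delta)\log(2N)$ fraction of $\Sigma_A$ directly. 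The third bullet holds by construction: for any $\gamma$ in the range, take its smallest prime factor $p$; then $p \le \gamma \le \alpha\mu_A\Sigma_A/N^2$, so $R_p \subseteq R$ contributes $p$... wait, we need $\gamma$ elements, not $p$ — so I would instead, for each prime power $p^k \le \alpha\mu_A\Sigma_A/N^2$, select $p^k$ elements not divisible by $p$ (still not divisible by $p^k$, a fortiori), and for a general $\gamma$ with smallest prime-power-part, pick the prime power $p^k \| \gamma$ with $p^k$ as large as possible among those dividing $\gamma$; since $\gamma$ has \emph{some} prime power factor $\ge \gamma^{1/\omega(\gamma)}$, this needs a slightly more careful bookkeeping — the cleanest fix, matching~\cite{BW21}, is to simply select $\min(\gamma, |\overline{A(\gamma)}|) = \gamma$ elements for \emph{every} $\gamma$ directly and control the total via a telescoping/dyadic sum, since $\sum_{\gamma \le x}\gamma = O(x^2)$ and $x = \alpha\mu_A\Sigma_A/N^2 \le \alpha\mu_A m/N \cdot (\Sigma_A/(Nm))$... the $\Sigma_A/(Nm) \le 1$ factor is what upgrades $O(x^2)$ to the claimed $|A|\cdot(8\alpha/\delta)\log(2N)$ bound after using $\mu_A m \le N^2/\delta$ once more.

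The main obstacle I anticipate is the constant-chasing in the size and sum bounds: one must verify that the factor $8\alpha/\delta \cdot \log(2N)$ really absorbs the $O(\log(2N))$ from the dyadic decomposition \emph{and} the prime-counting constants \emph{and} the slack from $\Sigma_A$ versus $Nm$, all with the specific $c_\alpha, c_\delta$ fixed in~\cite{BW21} (or, as the remark in the paper allows, with these constants enlarged by an $O(1)$ factor, which is harmless because our final $C_\lambda'$ already loses a logarithmic factor over $C_\lambda$). Since the paper explicitly says it will not optimize constants and only needs estimates ``whenever necessary'', I would present the argument at the level of $O(\cdot)$ bounds, invoke~\eqref{eq:1} (equivalently $\mu_A m \le N^2/\delta$) at the two places where density is used, and note that the third bullet is immediate from the construction since every $\gamma$ in the stated range has $|\overline{A(\gamma)}| > \gamma$ (by the no-$\alpha$-almost-divisor hypothesis, unwinding the definition: $d$ is an $\alpha$-almost divisor iff $|\overline{A(d)}| \le \alpha\mu_A\Sigma_A/|A|^2$, so no almost divisor means $|\overline{A(\gamma)}| > \alpha\mu_A\Sigma_A/|A|^2 \ge \gamma$), whence $R$ can and does retain $\gamma$ of them.
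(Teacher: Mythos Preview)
Your approach has a genuine gap in the size bound. Set $\tau := \alpha\mu_A\Sigma_A/N^2$. You propose to select, for each prime $p\le\tau$ (or even each integer $\gamma\le\tau$), roughly $p$ (or $\gamma$) non-multiples of $p$ (resp.\ $\gamma$). The resulting $R$ then has size $\sum_{p\le\tau} p$, and your dyadic computation of this sum is simply wrong: at scale $x$ you have $O(x/\log x)$ primes each contributing up to $x$, which is $O(x^2/\log x)$ per scale, not $O(x)$; the top scale dominates and gives $\sum_{p\le\tau} p = \Theta(\tau^2/\log\tau)$. Your ``cleanest fix'' of summing over all $\gamma\le\tau$ is worse still, $\Theta(\tau^2)$. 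But the target bound is $|R|\le N\cdot 8\alpha/\delta\cdot\log(2N) = O(\tau\log N)$ (using $\mu_A m\le N^2/\delta$ and $\Sigma_A\le Nm$ to see $\tau\le \alpha N/\delta$ and then $\alpha/\delta = \Theta(1/(\log N\log\mu_A))$). So you are off by a factor of $\tau$, which can be as large as $N/\mathrm{polylog}$. The hand-waving about ``the $\Sigma_A/(Nm)\le 1$ factor upgrades $O(x^2)$ to the claimed bound'' does not survive actual substitution.

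The construction in \cite{BW21} that the paper quotes is essentially different and avoids this blowup via a key counting lemma you are missing. One first takes an \emph{arbitrary} subset $R'\subseteq A$ of size $2\tau$; then one lets $P$ be the set of primes $p\le\tau$ for which $|\overline{R'(p)}|<\tau$. The crucial fact (Claim 4.21 in \cite{BW21}) is that $|P|\le 2\log m$: if $p\in P$ then more than half of $R'$ is divisible by $p$, and a product argument on $\prod_{a\in R'} a\le m^{2\tau}$ bounds how many such primes there can be. One then patches only these $O(\log m)$ bad primes by adding, for each $p\in P$, an arbitrary set $R_p\subseteq\overline{A(p)}$ of size $\tau$. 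This gives $|R|\le 2\tau + \tau\cdot 2\log m = O(\tau\log m)=O(\tau\log N)$, which is exactly the required bound. The third bullet then follows because for any $\gamma\le\tau$, either some prime factor $p$ of $\gamma$ lies in $P$ (then $R_p$ gives $\tau\ge\gamma$ non-multiples of $p$, hence of $\gamma$), or every prime factor $p\mid\gamma$ satisfies $|\overline{R'(p)}|\ge\tau$, and since $\overline{R'(p)}\subseteq\overline{R'(\gamma)}$ this already gives $|\overline{R'(\gamma)}|\ge\tau\ge\gamma$. Your reduction-to-primes instinct was right, but the missing idea is that a single generic sample of size $2\tau$ handles all but logarithmically many primes at once.
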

The existence of such an $R$ is already shown in \cite{BW21}. Indeed, \cite{BW21} explicitly constructs $R$ as follows.

\paragraph{Construction of $R$.} Set $\tau:=\lceil \alpha\cdot \mu_A\Sigma_A/N^2\rceil$.  
\begin{itemize}
    \item Pick an arbitrary subset $R'\subseteq A$ of size $2\tau$. 
    \begin{itemize}
        \item $R'$ can be  constructed trivially in $\widetilde{O}(N)$-time.
    \end{itemize}
    \item Let $P$ be the set of primes $p$ with $p\le \tau$ and $|\overline{R'(p)}|<\tau$. \cite{BW21} showed (in Claim 4.21) that $|P|\le 2\log m$. 
    \begin{itemize}
        \item \cite{BW21} showed in Theorem 3.8 that the prime factorization of $N$ integers within $m$ can be computed in time $\widetilde{O}(N+\sqrt{m})$. 
        By Eq~\eqref{eq:1} we know $\sqrt{m}=\widetilde{O}(N)$. That is, within $\widetilde{O}(N)$-time we can obtain the prime factorization of all integers in $A$. With the prime factorization, $|R'(p)|$'s can be computed as follows. List integers in $R'$ arbitrarily. Among the first $i$ integers of $R'$, let $\xi_{i,p}$ be the number of multiples of $p$. Given $\xi_{i,p}$'s, computing $\xi_{i+1,p}$'s requires $O(\log m)$-time since every integer has at most $O(\log m)$ prime factors. Overall, $|R'(p)|$'s, and hence $|\overline{R'(p)}|$'s, and hence $P$, can be computed in $\widetilde{O}(N)$-time.
    \end{itemize}
    \item For any $p\in P$, let $R_p\subseteq \overline{A(p)}$ be an arbitrary set of size $\tau$. Then $R:=R'\cup \bigcup_{p\in P} R_p$.
    \begin{itemize}
        \item $R_p$ can be constructed trivially in $O(\tau)$-time. Hence, $R$ is constructed in $\widetilde{O}(N)$-time. 
    \end{itemize} 
\end{itemize}

 \paragraph{Witness for $R$.}
\cite{BW21} showed in Theorem 4.22 that for every integer $1<\gamma\le \alpha \mu_A\Sigma_A/N^2$, and every integer $0\le \eta<\gamma$, there exists some $R^*\subset R$ such that $\Sigma_{R^*}\equiv \eta \pmod \gamma$. We further show that such an $R^*$ can be found in $\widetilde{O}(N)$-time. Indeed, finding $R^*$ can be done in exactly the same way as finding $Y$ in Subsection~\ref{subsec:Y}.

Next, we discuss the AP set $P$.

\paragraph{Construction of $P$.} We first find $R\subseteq A$ in $\widetilde{O}(N)$-time. Using that $|R|\le |A|\cdot 8\alpha/\delta \cdot \log(2N)$ from Lemma~\ref{lemma:cite-R-1}, we have $|R|\le N/4$. 

We actually need a slightly stronger version of our Theorem~\ref{thm:ss}, which is on multi-sets instead of sets. 


\begin{lemma}\label{lem:multiset}
  Let $B$ be a multi-set set of $N$ integers from $\mathbb{Z}[1, m]$. For any integer $\ell$ with 
\begin{eqnarray}\label{eq:a}
        m \leq \ell \leq \Theta(\frac{N^2}{\mu_B\log^2 \mu_B\log N}),
\end{eqnarray}
    in $O(N \log N)$ time, we can compute a subset $B' \subseteq B$ with $|B'| \leq O(\frac{\ell\mu_B\log \mu_B\log N}{N})$, an arithmetic progression
    \[
    \{s\} + \{0, d, 2d, \ldots, \ell d\} \in \mathcal{S}(B')
    \]
    with $d \leq O(\frac{m\mu_B\log \mu_B}{N})$, and a witness with $O(\frac{\ell \mu_B\log \mu_B\log N}{N})$ expected query time for this arithmetic progression.
\end{lemma}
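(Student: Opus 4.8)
The plan is to repeat the entire development of Section~\ref{sec:ss} --- Lemma~\ref{lem:gen-pairs}, Lemma~\ref{lem:ap-by-pairs}, Lemma~\ref{lem:short-ap-ss}, Lemma~\ref{lem:one-aug-pair}, Lemma~\ref{lem:aug-pairs}, Lemma~\ref{lem:aug-one-step} and Lemma~\ref{lem:long-ap-ss} --- almost verbatim, with the single structural change that all conflict-free integer pairs are now extracted from the multi-set $B$ in a multiplicity-aware fashion. Concretely, I would sort $B$ with multiplicity and let $x_1 < x_2 < \cdots < x_s$ be its distinct support values, with $x_i$ occurring $\mu_i \le \mu_B$ times, so that $s \ge N/\mu_B$. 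For two consecutive support values $x_i, x_{i+1}$ one may form $\min\{\mu_i,\mu_{i+1}\}$ conflict-free pairs, all of them having the common gap $g_i = x_{i+1}-x_i \ge 1$; since every individual copy is used in at most one such pair, the union over $i$ of these pairs is conflict-free. As $\sum_i g_i \le m$, at least half of the $g_i$ are at most $2m/s \le 2m\mu_B/N$, so after discarding the few large gaps one obtains a conflict-free $T\subseteq B\times B$ with $G_T\subseteq\mathbb{Z}[1,\,2m\mu_B/N]$ whose total size $|T|$ equals the total multiplicity of the retained gaps, while the number of \emph{distinct} retained gap values is still $\Omega(N/\mu_B)$. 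This is the multi-set analogue of Lemma~\ref{lem:gen-pairs}, and the only place where multiplicity genuinely enters.

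Next I would feed such a $T$ into the proof of Lemma~\ref{lem:ap-by-pairs}, which goes through unchanged: Lemma~\ref{lem:uniform} makes $G_T$ $u$-uniform for some $u$ at the cost of a $\log 2|T| = O(\log 2N)$ factor, and then Lemma~\ref{lem:ka} is applied to the $k$-fold sumset of $G = G_{T^*}\cup\{0\}$ with $k=\Theta(g/|G|)$. This is exactly where the bounded multiplicity pays off: because each of the $\Omega(N/\mu_B)$ distinct small gaps carries multiplicity up to $\mu_B$, one has $u=\Theta(\mu_B)$, so $k$ may be taken as large as $\Theta(\mu_B)$, which is what produces a $\mu_B$ (rather than $\mu_B^2$) in the denominator of~\eqref{eq:a}. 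The outcome is a base arithmetic progression of length $\ell_0 = \Theta(N/\log N)$ and common difference $d\le \max(G_T)\le 2m\mu_B/N$ inside $\mathcal{S}(B^*)$ for a sub-multi-set $B^*$ of size $O(\ell_0)$. When $\mu_B$ is small the multiplicity buys little, and one should instead simply run Theorem~\ref{thm:ss} on the distinct-support set of $B$; taking the better of the two estimates is what accounts for the $\log^2\mu_B$ factor in~\eqref{eq:a}.

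The arithmetic progression would then be augmented exactly as in Lemma~\ref{lem:long-ap-ss}, by repeatedly invoking the multi-set analogues of Lemma~\ref{lem:one-aug-pair}, Lemma~\ref{lem:aug-pairs} and Lemma~\ref{lem:aug-one-step}. Again the only modification is inside the pair-extraction step: one scans the gaps between consecutive \emph{distinct} support values of the still-unused part of $B$, using multiplicities to turn each such gap into up to $\mu_B$ conflict-free pairs, and either finds many small pairs with gap not divisible by the current common difference or one pair with a large gap divisible by it. Since each doubling step multiplies the length by $3/2$ and the target is $\ell$, there are $O(\log N)$ steps; summing the $O(d+\gamma)$-sized sub-multi-sets consumed per step yields $|B'| = O(\ell\mu_B\log\mu_B\log N/N)$ and expected query time $O(\ell\mu_B\log\mu_B\log N/N)$, while all constructions run in $O(N\log N)$ time, just as in the set case.

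\noindent\textbf{Main obstacle.} The delicate part is the worst-case bookkeeping in the multi-set pair-generation and pair-extraction lemmas: pathological multiplicity profiles (e.g.\ alternating multiplicities $1,\mu_B,1,\mu_B,\dots$) can make $\sum_i\min\{\mu_i,\mu_{i+1}\}$ as small as $\Theta(N/\mu_B)$, in which case the multiplicity gives nothing and one must fall back on the support. Reconciling these two regimes so that the bounds on $\ell$, on $d$, and on $|B'|$ come out simultaneously as $\Theta(N^2/(\mu_B\log^2\mu_B\log N))$, $O(m\mu_B\log\mu_B/N)$ and $O(\ell\mu_B\log\mu_B\log N/N)$, and keeping the whole construction conflict-free across the base step and all $O(\log N)$ augmentation rounds, is where the care is needed; the remainder is a routine transcription of Section~\ref{sec:ss}.
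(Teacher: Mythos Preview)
Your overall plan---rerun Section~\ref{sec:ss} with multiplicity-aware pair extraction---is the right one and matches the paper. The key difference is \emph{how} non-uniform multiplicities are dealt with.

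The paper does not attempt to extract pairs from an arbitrary multiplicity profile. Instead it inserts a single preprocessing step: partition the distinct support values of $B$ into buckets $B_0,B_1,\ldots$ where $B_i$ collects those $x$ with $\mu(x;B)\in[2^i,2^{i+1})$, choose the $i^*$ maximising $2^{i^*}|B_{i^*}|$, and truncate every multiplicity in $B_{i^*}$ down to exactly $\mu:=2^{i^*}$. Since there are only $\log(2\mu_B)$ buckets, the surviving multi-set has $\Theta(N/\log\mu_B)$ elements and \emph{perfectly uniform} multiplicity $\mu$. After this, every consecutive support pair $(x_i,x_{i+1})$ automatically yields exactly $\mu$ conflict-free copies, and the augmentation step (the analogue of Lemma~\ref{lem:aug-one-step}) consumes only $O((d+\gamma)/\mu)$ distinct support values per round. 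The rest of Section~\ref{sec:ss} then transcribes mechanically; one $\log\mu_B$ factor comes from the bucketing and the second from the usual places in the analysis.

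This preprocessing step is exactly what dissolves your ``main obstacle.'' Your alternating profile $1,\mu_B,1,\mu_B,\dots$ is no longer pathological once one restricts to a single bucket: either the bucket of multiplicity $\approx 1$ or the bucket of multiplicity $\approx\mu_B$ already carries $\Omega(N/\log\mu_B)$ total mass with uniform multiplicity, and the pair count $\sum_i\min\{\mu_i,\mu_{i+1}\}$ is then $\Theta(N/\log\mu_B)$ rather than $\Theta(N/\mu_B)$. Your proposed fallback---run Theorem~\ref{thm:ss} on the support set---only gives $\ell\le\Theta(s^2/\log s)=\Theta(N^2/(\mu_B^2\log N))$, which is off by a factor of $\mu_B/\log^2\mu_B$ from the target bound, so the case analysis as you have sketched it does not close. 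The bucketing step is the missing idea.
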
   


\begin{proof}[Proof sketch] Lemma~\ref{lem:multiset} can be proved by following the guideline below.
\begin{itemize}
	\item[(i).] At the cost of losing factor $\log \mu_B$ in cardinality, we may assume every distinct integer in $B$ has the same multiplicity $\mu$. This can be achieved as follows: 
	Based on the multiplicity, we divide $B$ into disjoint subsets $B_0, B_1,B_2,\cdots$, where $B_i$ consists all distinct elements in $B$ whose multiplicity lies in $[2^{i},2^{i+1}]$. Pick $i$ such that $2^i|B_i|$ is the maximal and let it be $i^*$. Reducing the multiplicity of every distinct element in $B_{i^*}$ to exactly $2^{i^*}$. Now note that  $2^{i^*}|B_{i^*}|=\Omega(|B|/\log \mu_B) $.
	\item[(ii).] Generating a short arithmetic progression:
	Replacing $A$ with $B$ in Subsection~\ref{sec:sumset2ss}, we obtain an analogy of Lemma~\ref{lem:short-ap-ss} which computes $B^*\subseteq B$ such that $S(B^*)$ admits an arithmetic progression of length $\ell$ for 
	$$\Theta(\frac{m}{|B|})\le \ell \le \Theta(\frac{|B|}{\log|B|}).$$ 
	\item[(iii).] Augmenting a short arithmetic progression: replacing $A$ with $B$ in Subsection~\ref{sec:aug-ss} while noticing that every distinct integer has $\mu$ copies, we obtain 
	an analogy of Lemma~\ref{lem:aug-one-step} stating that we can compute $B'\subseteq B$ with $|B'|=O(\frac{d}{\mu}+\frac{\gamma}{\mu})$ that can be used to increase the length of a progression $P$ from $\ell=\Omega(\frac{m}{|B|})$ to at least $\frac{3}{2}\ell$.
	\item[(iv).] Following Lemma~\ref{lem:long-ap-ss}, we split $B$ into two parts, apply our argument in (ii) to the first part to generate a short arithmetic progression, and apply our argument in (iii) to iteratively augment the arithmetic progression. 
\end{itemize}
Since the whole proof essentially repeats Section~\ref{sec:ss}, we omit it.
\end{proof}

Apply Lemma~\ref{lem:multiset} by setting $B$ as the smallest $N/4$ elements from $A\setminus R$ and $\ell =2m$. We can compute $P\subseteq A\setminus R $ such that $|P|\le \Theta(\frac{m\mu_A\log \mu_A\log N}{N})\le N/4$, together with some $s$ and $d\le O(\frac{m\mu_A\log \mu_A}{N})$ such that
 \[
 \{s\} + \{0, d, 2d, \ldots, 2m d\} \in \mathcal{S}(P)
 \]
 
 \begin{observation}
    The way we set $B$ implies that $\Sigma_P\le \Sigma_{A\setminus R}/3$.
 \end{observation}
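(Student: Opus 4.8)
The plan is to derive this from a one-line averaging argument that exploits exactly how $B$ was chosen, namely as a prefix of $A\setminus R$ listed in nondecreasing order. First I would recall that in the construction of $P$ we already fixed $|R|\leq N/4$: this comes from the bound $|R|\leq |A|\cdot 8\alpha/\delta\cdot\log(2N)$ of Lemma~\ref{lemma:cite-R-1} together with $8\alpha/\delta\cdot\log(2N)=8c_\alpha/(c_\delta\log(2\mu_A))\leq 1/4$, which holds since $c_\delta$ is chosen large relative to $c_\alpha$ and $\log(2\mu_A)\geq 1$. Hence $|A\setminus R|=N-|R|\geq 3N/4$. Since $B$ was taken to be the (at most) $N/4$ smallest elements of $A\setminus R$, we get $|A\setminus R|\geq 3|B|$, and since $P\subseteq B$ with all elements positive, it suffices to prove $\Sigma_B\leq \Sigma_{A\setminus R}/3$.

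For that step I would invoke the elementary fact that if $x_1\leq x_2\leq\cdots\leq x_M$ are nonnegative integers and $1\leq k\leq M$, then $\sum_{i=1}^k x_i\leq\frac{k}{M}\sum_{i=1}^M x_i$. The fastest way to see this is to compare each of the $k$ smallest terms with each of the $M-k$ largest: summing $x_i\leq x_j$ over all $i\leq k<j$ gives $(M-k)\sum_{i\leq k}x_i\leq k\sum_{j>k}x_j$, and adding $k\sum_{i\leq k}x_i$ to both sides yields $M\sum_{i\leq k}x_i\leq k\sum_{i=1}^M x_i$. Applying this with the $x_i$'s being the elements of $A\setminus R$ in nondecreasing order, $M=|A\setminus R|$ and $k=|B|$, we obtain $\Sigma_B\leq\frac{|B|}{|A\setminus R|}\Sigma_{A\setminus R}\leq\frac{1}{3}\Sigma_{A\setminus R}$, using $|B|\leq N/4$ and $|A\setminus R|\geq 3N/4$.

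Finally, since $P\subseteq B$, monotonicity of the sum gives $\Sigma_P\leq\Sigma_B\leq\frac{1}{3}\Sigma_{A\setminus R}$, which is the claim. There is no genuine obstacle in this argument; the only point requiring a moment of care is the bookkeeping guaranteeing $|A\setminus R|\geq 3|B|$, i.e.\ confirming $|R|\leq N/4$, and this is precisely where the separation $\delta\gg\alpha\log(2N)$ between the density parameters (via $\alpha=c_\alpha\log(2\mu_A)$ and $\delta=c_\delta\log(2N)\log^2(2\mu_A)$ with $c_\delta$ large) is used.
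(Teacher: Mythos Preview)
Your argument is correct and is exactly the intended one: the paper states this as an Observation without proof, and the natural justification is precisely the averaging inequality you give, combined with $|R|\le N/4$ (which the paper already records just before defining $B$) so that $|B|\le |A\setminus R|/3$. There is nothing to add.
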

 
 \noindent\textbf{Remark on the applicability of Lemma~\ref{lem:multiset}.} Lemma~\ref{lem:multiset} is applicable when Eq~\eqref{eq:a} holds. So setting $\ell=2m$ and $B$ as the smallest $N/4$ elements of $A\setminus R$  requires as a precondition that $2m\le \Theta(\frac{N^2}{\mu_A\log^2 \mu_A\log N})$. Given Eq~\eqref{eq:1}, this can be fulfilled by adjusting the constant $c_\delta$.  
 
 \noindent\textbf{Remark on upper bounding $s+2md$.} For the sake of the subsequent proof, we need to upper bound $s+2md$. Denote by $p$ the largest element in $P$, we get that $s+2md\le p|P|$. 
We know that $|P|\le \Theta(\frac{m\mu_A\log \mu_A\log N}{N})$. It remains to bound $p$. Since $P\subseteq B$ and $B$ consists of the smallest $N/4$ elements of $A\setminus R$, we have that 
$$p\le \frac{\Sigma_{A\setminus R}}{|A\setminus R|}\le \frac{\Sigma_{A}}{3N/4}\le 2\frac{\Sigma_A}{N}.$$
Consequently,
\begin{eqnarray}\label{eq:b}
    s+2md\le p|P|\le \Theta(\frac{m\mu_A\Sigma_A\log \mu_A\log N}{N^2}).
\end{eqnarray}

Let us compare the bound given by Eq~\eqref{eq:b} with that given in \cite{BW21}. A crucial fact is that $s+2md$ can be bounded by $O(\frac{m\mu_A\Sigma_A\log \mu_A}{N^2})$ in \cite{BW21} (see Claim 4.37 of \cite{BW21}). We see that, this differs by a factor of $\log N$ with Eq~\eqref{eq:b}. This is the main reason that our Theorem~\ref{thm:BW21} needs to set $C_\lambda'$ by a logarithmic factor larger than $C_{\lambda}$.  The reason that Eq~\eqref{eq:b} is not as sharp as that in \cite{BW21} lies in the size of $B'$ in Lemma~\ref{lem:multiset}.  Eq~\eqref{eq:b} would have been as sharp if we can shed off $\log N$ there, but this seems not possible when we need an efficient witness.

\paragraph{Construction of $G$.} Set $G:=A\setminus (P\cup R)$. Given that $\Sigma_R\le \Sigma_A\cdot 8\alpha/\delta\cdot \log (2N)$ and that $\Sigma_P\le \Sigma_{A\setminus R}/3$, one can verify that $\Sigma_G\ge \Sigma_A/2$ (see Claim 4.38 of \cite{BW21}).

\paragraph{Finalizing the proof of Lemma~\ref{lemma:constructive-cite-2}}
Let $t\in \mathbb{Z}[\lambda_A, \Sigma_A-\lambda_A]$ where $\lambda_A:=C_\lambda' \cdot \mu_Am\Sigma_A/N^2$, and $C_\lambda'=c_\lambda' \log (2\mu_A)\log (2N)$ for some constant $c_{\lambda}$. By symmetry we may assume $t\le \Sigma_A/2$. 

We shall maintain a bucket and add elements of $A$ into it until the sum of these elements becomes $t$. We first pick $G'\subseteq G$ by greedily adding elements of $G$ into the bucket until 
$$t-s-d(m+1)-m<\Sigma_{G'}\le t-s-d(m+1).$$
This is possible because: (i). any element is at most $m$ and $\Sigma_{G}\ge \Sigma_A/2$, and (ii). $s+md\le \Theta(\frac{m\mu_A\Sigma_A\log \mu_A\log N}{N^2})\le \lambda_A$ by taking a suitable constant $c_\lambda'$. 

Next we pick a subset $R'\subseteq R$ that sums to $(t-\Sigma_{G'}-s)$ modulo $d$. This is possible because $R$ is $d$-complete, and we can further restrict that $|R'|\le d$ (using the same argument as we argue on $Y'$ in Subsection~\ref{subsec:Y}). Hence, $\Sigma_{R'}\le md$. We thus have $t-\Sigma_{G'\cup R'} \equiv s \quad (\mod d)$, and 
$$s+d\le t-\Sigma_{G'\cup R'}\le s+d(m+1)+m.$$
It is obvious that we can select find $P'\subseteq P$ with $\Sigma_{P'}=t-\Sigma_{G'\cup R'}$. Moreover, such a $P'$ can be found in $\widetilde{O}(N)$-time given the witness in Lemma~\ref{lem:multiset}.

To summarize,  Lemma~\ref{lemma:constructive-cite-2} is true, and consequently Theorem~\ref{thm:BW21} is true.

\bibliographystyle{alphaurl}
\bibliography{main}
\end{document}